\newif\ifmynotes
\newcommand{\ZGnote}[1]{\ifmynotes\textcolor{purple}{[Zeyu: #1]}\else\fi}
\newcommand{\ZZnote}[1]{\ifmynotes\textcolor{violet}{[Zihan: #1]}\else\fi}
\theoremstyle{plain}
\newtheorem{thm}{Theorem}[section]
\newtheorem{lemma}[thm]{Lemma}
\newtheorem{fa}[thm]{Fact}
\newtheorem{Op}[thm]{Open Problem}
\newtheorem{prop}[thm]{Proposition}
\newtheorem{cor}[thm]{Corollary}
\theoremstyle{definition}
\newtheorem{rmka}[thm]{Remark}
\newtheorem{defn}[thm]{Definition}
\newtheorem{conj}[thm]{Conjecture}
\theoremstyle{remark}
\newcommand{\ldmrd}[1]{\operatorname{LD-MRD}({\leq}\,#1)}
\newcommand{\N}{\mathbb{N}}
\newcommand{\F}{\mathbb{F}}
\newcommand{\K}{\mathbb{K}}
\newcommand{\cS}{\mathcal{S}}
\newcommand{\eps}{\varepsilon}
\newcommand{\spa}{\mathrm{span}}
\newcommand{\abs}[1]{\left| #1 \right|}
\newcommand{\bv}{\bm{v}}
\newcommand{\bu}{\bm{u}}
\newcommand{\vm}{\bm{m}}
\newcommand{\bw}{\bm{w}}
\newcommand{\by}{\bm{y}}
\newcommand{\bx}{\bm{x}}
\newcommand{\bz}{\bm{z}}
\newcommand{\bc}{\bm{c}}
\newcommand{\rank}{\mathrm{rank}}
\renewcommand{\epsilon}{\varepsilon}
\def \mC {\mathcal{C}}
\def \mG {\mathcal{G}}
\def \mC {\mathcal{C}}
\def \mF {\mathcal{F}}
\begin{document}

\title{Random Gabidulin Codes Achieve List Decoding Capacity in the Rank Metric}
\date{}
\author{
Zeyu Guo\thanks{Department of Computer Science and Engineering, The Ohio State University. \href{mailto:zguotcs@gmail.com}{\texttt{zguotcs@gmail.com}}}  
\and
Chaoping Xing\thanks{School of Electronic Information and Electric Engineering, Shanghai Jiao Tong
University. \href{mailto:xingcp@sjtu.edu.cn}{\texttt{xingcp@sjtu.edu.cn}}}  
\and
Chen Yuan\thanks{School of Electronic Information and Electric Engineering, Shanghai Jiao Tong
University. \href{mailto: chen\_yuan@sjtu.edu.cn}{\texttt{chen\_yuan@sjtu.edu.cn}}}  
\and
Zihan Zhang\thanks{Department of Computer Science and Engineering, The Ohio State University. \href{mailto:zhang.13691@buckeyemail.osu.edu}{\texttt{zhang.13691@osu.edu}}}  
}
\maketitle
\thispagestyle{empty}

\begin{center}
\fbox{%
\begin{minipage}{0.93\textwidth}
\small
\textbf{Author note (July 2026).}
The proof of Theorem~4.7 in the original version contains an error:
the dimension of an intersection of subspaces may increase under a
linear projection.  A corrected proof is included in the erratum
appended to this version.  The corrected theorem requires the additional
assumption \(q\ge m-1\).  Consequently, Theorems~1.17 and~1.18 require
\(q\ge k-1\), while Theorem~1.3 and Corollary~1.4 require
\(q\ge n-k-1\).

Proposition~5.1 admits a new direct proof independent of Theorem~4.7.
Therefore, the generic-intersection formula, the equivalence results in
Sections~5 and~6, and the corresponding statements in
Theorems~1.13, 1.14, and~1.16 and Corollary~1.15 remain valid without
any restriction on \(q\).

The same correction implies that the main positive result of the
follow-up paper \emph{Gabidulin Codes Achieve List Decoding Capacity
with an Order-Optimal Column-To-Row Ratio} requires the assumption
\(q\ge n-k-1\).
\end{minipage}%
}
\end{center}
\medskip

\begin{abstract}
Gabidulin codes, serving as the rank-metric counterpart of Reed--Solomon codes, constitute an important class of maximum rank distance (MRD) codes. However, unlike the fruitful positive results about the list decoding of Reed--Solomon codes, results concerning the list decodability of Gabidulin codes in the rank metric are all negative so far. For example, in contrast to Reed--Solomon codes, which are always list decodable up to the Johnson bound in the Hamming metric,
Raviv and Wachter-Zeh (IEEE TIT, 2016 and 2017) constructed a class of Gabidulin codes that are not even combinatorially list decodable beyond the unique decoding radius in the rank metric. Proving the existence of Gabidulin codes with good combinatorial list decodability in the rank metric has remained a long-standing open problem.

In this paper, we resolve the aforementioned open problem by showing that, with high probability, random Gabidulin codes over sufficiently large alphabets attain the optimal generalized Singleton bound for list decoding in the rank metric. In particular, they achieve list decoding capacity in the rank metric.

Our work is significantly influenced by the recent breakthroughs in the combinatorial list decodability of Reed--Solomon codes, especially the work by Brakensiek, Gopi, and Makam (STOC 2023). Our major conceptual and technical contributions, which may hold independent interest, consist of the following: (1) We initiate the study of ``higher order MRD codes'' and provide a novel unified theory, which runs parallel to the theory of ``higher order MDS codes'' developed by Brakensiek, Gopi, and Makam. (2) We prove a natural analog of the GM-MDS theorem, proven by Lovett (FOCS 2018) and Yildiz and Hassibi (IEEE TIT, 2019), which we call the GM-MRD theorem. In particular, our GM-MRD theorem for Gabidulin codes is strictly stronger than the GM-MDS theorem for Gabidulin codes proven by Yildiz and Hassibi.


\end{abstract}
\newpage
\tableofcontents
\thispagestyle{empty}
\newpage
\setcounter{page}{1}
\section{Introduction}
A rank-metric code is a collection of matrices in $\F_q^{m\times n}$ with $m \geq n$, where the distance between two matrices $A$ and $B$ is defined to be the rank of $A-B$.
Introduced by Delsarte \cite{Del} as a combinatorial curiosity, rank-metric codes have since developed into a field of study with applications and connections spanning network coding \cite{KK07,SKK08,kK08,SK09}, space-time coding \cite{LGB03, LK05}, cryptography \cite{Gib96,Gib95,Loi10,Loi17}, and pseudorandomness \cite{FS12,FG15,GWX16,GRX21,GVJZ23}.

A rank-metric code $C$, with rate $R$ and relative minimum distance $\delta$, must satisfy the Singleton bound $1-R \geq \delta$. If the equality $1-R = \delta$ is attained, then the code $C$ is called a Maximum Rank Distance (MRD) code. Gabidulin codes \cite{Del,G85,Rot91}, an important class of MRD codes, can be viewed as the linearized versions of Reed--Solomon codes. They are defined through the evaluation of linearized polynomials within a subspace. This perspective positions Gabidulin codes as a perfect analogy to Reed--Solomon codes in the rank metric setting. Similar to Reed--Solomon codes, one can design highly efficient encoding and unique decoding algorithms for Gabidulin codes by generalizing the Berlekamp--Welch algorithm \cite{kK08}. However, regarding the list decoding regime, it is unknown whether exist any Gabidulin codes that can be list decoded beyond the unique decoding radius. In comparison, the celebrated Guruswami--Sudan list decoding algorithm \cite{GS99} can list decode any Reed--Solomon code up to the Johnson bound. Thus, a long-standing open problem remains for rank-metric codes:
\begin{Op}
Are Gabidulin codes algorithmically or combinatorially list decodable\footnote{Combinatorial list decodability refers to the condition where the output list of candidate codewords, within the list decoding radius, is small, whereas algorithmic list decodability further requires that there exists an efficient algorithm outputting this list.} beyond the unique decoding radius in the rank metric?
\end{Op}
In this paper, we provide a positive answer to the ``combinatorial'' aspect of this problem. 
Namely, we prove that Gabidulin codes with random evaluation subspaces are combinatorially list decodable up to the generalized Singleton bound with high probability in the rank metric. Our approach is inspired by the recent progress regarding the list decodability of random Reed--Solomon codes.
\subsection{List Decodability of Reed--Solomon Codes}
In the Hamming metric, the Singleton bound \cite{Sin64} states that any code with a rate $R$ and relative minimum distance $\delta$ must satisfy $R + \delta \leq 1$. Codes that attain this bound are called Maximally Distance Separable (MDS) codes. Reed--Solomon codes are a class of Hamming-metric codes that attain the Singleton bound. This bound was recently generalized to the list decoding setting by investigating the list decoding of Reed--Solomon codes. Shangguan and Tamo\footnote{Before the work of Shangguan and Tamo, Rudra and Wootters \cite{rudra2014every} were the first to show that random RS codes are list decodable beyond the Johnson radius for some parameter regimes. Their methodology is purely analytic instead of algebraic.} \cite{shangguan2020combinatorial} proposed the generalized Singleton bound $\rho \leq \frac{L}{L+1}(1-R)$, where $\rho$ is the list decoding radius and $L$ is the list size. Moreover, they showed that such a bound is achievable for $L=2,3$. Since then, there have been some efforts to prove the tightness of this bound \cite{GSTW21,FKS22,GST23}. A major breakthrough was made by Brakensiek, Gopi, and Makam \cite{brakensiek2023generic} in proving that this generalized Singleton bound holds for any list size $L$. In particular, their argument demonstrated that Reed--Solomon codes with random evaluation points defined over $\F_q$ (where $q$ is exponential in the length of the code) can attain the generalized Singleton bound with high probability. 

The exponential size of the field is inevitable \cite{brakensiek2022lower}. If we accept a $\epsilon$ gap from this bound, i.e., $\rho = \frac{L}{L+1}(1-R-\epsilon)$, then the size of the field can be reduced to $O_{\epsilon}(n^2)$ \cite{guo2023randomly} and further to $O_{\epsilon}(n)$ \cite{alrabiah2023randomly}. In the context of constant-sized field sizes, Brakensiek, Dhar, Gopi, and Zhang \cite{brakensiek2023ag} showed that algebraic geometry codes with random evaluation points defined over $\F_q$, with $q = \exp(O(1/\epsilon^2))$, can be list decoded up to the radius $\frac{L}{L+1}(1-R-\epsilon)$. The exponential dependence of $q$ on $1/\epsilon$ was proved to be necessary ($q \geq \exp(\Omega(1/\epsilon))$) in \cite{alrabiah2023ag}. Also, based on the frameworks built by \cite{brakensiek2023generic,guo2023randomly,alrabiah2023randomly}, Ron-Zewi, Venkitesh, and Wootters \cite{ron2024efficient} recently showed similar list decodability phenomenon holds for polynomial ideal codes, which includes several well-studied families of error-correcting codes such as Reed--Solomon codes, folded Reed--Solomon codes, and multiplicity codes.

Our understanding of the generalized Singleton bound is nearly complete, thanks to the recent progress mentioned above. One might ask whether this generalized Singleton bound holds for codes in other metrics, such as the rank metric. We note that a code with a minimum rank distance of $d$ is also a code with a minimum Hamming distance of at least $d$. Thus, the argument for the upper bound of the generalized Singleton bound can be applied straightforwardly. However, proving the tightness of this upper bound appears to be highly non-trivial.

\subsection{List Decodability of Rank-Metric codes and Gabidulin Codes}
Let us first review some results concerning the list decoding of rank-metric codes. Ding \cite{ding2014list} proposed the Gilbert-Varshamov bound (GV) for the list decoding of rank-metric codes. Specifically, she showed that, with high probability, a random rank-metric code can be list decodable up to the GV bound, and any rank-metric code cannot be list decodable beyond this GV bound. In \cite{ding2014list}, the list size for random linear rank-metric codes is given as $O\left((\frac{1}{\epsilon})^{\frac{1}{\epsilon}}\right)$, a result of the limited randomness available for these codes.
Guruswami and Resch \cite{GR18} adopted the ideas from \cite{GHK10} to further reduce the list size of random linear rank-metric codes to a constant $O(1/\epsilon)$.
The results mentioned above are not explicit.
Meanwhile, Guruswami, Wang, and Xing \cite{GWX16} presented the first explicit class of rank-metric codes with efficient list decoding algorithms capable of decoding up to the Singleton bound. Their construction involved carefully selecting a subcode of the Gabidulin code using a tool known as subspace designs \cite{GK13}, thereby reducing the list size to a constant. Later on, Xing and Yuan \cite{XY18} introduced another explicit class of rank-metric codes that can be list decoded up to the Singleton bound. Their construction borrowed ideas from folded Reed--Solomon codes to fold Gabidulin codes in a similar manner. To approach the Singleton bound, the column-to-row ratio $\frac{n}{m}$ should be close to $0$. In the regime where the column-to-row ratio is constant, the construction from \cite{XY18} remains applicable and can correct up to a $\frac{2}{3}(1-2R)$ fraction of errors. By modifying this approach, Liu, Xing, and Yuan \cite{LYX23} presented an explicit construction of list decodable rank-metric codes with a ratio of $2/3$. An open question remains: Can we explicitly construct a class of rank-metric codes with a column-to-row ratio of $1$ that is list decodable beyond the unique decoding radius? For comparison, the Gilbert--Varshamov (GV) bound argument suggests that, with high probability, there exist random rank-metric codes with a column-to-row ratio of $1$ that can be list decoded up to a radius of $1-\sqrt{R}$, which is strictly larger than the unique decoding radius of $\frac{1-R}{2}$.

Although there are a few positive results for the list decoding of rank-metric codes, the findings regarding the list decoding of Gabidulin codes have been predominantly negative so far. Wachter-Zeh \cite{WZ13} proved that any square Gabidulin codes---where ``square'' indicates that the column-to-row ratio is $1$---cannot be list decoded beyond a radius of $1-\sqrt{R}$. Furthermore, Raviv and Wachter-Zeh \cite{raviv2016some,raviv2017correction} constructed a class of Gabidulin codes that cannot be list decoded to any radius beyond the unique decoding radius. Despite these negative results, our work demonstrates the existence of Gabidulin codes that can be list decoded up to the generalized Singleton bound. We briefly review some applications based on the list decoding of rank-metric codes or Gabidulin codes and discuss the implications of our results. 
\paragraph{Code-based cryptography.} Rank-metric code-based cryptosystems have been studied since 1991 \cite{PT91,CS96}. Rank-metric codes play an important role in designing post-quantum cryptosystems. Two well-known systems, RQC \cite{MAB2019b} and ROLLO \cite{MAB2019a}, have been considered for the NIST standardization. The main advantage of rank-metric codes is that the hard problems in the rank metric, such as the generic decoding problem, seem harder to solve than their counterparts in the Hamming metric. The random syndrome decoding (RSD) problem, in particular, assumes that it is computationally hard to decode a random $[n,k]_{\F_{q^m}}$-linear code\footnote{See \cref{Prelim} for the definition of $[n,k]_{\F_{q^m}}$-linear codes.} in the rank metric. Recent developments in attacks based on G\"obner bases \cite{BBBGNRT20a,BBCGPSTV20b} have significantly impacted the security parameters of cryptosystems relying on the hardness of the search RSD problem. Given that the search RSD problem is not as hard as previously believed, the list search version of the RSD problem, which is strictly harder than the search RSD problem, might be considered. This problem outputs a list of codewords within a given radius of a random $[n,k]_{\F_{q^m}}$-linear code. The Gabidulin code list search RSD problem was thought to be more challenging than the list search RSD problem \cite{RJBPLW20}, mainly because only negative results were available for the list decoding of Gabidulin codes. Faure and Loidreau \cite{FL05} proposed the Faure–Loidreau (FL) cryptosystem, which is connected with the list decoding of Gabidulin codes. The original FL cryptosystem was compromised by exploiting the list decoding of interleaved Gabidulin codes. However, the revised FL cryptosystem, LIGA \cite{RPW21}, which relies on the hardness of list decoding Gabidulin codes, is resistant to this attack. The hardness assumption for the list decoding of Gabidulin codes suggests that it is computationally difficult to find all codewords of a Gabidulin code within the list decoding radius $1-\sqrt{R}$. Prior to our work, it was widely believed that the output list for this list decoding problem should be exponential. Thus, our work enhances the understanding of the list decoding of Gabidulin codes, which could significantly impact the design of rank-metric code-based cryptosystems.

\paragraph{Pseudorandomness.}

Rank-metric codes and Gabidulin codes have also found applications and connections in the field of pseudorandomness. In \cite{FG15}, Forbes and Guruswami studied various objects related to ``linear-algebraic pseudorandomness.'' They specifically proved that bilinear lossless two-source rank condensers are equivalent to linear rank-metric codes. Consequently, they showed that Gabidulin codes translate into optimal two-source rank condensers.
Inspired by the explicit constructions of subcodes of Gabidulin codes that are list decodable in the rank metric \cite{GX13, GWX16}, Guruswami, Resch, and Xing \cite{GRX21} presented an explicit construction of dimension expanders, which can be seen as the linear-algebraic analogs of expander graphs. This construction achieves excellent ``lossless'' expansion.
Cheraghchi, Didier, and Shokrollahi \cite{MDS11} used Gabidulin codes to construct explicit affine extractors for a restricted family of affine sources over large fields, which have applications in wiretap protocols.
In \cite{GVJZ23}, Guo, Volk, Jalan, and Zuckerman considered $(\epsilon, e)$-biased sources over $\F_p$. These are random sources $\mathcal{X}$ over $\F_p^n$ that are $\epsilon$-biased against all but a subgroup $H$ of characters, where $|H| \leq e$, thereby generalizing affine sources of small codimension. Using Gabidulin codes, they constructed deterministic extractors that extract almost all the min-entropy from such sources. This construction was further utilized as a component in constructing extractors for other algebraic sources.

Rank-metric codes have also been connected to other problems studied in theoretical computer science, such as low-rank recovery \cite{FS12}. Despite these applications, we believe that the potential use of rank-metric codes in theoretical computer science has not been fully explored. In particular, given the intimate connection between pseudorandomness and error-correcting codes in the Hamming metric, especially those with good list decodability or list recoverability \cite{GUV09, Vad12}, it is conceivable that rank-metric codes and Gabidulin codes may find similar applications. 

\subsection{Our Results} 

Our main results can be divided into three parts:
\begin{enumerate}
    \item The optimal (combinatorial) list decodability of random Gabidulin codes over sufficiently large alphabets in the rank metric.
    \item The formulation of three notions of ``higher order MRD codes'', which we denote as $\operatorname{GKP}(\ell)$, $\operatorname{MRD}(\ell)$, and $\operatorname{LD-MRD}(\ell)$, and the demonstration of their equivalence.
    \item A result we call the $\operatorname{GM-MRD}$ theorem, which states that symbolic Gabidulin codes satisfy $\operatorname{GKP}(\ell)$ for all $\ell$.
\end{enumerate} 

The first two items constitute a theory parallel to that of Brakensiek, Gopi, and Makam \cite{brakensiek2023generic}, who demonstrated the optimal list decodability of random Reed-Solomon codes over sufficiently large alphabets and established the equivalence among three notions of ``higher order MDS codes.'' The third item, i.e., the GM-MRD theorem, serves as a rank-metric analog to the \emph{GM-MDS theorem}, which was conjectured in \cite{dau2014existence} and subsequently proved in \cite{lovett2018mds,yildiz2019optimum}.
In the following, we will explain each of these three items in detail.

\subsubsection{Optimal List Decodability of Random Gabidulin Codes} 

Recall that for $\rho\in [0,1]$, a code $C\subseteq\Sigma^n$ over an alphabet $\Sigma$  is said to be \emph{$(\rho,\ell)$-list decodable} if for any $\by\in\F_q^n$, it holds that 
\[
|\{\bx\in C: d(\bx,\by)\leq \rho n\}|\leq \ell.
\]
where $d(\bx,\by)$ denotes the distance between $\bx$ and $\by$. Here $\rho$ is called the list decoding radius and $\ell$ is called the list size.

In \cite{shangguan2020combinatorial}, Shangguan and Tamo proved the \emph{generalized Singleton bound} for list decoding, generalizing the classical Singleton bound for unique decoding. For linear codes, this generalized Singleton bound states that 
if $C\subseteq\F_q^n$ is an $[n,k]$-linear code that is $(\rho,\ell)$-list decodable in the Hamming metric, then it holds that $\rho\leq \frac{\ell}{\ell+1}\left(1-\frac{k}{n}\right)$.

Note that the rank distance $d_R(\bx,\by)$ between $\bx,\by\in\F_{q^m}^n$ is always bounded by their the Hamming distance $d_H(\bx,\by)$ from above. This is because $\bx-\by$, viewed as an $m\times n$ matrix over $\F_q$, has at most $d_H(\bx,\by)$ nonzero columns. It follows that a $(\rho,\ell)$-list decodable code $C\subseteq\F_{q^m}^n$ in the rank metric remains $(\rho,\ell)$-list decodable in the Hamming metric. This immediately implies that the generalized Singleton bound  continues to hold in the rank metric:

\begin{lemma}[Generalized Singleton bound for rank-metric codes]
Let $C\subseteq \F_{q^m}^n$ be an $[n,k]_{\F_{q^m}}$-linear code that is $(\rho,\ell)$-list decodable in the rank metric. Then it holds that
\[
\rho\leq \frac{\ell}{\ell+1}\left(1-\frac{k}{n}\right).
\]
\end{lemma}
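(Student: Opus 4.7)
The plan is to reduce the rank-metric version of the generalized Singleton bound to the already-known Hamming-metric version due to Shangguan and Tamo \cite{shangguan2020combinatorial}. The bridge between the two is the elementary comparison $d_R(\bx,\by)\leq d_H(\bx,\by)$ for all $\bx,\by\in\F_{q^m}^n$, which is precisely the observation made just before the lemma: identifying $\bx-\by$ with an $m\times n$ matrix over $\F_q$, its rank cannot exceed the number of its nonzero columns, which is exactly $d_H(\bx,\by)$.

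Given this inequality, I would first argue at the level of balls: for any center $\by\in\F_{q^m}^n$ and any radius $r$, the Hamming ball of radius $r$ around $\by$ is contained in the corresponding rank ball, since $d_R\leq d_H$ means that any point close in Hamming distance is also close in rank distance. Intersecting with $C$ and specializing to $r=\rho n$, the assumption that $C$ is $(\rho,\ell)$-list decodable in the rank metric gives at most $\ell$ codewords in the rank ball, hence at most $\ell$ codewords in the Hamming ball as well. Thus $C$, viewed as an $[n,k]_{\F_{q^m}}$-linear code in the Hamming metric, is also $(\rho,\ell)$-list decodable.

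At this point I would invoke the generalized Singleton bound of Shangguan and Tamo for linear codes in the Hamming metric to conclude $\rho\leq \frac{\ell}{\ell+1}(1-k/n)$, which is exactly the claim. There is no real obstacle: the entire argument is a two-line monotonicity reduction, and the only nontrivial ingredient, the Hamming-metric generalized Singleton bound, is already in the literature. The takeaway is that the upper bound side of the generalized Singleton bound transfers from Hamming to rank for free; the substantive content of the paper lies on the matching lower bound, i.e., exhibiting Gabidulin codes that actually attain this $\rho$.
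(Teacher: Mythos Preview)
Your proposal is correct and takes essentially the same approach as the paper: use $d_R\leq d_H$ to deduce that rank-metric $(\rho,\ell)$-list decodability implies Hamming-metric $(\rho,\ell)$-list decodability, then invoke the Shangguan--Tamo generalized Singleton bound. In fact, your phrasing of the implication direction (rank-list-decodable $\Rightarrow$ Hamming-list-decodable) is the correct one; the paper's text just before the lemma states the reverse implication, which is a slip, but your version is what is actually needed.
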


The central objects studied in this paper are \emph{Gabidulin codes}.
For integers $m\geq n\geq k$ and elements $\alpha_1,\dotsc, \alpha_n\in\F_{q^m}$ that are linearly independent over $\F_q$, the corresponding Gabidulin code over $\F_{q^m}$ is defined to be the $[n,k]_{\F_{q^m}}$-linear code
\[
\mG_{n,k}(\alpha_1,\dotsc, \alpha_n) := \left\{(f(\alpha_1),\ldots,f(\alpha_n)):\; f(X)=\sum_{i=1}^{k} c_i X^{q^{i-1}}, c_1,\dots,c_k\in \F_q^m\right\}\subseteq \F_{q^m}^n.
\]

Our main theorems state that, for random $\alpha_1,\dots,\alpha_n$ and sufficiently large $m$, with high probability, the Gabidulin code $\mG_{n,k}(\alpha_1,\dotsc, \alpha_n)$ over $\F_{q^m}$ are list decodable up to a radius that exactly attains the generalized Singleton bound.

\begin{thm}[Informal version of \cref{thm:list-decodability}]\label{mm1}
Let $(\alpha_1,\dots,\alpha_n)$ be uniformly distributed over the set of all vectors in $\F_{q^m}^n$ whose coordinates are linearly independent over $\F_q$.\footnote{While we assume that $(\alpha_1,\dots,\alpha_n)$ is sampled from the set of all vectors with $\F_q$-linearly independent coordinates, one can also sample it from the whole set $\F_{q^m}^n$ given that most vectors in this set have $\F_q$-linearly independent coordinates. This distinction is unimportant.}
Suppose $m\geq cn(n-k)\ell+\log_q(1/\delta)$, where $c$ is a large enough absolute constant.
Then it holds with probability at least $1-\delta$ that the Gabidulin code $\mG_{n,k}(\alpha_1,\dots, \alpha_n)$ over $\F_{q^m}$ is $\left(\frac{L}{L+1}\left(1-{k}/{n}\right), L\right)$-list decodable for all $L\in [\ell]$ in the rank metric.
\end{thm}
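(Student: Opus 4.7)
The plan is to follow a framework parallel to that of Brakensiek, Gopi, and Makam in the Hamming setting, assembled from the three notions of higher-order MRD codes ($\operatorname{GKP}(\ell)$, $\operatorname{MRD}(\ell)$, $\operatorname{LD-MRD}(\ell)$) and the GM-MRD theorem advertised in the introduction. The first step is to prove the contrapositive of the list-decoding implication: supposing the Gabidulin code $\mG_{n,k}(\alpha_1,\dots,\alpha_n)$ admits $L+1$ codewords $\bc_0,\dots,\bc_L$ within rank distance $\tfrac{L}{L+1}(1-k/n)\cdot n$ of some received word $\by$, the differences $\bc_i-\by$, viewed as matrices in $\F_q^{m\times n}$, have small rank, so their column $\F_q$-spans yield a configuration of low-dimensional subspaces of $\F_q^n$ that violates $\operatorname{LD-MRD}(L)$. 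Running this over $L\in[\ell]$ reduces the theorem to verifying $\operatorname{LD-MRD}$ at every level, which by the advertised equivalence follows from $\operatorname{GKP}(\ell)$.

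The second step is to establish the GM-MRD theorem: treating $\alpha_1,\dots,\alpha_n$ as formal $\F_q$-linearly independent indeterminates, the symbolic Gabidulin code satisfies $\operatorname{GKP}(\ell)$. I would adapt the inductive proofs of GM-MDS due to Lovett and to Yildiz-Hassibi. Both proceed by a combinatorial ``column-removal'' that reduces a general subspace configuration to a boundary case, tracking a non-vanishing determinantal identity whose entries are polynomials in the $\alpha_i$. In the Gabidulin setting those polynomials become \emph{linearized}, built from Frobenius twists $\alpha_i^{q^j}$; the reduction should transfer once one has the right $q$-analog of the Vandermonde cancellation that drives GM-MDS.

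Third, the GM-MRD theorem produces, for each relevant subspace configuration, a witness polynomial $P(\alpha_1,\dots,\alpha_n)$ that is nonzero in the polynomial ring over $\F_q$ and of degree bounded by $\operatorname{poly}(n,k,\ell)$. Expanding $P$ along a fixed $\F_q$-basis of $\F_{q^m}$ and applying a Schwartz-Zippel bound, together with a union bound over the $q^{O(nk\ell)}$ configurations that can arise in the failure set across all $L\in[\ell]$, one obtains a failure probability bounded by $q^{-m+O(nk\ell)}$. The hypothesis $m\geq c\,nk\ell+\log_q(1/\delta)$ with $c$ a sufficiently large absolute constant then closes the argument.

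The principal obstacle I expect is the GM-MRD theorem itself. Classical GM-MDS leverages the ordinary Vandermonde and the commutative algebra $\F_q[X_1,\dots,X_n]$, whereas the Gabidulin analog involves non-commutative composition of linearized polynomials and a $q$-Vandermonde whose structure is governed by Frobenius twists; tracking how these twists interact with unions, intersections, and quotients of $\F_q$-subspaces---and reproducing the inductive cancellation that underlies GM-MDS---is where the real work lies. A secondary but delicate issue is pinning down the definitions of $\operatorname{GKP}/\operatorname{MRD}/\operatorname{LD-MRD}$ so that the three are equivalent, the list-decoding reduction goes through cleanly, and the enumeration of bad configurations exactly matches the combinatorial input demanded by the GM-MRD proof.
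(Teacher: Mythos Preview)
Your outline captures the overall architecture---GM-MRD plus an equivalence among higher-order MRD notions plus Schwartz--Zippel and a union bound---but it misses one structural ingredient that the paper relies on. The equivalence in \cref{cor:three_equiv} is \emph{not} among three properties of the same code: $\operatorname{GKP}(\ell+1)$ and $\operatorname{MRD}(\ell+1)$ are properties of $C$, whereas $\ldmrd{\ell}$ is a property of the \emph{dual} $C^{\perp}$. Thus, showing that $\mG_{n,k}(\alpha_1,\dots,\alpha_n)$ itself is $\operatorname{GKP}$ via GM-MRD does not directly yield its list decodability; what you need is $\operatorname{GKP}(\ell+1)$ for its dual code. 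Your sentence ``which by the advertised equivalence follows from $\operatorname{GKP}(\ell)$'' elides precisely this point.

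The paper closes this gap via the duality of Gabidulin codes (\cref{duali}): the dual of $\mG_{n,k}(\alpha_1,\dots,\alpha_n)$ is again a Gabidulin code $\mG_{n,n-k}(\beta_1,\dots,\beta_n)$, where $(\beta_1,\dots,\beta_n)$ is determined by $(\alpha_1,\dots,\alpha_n)$ up to scalar. One then checks that the map $[\alpha]\mapsto[\beta]$ is a bijection on projective equivalence classes, so when $\alpha$ is uniform so is $[\beta]$; since the code depends only on $[\beta]$, the finite-field GM-MRD theorem applies to the dual and yields $\operatorname{GKP}(\ell+1)$ there with high probability. Without this duality step your union bound is aimed at the wrong code. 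Apart from this, your assessment that the GM-MRD theorem is the main technical hurdle (non-commutative composition of $q$-linearized polynomials, the need for a new inductive framework---the paper's $s$-admissible tuples) is accurate, and your Schwartz--Zippel/counting plan matches the paper's.
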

\begin{cor}[Informal version of \cref{cor:mm1}]\label{mm2}
Let $(\alpha_1,\dots,\alpha_n)$ be uniformly distributed over the set of all vectors in $\F_{q^m}^n$ whose coordinates are linearly independent over $\F_q$. 
Suppose $m\geq cn(n-k)/\eps+\log_q(1/\delta)$, where $c$ is a large enough absolute constant.
Then it holds with probability at least $1-\delta$ that the Gabidulin code $\mG_{n,k}(\alpha_1,\dots, \alpha_n)$ over $\F_{q^m}$ is $\left(1-R-\epsilon,\frac{1-R}{\epsilon}\right)$-list decodable, where $R=k/n$ is the rate of the code. 
\end{cor}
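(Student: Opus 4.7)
The plan is to derive the corollary as an immediate consequence of \cref{mm1} by selecting the list-size parameter $\ell$ appropriately. The key algebraic observation is that the generalized Singleton radius can be rewritten as $\frac{L}{L+1}(1-R) = (1-R) - \frac{1-R}{L+1}$, so pushing this above $1-R-\epsilon$ amounts to requiring $\frac{1-R}{L+1} \leq \epsilon$, equivalently $L \geq \frac{1-R}{\epsilon} - 1$. I would therefore set $\ell := \lceil \frac{1-R}{\epsilon} - 1 \rceil$, which simultaneously satisfies $\ell \geq \frac{1-R}{\epsilon} - 1$ (so the radius inequality goes through) and $\ell \leq \frac{1-R}{\epsilon}$ (so the resulting list size matches what the corollary asserts).

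With this choice of $\ell$, I would invoke \cref{mm1}. Its hypothesis $m \geq c n k \ell + \log_q(1/\delta)$ follows, after bounding $\ell \leq (1-R)/\epsilon \leq 1/\epsilon$, from the corollary's assumption $m \geq c' n k / \epsilon + \log_q(1/\delta)$ upon taking the absolute constant $c'$ slightly larger than $c$. The theorem then guarantees, with probability at least $1-\delta$, that the random Gabidulin code $\mG_{n,k}(\alpha_1,\dots,\alpha_n)$ is $\bigl(\tfrac{L}{L+1}(1-R),\; L\bigr)$-list decodable for every $L \in [\ell]$. Specializing to $L = \ell$, the radius satisfies $\frac{\ell}{\ell+1}(1-R) \geq 1 - R - \epsilon$ by construction, and the list size $\ell$ is at most $(1-R)/\epsilon$; hence the code is $\bigl(1-R-\epsilon,\; (1-R)/\epsilon\bigr)$-list decodable, which is exactly the conclusion.

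The main obstacle: there is essentially no substantive obstacle in this proof, since all of the heavy lifting (the higher order MRD theory and the GM-MRD theorem that underpin \cref{mm1}) has been done upstream. The only care required is in the integer rounding of $\ell$ and in the routine absorption of multiplicative constants when passing from the bound $cnk\ell + \log_q(1/\delta)$ to the clean form $c'nk/\epsilon + \log_q(1/\delta)$; one should also verify that the edge cases $R$ very close to $1$ or $\epsilon$ very close to $1-R$ remain consistent, which they do because in those regimes $\ell$ collapses to a small constant and the statement becomes vacuous or trivial.
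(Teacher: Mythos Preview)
Your proposal is correct and follows essentially the same approach as the paper: the paper sets $\ell=\lceil\frac{1-R-\epsilon}{\epsilon}\rceil$ (which equals your $\lceil\frac{1-R}{\epsilon}-1\rceil$), notes $\ell\le\frac{1-R}{\epsilon}$, and invokes the main theorem. The arithmetic and constant-absorption you describe are exactly what is needed.
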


In fact, we prove the stronger statement that $\mG_{n,k}(\alpha_1,\dots, \alpha_n)$ is, with high probability, \emph{average-radius list decodable} with the parameters stated in \cref{mm1} and \cref{mm2}. See \cref{thm:list-decodability} and \cref{cor:mm1} for details. Average-radius list decodability is stronger than standard list decodability, which we will discuss shortly when defining the notion $\operatorname{LD-MRD}(\ell)$.

\paragraph{Field size lower bound.} 
\cref{mm1} shows that Gabidulin codes over $\F_{q^m}$ can attain the Singleton bound (even in the sense of average-radius list decodability) for some $m=O_\ell(n^2)$.
To complement this upper bound, we establish a matching lower bound on the field size via a technique developed in \cite{alrabiah2023ag}. 
\begin{thm}[Informal version of \cref{A.1}]\label{thm:informal-lower-bound}
Let $\ell\geq 2$.
Let $C\subseteq\F_{q^{m}}^{n}$ be a rank-metric code over $\F_{q^m}$ of rate $R$. If $C$ is $\left(\frac{\ell}{\ell+1}(1-R),\ell\right)$-average-radius list decodable (see \cref{defn:average}) and $R\in [c, 1-c-\ell/n]$ for some constant $c>0$, then $m=\Omega_{\ell}(n^2)$.
\end{thm}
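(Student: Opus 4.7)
The strategy is to adapt the alphabet-size lower bound of Alrabiah, Guruswami, and Li \cite{alrabiah2023ag}---originally proved to show that algebraic-geometry codes approaching the generalized Singleton bound require exponentially large alphabets---to the rank-metric setting. The plan is to argue by contradiction: assume the $[n,k]_{\F_{q^m}}$-linear code $C$ is $\bigl(\tfrac{\ell}{\ell+1}(1-R),\ell\bigr)$-average-radius list decodable in the rank metric with $m$ substantially smaller than $n^2$, and then exhibit $\ell+1$ codewords together with a common center $y\in\F_{q^m}^n$ that violate the average-radius condition.

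I would first unfold the average-radius condition: $C$ has the stated list decodability iff for every $\ell+1$ distinct codewords $c_0,\dots,c_\ell\in C$ and every $y\in\F_{q^m}^n$, $\sum_{i=0}^{\ell} d_R(c_i,y) > \ell(n-k)$. Writing $c_i=y+E_i$ with $E_i\in\F_q^{m\times n}$, the forbidden \emph{bad configurations} are exactly those with $\sum_i \rank(E_i) \leq \ell(n-k)$; I focus on the extremal stratum $\sum_i \rank(E_i) = \ell(n-k)$. Parametrize bad configurations by a rank profile $(e_0,\dots,e_\ell)$ summing to $\ell(n-k)$, and, treating $C\subseteq\F_q^{m\times n}$ under the natural $\F_q$-linear embedding, view the space of bad configurations as the $\F_q$-variety of tuples $(E_0,\dots,E_\ell)\in(\F_q^{m\times n})^{\ell+1}$ satisfying $\rank(E_i)=e_i$ for each $i$ and $E_i-E_0\in C$ for each $i\geq 1$. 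The latter are $\ell$ independent linear constraints of codimension $m(n-k)$, while the determinantal condition $\rank(E_i)\leq e_i$ has codimension $(m-e_i)(n-e_i)$. Balancing the total codimension against the ambient $\F_q$-dimension $(\ell+1)mn$, and optimizing over profiles $(e_0,\dots,e_\ell)$ in the rate range $R\in[c,\,1-c-\ell/n]$ (which keeps the extremal profile bounded away from the degenerate regimes $e_i\approx 0$ or $e_i\approx n$), the expected dimension of the configuration variety becomes nonnegative unless $m\geq \Omega_\ell(n^2)$; a density / $\F_q$-lifting argument then guarantees an $\F_q$-rational bad configuration, contradicting the hypothesis on $C$.

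The main obstacle will be promoting this dimension-theoretic heuristic into a rigorous counting argument that actually produces a bad configuration over the specific field $\F_q$, not merely over an algebraic closure, since determinantal and linear varieties over a finite field can have far fewer $\F_q$-points than their dimension would suggest. The argument of \cite{alrabiah2023ag} handles this by isolating a well-structured sub-family of configurations---essentially a generic Schubert-type slice---on which the governing equations can be solved explicitly and the number of $\F_q$-points controlled. Translating this to the rank metric requires replacing the support/agreement combinatorics of Hamming error vectors with a column-space/row-space stratification of low-rank matrices, which is naturally governed by Grassmannian and Schubert-cell geometry over $\F_q$; matching the ambient-dimension accounting to these refined strata is where the quadratic $\Omega_\ell(n^2)$ dependence, as well as the uniformity in $R$ throughout the admissible range, should ultimately emerge.
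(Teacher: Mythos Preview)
Your proposal correctly identifies the source \cite{alrabiah2023ag} and the overall contradiction strategy (produce $\ell+1$ codewords and a center violating the average-radius bound), but what you have written is a dimension heuristic, not a proof, and you yourself flag the gap: passing from ``expected dimension nonnegative'' to an actual $\F_q$-rational bad configuration is exactly the hard part, and you do not carry it out. In fact the heuristic is already shaky before that: for an MRD code, every nonzero difference $E_i-E_0=c_i-c_0\in C$ has rank at least $n-k+1$, which kills many of the rank profiles $(e_0,\dots,e_\ell)$ you are counting over, so the naive codimension bookkeeping $(m-e_i)(n-e_i)$ against the linear constraints $m(n-k)$ does not interact with the code in the way you need.

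The paper's proof is completely elementary and avoids any algebraic-geometric machinery. The key idea you are missing is a \emph{two-stage pigeonhole} that builds the bad configuration explicitly. Fix once and for all an $\ell$-dimensional subspace $V_0\subseteq\F_q^n$ and a complement $\overline{V}_0$. For a uniformly random codeword $M\in C$, an averaging argument (using $|C|=q^{mk}$ versus $q^{m(k-1)}$ possible restrictions) shows that for at least half of the $(k-1)$-dimensional subspaces $V\subseteq\overline{V}_0$ there is some other codeword $M_V$ agreeing with $M$ on $V$; there are at least $q^{(k-1)(n-\ell-k+1)}$ such $V$'s. Now pigeonhole on the restriction $M_V|_{V_0}\in\F_q^{m\times\ell}$: if $\ell q^{\ell m}$ is smaller than half the number of those $V$'s, you find distinct $V_1,\dots,V_\ell$ with $M_{V_1}|_{V_0}=\cdots=M_{V_\ell}|_{V_0}$. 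The center $Y$ is then written down in closed form (agree with the $M_{V_i}$ on $V_0$, agree with $M$ on $\overline{V}_0$), giving $\rank(M-Y)\le\ell$ and $\rank(M_{V_i}-Y)\le n-k-1$, hence $\sum_i\rank\le\ell(n-k)$. The inequality $\ell q^{\ell m}\ge\tfrac12 q^{(k-1)(n-\ell-k+1)}$ forced by this contradiction is precisely $m=\Omega_\ell((k-1)(n-\ell-k+1))=\Omega_\ell(n^2)$ in the stated rate range. No Schubert cells or determinantal varieties are needed; the ``structured sub-family'' you allude to is just the family of $(k-1)$-planes inside a fixed hyperplane complement of a fixed $\ell$-plane.
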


\subsubsection{Higher Order MRD Codes} 

Maximum rank distance (MRD) codes are the counterparts of maximum distance separable (MDS) codes in the rank metric.
Recall for $m\geq n\geq k$, an $[n,k]_{\F_{q^m}}$-linear rank-metric code $C$ is a MRD code if its minimum (rank) distance $d(C)$ attains the Singleton bound, i.e., $d(C)=n-k+1$.

Brakensiek, Gopi, and Makam \cite{brakensiek2023generic} studied three different definitions of ``higher order MDS codes,'' which they call $\operatorname{GZP}(\ell)$, $\operatorname{MDS}(\ell)$, and $\operatorname{LD-MDS}(\ell)$ codes. Amazingly, they proved that these three notions are equivalent in a certain rigorous sense.
This equivalence was crucially used in their proof that generic Reed--Solomon codes achieve list decoding capacity.

This raises the questions of whether similar notions of ``higher order MRD codes'' exist, and if so, whether there is also an equivalence among them. In this paper, we show that the answer to both questions is yes.

Specifically, we introduce the notions of $\operatorname{GKP}(\ell)$, $\operatorname{MRD}(\ell)$, and $\operatorname{LD-MRD}(\ell)$ codes.
Just as the MRD property strengthens the MDS property, each of these three notions strengthens their counterpart in the Hamming metric, as studied in \cite{brakensiek2023generic}. Furthermore, we establish an equivalence among these three notions. This equivalence plays a crucial role in our proof that, with high probability, random Gabidulin codes achieve list decoding capacity in the rank metric.

\paragraph{Rank-metric codes over a general field $\F/\F_q$.}
Before defining the three notions of ``higher order MRD codes,'' we note that, for convenience, we will present these definitions over a general extension field $\F$ of $\F_q$, which may be infinite. 
This generality allows us to easily discuss the ``symbolic Gabidulin code,'' defined over a function field $\F_q(Z_1,\dots,Z_n)$.
For infinite $\F$, while it is no longer possible to view a vector $\bv=(v_1,\dots,v_n)\in \F^n$ as a matrix over $\F_q$ and discuss its rank (since $m=[\F:\F_q]$ is infinite), we can still define the rank of $\bv$ as
\begin{equation}\label{eq:rank}
\rank_{\F_q}(\bv):=\dim_{\F_q}(\spa_{\F_q}\{v_1,\ldots,v_n\}).
\end{equation}
With this revised definition of rank, the concept of linear rank-metric codes over $\F_{q^m}$ can easily be extended to those over a general field $\F/\F_q$, and fortunately, all necessary facts and properties continue to hold. For a more comprehensive discussion about linear rank-metric codes over a general field $\F/\F_q$, we refer the readers to \cref{Prelim}.


\subparagraph{$\blacktriangleright\operatorname{GKP}(\ell)$.}

Given $k\leq n$, a \emph{zero pattern} is a tuple of sets $\mathcal{S}=(S_1,\dots,S_k)$ with $S_i\subseteq [n]$.
For an $[n,k]_{\F}$ linear code $C$ with a generator matrix $G\subseteq \F^{n\times k}$, we say $C$ \emph{attains} a zero pattern $\mathcal{S}$ if there exists an invertible matrix $M_{\mathcal{S}}\in\F^{k\times k}$ such that for $(i,j)\in [k]\times [n]$, the $(i,j)$-th entry of $M_{\mathcal{S}} G$ is zero if $j\in S_i$, i.e., $M_{\mathcal{S}} G$ exhibits the zero pattern $\mathcal{S}$.

This notion was driven by applications such as those discussed in \cite{dau2014simple,yan2013algorithms,yan2014weakly} in order to find linear MDS codes with sparse generator matrices.
A natural question emerges: What are the zero patterns that linear MDS codes such as Reed--Solomon codes can attain? To understand this, the notion of \emph{generic zero patterns} (GZPs) was defined in \cite{brakensiek2023generic}, which originated from \cite{dau2014simple}: A zero pattern $\mathcal{S}=(S_1,\ldots,S_k)$ is called a generic zero pattern if 
\begin{equation}\label{eq:gzp}
\left|\bigcap_{i\in \Omega} S_i\right|\leq k-|\Omega| \qquad \text{for all nonempty } \Omega\subseteq [k].
\end{equation}

It is not difficult to prove that \eqref{eq:gzp} is a necessary condition for a linear MDS code to attain the zero pattern $\mathcal{S}$. It was conjectured in \cite{dau2014simple} that for any zero pattern $\mathcal{S}$, there exist MDS codes with alphabet size $q\ge n+k-1$ attaining $\mathcal{S}$.
This conjecture, known as the \emph{$\operatorname{GM-MDS}$ conjecture}, was later proven by Lovett \cite{lovett2018mds} and independently by Yildiz and Hassibi \cite{yildiz2019optimum}, and has become the GM-MDS theorem.
In fact, their proofs imply the stronger statement that generic Reed--Solomon codes attain all GZPs. This theorem was further generalized in \cite{brakensiek2023generalized} to any polynomial codes, whereas the original version only deals with the Reed--Solomon codes. Another generalization was recently proved in \cite{ron2024efficient} to establish the near-optimal list decodability of folded Reed--Solomon codes.

We define a more general notion, which we call generic kernel patterns (GKPs). It appears to be more suitable for studying rank-metric codes, including MRD codes.

\begin{defn}[Generic kernel pattern (GKP)]\label{def:GKP}
Given $k\leq n$ and a finite field $\F_q$, a \emph{kernel pattern} 
is a tuple $\mathcal{V}=(V_1,\ldots,V_k)$ of $\F_q$-linear subspaces of $\F_q^n$,
We say a kernel pattern $\mathcal{V}$ is a \emph{generic kernel pattern} (GKP) if for any nonempty $\Omega\subseteq [k]$, it holds that
\begin{equation}
\dim\left(\bigcap_{i\in \Omega} V_i\right)\leq k-|\Omega|.
\end{equation}
In addition, we say a kernel pattern $\mathcal{V}$ has \emph{order $\ell$} if $\mathcal{V}$ has exactly $\ell$ distinct nonzero subspaces. 
\end{defn}


Next, we define a linear code attaining a kernel pattern.

\begin{defn}
Let $C$ be an $[n,k]_{\F}$-linear code with a generator matrix $G\times \F^{k\times n}$.
Let $\mathcal{V}=(V_1,\ldots,V_k)$ be a kernel pattern and $(A_1,\ldots,A_k)$ be a tuple of $k$ full rank matrices such that $A_i\in\F_q^{n\times \dim V_i}$ with $\langle A_i\rangle=V_i$\footnote{For a matrix $H\in\F_q^{n\times \ell}$, we denote $\langle H\rangle$ to be the linear subspace of $\F_q^n$ spanned by the columns of $H$.}. 
We say $C$ \emph{attains} the kernel pattern $\mathcal{V}$ if there is an invertible matrix $M\in\F^{k\times k}$ such that $\vm_i GA_i=0$ for all $i\in[k]$, where $\vm_i$ is the $i$-th row of $M$. 
\end{defn}
The paper \cite{brakensiek2023generic} formulated a notion of higher order MDS codes called GZP($\ell$). A linear code is GZP($\ell$) if it is MDS and attains all GZPs of order at most $\ell$.
We now formulate a rank-metric counterpart and a strengthening of this notion:
\begin{defn}[$\mathrm{GKP}(\ell)$]\label{olkernel1}
Given a $[n,k]_{\F}$-linear code $C\subseteq\F^n$ with a generator matrix $G\in\F^{k\times n}$, $C$ is said to be $\mathrm{GKP}_q(\ell)$, or simply $\mathrm{GKP}(\ell)$, if $C$ is an MRD code and attains all GKPs of order at most $\ell$. 
\end{defn}
\begin{rmka}\label{rmk:GZP}
GKPs can be seen as a generalization of GZPs. For a GZP $\mathcal{S}=(S_1,\ldots,S_k)$ with $S_i\subseteq [n]$, define $\mathcal{V}=(V_1,\dots,V_k)$, where
\[
V_i=\left\{(v_1,\dots,v_n)\in \F_q^n: v_j=0\text{ for }j\in [n]\setminus S_i\right\}.
\] 
Then $\mathcal{S}$ is a GZP if and only if $\mathcal{V}$ is a GKP. Moreover, an $[n,k]_{\F}$-linear code $C\subseteq\F^n$ over a field $\F/\F_q$ attains $\mathcal{S}$ if and only if it attains $\mathcal{V}$. Consequently, GKP($\ell$) codes are also GZP($\ell$).
\end{rmka}
We also prove an analog of the GM-MDS theorem, which we call the \emph{$\operatorname{GM-MRD}$ theorem}. Roughly speaking, it states that Gabidulin codes attain all GKPs. This theorem is crucial in proving our main result that random Gabidulin codes has the optimal list decodability in the rank metric. A detailed discussion about the GM-MRD theorem is given at the end of this section. 

\subparagraph{$\blacktriangleright\operatorname{MRD}(\ell)$.} 

One characterization of an $[n,k]_{\F}$-linear code $C\subseteq\F^n$ being MDS is that its generator matrix $G\subseteq\F^{k\times n}$ is an $\operatorname{MDS}$ matrix, meaning that any $k$ columns of $G$ are linearly independent.
Strengthening this condition, for any $\ell\geq 1$, Brakensiek et al. \cite{brakensiek2023generic} defined an $\operatorname{MDS}(\ell)$ code to be a linear code with a generator matrix $G\subseteq\F^{k\times n}$ such that for any subsets $S_1,\dots,S_{\ell}\subseteq [n]$, each of size at most $k$, it holds that
\begin{equation}\label{eq:mds-ell-condition}
\dim\left(\bigcap_{i=1}^\ell G_{S_i}\right)=\dim \left(\bigcap_{i=1}^\ell W_{S_i}\right),
\end{equation}
where $W$ is the symbolic matrix $\left(Z_{i,j}\right)_{i\in [k], j\in [n]}$ over the function field in the variables $Z_{1,1},\dots,Z_{k,n}$, and $G_S$ (resp. $W_S$) denotes the span of the columns of $G$ (resp. $W$) with indices in $S$.
Note that by definition, $\operatorname{MDS}(\ell)$ codes are also $\operatorname{MDS}(\ell')$ for $\ell'\leq \ell$, and $\operatorname{MDS}(1)$ codes are just (linear) $\operatorname{MDS}$ codes due to the fact that the symbolic matrix $W$ is an MDS matrix.

To define the rank-metric counterpart of $\operatorname{MDS}(\ell)$, we first express the column span $G_S$ in a linear-algebraic manner for $S\subseteq [n]$ of size at most $k$.
Let $I_S$ denote the $n\times |S|$ matrix that, when restricted to the subset of rows with indices in $S$, becomes the identity matrix, and contains only zeros outside these rows. Then $G I_S$ is precisely the $k\times |S|$ submatrix of $G$ formed by the columns of $G$ with indices in $S$. Thus, $G_S$ is just the column space of $G I_S$.

Suppose $\F$ is an extension field of $\F_q$.
We extend the notion of $\operatorname{MDS}(\ell)$ by replacing the matrix $I_S$ by an arbitrary full-rank matrix $A\in \F_q^{n\times d}$, where $d\leq k$.
For convenience, we introduce the notations $G_A$ and $G_V$ as follows:
Define $G_A:=G A\in\F^{k\times d}$.
Let $V\subseteq\F_q^n$ be the column space of $A$ over $\F_q$, and denote by $G_V\subseteq \F^{k}$ the column space of $G_A$ over $\F$. 
This is well-defined as $G_V$ depends only on $V$, not on $A$. Indeed, $G_V$ equals the $\F$-span of $\sigma_G(V)$, where $\sigma_G: \F^n\to\F^k$ is the linear map $\bv\mapsto G\bv$.

\begin{defn}[$\operatorname{MRD}(\ell)$]
Let $\F$ be an extension field of $
\F_q$. 
We say an $[n,k]_{\F}$-linear code $C\subseteq\F^n$ is $\operatorname{MRD}_q(\ell)$, or simply $\operatorname{MRD}(\ell)$, if for any $\F_q$-linear subspaces $V_1,\dots,V_\ell$ of $\F_q^n$, each of dimension at most $k$, it holds that
\begin{equation}\label{eq:mrd-ell-condition}
\dim_{\F}\left(\bigcap_{i=1}^\ell G_{V_i}\right)=\dim_{\mathbb{K}}\left(\bigcap_{i=1}^\ell W_{V_i}\right),
\end{equation}
where $W$ is the symbolic matrix $\left(Z_{i,j}\right)_{i\in [k], j\in [n]}$ over the function field $\K:=\F_q(Z_{1,1},\dots,Z_{k,n})$.
\end{defn}
By definition, $\operatorname{MRD}(\ell)$ codes are also $\operatorname{MRD}(\ell')$ for $\ell'\leq \ell$. Moreover, it can be shown that $\operatorname{MRD}(1)$ codes are just (linear) $\operatorname{MRD}$ codes due to the fact that the symbolic matrix $W$ has the MRD property. See \cref{eqmrd} and \cref{generic_a}.

Finally, we note that the condition \eqref{eq:mrd-ell-condition} implies \eqref{eq:mds-ell-condition} by choosing the subspaces $V_i$ to be the column space of $I_{S_i}$ over $\F_q$, i.e., the subspace of all vectors $\bv\in\F_q^n$ whose coordinates with indices in $[n]\setminus S_i$ are zero.
Consequently, all $\operatorname{MRD}(\ell)$ codes are also $\operatorname{MDS}(\ell)$.

\subparagraph{$\blacktriangleright\operatorname{LD-MRD}(\ell)$.} 


Recall that a code $C\subseteq\Sigma^n$ is $(\rho,\ell)$-list decodable if for any $\by\in\Sigma^n$, it holds that $|\{\bc\in C: d(\bc,\by)\leq \rho n\}|\leq \ell$. Equivalently, for any $\by\in\Sigma^n$ and $\ell+1$ distinct codewords $\bc_0,\dots,\bc_\ell\in C$, 
\begin{equation}\label{eq:max-distance}
\max_{0\leq i\leq \ell} d(\bc_i, \by) > \rho n.
\end{equation}

The stronger notion of \emph{$(\rho,\ell)$-average-radius list decodability} is defined in the same way, except that we replace the maximum of the distances $d(\bc_i, \by)$ in \eqref{eq:max-distance} by the average of these distances. The formal definition is given as follows.

\begin{defn}[Average-radius list decodability]\label{defn:average}
A code $C\subseteq\Sigma^n$ is $(\rho,\ell)$ average-radius list decodable if for any $\by\in\Sigma^n$ and $\ell+1$ distinct codewords  $\bc_0,\bc_1,\dots,\bc_{\ell}\in C$, it holds that 
\[
\frac{1}{\ell+1}\sum_{i=0}^{\ell}{d(\by,\bc_i)}>\rho n.
\]
\end{defn}

Recall that for an $[n,k]_{\F}$ linear code $C$, the list decoding radius $\rho$ of $C$ satisfies the generalized Singleton bound
$\rho\leq \frac{\ell}{\ell+1}\left(1-\frac{k}{n}\right)$, both in the Hamming metric and the rank metric.
Roth \cite{roth2022higher} first studied $\operatorname{LD-MDS}(\ell)$ codes, which he referred to as strongly-$\left(\frac{\ell(n-k)}{\ell+1}, \ell\right)$-list decodable codes. These are linear codes that meet the generalized Singleton bound under the stricter criterion of average-radius list decodability. Inspired by Roth's work, Brakensiek, Gopi, and Makam further explored $\operatorname{LD-MDS}(\ell)$ codes in their paper \cite{brakensiek2023generic}.

We consider the natural rank-metric counterpart of $\operatorname{LD-MDS}(\ell)$, defined as follows.

\begin{defn}[$\operatorname{LD-MRD}(\ell)$] Let $\F$ be an extension field of $\F_q$.
We say an $[n,k]_{\F}$-linear code $C\subseteq\F^n$ is $\operatorname{LD-MRD}_q(\ell)$, or simply $\operatorname{LD-MRD}(\ell)$, if $C$ is $\left(\frac{\ell}{\ell+1}\left(1-{k}/{n}\right),\ell\right)$-average-radius list decodable in the rank metric. 
In other words, $C$ is $\operatorname{LD-MRD}(\ell)$ if for any $\by\in\F^n$ and $\ell+1$ distinct codewords  $\bc_0,\bc_1,\dots,\bc_{\ell}\in C$, it holds that 
\[
\sum_{i=0}^{\ell}{d_R(\by,\bc_i)}>\ell(n-k),
\]
where $d_R(\by,\bc_i)=\rank_{\F_q}(\by-\bc_i)$ (see \eqref{eq:rank} for the definition of $\rank_{\F_q}$).
We say $C$ is $\ldmrd{\ell}$ if it is $\operatorname{LD-MRD}(\ell')$ for all $\ell'\in [\ell]$. 
\end{defn}

It is straightforward to deduce from the definitions that all 
$\ldmrd{\ell}$ codes are MRD for all $\ell\geq 1$. For a formal proof, see \cref{lem:strengthen}. In addition, all $\operatorname{LD-MRD}(\ell)$ codes are also $\operatorname{LD-MDS}(\ell)$. 
This follows from the fact that the rank distance $d_R(\bx,\by)$ is always bounded by the Hamming distance $d_H(\bx,\by)$.



\paragraph{Equivalence of higher order MRD codes.} 

Similar to the equivalence among the various notions of higher order MDS codes established in \cite{brakensiek2023generic}, we establish the equivalence among the three notions of higher order MRD codes.

\begin{thm}[$1^{st}$ equivalence theorem]\label{1equiv}
For $\ell\ge 1$, a linear code $C$ over $\F/\F_q$ is $\operatorname{GKP}(\ell)$ if and only if it is $\operatorname{MRD}(\ell)$.
\end{thm}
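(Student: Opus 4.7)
My plan is to prove both directions via a Hall--Rado matching characterization of GKP attainment. First I establish: $C$ attains a kernel pattern $\mathcal{V} = (V_1,\ldots,V_k)$ if and only if $\dim_{\F} \bigcap_{i \in \Omega} G_{V_i} \leq k - |\Omega|$ for every nonempty $\Omega \subseteq [k]$. Indeed, attainment means there exist linearly independent $\vm_i \in (G_{V_i})^\perp$ for $i \in [k]$, which by Hall--Rado's transversal theorem applied to the matroid on $\F^k$ is equivalent to $\dim_{\F} \sum_{i \in \Omega}(G_{V_i})^\perp \geq |\Omega|$ for every $\Omega$; taking orthogonal complements yields the stated dimension inequality. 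In parallel, a generic dimension count on the surjection $\sigma_W : \K^n \to \K^k$ gives
\[
\dim_{\K} \bigcap_{i=1}^{\ell} W_{V_i} \;=\; \max\!\left(\dim \bigcap_i V_i,\ k - \sum_{i}(k - \dim V_i)\right)
\]
for $\F_q$-subspaces $V_i$ of dimension at most $k$ (the second term taken to be $0$ if negative). Using the GKP axioms---which by specializing $\Omega = \{i\}$ and $\Omega = \{j : V_j = V_i\}$ force $\dim V_i \leq k-1$ and bound the multiplicity of each repeated subspace by $k - \dim V_i$---one readily verifies the Hall--Rado criterion for the symbolic code, so $C_W$ attains every GKP.

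For the forward direction, the hypothesis $\operatorname{MRD}(\ell)$ at $\ell'=1$ shows $C$ is MRD. For a GKP $\mathcal{V}$ of order $\leq \ell$ and any nonempty $\Omega \subseteq [k]$, I rewrite $\bigcap_{i \in \Omega} G_{V_i}$ as an intersection over the distinct nonzero subspaces in $\{V_i : i \in \Omega\}$, of which at most $\ell$ appear (any $V_i = 0$ in $\Omega$ trivially zeros out the intersection and makes the required bound vacuous). The $\operatorname{MRD}(\ell)$ hypothesis equates this with $\dim_{\K} \bigcap W_{V_i}$, which the previous paragraph bounds by $k - |\Omega|$, so Hall--Rado yields attainment.

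For the backward direction, given $V_1,\ldots,V_\ell$ of dimension $\leq k$ (WLOG $\leq k-1$, since $\dim V_i = k$ makes $G_{V_i} = \F^k$ and contributes nothing to the intersection), I need $\dim_{\F} \bigcap G_{V_i} = \dim_{\K} \bigcap W_{V_i}$. The inequality $\geq$ follows from a rank argument under the specialization $W \mapsto G$: the rank of the concatenated matrix of orthogonal complements can only drop under specialization, so the dimension of the intersection can only grow. For the reverse inequality, set $d := \dim_{\K} \bigcap W_{V_i}$ and construct a GKP $\mathcal{V}' = (V'_1,\ldots,V'_k)$ of order $\leq \ell$ in which each $V_i$ appears with multiplicity $a_i \geq 1$ and the remaining $d$ positions are zero. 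The GKP condition on $\mathcal{V}'$ reduces to the submodular inequalities $\sum_{i \in T} a_i \leq k - \dim \bigcap_{i \in T} V_i$ for all nonempty $T \subseteq [\ell]$. Existence of integral $a_i \geq 1$ with $\sum a_i = k - d$ follows from integer polymatroid theory applied to the submodular function $T \mapsto k - \dim \bigcap_{i \in T} V_i$; applying the Hall--Rado criterion at $\Omega = \{i : V'_i \neq 0\}$ in $\mathcal{V}'$ then yields $\dim_{\F} \bigcap G_{V_i} \leq k - (k-d) = d$.

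The main obstacle I anticipate is the polymatroid existence step in the backward direction: simultaneously achieving $a_i \geq 1$ for every $i \in [\ell]$ and $\sum a_i = k - d$ subject to all the submodular inequalities. The argument will likely split on which of the two terms dominates $d$: in the ``aligned'' case ($d = \dim \bigcap V_i$) one invokes any integral base of the polymatroid of rank $f([\ell]) = k - d$, while in the ``generic'' case ($d = k - \sum(k - \dim V_i)$) one must choose $a_i = k - \dim V_i$ and verify that the intermediate submodular inequalities are automatically implied---possibly via induction on $\ell$ that ties back to the symbolic formula for $\dim_{\K} \bigcap W_{V_i}$ established at the outset.
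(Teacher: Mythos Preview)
Your Hall--Rado characterization of kernel-pattern attainment (there exist linearly independent $\vm_i\in(G_{V_i})^\perp$ iff $\dim_\F\bigcap_{i\in\Omega}G_{V_i}\le k-|\Omega|$ for all nonempty $\Omega$) is correct and is a clean repackaging of what the paper does via explicit construction of $M$. The real gap is your formula for the generic intersection,
\[
\dim_{\K}\bigcap_{i=1}^{\ell}W_{V_i}\;=\;\max\Bigl(\dim_{\F_q}\bigcap_{i}V_i,\;k-\sum_{i}(k-\dim V_i)\Bigr),
\]
which is false: the correct value is a maximum over \emph{all} partitions $P_1\sqcup\cdots\sqcup P_s=[\ell]$ of $\sum_r\dim_{\F_q}\bigcap_{j\in P_r}V_j-(s-1)k$ (this is \cref{gintersec}). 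For a counterexample with distinct subspaces, take $k=5$, $V_1=\spa_{\F_q}\{e_1,e_2,e_3\}\subsetneq V_2=\spa_{\F_q}\{e_1,\dots,e_4\}$, and $V_3=\spa_{\F_q}\{e_5,e_6,e_7\}$. Your formula gives $\max(0,0)=0$, but the partition $\{1,2\}\sqcup\{3\}$ yields $3+3-5=1$, and indeed $\bigcap_i W_{V_i}=W_{V_1}\cap W_{V_3}$ has dimension $1$. Both directions of your argument invoke this formula, and your proposed ``aligned vs.\ generic'' case split in the polymatroid step is built on it, so neither direction goes through as written.

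With the correct partition formula in hand, your scheme does work and is essentially the paper's argument in \cref{ss5}. Two remarks on your backward ($\operatorname{GKP}\Rightarrow\operatorname{MRD}$) direction. First, you do not need $a_i\geq 1$: if some $a_i=0$, the constructed pattern still has order at most $\ell$, and your Hall--Rado bound at $\Omega=\{j:V_j'\neq 0\}$ gives $\dim_\F\bigcap_{i:a_i>0}G_{V_i}\leq d$, which already dominates $\dim_\F\bigcap_{i\in[\ell]}G_{V_i}$. Second, the existence of integers $a_i\geq 0$ with $\sum_i a_i=k-d$ satisfying $\sum_{i\in T}a_i\leq k-\dim_{\F_q}\bigcap_{i\in T}V_i$ for every nonempty $T\subseteq[\ell]$ is exactly the implication from Item~3 to Item~2 in \cref{order-l-cha}, and it is precisely the partition maximum (not your two-term max) that makes its hypothesis hold. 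So the polymatroid obstacle you anticipated is genuine, but its resolution is \cref{order-l-cha} together with the full formula of \cref{gintersec}, not a two-case analysis.
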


\begin{thm}[$2^{nd}$ equivalence theorem]\label{2equiv} 
For $\ell\ge 1$, a linear code $C$ over $\F/\F_q$ is $\operatorname{MRD}(\ell+1)$ if and only if its dual code $C^{\perp}$ is $\ldmrd{\ell}$.
\end{thm}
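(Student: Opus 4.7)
My plan is to mirror the equivalence between higher-order MDS codes and LD-MDS of the dual proved by Brakensiek--Gopi--Makam, upgrading each argument involving Hamming supports $S\subseteq[n]$ to its rank-metric counterpart involving rank supports $V\subseteq\F_q^n$. The bridge is that any vector $\be \in \F^n$ has a unique minimal \emph{rank support} $V_{\be}\subseteq\F_q^n$ with $\be \in V_{\be}\otimes_{\F_q}\F$, whose $\F_q$-dimension equals $\rank_{\F_q}(\be)$; moreover, $\ker(G:\F^n\to\F^k)=C^\perp$.

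For $\operatorname{MRD}(\ell+1)\Rightarrow C^\perp$ is $\ldmrd{\ell}$, I argue by contradiction and induct on $L\in[\ell]$. Suppose $L$ is minimal with distinct $\bc_0,\dots,\bc_L\in C^\perp$ and $\by\in\F^n$ violating $\sum_{i=0}^L r_i > Lk$, where $r_i=\rank_{\F_q}(\by-\bc_i)$. Let $\be_i=\by-\bc_i$ and $V_i=V_{\be_i}$. If some $r_i>k$, then discarding $\bc_i$ leaves $L$ codewords with rank-weight sum less than $(L-1)k$, contradicting $\operatorname{LD-MRD}(L-1)$ (with base case $L=1$ reducing to $C^\perp$ being MRD). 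Otherwise all $r_i\le k$, and since $\be_i-\be_j\in C^\perp=\ker G$, the vector $\bw:=G\be_i$ is well-defined and lies in $\bigcap_{i} G_{V_i}$. If $\bw=0$, then each $\be_i\in C^\perp$, and the MRD property of $C^\perp$ (minimum rank distance $k+1$) forces $\sum r_i \ge L(k+1) > Lk$, contradiction. If $\bw\neq 0$, then $\dim_\F\bigcap_i G_{V_i}\ge 1$, and $\operatorname{MRD}(\ell+1)$ gives $\dim_\K\bigcap_i W_{V_i}\ge 1$; this is then contradicted by a \emph{generic dimension formula} expressing $\dim_\K\bigcap_i W_{V_i}$ in terms of the Venn-diagram data $\{\dim\bigcap_{i\in\Omega}V_i\}_\Omega$ together with the rank-sum bound $\sum_i r_i \le Lk$.

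For the converse, I take a failure of $\operatorname{MRD}(\ell+1)$, namely subspaces $V_0,\dots,V_\ell\subseteq\F_q^n$ of dimension at most $k$ with $\dim_\F\bigcap_i G_{V_i}>\dim_\K\bigcap_i W_{V_i}$. A ``non-generic'' vector $\bw\in\bigcap_i G_{V_i}$ lifts to $\be_i\in V_i\otimes\F$ with $G\be_i=\bw$. Setting $\by:=\be_0$ and $\bc_i:=\by-\be_i$ places each $\bc_i\in C^\perp$, and the ranks $\rank_{\F_q}(\be_i)\le\dim V_i$ sum to at most $\ell k$ precisely when the intersection dimension exceeds the generic prediction. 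A minimality/peeling argument on the $V_i$'s then produces $L+1$ \emph{distinct} $\bc_i$'s for some $L\le\ell$, witnessing the $\operatorname{LD-MRD}(L)$ failure.

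The main obstacle is establishing the correct \emph{generic dimension formula} for $\dim_\K\bigcap_i W_{V_i}$ when the $V_i\subseteq\F_q^n$ are arbitrary $\F_q$-subspaces. Unlike the Hamming case, where the $V_{S_i}$ are coordinate subspaces whose intersections and sums are combinatorially transparent, general $V_i$'s can overlap in nontrivial lower-dimensional pieces, so the formula must incorporate the full lattice $\{\bigcap_{i\in\Omega}V_i:\Omega\ne\emptyset\}$ together with a dimension-excess term $\sum\dim V_i - Lk$. Once this formula is pinned down (most cleanly via a projection-through-a-generic-kernel argument, exploiting $\K$-genericity of $W$), both directions reduce to routine bookkeeping matching intersection dimensions against rank-weight sums, in close parallel to the MDS case.
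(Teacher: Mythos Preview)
Your overall strategy mirrors the paper's, and the generic intersection formula you flag as the main obstacle is indeed the key tool (the paper proves it as \cref{gintersec}). However, there is a genuine gap in your argument for the direction $\operatorname{MRD}(\ell+1)\Rightarrow C^\perp$ is $\ldmrd{\ell}$, case $\bw\neq 0$. You assert that $\dim_\K\bigcap_i W_{V_i}\ge 1$ is contradicted by the formula together with $\sum_i r_i\le Lk$. But the formula
\[
\dim_\K\bigcap_{i=0}^{L}W_{V_i}=\max_{P_1\sqcup\cdots\sqcup P_s=\{0,\dots,L\}}\Bigl(\sum_{j}\dim V_{P_j}-(s-1)k\Bigr)
\]
is a maximum over \emph{all} partitions, and the rank-sum bound only forces the \emph{finest} term ($s=L+1$) to be $\le 0$; a coarser partition may well give a positive value. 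What actually produces the contradiction is the \emph{distinctness} of the $\be_i$: if a partition with $s\le L$ achieves the maximum $d\ge 1$, then applying $\operatorname{MRD}(\ell+1)$ both to $\{V_i\}_{i=0}^L$ and to $\{V_{P_j}\}_{j=1}^s$ gives $\dim\bigcap_j G_{V_{P_j}}=\dim\bigcap_i G_{V_i}=d$, and since one is contained in the other they coincide. Hence $\bw$ lifts to some $\be'_j\in V_{P_j}\otimes\F$, and injectivity of $G$ on $V_i\otimes\F$ (from MRD) forces $\be_i=\be'_j$ for every $i\in P_j$, contradicting distinctness. The paper runs essentially this argument in contrapositive form via its $V\supsetneq V'$ comparison in the proof of \cref{thm:mrd2ldmrd}.

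For the converse, your sketch is close but skips a necessary reduction. The claim ``the ranks sum to at most $\ell k$ precisely when the intersection dimension exceeds the generic prediction'' is not correct as stated: a priori $\sum_i\dim V_i$ can be as large as $(\ell+1)k$. The paper first reduces, via \cref{null_inter_prop}, to the case $\dim_\K\bigcap_i W_{V_i}=0$ by deleting redundant columns from the $A_i$; only then does the finest-partition bound give $\sum_i\dim V_i\le\ell k$. Distinctness of the resulting codewords is then handled not by ``peeling'' but by merging equal lifts $\bv_i$ into partition blocks and invoking the formula for that coarser partition to bound $\sum_j\rank\bv_{P_j}\le(s-1)k$.
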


Combining the two theorems yields the following corollary.
\begin{cor}\label{cor:three_equiv}
Let $C$ be a linear code over $\F/\F_q$. 
Then the following are all equivalent for $\ell\geq 1$:
\begin{enumerate}
    \item $C$ is $\operatorname{GKP}(\ell+1)$,
    \item $C$ is $\operatorname{MRD}(\ell+1)$, and
    \item $C^{\perp}$ is $\ldmrd{\ell}$.
\end{enumerate}
\end{cor}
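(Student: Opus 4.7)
The plan is to observe that \cref{cor:three_equiv} is a direct consequence of chaining the two equivalence theorems (\cref{1equiv} and \cref{2equiv}) that precede it, so the bulk of the effort lies in the two theorems themselves; the corollary itself requires only a careful bookkeeping of indices and the (possibly nontrivial) verification that $\operatorname{GKP}(\ell+1)$, $\operatorname{MRD}(\ell+1)$, and $\ldmrd{\ell}$ are indeed the right parameters to match up.

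Concretely, I would prove ``$1 \Leftrightarrow 2$'' by simply invoking \cref{1equiv} with the parameter $\ell$ replaced by $\ell+1$: this theorem states that for every $\ell'\ge 1$, a linear code over $\F/\F_q$ is $\operatorname{GKP}(\ell')$ if and only if it is $\operatorname{MRD}(\ell')$, so setting $\ell'=\ell+1$ immediately yields the first equivalence. Then ``$2 \Leftrightarrow 3$'' follows directly from \cref{2equiv}, which asserts that $C$ is $\operatorname{MRD}(\ell+1)$ if and only if $C^{\perp}$ is $\ldmrd{\ell}$; this is literally the statement we need, with no change of parameters. Transitivity of ``$\Leftrightarrow$'' then gives all three conditions equivalent.

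The only place where I would pause and check carefully is the parameter alignment: the first equivalence theorem is stated ``at order $\ell$'', while the second theorem has an asymmetric shift between $\operatorname{MRD}(\ell+1)$ and $\ldmrd{\ell}$. This shift is what forces the corollary to be phrased with $\operatorname{GKP}(\ell+1)$ and $\operatorname{MRD}(\ell+1)$ on one side but $\ldmrd{\ell}$ on the other. Once this is noted, nothing further needs to be proved.

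The main obstacle in the overall program is therefore not this corollary but the two underlying equivalence theorems: \cref{1equiv} will presumably require translating between the combinatorial kernel-pattern condition and the linear-algebraic intersection-dimension condition (in analogy with the Hamming-metric argument of Brakensiek--Gopi--Makam), while \cref{2equiv} presumably requires a duality argument relating intersections of column spans of $G$ to average-radius list-decoding configurations for $G^{\perp}$. Those are the real technical content; the corollary is a one-line consequence.
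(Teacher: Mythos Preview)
Your proposal is correct and matches the paper's own approach: the paper simply states that the corollary follows by ``combining the two theorems'' (\cref{1equiv} and \cref{2equiv}) with no further argument, which is exactly what you do. Your additional remarks about parameter alignment are accurate and appropriate.
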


\paragraph{Dimension of generic intersections.} 

In the course of proving the equivalence among the three notions of higher order MRD codes, we will establish the following formula for the dimension of the intersection of a collection of subspaces $W_{V_i}$, where $W$ is the symbolic matrix $(Z_{i,j})_{i\in [k], j\in [n]}$. This formula is used in both the proof of \cref{1equiv} and that of \cref{2equiv}.

\begin{thm}[Formula for generic intersection dimension]\label{gintersec}
Let $W$ be the $k\times n$ symbolic matrix $(Z_{i,j})_{i\in [k], j\in [n]}$ over $\F:=\F_q(Z_{1,1},\dots,Z_{k,n})$.
Let $V_1,\dots,V_\ell$ be subspaces of $\F_q^n$, each of dimension at most $k$. Then
\[
\dim_{\mathbb{F}}\left(\bigcap_{i\in[\ell]}W_{V_i}\right)=\max_{P_1\sqcup P_2\sqcup \cdots\sqcup P_s=[\ell]}\left(\sum_{i\in[s]}\dim_{\F_q}\left(\bigcap_{j\in P_i}V_j\right)-(s-1)k\right).
\]
\end{thm}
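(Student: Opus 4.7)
The plan is to prove the formula by establishing matching lower and upper bounds on $\dim_{\F}(\bigcap_{i\in[\ell]}W_{V_i})$. Throughout, write $f(\pi):=\sum_{i\in[s]}\dim_{\F_q}(\bigcap_{j\in P_i}V_j)-(s-1)k$ for a partition $\pi=(P_1,\ldots,P_s)$ of $[\ell]$, so the right-hand side equals $\max_\pi f(\pi)$.

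For the lower bound, I would fix an arbitrary partition $\pi$ and set $U_i:=\bigcap_{j\in P_i}V_j$. The inclusions $U_i\subseteq V_j$ for each $j\in P_i$ give $W_{U_i}\subseteq W_{V_j}$, hence $\bigcap_{i=1}^s W_{U_i}\subseteq\bigcap_{j\in[\ell]}W_{V_j}$. Because $W$ has algebraically independent entries and $\dim_{\F_q}U_i\leq\dim_{\F_q}V_j\leq k$, the $k\times \dim U_i$ matrix $WA_{U_i}$ (for any basis matrix $A_{U_i}$ of $U_i$) attains full column rank over $\F$, so $\dim_{\F}W_{U_i}=\dim_{\F_q}U_i$. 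Iterating the elementary bound $\dim(A\cap B)\geq \dim A+\dim B-k$ in $\F^k$ across the $s$ subspaces $W_{U_1},\ldots,W_{U_s}$ yields $\dim_{\F}\bigcap_i W_{U_i}\geq\sum_i\dim U_i-(s-1)k=f(\pi)$, and maximizing over $\pi$ proves the lower bound.

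For the upper bound I would proceed by induction on $\ell$, the base case $\ell=1$ being immediate. For the inductive step, fix an optimal partition $\pi^\star$ that is \emph{coarsest optimal}, i.e., merging any two of its parts does not strictly increase $f$ (such $\pi^\star$ exists by finiteness). If $\pi^\star$ is the trivial one-part partition, then the coarsest-optimal condition forces $\dim_{\F_q}\sum_j V_j\leq k$, so generic $W$ is injective on $\sum_j V_j\otimes \F$; this yields $\bigcap_j W_{V_j}=W_{\bigcap_j V_j}$ of dimension $\dim_{\F_q}(\bigcap_j V_j)=f(\pi^\star)$, as required. Otherwise, pick a part $P$ of $\pi^\star$, set $A:=\bigcap_{j\in P}W_{V_j}$ and $B:=\bigcap_{j\in P^c}W_{V_j}$, and apply $\dim(A\cap B)=\dim A+\dim B-\dim(A+B)$. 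The inductive hypothesis applied separately to $\{V_j\}_{j\in P}$ and $\{V_j\}_{j\in P^c}$ computes $\dim A$ and $\dim B$ as the corresponding max-over-partitions formulas (with optimal sub-partitions obtained by restricting $\pi^\star$), so the only missing quantity is $\dim(A+B)$.

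The principal obstacle is controlling $\dim(A+B)$. The natural candidate identity $A+B=\bigcap_{i\in P,\, j\in P^c}W_{V_i+V_j}$ holds only as an inclusion $\subseteq$; the reverse direction can fail in small examples, so a direct approach does not close the induction. To resolve this I expect to need either a strengthened inductive hypothesis that simultaneously tracks intersections \emph{and} sums of the generic family $\{W_{V_j}\}$ (with appropriate truncation when $\dim(V_i+V_j)$ exceeds $k$, where $W_{V_i+V_j}=\F^k$), or to pass to the dual formulation $(\bigcap_i W_{V_i})^\perp=\sum_i W_{V_i}^\perp$ and exploit the fact that the set function $r(S):=\dim_{\F_q}\bigcap_{i\in S}V_i$ on $2^{[\ell]}$ is supermodular, which permits the corresponding dual minimum-over-partitions expression to be evaluated greedily. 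Either approach ultimately reduces the theorem to a finite combinatorial identity matching the partition contributions on both sides.
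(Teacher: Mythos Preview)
Your lower bound agrees with the paper's: both use $\bigcap_i W_{U_i}\subseteq\bigcap_j W_{V_j}$ together with $\dim W_{U_i}=\dim_{\F_q}U_i$ and the iterated bound $\dim(A\cap B)\ge\dim A+\dim B-k$.

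For the upper bound, your inductive approach has a concrete error in addition to the gap you acknowledge. Your claim that ``if the coarsest optimal partition is the trivial one, then $\dim_{\F_q}\sum_j V_j\le k$'' is false. Take $k=3$, $n=4$, $\ell=3$, and let $V_i=\spa_{\F_q}\{e_1,e_{i+1}\}$ for $i=1,2,3$. Then all pairwise and the triple intersection equal $\spa_{\F_q}\{e_1\}$, so the trivial partition gives $f=1$ while every nontrivial partition gives $f=0$; the trivial partition is uniquely (hence coarsest) optimal, yet $\dim\sum_j V_j=4>k$. Thus ``$W$ injective on $\sum_j V_j\otimes\F$'' fails, and your argument for this case does not go through. (The conclusion $\bigcap_j W_{V_j}=W_{\bigcap_j V_j}$ happens to be true here, but for a reason your proof does not supply.) Combined with the $\dim(A+B)$ obstacle you already flag in the $s\ge2$ case, the induction does not close.

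The paper takes a different route that sidesteps the sum--intersection accounting entirely. It first proves (\cref{prop:GZP}, relying on the GM-MRD theorem, \cref{thm:GM-MRD-v1-new}) that the generic code with generator $W$ is $\mathrm{GKP}(\ell)$ for all $\ell$. Then, for any $\mathrm{GKP}(\ell)$ code, it proves the formula directly (\cref{interction_thm_G}): setting $d$ equal to the partition maximum, \cref{order-l-cha} produces integers $\delta_1,\dots,\delta_\ell\ge0$ with $\sum\delta_i=k-d$ such that the pattern of $\delta_i$ copies of $V_i$ plus $d$ copies of $\{0\}$ is a generic kernel pattern. The $\mathrm{GKP}(\ell)$ property then yields an invertible $M\in\F^{k\times k}$ whose first $k-d$ rows annihilate the corresponding $V_i$'s; after this change of basis, every vector in $\bigcap_i (MG)_{V_i}$ must vanish in its first $k-d$ coordinates, giving $\dim\bigcap_i W_{V_i}\le d$. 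The hard combinatorics you were hoping to capture with a ``strengthened inductive hypothesis'' is exactly what \cref{order-l-cha} packages (equivalence of the partition condition with the $\delta_i$--condition), and the algebraic leverage comes from the GM-MRD theorem rather than from any direct bound on $\dim(A+B)$.
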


The above formula generalizes a similar formula proven in \cite{brakensiek2023generic}. More specifically, the formula in \cite{brakensiek2023generic} has the same form as \eqref{intersec1}, except it only considers subspaces of the form $W_S$ for $S\subseteq [n]$ of size at most $k$, i.e., $W_S$ is the span of the columns of $W$ with indices in $S$. As discussed, every such $W_S$ can be realized as some $W_V$, where we choose $V$ to be the subspace of all vectors $\bv\in\F_q^n$ whose coordinates with indices in $[n]\setminus S_i$ are all zero. On the other hand, there are many subspaces $W_V$ that do not come from any $W_S$. Thus, our formula is more general than that in \cite{brakensiek2023generic} and may be of independent interest.

\subsubsection{The GM-MRD Theorem}\label{gmmrdsub}

Similar to the GM-MDS theorem, which states that generic zero patterns are all attained by Reed--Solomon codes, the GM-MRD theorem states that generic kernel patterns are all attained by Gabidulin codes. We present two versions of the GM-MRD theorem. The first one applies to symbolic Gabidulin codes, whereas the second one applies to Gabidulin codes over finite fields. 

\begin{thm}[GM-MRD theorem, informal version of \cref{thm:GM-MRD-v1-new}]\label{gmmrdthm}
Let $1\leq k\leq n$.
Let $\F$ be the function field $\F_q(Z_1,\dots,Z_n)$ be a function field. 
Then the $[n,k]_{\F}$-linear symbolic Gabidulin code $\mG_{n,k}(Z_1,\dots,Z_n)$ defined by the generator matrix $G=\left(Z_j^{q^{i-1}}\right)_{i\in [k], j\in [n]}\in \F^{k\times n}$ is $\mathrm{GKP}(\ell)$ for all $\ell\geq 1$.
\end{thm}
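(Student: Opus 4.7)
The plan is to translate GKP attainment for the symbolic Gabidulin code into an existence statement about $\F$-linearly independent linearized polynomials with prescribed $\F_q$-vanishing loci, and then establish the existence via a transversal argument supported by a Gabidulin analogue of the generic intersection formula (\cref{gintersec}).

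\textbf{Reformulation.} A codeword $\bc = (f(Z_1), \ldots, f(Z_n))$ of $\mG_{n,k}(Z_1, \ldots, Z_n)$, with $f \in \mathcal{F}$ the $k$-dimensional $\F$-space of linearized polynomials of $q$-degree $\leq k-1$ over $\F = \F_q(Z_1, \ldots, Z_n)$, satisfies $\sum_j v_j f(Z_j) = f\!\bigl(\sum_j v_j Z_j\bigr)$ for every $\bv = (v_j) \in \F_q^n$ by the $\F_q$-linearity of $f$. Hence $\bc$ is orthogonal to $V \subseteq \F_q^n$ iff $f$ vanishes on $U_V := \{\sum_j v_j Z_j : \bv \in V\} \subseteq \F$, and the $\F_q$-linear independence of $Z_1, \ldots, Z_n$ in $\F$ makes $V \mapsto U_V$ a lattice isomorphism preserving dimensions, sums, and intersections. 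So attaining a GKP $(V_1, \ldots, V_k)$ amounts to finding $\F$-linearly independent $f_1, \ldots, f_k \in \mathcal{F}$ with $f_i|_{U_{V_i}} = 0$ for each $i$.

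\textbf{Candidate spaces and transversal.} Using the monic $q$-annihilator $L_i(X) := \prod_{u \in U_{V_i}}(X - u)$, a linearized polynomial of $q$-degree $d_i := \dim_{\F_q} V_i \leq k-1$ (the bound following from the GKP condition applied with $|\Omega|=1$), the candidate set $\mathcal{F}_i := \{f \in \mathcal{F} : f|_{U_{V_i}} = 0\}$ equals $\{h \circ L_i : h \text{ linearized},\ q\text{-}\deg h \leq k-1-d_i\}$ and has $\F$-dimension $k - d_i$. By the vector-space version of Rado/Hall's theorem, $\F$-linearly independent representatives $f_i \in \mathcal{F}_i$ exist iff $\dim_\F \sum_{i \in \Omega} \mathcal{F}_i \geq |\Omega|$ for every nonempty $\Omega \subseteq [k]$. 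Under the pairing that identifies $\mathcal{F}^* \cong \F^k$ via evaluation at any $k$ $\F_q$-independent elements, $\mathcal{F}_i^\perp$ becomes the $\F$-span of the Moore vectors $(u, u^q, \ldots, u^{q^{k-1}})$ for $u \in U_{V_i}$, which coincides with $G_{V_i}$: the columns of $GA$ for any $A$ with column span $V_i$ are exactly these Moore vectors, because $\sum_j Z_j^{q^{i-1}} a_j = (\sum_j a_j Z_j)^{q^{i-1}}$ for $a_j \in \F_q$. The transversal condition therefore becomes
\[
\dim_\F \bigcap_{i \in \Omega} G_{V_i} \leq k - |\Omega|.
\]

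\textbf{Proof of the bound and main obstacle.} I plan to establish the Gabidulin analogue of \cref{gintersec}: for all $\F_q$-subspaces $V_1, \ldots, V_m \subseteq \F_q^n$ of dimension at most $k$,
\[
\dim_\F \bigcap_{i \in [m]} G_{V_i} = \max_{P_1 \sqcup \cdots \sqcup P_s = [m]} \Bigl(\sum_{j=1}^{s} \dim_{\F_q} \bigcap_{i \in P_j} V_i - (s-1)k\Bigr).
\]
Applied with $[m]=\Omega$, substituting the GKP bound $\dim_{\F_q} \bigcap_{i \in P_j} V_i \leq k - |P_j|$ into each partition term gives the required $\leq \sum_j(k - |P_j|) - (s-1)k = k - |\Omega|$. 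By \cref{1equiv}, this formula is equivalent to the statement that the symbolic Gabidulin code is $\operatorname{MRD}(\ell)$ for all $\ell$. The $\geq$ direction of the formula is automatic from specialization: $G$ arises from the fully generic matrix $W = (Z_{i,j})$ via $Z_{i,j} \mapsto Z_j^{q^{i-1}}$, and specialization only increases intersection dimensions. The main obstacle is the matching $\leq$, since the Frobenius-structured specialization is highly non-generic and could in principle inflate intersections. My intended route is induction on $k$: using the $\F_q$-linear independence of any basis of $U_{V_i}$ of size $\leq k-1$, the nonvanishing of the corresponding Moore determinant in the function field $\F$ shows that the coordinate projection $\pi \colon \F^k \twoheadrightarrow \F^{k-1}$ dropping the last coordinate maps each $G_{V_i}$ isomorphically onto the column span of the $(k-1)$-dimensional symbolic Gabidulin generator; the delicate step is then to argue that $\pi\bigl(\bigcap_i G_{V_i}\bigr) = \bigcap_i \pi(G_{V_i})$ in a way that respects the partition achieving the maximum, which is where case analysis on the configuration of the $V_i$'s relative to the Frobenius reduction is expected to enter.
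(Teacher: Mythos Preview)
Your Rado-transversal reformulation is correct and clean: the duality $\mathcal{F}_i^\perp = G_{V_i}$ holds exactly as you describe, and the theorem is indeed equivalent to the bound $\dim_\F \bigcap_{i\in\Omega} G_{V_i} \le k - |\Omega|$ for every GKP $(V_1,\ldots,V_k)$ and every nonempty $\Omega$. But this is a reformulation, not a reduction---as you yourself observe via \cref{1equiv}, establishing the intersection formula for $G$ is the same as showing the symbolic Gabidulin code is $\operatorname{MRD}(\ell)$, hence $\operatorname{GKP}(\ell)$, which is the statement. So the entire content sits in your induction-on-$k$ sketch for the $\le$ direction, and that sketch has a genuine gap. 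The step you flag as ``delicate,'' namely $\pi\bigl(\bigcap_i G_{V_i}\bigr) = \bigcap_i \pi(G_{V_i})$, is false in general and nothing in your setup rules out failure here: although $\pi$ is injective on each $G_{V_i}$ individually (this part is correct, via the nonvanishing Moore determinant), a vector in $\bigcap_i G'_{V_i}$ lifts to a \emph{separate} preimage in each $G_{V_i}$, and those preimages need not agree. Forcing them to agree is precisely the combinatorial content of the theorem. The induction also does not close numerically: passing from $k$ to $k-1$ shifts the partition formula's right-hand side by $(s-1)$, not by $1$, so even a working projection step would not directly recover the bound from the inductive hypothesis.

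The paper takes an entirely different route, and in fact proves \cref{gintersec} and \cref{1equiv} \emph{downstream} of GM-MRD (see the proof of \cref{prop:GZP}, which invokes \cref{thm:GM-MRD-v1-new}), so treating them as inputs here risks circularity. Rather than go through intersections at all, the paper writes down the candidate matrix $M_{\cS}$ explicitly---its rows are coefficient vectors of the annihilators $f_i=\prod_{\alpha\in V_i'}(X-\alpha)$, augmented with slackness parameters $r_i$---and proves $\det(M_{\cS})\ne 0$ by a three-case induction on a triple $(k,m,s)$ (\cref{thm:generalization-new}). Case~1 factors $f_i=f_i^*\circ f_0$ through the annihilator $f_0$ of a tight intersection $\bigcap_{i\in\Omega_1}V_i$; Case~2 does reduce $k$, but only when exactly one pair $(V_m,r_m)$ saturates $\dim V_m+r_m=k$, which forces a single nonzero entry in the first column so that cofactor expansion is clean; Case~3 specializes one linear variable to zero to reduce the ambient parameter $s$, and this is where the notion of $s$-admissible tuples (\cref{defn:admissible}) enters to keep the hypothesis closed under the non-commutative composition of $q$-linearized polynomials. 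Your projection idea is closest in spirit to Case~2, but the paper's version works only under the saturation hypothesis that your sketch omits.
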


In the detailed version (\cref{thm:GM-MRD-v1-new}) of the above theorem, we also establish a bound on the degree of $\det(M_\mathcal{V})\in\F[Z_1,\dots,Z_n]$, for each generic kernel pattern $\mathcal{V}$, where $M_\mathcal{V}\in\F^{k\times k}$ is an invertible matrix such that $MG$ exhibits the kernel pattern $\mathcal{V}$. By combining this degree bound with the Schwartz--Zippel lemma and the union bound, we derive the finite field GM-MRD theorem.

\begin{thm}[GM-MRD theorem, finite field version, \cref{thm:GM-MRD-finite}]
Let $1\leq k\leq n\leq m$ and $\ell\geq 1$. Let $(\alpha_1,\dots,\alpha_n)$ be uniformly distributed over the set of all vectors in $\F_{q^m}^n$ whose coordinates are linearly independent over $\F_q$. Then with probability at least $1-3kq^{nk\cdot\min\{\ell,k\}+k-m}$, the $[n,k]_{\F_{q^m}}$-linear Gabidulin code $\mG_{n,k}(\alpha_1,\dots, \alpha_n)$ is $\mathrm{GKP}(\ell)$.
\end{thm}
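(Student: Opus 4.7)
The plan is to deduce the finite-field statement by specialising the symbolic GM-MRD theorem and applying Schwartz--Zippel together with a union bound. By the detailed version \cref{thm:GM-MRD-v1-new}, for every GKP $\mathcal{V}$ of order at most $\ell$ there exists a matrix $M_\mathcal{V}\in\F_q[Z_1,\dots,Z_n]^{k\times k}$ with $\det M_\mathcal{V}$ nonzero in $\F_q(Z_1,\dots,Z_n)$, equipped with an explicit degree bound, such that $M_\mathcal{V}\cdot G_{\mathrm{sym}}$ realises the kernel pattern $\mathcal{V}$, where $G_{\mathrm{sym}}=\bigl(Z_j^{q^{i-1}}\bigr)_{i\in[k],\,j\in[n]}$. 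Specialising $Z_j\mapsto\alpha_j$, the code $\mG_{n,k}(\alpha_1,\dots,\alpha_n)$ attains $\mathcal{V}$ whenever $\det(M_\mathcal{V})(\alpha_1,\dots,\alpha_n)\ne 0$, since $M_\mathcal{V}(\alpha_1,\dots,\alpha_n)$ is then the invertible transformation required by attainability. Because Gabidulin codes with $\F_q$-linearly independent evaluation points are automatically MRD, the whole $\operatorname{GKP}(\ell)$ property reduces to controlling the non-vanishing of these finitely many determinants at the sampled point.

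Next, I would read off from \cref{thm:GM-MRD-v1-new} a uniform bound of the shape $\deg\det M_\mathcal{V}\leq c\cdot q^{k-1}$ for a small constant $c$; this is what one expects since the rows of $M_\mathcal{V}\cdot G_{\mathrm{sym}}$ are the evaluations of linearised polynomials of $q$-degree at most $q^{k-1}$. The Schwartz--Zippel lemma over $\F_{q^m}$ then bounds $\Pr_{\beta\in\F_{q^m}^n}[\det(M_\mathcal{V})(\beta)=0]$ by $c\cdot q^{k-1-m}$. Since $(\alpha_1,\dots,\alpha_n)$ is uniform on the set of $\F_q$-linearly independent tuples in $\F_{q^m}^n$, whose density is $\prod_{i=0}^{n-1}(1-q^{i-m})\geq 1/2$ whenever $m\geq n+1$, the conditional failure probability at the sampled $\alpha$ is at most twice the unconditional one; the resulting factor together with $c$ is absorbed into the leading constant $3$ in the target bound.

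To conclude I would count the GKPs. Taking $\Omega=\{i\}$ in \cref{def:GKP} forces $\dim V_i\leq k-1$, so each slot of the tuple is filled with one of at most $q^{nk}$ subspaces of $\F_q^n$. A GKP of order exactly $r\leq\min(\ell,k)$ is specified by a choice of $r$ distinct nonzero subspaces together with an assignment of each of the $k$ slots to one of these subspaces (or to $\{0\}$), yielding at most $q^{nkr}$ choices up to a suitable combinatorial factor. Summing over $r\leq\min(\ell,k)$ gives at most $q^{nk\min(\ell,k)}$ GKPs up to a $\mathrm{poly}(k)$ overhead, and a union bound against the per-GKP estimate of the previous paragraph produces the claimed $3k\cdot q^{nk\min(\ell,k)+k-m}$.

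The main obstacle is the uniform degree bound for $\det M_\mathcal{V}$: the inductive construction of $M_\mathcal{V}$ in the proof of \cref{thm:GM-MRD-v1-new} has to be tracked carefully so that its total degree stays within $O(q^{k-1})$, which is exactly what yields the clean $q^{k}$ factor in the final exponent. By comparison, the Schwartz--Zippel application, the density estimate for the $\F_q$-linear-independence event, and the counting of GKPs are routine once the degree bound is in hand.
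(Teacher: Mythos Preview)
Your proposal is essentially correct and follows the same route as the paper: pull the degree bound from \cref{thm:GM-MRD-v1-new}, apply Schwartz--Zippel, count GKPs of order at most $\ell$, and union-bound. Two small corrections worth noting. First, the entrywise degree bound on $M_{\mathcal{V}}$ from \cref{thm:GM-MRD-v1-new} is $q^{k-1}$, so $\deg\det M_{\mathcal{V}}\le k\cdot q^{k-1}$ rather than $c\cdot q^{k-1}$ for an absolute constant $c$; this $k$ is precisely the source of the leading factor $k$ in the final bound $3k\,q^{nk\min\{\ell,k\}+k-m}$. Second, the paper handles the $\F_q$-linear-independence conditioning differently: it runs Schwartz--Zippel under the uniform distribution on all of $\F_{q^m}^n$ and then bounds the statistical distance to the uniform distribution on independent tuples by $q^{n-m}$, adding this to the failure probability. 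Your density-division argument is fine in spirit, but the claimed bound $\prod_{i=0}^{n-1}(1-q^{i-m})\ge 1/2$ is false when $q=2$ and $m=n$ (the product is about $0.289$), so the boundary case $m=n$ needs a slightly different treatment; the paper's statistical-distance argument avoids this nuisance.
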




We note that Yildiz and Hassibi previously proved that Gabidulin codes attain all generic zero patterns, thus establishing the GM-MDS theorem for Gabidulin codes (\cite{yildiz2019gabidulin}, see also \cite{yildiz2020support}). However, as discussed, generic zero patterns form only a subset of generic kernel patterns, making our GM-MRD theorem formally stronger than the GM-MDS theorem for Gabidulin codes.

Similarly, while various other GM-MDS theorems have been established so far \cite{lovett2018mds,yildiz2019optimum,brakensiek2023generalized,ron2024efficient}, including the ``ultimate GM-MDS conjecture'' formulated in \cite{brakensiek2023generalized}, these theorems all focus on the attainability of zero patterns rather than kernel patterns. Our GM-MRD theorem is, to our knowledge, the first result that explores the broader category of kernel patterns, which are more naturally connected with rank-metric codes, including MRD codes. Understanding the inner connection between our GM-MRD theorem and the various GM-MDS theorems, and even finding a unifying theory, would be very interesting.

\paragraph{Proof of the GM-MRD theorem and $s$-admissible tuples.}

Our proof of the GM-MRD theorem (\cref{gmmrdthm}) follows the same structure of the proofs of the GM-MDS theorem as presented in \cite{lovett2018mds, yildiz2019optimum}.
The approach involves a careful induction on a collection of $\F_q$-linear spaces $V_1,\dots,V_m$, along with integers $r_1,\dots,r_m\geq 1$, where $r_i$ represents the ``slackness'' in the dimension of $V_i$. The induction step is divided into several cases, enabling us to simplify each instance gradually. Eventually, we arrive at the boundary instances for which the theorem is straightforward to verify.

The proofs in \cite{lovett2018mds, yildiz2019optimum} are very similar, focusing on subsets of the $n$ evaluation points $Z_1,\dots,Z_n$, whereas we consider ``evaluation subspaces'' spanned by linear forms in $Z_1,\dots,Z_n$ over $\F_q$.
In addition, we need to consider $q$-linearized polynomials and their composition in contrast to general polynomials and their multiplication as used in the aforementioned works.

A serious difficulty that arises in the proof is the non-commutative nature of the composition of $q$-linearized polynomials. More specifically, even though our focus is on subspaces $V_i$ of linear forms in $Z_1,\dots,Z_n$, the induction steps require us to also consider subspaces that are the images of $V_i$ under $q$-linearized polynomials $f\in(\F_q[Z_1,\dots,Z_n])[X]$, which may include nonlinear forms in $Z_1,\dots,Z_n$. To capture the structure of the tuples $(V_1,\dots,V_m)$ that may arise, we introduce the notion of \emph{$s$-admissible tuples} of subspaces (\cref{defn:admissible}), where $s$ is a parameter that bounds the dimension of each subspace $V_i$. The family of $s$-admissible tuples is carefully defined to be closed under the induction steps and to exhibit a useful structure that facilitates the proof. For details, we refer the reader to \cref{gmmrdsec}.

\paragraph{Duality of Gabidulin codes.}

The last ingredient we need is the fact that the dual code of a Gabidulin code is again a Gabidulin code (\cref{duali}).

Using the duality, our main theorem---that random Gabidulin codes have the optimal list decodability (\cref{mm1})---follows readily from the equivalence of higher order MRD codes and the GM-MRD theorem.
More specifically, by \cref{cor:three_equiv}, to prove the claim that a random Gabidulin code $C$ over a sufficiently large finite field $\F_{q^m}$ is $\ldmrd{\ell}$ with high probability, we need only demonstrate that its dual code $C^\perp$ is $\operatorname{GKP}(\ell+1)$ with high probability. However, the dual code $C^\perp$ is again a random Gabidulin code. Therefore, the claim easily follows from the finite field GM-MRD theorem (\cref{thm:GM-MRD-finite}).

\section{Preliminaries}\label{Prelim}
We first introduce the basic notation. 
Define $\N=\{0,1,\dots\}$, $\N^+=\{1,2,\dots\}$, and $[n]=\{1,\ldots,n\}$.
Denote by $\F_q$ the finite field of size $q$. We use $\F$ to denote a field that is an extension of $\F_q$. Let $\overline{\F}_q$ be the algebraic closure of $\F_q$. 

All vectors are column vectors unless stated otherwise.
For $v_1,\ldots,v_n$ in a vector space $V$ over $\F$, the $\F$-subspace $V$ spanned by $v_1,\ldots,v_n$ is denoted by $\spa_{\F}\{v_1,\ldots,v_n\}$, or simply $\spa\{v_1,\ldots,v_n\}$ if $\F$ is clear from the context. The set of $m\times n$ matrices over $\F$ is denoted by $\F^{m\times n}$. The sum of two linear subspaces $V_1,V_2\subseteq\F^n$ is $V_1+V_2:=\{\bv_1+\bv_i:\bv_1\in V_1,\bv_2\in V_2\}$. If $V_1\cap V_2=\{0\}$, we also write $V_1\oplus V_2$ instead of $V_1+V_2$. 
A \emph{complement} of a subspace $V_1\subseteq V$ in $V$ is another subspace $V_2\subseteq V$ such that $V=V_1\oplus V_2$. 
Denote by $V^\perp\subseteq\F^n$ the dual space of $V\subseteq\F^n$ with respect to the inner product, i.e., $V^\perp=\{\bv\in\F^n: \bv^T\bu=0, \forall \bu\in V\}$.
Note that $\dim V^\perp=n-\dim V$ and $(V^\perp)^\perp=V$. 

We need the following two lemmas about vector spaces.
\begin{lemma}\label{lm:dualspace}
Let $V_1,\ldots,V_\ell$ be subspaces of $\F^n$. Then $\left(\bigcap_{i=1}^{\ell}V_i\right)^{\perp}=\sum_{i=1}^{\ell} V_i^{\perp}$.
\end{lemma}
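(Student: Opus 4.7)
This is a standard identity from linear algebra, and the plan is to derive it in two steps: first handle the easy containment by a direct argument, then obtain the reverse containment via duality.

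For the easy direction, I will show $\sum_{i=1}^\ell V_i^\perp \subseteq \left(\bigcap_{i=1}^\ell V_i\right)^\perp$. Pick any element $\bv = \bv_1 + \cdots + \bv_\ell$ with $\bv_i \in V_i^\perp$, and any $\bu \in \bigcap_{i=1}^\ell V_i$. Then $\bu \in V_i$ for every $i$, so $\bv_i^T \bu = 0$ for every $i$, and hence $\bv^T \bu = 0$. This shows $\bv \in \left(\bigcap_{i=1}^\ell V_i\right)^\perp$.

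For the reverse containment, I will pass through the dual identity $\left(\sum_{i=1}^\ell V_i^\perp\right)^\perp \subseteq \bigcap_{i=1}^\ell V_i$. Indeed, if $\bu$ is orthogonal to every element of $\sum_i V_i^\perp$, then in particular $\bu$ is orthogonal to every element of $V_i^\perp$ for each $i$; applying the involution $(V^\perp)^\perp = V$ to the subspace $V_i^\perp$ gives $\bu \in (V_i^\perp)^\perp = V_i$, so $\bu \in \bigcap_i V_i$. Now take the perp of both sides of the containment $\left(\sum_i V_i^\perp\right)^\perp \subseteq \bigcap_i V_i$ and use the order-reversing property together with $(W^\perp)^\perp = W$: this yields $\left(\bigcap_i V_i\right)^\perp \subseteq \sum_i V_i^\perp$, completing the proof.

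There is essentially no obstacle here; the only thing to be careful about is that the field $\F$ may be infinite (the paper allows extensions of $\F_q$ including function fields), but the argument uses only formal properties of the standard bilinear form $\bv^T\bu$ on $\F^n$, namely non-degeneracy (which gives $(V^\perp)^\perp = V$ and $\dim V + \dim V^\perp = n$). Both of these hold over any field, so no modification is needed. Alternatively, one could finish by a dimension count using $\dim(\sum_i V_i^\perp) + \dim(\bigcap_i V_i) = n$, but the duality argument above is cleaner and avoids an auxiliary inductive step.
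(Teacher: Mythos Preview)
Your proposal is correct and essentially identical to the paper's proof: both establish the easy containment directly (the paper notes $V_i^\perp\subseteq(\bigcap_i V_i)^\perp$, you unpack this at the element level), and both obtain the reverse containment by proving $(\sum_i V_i^\perp)^\perp\subseteq\bigcap_i V_i$ via $(V_i^\perp)^\perp=V_i$ and then taking perps.
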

\begin{proof} 
First, note that $V_i^\perp \subseteq \left(\bigcap_{i=1}^{\ell}V_i\right)^{\perp}$ for all $i\in [\ell]$. So $\sum_{i=1}^{\ell} V_i^{\perp}\subseteq \left(\bigcap_{i=1}^{\ell}V_i\right)^{\perp}$.
Conversely, to prove that $\sum_{i=1}^{\ell} V_i^{\perp}\supseteq \left(\bigcap_{i=1}^{\ell}V_i\right)^{\perp}$, we will show the equivalent statement that
$\left(\sum_{i=1}^{\ell} V_i^{\perp}\right)^{\perp}\subseteq \bigcap_{i=1}^{\ell}V_i$. Let $\bu\in \left(\sum_{i=1}^{\ell} V_i^{\perp}\right)^{\perp}$. Then for all $i\in [\ell]$, we have $\bu\in (V_i^\perp)^\perp=V_i$. So $\bu\in \bigcap_{i=1}^{\ell}V_i$. It follows that $\left(\sum_{i=1}^{\ell} V_i^{\perp}\right)^{\perp}\subseteq \bigcap_{i=1}^{\ell}V_i$.
\end{proof}

\begin{lemma}\label{lm:dim}
Let $A_1,\ldots,A_m,B_1,\ldots,B_m$ be subspaces of $\F^n$. Then 
\[
\dim\left(\bigcap_{i=1}^m (A_i+B_i)\right)\leq \dim\left(\bigcap_{i=1}^m A_i\right)+\sum_{i=1}^m \dim(B_i).
\]
\end{lemma}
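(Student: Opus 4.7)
The plan is to exhibit a single linear map whose kernel is exactly $\bigcap_{i=1}^m A_i$ and whose image lives in a space of dimension at most $\sum_{i=1}^m \dim B_i$. The rank–nullity theorem then delivers the bound immediately, without any induction.

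Concretely, set $W := \bigcap_{i=1}^m (A_i+B_i)$ and consider the linear map
\[
\phi: W \longrightarrow \bigoplus_{i=1}^m (A_i+B_i)/A_i, \qquad w \longmapsto \bigl(w+A_i\bigr)_{i=1}^m.
\]
Since $w \in W$ implies $w \in A_i+B_i$ for every $i$, the coset $w+A_i$ is a well-defined element of $(A_i+B_i)/A_i$, so $\phi$ is a well-defined $\F$-linear map.

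The key observations I would verify in order are: (i) the kernel of $\phi$ is $\{w \in W : w \in A_i \text{ for all } i\} = \bigcap_{i=1}^m A_i$ (using that $\bigcap_i A_i \subseteq W$); (ii) the codomain satisfies
\[
\dim\left(\bigoplus_{i=1}^m (A_i+B_i)/A_i\right) = \sum_{i=1}^m \bigl(\dim(A_i+B_i) - \dim A_i\bigr) = \sum_{i=1}^m \bigl(\dim B_i - \dim(A_i \cap B_i)\bigr) \leq \sum_{i=1}^m \dim B_i,
\]
using the standard identity $\dim(A_i+B_i) = \dim A_i + \dim B_i - \dim(A_i \cap B_i)$. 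Combining (i) and (ii) with rank–nullity gives
\[
\dim W = \dim \ker \phi + \dim \operatorname{im} \phi \leq \dim\left(\bigcap_{i=1}^m A_i\right) + \sum_{i=1}^m \dim B_i,
\]
which is the desired inequality.

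I do not anticipate a real obstacle here: the only subtlety is confirming that $\phi$ is well-defined and that its kernel, as a subspace of $W$, really equals $\bigcap_i A_i$ (rather than $W \cap \bigcap_i A_i$, which coincides with it because $\bigcap_i A_i \subseteq \bigcap_i (A_i+B_i) = W$). Everything else is a routine dimension count.
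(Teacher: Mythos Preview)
Your proof is correct. The map $\phi$ is well-defined and linear, its kernel is exactly $\bigcap_i A_i$ (using, as you note, that $\bigcap_i A_i \subseteq W$), and the codomain dimension bound is the standard identity. Rank--nullity then gives the inequality in one step.

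The paper takes a different, more incremental route: it observes that the inequality is an equality when all $B_i=\{0\}$, then replaces each $\{0\}$ by $B_i$ one at a time, arguing that each such replacement increases the right-hand side by exactly $\dim B_i$ while increasing the left-hand side by at most $\dim B_i$. Your argument is a single-shot application of rank--nullity that also exposes a slightly sharper intermediate bound, namely $\dim W \leq \dim\bigl(\bigcap_i A_i\bigr) + \sum_i \bigl(\dim B_i - \dim(A_i\cap B_i)\bigr)$, though this refinement is not needed anywhere in the paper. The paper's version is marginally more elementary in that it avoids constructing any quotient spaces or maps; yours is cleaner in that it requires no iteration.
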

\begin{proof}
Note that the inequality trivially holds if $B_i=\{0\}$ for all $i\in [m]$. 
Then, we replace $\{0\}$ by $B_i$ for $i=1,\dots,m$ and note that the replacement increases RHS by $\dim B_i$ and LHS by at most $\dim B_i$. The claim follows.
\end{proof}

\paragraph{$q$-linearized polynomials.}

A polynomial of the form $f(X)=\sum_{i=0}^{\ell}a_iX^{q^i}$ with coefficients $a_i$ in a commutative algebra $A$ over $\F_q$ is called a \emph{$q$-linearized} polynomial over $A$. 
It has the property that $f(a+b)=f(a)+f(b)$ and $f(ca)=cf(a)$ for $a,b\in A$ and $c\in\F_q$.
The $q$-degree of $f(X)$, denoted by $\deg_q(f)$, is defined to be the largest integer $\ell\geq 0$ such that $a_{\ell}\neq 0$. Throughout this paper, $f$ is always a $q$-linearized polynomial unless stated otherwise. We often view an extension field $\F$ of $\F_q$ as a $\F_q$-linear space. For any $\F_q$-subspace $V\subseteq \F$, define $f(V):=\{f(\alpha):\alpha\in V\}$, which is a $\F_q$-linear subspace of $\F$ when $f$ is a $q$-linearized polynomial. 

For $\alpha_1,\dots,\alpha_m\in \F$, where $\F$ is an extension field of $\F_q$, the matrix 
\[
M_{\alpha_1,\dots,\alpha_m}=\left(\alpha_i^{q^{j-1}}\right)_{i\in [m], j\in [m]}\in\F^{m\times m}
\]
is called the \emph{Moore matrix} associated with $\alpha_1,\dots,\alpha_m$. The following lemma characterizes its nonsingularity. For a proof, see, e.g., \cite[Lemma~1.3.3]{Goss}.

\begin{lemma}\label{lem:moore}
$\det(M_{\alpha_1,\dots,\alpha_m})\neq 0$ if and only if $\alpha_1,\ldots,\alpha_m$ are linearly independent over $\F_q$.
\end{lemma}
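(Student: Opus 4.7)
The plan is to prove both directions separately, using the fact that the Frobenius endomorphism $x\mapsto x^q$ is $\F_q$-linear on any extension field of $\F_q$. Throughout, write $M:=M_{\alpha_1,\dots,\alpha_m}$ for brevity.

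For the ``only if'' direction ($\det(M)\neq 0 \Rightarrow$ independence), I would argue by contrapositive. Suppose $\alpha_1,\dots,\alpha_m$ are $\F_q$-linearly dependent, so there exist $c_1,\dots,c_m\in\F_q$, not all zero, with $\sum_{i=1}^m c_i\alpha_i=0$. Raising this identity to the $q^{j-1}$-th power and using $c_i^{q^{j-1}}=c_i$ together with the additivity of Frobenius yields $\sum_{i=1}^m c_i\alpha_i^{q^{j-1}}=0$ for every $j\in[m]$. This is exactly a nontrivial $\F_q$-linear dependence among the rows of $M$, so $\det(M)=0$.

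For the ``if'' direction, I would again prove the contrapositive. Suppose $\det(M)=0$; then the columns of $M$ are $\F$-linearly dependent, so there exist $c_1,\dots,c_m\in\F$, not all zero, with $\sum_{j=1}^m c_j\alpha_i^{q^{j-1}}=0$ for every $i\in[m]$. Form the $q$-linearized polynomial
\[
L(X):=\sum_{j=1}^m c_j X^{q^{j-1}}\in\F[X],
\]
which is nonzero and has $q$-degree at most $m-1$, hence (ordinary) degree at most $q^{m-1}$. Each $\alpha_i$ is a root of $L$. Since $L$ is $\F_q$-linear as a map on $\F$, its zero set is an $\F_q$-subspace of $\F$, and therefore contains $V:=\spa_{\F_q}\{\alpha_1,\dots,\alpha_m\}$. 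If the $\alpha_i$ were $\F_q$-linearly independent, $V$ would have exactly $q^m$ elements, forcing $L$ to have $q^m$ roots in $\F$; this contradicts the standard root bound because $q^{m-1}<q^m$. Hence the $\alpha_i$ must be $\F_q$-linearly dependent, completing the proof.

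I do not anticipate a genuine obstacle here: the whole argument reduces to (i) invariance of $\F_q$ under Frobenius combined with additivity, and (ii) the root count for a univariate polynomial over a field, applied to the $\F_q$-linearized polynomial $L$. The only mild subtlety is noting that $L\not\equiv 0$ (ensured by the nontrivial $c_j$) and that the $q$-degree bound $\deg_q L\le m-1$ yields degree at most $q^{m-1}$, strictly less than $|V|=q^m$.
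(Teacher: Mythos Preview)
Your proof is correct in both directions. The paper does not give its own argument for this lemma but simply cites \cite[Lemma~1.3.3]{Goss}, so there is no in-paper proof to compare against; your route via the $q$-linearized polynomial $L$ together with the root bound is one of the standard arguments for this fact.
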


We also need the following lemma regarding polynomials that vanish precisely on an $\F_q$-linear subspace of $\F$.

\begin{lemma}\label{prop:q_linear}
Let $\F$ be an extension field of $\F_q$ and let $V\subseteq \F$ be an $\F_q$-linear subspace. Then $f=\prod_{\alpha\in V}(X-\alpha)\in\F[X]$ is a $q$-linearized polynomial.
\end{lemma}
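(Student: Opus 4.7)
The plan is to induct on $d := \dim_{\F_q} V$. The base case $d=0$ gives $V=\{0\}$ and $f=X$, which is trivially $q$-linearized. For the inductive step, I would pick a hyperplane $V' \subsetneq V$ of $\F_q$-codimension $1$ and an element $\beta \in V \setminus V'$, so that $V$ decomposes as the disjoint union $\bigsqcup_{c\in\F_q}(V'+c\beta)$. Writing $g(X):=\prod_{\alpha\in V'}(X-\alpha)$, which is $q$-linearized by the induction hypothesis, this decomposition rearranges the product defining $f$ into
\[
f(X)=\prod_{c\in\F_q}\prod_{\alpha\in V'}(X-\alpha-c\beta)=\prod_{c\in\F_q}g(X-c\beta).
\]

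The key step is then to exploit the $\F_q$-linearity of $g$: since $g$ is $q$-linearized, $g(X-c\beta)=g(X)-c\,g(\beta)$ for every $c\in\F_q$. Setting $\gamma:=g(\beta)$ and treating $Y:=g(X)$ as a formal variable, this reduces the computation to the identity $\prod_{c\in\F_q}(Y-c\gamma)=Y^q-\gamma^{q-1}Y$, which is the factorization of the additive polynomial whose roots in $Y$ are $\F_q\cdot\gamma$. (Note $\gamma\neq 0$, since otherwise $\beta$ would be a root of $g$, forcing $\beta\in V'$, a contradiction.) This yields the closed form
\[
f(X)=g(X)^q-\gamma^{q-1}g(X).
\]
Since $g$ is $q$-linearized, $g(X)^q=\sum_i a_i^q X^{q^{i+1}}$ is again $q$-linearized, and so is the scalar multiple $\gamma^{q-1}g(X)$; hence $f$ is a sum of $q$-linearized polynomials and therefore $q$-linearized, completing the induction.

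I do not anticipate a serious obstacle here. The only subtlety is verifying that $\gamma\neq 0$ (so that the $c\mapsto c\gamma$ map is a bijection on $\F_q$), which follows immediately from the inductive description of the root set of $g$. Everything else is formal manipulation with additive polynomials, and the argument genuinely needs only the defining property of $q$-linearized polynomials (additivity and $\F_q$-homogeneity) together with the fact that a $q$-th power of a $q$-linearized polynomial is $q$-linearized.
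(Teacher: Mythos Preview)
Your inductive proof is correct and complete. The key identity $f(X)=g(X)^q-\gamma^{q-1}g(X)$ is derived cleanly, and your check that $\gamma\neq 0$ (because the roots of $g$ are exactly $V'$) is the right justification.

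This is a genuinely different argument from the paper's. The paper does not induct: instead it forms the $(m+1)\times(m+1)$ Moore-type matrix with rows $(\alpha_i,\alpha_i^q,\ldots,\alpha_i^{q^m})$ for a basis $\alpha_1,\ldots,\alpha_m$ of $V$ together with the row $(X,X^q,\ldots,X^{q^m})$, observes that its determinant is a $q$-linearized polynomial in $X$ of degree $q^m$ vanishing on $V$, normalizes it to be monic using the nonvanishing of the Moore determinant (\cref{lem:moore}), and then argues that this monic polynomial must equal $f$ by comparing degrees and root sets. Your approach is more self-contained (it avoids the Moore-matrix lemma entirely) and yields a recursive description of the coefficients; the paper's approach is non-recursive and ties the result directly to the Moore-matrix machinery already set up in the preliminaries. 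Both are standard proofs of this classical fact.
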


\begin{proof} 
Let $m=\dim V$ and let $\alpha_1,\dots,\alpha_m$ form a basis of $V$. Define the matrix
\[
M=\begin{pmatrix}
\alpha_1 & \alpha_1^q & \cdots & \alpha_1^{q^{m-1}} & \alpha_1^{q^m}\\
\alpha_2 & \alpha_2^q & \cdots & \alpha_2^{q^{m-1}} & \alpha_2^{q^m}\\
\vdots & \vdots & \ddots & \vdots & \vdots\\
\alpha_m & \alpha_m^q & \cdots & \alpha_m^{q^{m-1}} & \alpha_m^{q^m}\\
X & X^q & \cdots & X^{q^{m-1}} & X^{q^m}
\end{pmatrix}\in  \F(X)^{(m+1)\times (m+1)}.
\]
Note that the coefficient of $X^{q^m}$ in $\det(M)$ is $\det(M_{\alpha_1,\dots,\alpha_m})$, which is nonzero by \cref{lem:moore}.
So $\det(M)$ is a $q$-linearized polynomial of degree $q^m$ with the leading coefficient $\det(M_{\alpha_1,\dots,\alpha_m})$. It vanishes at any $\F_q$-linear combination of $\alpha_1,\dots,\alpha_m$, i.e., it vanishes on the set $V$.
So $g:=\det(M_{\alpha_1,\dots,\alpha_m})^{-1} \cdot \det(M)$ is a monic $q$-linearized polynomial vanishing on $V$. But $f$ is another monic polynomial that vanishes on $V$. So $f-g$ is a polynomial over $\F$ of degree less that $q^m=|V|$ that vanishes on $V$. This implies $f-g=0$, i.e., $f=g$. So $f$ is a $q$-linearized polynomial.
\end{proof}
 
For a $k\times n$ matrix $G$ over an extension field $\F$ of $\F_q$ and an $n\times \ell$ matrix $A$ over $\F_q$, define $G_A:=GA\in\F^{k\times \ell}$. We denote by $\langle A \rangle$ the $\F_q$-subspace of $\F_q^n$ spanned by the columns of $A$. 
For an $\F_q$-subspace $V\subseteq \F_q^n$ and $A\in \F_q^{n\times \dim V}$ such that $V=\langle A\rangle$, define $G_V\subseteq \F^k$ to be the column span of $G_A$ over $\F$.

\paragraph{Linear codes.} 
Let $\F$ be a field.
An \emph{$[n,k]_\F$ linear code} (or \emph{$[n,k]_\F$ code} for short) is simply a subspace of $\F^n$ of dimension $k$. 
The \emph{dual code} of an $[n,k]_\F$ code $C$ is the $[n,n-k]_\F$ code $C^\perp$.

For an $[n,k]_\F$ code $C$, a matrix $G\in \F^{k\times n}$ is said to be a \emph{generator matrix} of $C$ if $C=\{G^T \bu: u\in \F^k\}$, and a matrix $H\in \F^{(n-k)\times n}$ is said to be a \emph{parity check matrix} of $C$ if $C=\{\bv\in \F^n: H\bv=0\}$. A generator matrix of $C$ is also a parity check matrix of the dual code $C^\perp$. Similarly, a parity check matrix of $C$ is also a generator matrix of $C^\perp$.


\subsection{Rank-Metric Codes and Gabidulin Codes}
We first review some basic facts and results about rank-metric codes. The \emph{rank distance} $d(A,B)$ between two matrices $A, B\in \F_q^{m\times n}$ is defined to be the rank of $A-B$, i.e., $d(A,B):=\rank(A-B)$. Indeed, this defines a distance \cite{G85}. A rank-metric code $C$ is a subset of $\F_q^{m\times n}$
whose rate and minimum distance are given by
\[R(C) := \frac{\log_q\abs{C}}{nm}\quad \text{ and }\quad d(C) := \min_{\substack{A, B\in C\\ A\neq B}}d(A,B).\]
Without loss of generality, we always assume that $m\geq n$, since otherwise we can exchange $n$ and $m$. 
It is convenient to treat an $m\times n$ matrix $A$ over $\F_q$ as a vector $\bv=(v_1,\ldots,v_n)\in \F_{q^m}^n$ by identifying $\F_q^m$ with $\F_{q^m}$ (by fixing a basis of $\F_{q^m}$) and viewing each column of $A$ as an element in $\F_{q^m}$. Then, we have $\rank(A)=\dim_{\F_q}(\spa_{\F_q}\{v_1,\ldots,v_n\})$. In this way, a rank-metric code $C$ may be viewed as a subset of $\F_{q^m}^n$, and we can study linear rank-metric codes, i.e, codes that are $\F_{q^m}$-subspaces.

\paragraph{Linear rank-metric codes over a general field $\F/\F_q$.}

It is convenient for us to consider a general notion of linear rank-metric codes $C\subseteq \F^n$ over a field $\F/\F_q$ that can even be infinite. To properly define this notion, we first define the $\F_q$-rank and the kernel subspace of a vector $\bv\in \F^n$.

\begin{defn}[$\F_q$-rank]
Let $\F$ be an extension field of $\F_q$.
For $\bv=(v_1,\dots,v_n)\in \F^n$, define 
\[
\rank_{\F_q}(\bv):=\dim_{\F_q}(\spa_{\F_q}\{v_1,\ldots,v_n\}),
\]
called the \emph{$\F_q$-rank} of $\bv$.
\end{defn}

\begin{defn}[Kernel subspace]
For $\bv=(v_1,\dots,v_n)\in \F^n$, define the \emph{$\F_q$-kernel subspace} (or simply the \emph{kernel subspace}) of $\bv$ to be
\[
\ker_{\F_q}(\bv):=\left\{\bu\in \F_q^n: \bu^T \bv=0\right\}=\left\{(u_1,\dots,u_n)\in \F_q^n: \sum_{i=1}^n u_i v_i=0\right\}.
\]
\end{defn}

The following lemma can be seen as an alternative definition of the $\F_q$-rank.

\begin{lemma} \label{lem:rank-kernel}
$\rank_{\F_q}(\bv)=n-\dim_{\F_q}(\ker_{\F_q}(\bv))$.
\end{lemma}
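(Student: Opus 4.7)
The plan is to apply the rank–nullity theorem to a single, canonical $\F_q$-linear map. Consider the evaluation map
\[
\phi_{\bv}: \F_q^n \longrightarrow \F, \qquad \bu = (u_1,\dots,u_n) \longmapsto \bu^T \bv = \sum_{i=1}^n u_i v_i.
\]
This is a well-defined $\F_q$-linear map because each $v_i$ lies in $\F$ and the scalars $u_i$ lie in $\F_q \subseteq \F$. The domain $\F_q^n$ has $\F_q$-dimension $n$.

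Next, I would identify the kernel and image of $\phi_{\bv}$ with the two quantities appearing in the lemma. Directly from the definition, $\ker(\phi_{\bv}) = \ker_{\F_q}(\bv)$. For the image, note that $\phi_{\bv}(\be_i) = v_i$, where $\be_i$ is the $i$-th standard basis vector of $\F_q^n$, so by $\F_q$-linearity the image of $\phi_{\bv}$ is exactly $\spa_{\F_q}\{v_1,\dots,v_n\}$, whose $\F_q$-dimension is $\rank_{\F_q}(\bv)$ by definition.

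Applying rank–nullity to $\phi_{\bv}$ then yields
\[
n \;=\; \dim_{\F_q}(\F_q^n) \;=\; \dim_{\F_q}(\ker \phi_{\bv}) + \dim_{\F_q}(\mathrm{image}\,\phi_{\bv}) \;=\; \dim_{\F_q}(\ker_{\F_q}(\bv)) + \rank_{\F_q}(\bv),
\]
which rearranges to the claimed identity. There is no real obstacle here; the only subtle point worth stating explicitly is that although the codomain $\F$ may be an infinite-dimensional $\F_q$-vector space, rank–nullity still applies because the domain $\F_q^n$ is finite-dimensional, so both the kernel and image have well-defined finite $\F_q$-dimensions.
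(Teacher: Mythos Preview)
Your proof is correct and takes essentially the same approach as the paper: both define the $\F_q$-linear map $\F_q^n \to \F$, $\bu \mapsto \bu^T\bv$, identify its kernel with $\ker_{\F_q}(\bv)$ and its image with $\spa_{\F_q}\{v_1,\dots,v_n\}$, and then apply rank--nullity.
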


\begin{proof}
Consider the $\F_q$-linear map $\F_q^n\to \F$ sending $\bu\in\F_q^n$ to $\bu^T \bv$. The image of this map is $\spa_{\F_q}\{v_1,\ldots,v_n\}$, whose dimension is $\rank_{\F_q}(\bv)$ by definition. The kernel of this map is $\ker_{\F_q}(\bv)$. So $\rank_{\F_q}(\bv)=n-\dim_{\F_q}(\ker_{\F_q}(\bv))$. 
\end{proof}



We can now define the notion of a linear rank-metric code over a field $\F/\F_q$.

\begin{defn}[Linear rank-metric code]
Let $\F$ be an extension field of $\F_q$.
An \emph{$[n,k]_{\F}$ $($linear$)$ rank-metric code} is simply an $[n,k]_{\F}$ code $C\subseteq \F^n$ equipped with the distance function $d_R: \F^n\times \F^n\to\N$ defined by $d_R(\bv,\bv'):=\rank_{\F_q}(\bv-\bv')$.
The \emph{minimum distance} of $C$ is 
\[
d(C):=\min_{\substack{\bv, \bv'\in C\\ \bv\neq \bv'}}d_R(\bv, \bv')=\min_{\mathbf{0}\neq \bv\in C} \rank_{\F_q}(\bv).
\]
\end{defn}

In analogy with the classical scenario, one can prove the Singleton bound for linear rank-metric codes.

\begin{thm}[Singleton bound]
Let $C$ be an $[n,k]_{\F}$ rank-metric code. Then $d(C)\leq n-k+1$.\footnote{We remark that when $\F=\F_{q^m}$, there exists a Singleton bound, $|C|\leq q^{m(n-d+1)}$, that also applies to nonlinear rank-metric codes $C\subseteq \F^n$ \cite{G85}. However, this bound is given in terms of the size of the code, not the dimension, making it inapplicable when $\F$ is infinite.}
\end{thm}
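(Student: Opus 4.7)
The plan is to adapt the standard puncturing proof of the classical Singleton bound for linear codes in the Hamming metric, using the elementary inequality that the $\F_q$-rank of a vector is bounded by its Hamming weight. Concretely, if $\bv=(v_1,\dots,v_n)\in\F^n$ has exactly $s$ nonzero coordinates, then $\spa_{\F_q}\{v_1,\dots,v_n\}$ is spanned by those $s$ nonzero values and hence has $\F_q$-dimension at most $s$, giving $\rank_{\F_q}(\bv)\leq |\supp(\bv)|$. This is the only property of the $\F_q$-rank beyond linearity that the proof will use.

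Set $d:=d(C)$ and consider the $\F$-linear projection $\pi\colon C\to \F^{n-d+1}$ that discards the last $d-1$ coordinates. The plan is to show that $\pi$ is injective: if $\bv\in\ker\pi$ were nonzero, then $\bv$ would be supported on at most $d-1$ coordinates, so by the inequality above $\rank_{\F_q}(\bv)\leq d-1$, contradicting the definition of $d(C)$ as the minimum $\F_q$-rank of a nonzero codeword in $C$. Injectivity of $\pi$ then forces $k=\dim_\F C\leq \dim_\F \F^{n-d+1}=n-d+1$, which rearranges to the desired inequality $d\leq n-k+1$.

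I do not anticipate any real obstacle here: the argument relies only on the trivial domination of $\F_q$-rank by Hamming weight together with elementary linear algebra over $\F$, and in particular does not require $\F$ to be finite, so it covers the general setting $\F/\F_q$ adopted in the paper. The one thing to be mindful of is that linearity of $C$ is genuinely used — for non-linear rank-metric codes over an infinite $\F$ one cannot even formulate the usual quantitative Singleton bound $\log_q|C|\leq m(n-d+1)$, which is precisely the point of the footnote attached to the theorem statement.
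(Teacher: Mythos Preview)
Your proof is correct and follows essentially the same approach as the paper. The paper produces a nonzero codeword with first $k-1$ coordinates equal to zero (via the obvious projection $C\to\F^{k-1}$ having nontrivial kernel) and bounds its $\F_q$-rank by $n-k+1$; you instead project onto $n-d+1$ coordinates and argue injectivity, but both arguments rest on the same two ingredients—linear algebra over $\F$ and the inequality $\rank_{\F_q}(\bv)\leq|\supp(\bv)|$—and are just contrapositive reformulations of one another.
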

\begin{proof}
There exists a nonzero codeword $\bv=(v_1,\dots,v_n)\in C$ whose first $k-1$ coordinates are zero. So $d(C)\leq \rank_{\F_q}(\bv)=\dim_{\F_q}(\spa_{\F_q}\{v_1,\ldots,v_n\})=\dim_{\F_q}(\spa_{\F_q}\{v_k,\ldots,v_n\})\leq n-k+1$.
\end{proof}

$[n,k]_{\F}$ rank-metric codes $C$ that attain the Singleton bound (i.e., $d(C)=n-k+1$) are called MRD codes. The next lemma gives an alternative characterization of MRD codes, which is crucial in this paper. 


\begin{lemma}\label{eqmrd}
Let $G\in \F^{k\times n}$ be a generator matrix of an $[n,k]_{\F}$ code $C$. Then the following are all equivalent:
\begin{enumerate}
\item\label{item:mrd-1} $C$ is $\operatorname{MRD}$.
\item\label{item:mrd-2} For any  $A\in\F_q^{n\times k}$ of full rank, the matrix $GA\in \F^{k\times k}$ also has full rank.
\item\label{item:mrd-3} For any $k'\leq k$ and any $A'\in\F_q^{n\times k'}$ of full rank, the matrix $GA'\in \F^{k\times k'}$ also has full rank.
\end{enumerate}
\end{lemma}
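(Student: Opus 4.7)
The plan is to establish (1) $\Leftrightarrow$ (2) using the kernel subspace reformulation of the $\F_q$-rank (Lemma~\ref{lem:rank-kernel}), and then observe that (2) and (3) are linked by the elementary fact that a full-rank $\F_q$-matrix with $k'<k$ columns can be extended to a full-rank $n\times k$ matrix over $\F_q$.

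For the core equivalence (1) $\Leftrightarrow$ (2), I will unpack what it means for $C$ to be MRD using the fact that every codeword has the form $\bv = G^T\bu$ for some $\bu\in\F^k$. By Lemma~\ref{lem:rank-kernel}, the MRD condition $\rank_{\F_q}(\bv)\ge n-k+1$ for every nonzero $\bv\in C$ is equivalent to $\dim_{\F_q}\ker_{\F_q}(\bv)\le k-1$ for every nonzero $\bv\in C$. Equivalently, no nonzero codeword has its kernel subspace containing a given $k$-dimensional $\F_q$-subspace $V\subseteq\F_q^n$. Writing $V=\langle A\rangle$ with $A\in\F_q^{n\times k}$ of full rank, the condition $V\subseteq\ker_{\F_q}(\bv)$ translates into $A^T\bv=0$, i.e., $A^T G^T \bu = (GA)^T\bu = 0$. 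Thus $C$ is MRD iff for every full-rank $A\in\F_q^{n\times k}$ the matrix $(GA)^T$ has trivial kernel over $\F$, which is exactly the statement that $GA\in\F^{k\times k}$ has full rank. This gives (1) $\Leftrightarrow$ (2).

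The implication (3) $\Rightarrow$ (2) is immediate by taking $k'=k$. For (2) $\Rightarrow$ (3), given $A'\in\F_q^{n\times k'}$ of full rank with $k'\le k\le n$, I extend the columns of $A'$ to a basis of a $k$-dimensional $\F_q$-subspace of $\F_q^n$, producing a full-rank matrix $A\in\F_q^{n\times k}$ whose first $k'$ columns form $A'$. By (2), $GA$ has full rank $k$, so its $k$ columns are $\F$-linearly independent; in particular the first $k'$ columns, which form $GA'$, are $\F$-linearly independent, so $GA'$ has rank $k'$.

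I do not anticipate a genuine obstacle here; the only subtlety is being careful that the kernel subspace $\ker_{\F_q}(\bv)\subseteq\F_q^n$ is defined over $\F_q$ while the codeword $\bv$ lives over $\F$, so the translation between ``$V\subseteq\ker_{\F_q}(\bv)$'' and ``$(GA)^T\bu=0$'' needs to be written carefully, noting that $A^T G^T$ is a matrix over $\F$ obtained from an $\F_q$-matrix $A$ and an $\F$-matrix $G$. Everything else is linear algebra.
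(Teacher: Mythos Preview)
Your proposal is correct and follows essentially the same approach as the paper: both establish (1) $\Leftrightarrow$ (2) by translating the MRD condition $\rank_{\F_q}(\bv)\ge n-k+1$ for nonzero $\bv=G^T\bu$ into the statement that $\bv^T A\ne 0$ (equivalently $(GA)^T\bu\ne 0$) for every full-rank $A\in\F_q^{n\times k}$, and both handle (2) $\Leftrightarrow$ (3) via the column-extension argument. The only cosmetic difference is that you phrase (1) $\Leftrightarrow$ (2) as a single chain of equivalences through Lemma~\ref{lem:rank-kernel}, whereas the paper argues the two implications separately by contrapositive.
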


\begin{proof}
\cref{item:mrd-3} obviously implies \cref{item:mrd-2}. To see the converse, for any $A'\in\F_q^{n\times k'}$ of full rank, where $k'\leq k$, we may extend $A'$ to a matrix $A\in \F_q^{n\times k}$ of full rank. As the columns of $GA'$ form a subset of the columns of $GA$, if $GA$ has full rank (i.e., its columns are linearly independent over $\F$), then so does $GA'$.
So \cref{item:mrd-2} implies \cref{item:mrd-3}.

To see that \cref{item:mrd-2} implies \cref{item:mrd-1},
assume to the contrary that $C$ is not MRD, i.e., there exists a nonzero codeword $\bv\in C\subseteq\F^n$ such that $\rank_{\F_q}(\bv)\leq n-k$. Then, there exists $A\in\F_q^{n\times k}$ of rank $k$ such that $\bv^T A=0$. 
The row vector $\bv^T$ can be written as $\bx^T G$ for some nonzero $\bx\in \F^k$. Then $\bx^T G A=0$, implying that $GA$ does not have full rank.

Conversely, to see that \cref{item:mrd-1} implies \cref{item:mrd-2}, assume that $GA$ does not have full rank for some $A\in\F_q^{n\times k}$ of full rank.
Then there exists nonzero $\bx\in \F^k$ such that $\bx^T GA=0$. Let $\bv=G^T\bx=(\bx^T G)^T\in\F^n$, which is a nonzero codeword. Then $\bv^T A=0$. As $A$ has rank $k$, we have $\rank_{\F_q}(\bv)\leq n-k$ and hence $d(C)\leq n-k$. So $C$ is not MRD.
\end{proof}

%
%


\begin{lemma}\label{prop:dualmrd}
Let $C$ be an $[n,k]_{\F}$ code. If $C$ is $\operatorname{MRD}$, then $C^\perp$ is also $\operatorname{MRD}$.
\end{lemma}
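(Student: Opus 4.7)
The plan is to apply the characterization in Lemma \ref{eqmrd} and argue by contradiction. Fix a generator matrix $G \in \F^{k \times n}$ of $C$, so that $\bv \in C^\perp$ if and only if $G\bv = 0$. Suppose $C^\perp$ is not MRD; since $\dim_{\F} C^\perp = n-k$, there exists a nonzero codeword $\bv = (v_1, \ldots, v_n) \in C^\perp$ with $r := \rank_{\F_q}(\bv) \leq k$. I will extract from $\bv$ a full-rank matrix $A \in \F_q^{n \times r}$ for which $GA$ is singular over $\F$, contradicting the MRD property of $C$ via Lemma \ref{eqmrd}.

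To build $A$, let $\{\alpha_1, \ldots, \alpha_r\}$ be an $\F_q$-basis of $U := \spa_{\F_q}\{v_1, \ldots, v_n\}$. Each coordinate $v_j$ has a unique expansion $v_j = \sum_{i=1}^r u_{i,j} \alpha_i$ with $u_{i,j} \in \F_q$; setting $\bu_i := (u_{i,1}, \ldots, u_{i,n})^T \in \F_q^n$ gives the decomposition $\bv = \sum_{i=1}^r \alpha_i \bu_i$. The key sub-claim is that $\bu_1, \ldots, \bu_r$ are $\F_q$-linearly independent. This follows because the $\F_q$-linear map $L \colon \F_q^n \to U$, $\bu \mapsto \bu^T \bv = \sum_i (\bu_i^T \bu)\, \alpha_i$, is surjective onto $U$ (its image contains each $v_j = L(\be_j)$), and composing with the isomorphism $U \iso \F_q^r$ afforded by the basis $\{\alpha_i\}$ shows that $\bu \mapsto (\bu_i^T \bu)_{i \in [r]}$ is surjective onto $\F_q^r$, which forces $\bu_1, \ldots, \bu_r$ to be $\F_q$-linearly independent.

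Now let $A \in \F_q^{n \times r}$ be the matrix with columns $\bu_1, \ldots, \bu_r$; by the sub-claim, $A$ has full rank, and $r \leq k$. By Lemma \ref{eqmrd} applied to the MRD code $C$, the matrix $GA \in \F^{k \times r}$ has full column rank, so $G\bu_1, \ldots, G\bu_r$ are $\F$-linearly independent. On the other hand, $G\bv = 0$ combined with $\bv = \sum_i \alpha_i \bu_i$ yields $\sum_{i=1}^r \alpha_i (G \bu_i) = 0$, which is a nontrivial $\F$-linear relation since the $\alpha_i \in \F$ are all nonzero. This contradiction establishes that $C^\perp$ is MRD.

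The only delicate point is the $\F_q$-linear independence of $\bu_1, \ldots, \bu_r$, which is exactly what allows Lemma \ref{eqmrd} to be invoked on $A$; the rest is a clean passage between the rank of a vector over $\F$ and the full-rankness of a matrix over $\F_q$.
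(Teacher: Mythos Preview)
Your proof is correct and follows essentially the same approach as the paper: both decompose a low-rank nonzero codeword $\bv\in C^\perp$ as $\bv=A\bz$ with $A\in\F_q^{n\times r}$ of full rank and $\bz\in\F^r$ whose entries form an $\F_q$-basis of $\spa_{\F_q}\{v_1,\dots,v_n\}$, and then use $G\bv=GA\bz=0$ together with Lemma~\ref{eqmrd} to contradict the MRD property of $C$. You spell out the linear-independence of the columns of $A$ more carefully than the paper does, but the argument is the same.
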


\begin{proof}
By definition, $C^\perp$ is an $[n,n-k]_{\F}$ code. Let $G\in\F^{k\times n}$ be a generator matrix of $C$, which is also a parity check matrix of $C^{\perp}$. Assume to the contrary that $C^{\perp}$ is not MRD. Then there exists nonzero $\bv=(v_1,\dots,v_n)\in C^{\perp}$ such that $\rank_{\F_q}(\bv)=r\leq k$. 
By definition, $\dim_{\F_q}(\spa_{\F_q}\{v_1,\ldots,v_n\})=r$.
Pick $z_1,\dots,z_r\in \F$ that form a basis of $\spa_{\F_q}\{v_1,\ldots,v_n\}$ over $\F_q$, and let $\bz=(z_1,\dots,z_r)\in \F^r$.
Then $\bv=A\bz$ for some $A\in\F_q^{n\times r}$ of rank $r$.
As $\bv\neq 0$, we have $\bz\neq 0$.
Finally, as $\bv\in C^\perp$, we have $0=G\bv=GA \bz$. So $GA\in \F^{k\times r}$ does not have full rank. By \cref{eqmrd}, $C$ is not MRD.
\end{proof}

We also record the basic fact that the properties of $\operatorname{GKP}(\ell)$, $\operatorname{MRD}(\ell)$, and $\ldmrd{\ell}$ all imply the $\operatorname{MRD}$ property.
\begin{lemma}\label{lem:strengthen}
Let $\ell\geq 1$. A code that is $\operatorname{GKP}(\ell)$, $\operatorname{MRD}(\ell)$, or $\ldmrd{\ell}$ is also $\operatorname{MRD}$.
\end{lemma}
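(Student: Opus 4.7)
The plan is to handle each of the three notions separately, since they become increasingly easy to deduce once the right witnesses are identified.

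First, the $\operatorname{GKP}(\ell)$ case is essentially immediate: \cref{olkernel1} bakes the $\operatorname{MRD}$ property into the definition of $\operatorname{GKP}(\ell)$, so there is nothing to prove.

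Next, for the $\operatorname{MRD}(\ell)$ case, I will argue that it suffices to handle $\ell=1$, since the defining condition \eqref{eq:mrd-ell-condition} becomes weaker as $\ell$ decreases (any smaller instance can be padded by repeating $V_1$). So the task reduces to showing that $\operatorname{MRD}(1) \Rightarrow \operatorname{MRD}$. Specialize the defining equality to a single subspace $V_1 \subseteq \F_q^n$ of dimension exactly $k$, represented by a full-rank $A \in \F_q^{n\times k}$. Then $\dim_\F G_{V_1} = \dim_\K W_{V_1}$, so it is enough to prove that the symbolic matrix $W = (Z_{i,j})$ is itself $\operatorname{MRD}$, i.e., $\dim_\K W_{V_1} = k$. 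For this I will exhibit a specialization $Z_{i,j} \mapsto z_{i,j} \in \F_q$ for which $WA$ becomes the identity matrix: since $A$ has rank $k$, the $k$ linear forms $L_j(\bz) := \sum_{\ell} A_{\ell,j} z_\ell$ on $\F_q^n$ are linearly independent, so the map $\F_q^n \to \F_q^k,\ \bz \mapsto (L_1(\bz),\dots,L_k(\bz))$ is surjective, and we can choose $\bz^{(i)}$ with $L_j(\bz^{(i)}) = \delta_{ij}$ and set $Z_{i,j} \mapsto z^{(i)}_j$. This produces a nonzero $\det(WA)$ in $\K$, hence $W_{V_1}$ has full $\K$-dimension $k$. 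Consequently $G_{V_1}$ has full $\F$-dimension $k$, i.e., $GA$ has full rank, and \cref{eqmrd} yields that $C$ is $\operatorname{MRD}$.

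Finally, for the $\ldmrd{\ell}$ case, it suffices to treat $\ell=1$ since $\ldmrd{\ell}$ requires $\operatorname{LD-MRD}(\ell')$ for every $\ell' \in [\ell]$, including $\ell'=1$. Unpacking \cref{defn:average}, $\operatorname{LD-MRD}(1)$ says that for every $\by \in \F^n$ and every pair of distinct codewords $\bc_0, \bc_1 \in C$, we have $d_R(\by, \bc_0) + d_R(\by, \bc_1) > n-k$. Setting $\by = \bc_0$ collapses this to $d_R(\bc_0, \bc_1) > n-k$, i.e., $d_R(\bc_0, \bc_1) \geq n-k+1$ for all distinct codewords. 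Hence $d(C) \geq n-k+1$, and combined with the Singleton bound this forces $d(C) = n-k+1$, so $C$ is $\operatorname{MRD}$.

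The only nontrivial step is producing the explicit $\F_q$-specialization that makes $WA$ invertible; everything else is bookkeeping. I expect this step to be the main obstacle only in the sense that one must resist trying to prove $W$ is $\operatorname{MRD}$ via a generic-determinant argument without being careful that the specialization is in $\F_q$ (not in some larger field), which is what makes the linear-algebra dualization of $A$ the clean move.
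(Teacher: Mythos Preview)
Your proof is correct and follows the same overall structure as the paper: the $\operatorname{GKP}(\ell)$ case is by definition, the $\ldmrd{\ell}$ case reduces to $\operatorname{LD-MRD}(1)$ and is handled by choosing $\by=\bc_0$ (exactly as the paper does), and the $\operatorname{MRD}(\ell)$ case reduces to $\operatorname{MRD}(1)$ and then to showing the symbolic matrix $W$ is $\operatorname{MRD}$.

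The one genuine difference is in how you establish that $W$ is $\operatorname{MRD}$. The paper invokes \cref{generic_a}, which proves $\det(WA)\neq 0$ via the Cauchy--Binet formula and a monomial-disjointness argument. Your specialization argument---choosing, for each row $i$, a vector $\bz^{(i)}\in\F_q^n$ with $A^T\bz^{(i)}=e_i$ (possible since $A$ has full column rank)---is more direct and avoids Cauchy--Binet entirely. It gives a slightly cleaner self-contained proof of this lemma, while the paper's route has the advantage of proving a stronger statement (\cref{generic_a} handles all $r\leq k$ at once) that is reused elsewhere.
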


\begin{proof}
By definition, we only need to verify that an $\operatorname{LD-MRD}(1)$ code is $\operatorname{MRD}$. Suppose $C$ is an $[n,k]_{\F}$ code that is $\operatorname{LD-MRD}(1)$. Then $C$ is $\left(\rho,1\right)$-average radius list decodable in the rank metric, where $\rho=\frac{1}{2}(1-k/n)$. Let $\bv, \bv'\in C$ such that $d_R(\bv,\bv')=d(C)$, and let $\by=\bv$. Then $d(C)=d_R(\bv,\bv')=d_R(\by, \bv)+d_R(\by,\bv')>2n\rho=n-k$. So $d(C)\geq n-k+1$, i.e., $C$ is MRD.
\end{proof}
%
%


\paragraph{Gabidulin codes.}

The most famous MRD codes are Gabidulin codes, which are defined by using the evaluation of linearized polynomials. We briefly review the construction of Gabidulin codes \cite{G85} and extend it to a general field $\F/\F_q$.
\begin{defn}[Gabidulin code over $\F$]
Let $0<k\leq n$ be integers.
Let $\F$ be an extension field of $\F_q$ such that $[\F:\F_q]\geq n$.
Let $\alpha_1,\dotsc, \alpha_n\in\F$ be linearly independent over $\F_q$. Define
\[
\mG_{n,k}(\alpha_1,\dotsc, \alpha_n) := \left\{\bx_f=(f(\alpha_1),\ldots,f(\alpha_n)):\; f\in\F[X]~\text{is $q$-linearized and}~\deg_q(f)\leq k-1\right\}.
\]
which is an $[n,k]_\F$ rank-metric code of minimum distance $\min_{f\neq 0,\deg_q(f)<k}\rank_{\F_q}(\bx_f)$.
\end{defn}

For a nonzero codeword $\bx_f=(f(\alpha_1),\ldots,f(\alpha_n))\in \mG_{n,k}(\alpha_1,\dotsc, \alpha_n)$, using the fact that $f$ is $q$-linearized, we have 
\[
\ker_{\F_q}(\bx_f)
=\left\{(u_1,\dots,u_n)\in\F_q^n: \left(\sum_{i=1}^n u_i f(\alpha_i)\right)=0\right\}
=\left\{(u_1,\dots,u_n)\in\F_q^n: f\left(\sum_{i=1}^n u_i \alpha_i\right)=0\right\}
\]
whose dimension over $\F_q$ is bounded by $k-1$ since $\alpha_1,\dots,\alpha_n$ are linearly independent over $\F_q$ and $f$ has at most $\deg(f)\leq q^{k-1}$ roots. So $\rank_{\F_q}(\bx_f)\geq n-k+1$ by \cref{lem:rank-kernel}. This shows that Gabidulin codes are MRD codes.

\subsection{The MRD Property of the Generic Linear Code}

We now show that the symbolic matrix $W=(Z_{i,j})_{i\in [k], j\in [n]}$ with variables $Z_{i,j}$, which can be seen as a generator matrix of the $[n,k]_\F$ ``generic'' linear code over $\F=\F_q(Z_{1,1},\dots,Z_{k,n})$, has the MRD property. To prove this statement, we need a tool known as the \emph{Cauchy--Binet formula}.


\begin{fa}[Cauchy--Binet formula (see, e.g., \cite{tao2023topics})]
Let $\F$ be a field and $n\ge r$. Let $A$ be an $r\times n$ matrix and $B$ be an $n\times r$ matrix over $\F$. For a subset $S\subseteq[m]$ of size $r$, denote by $A_S$ the $r\times r$ submatrix of $A$ whose columns are selected by $S$, and similarly, denote by $B^S$ the $r\times r$ submatrix of $B$ whose rows are selected by $S$. Then,
\[
\det \left(AB\right)=\sum_{\substack{S\subseteq[n]\\|S|=r}}\det(A_S)\det(B^S).
\]
\end{fa}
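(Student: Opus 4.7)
The plan is to prove the Cauchy--Binet formula by expanding $\det(AB)$ using the multilinearity and alternating properties of the determinant in its columns, then grouping terms by the underlying index set.

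First, I would observe that the $j$-th column of $AB$ decomposes as $\sum_{k=1}^{n} B_{kj}\, A_{\cdot k}$, where $A_{\cdot k}$ denotes the $k$-th column of $A$. Hence $\det(AB)$ equals the determinant of the matrix whose $j$-th column is $\sum_{k_j=1}^{n} B_{k_j, j}\, A_{\cdot k_j}$ for each $j \in [r]$. Applying multilinearity in each of the $r$ columns expands this as
\[
\det(AB) = \sum_{(k_1,\dots,k_r)\in [n]^r} \left(\prod_{i=1}^r B_{k_i, i}\right) \det\bigl(A_{\cdot k_1}, \dots, A_{\cdot k_r}\bigr).
\]

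Next, the alternating property of $\det$ eliminates every tuple with a repeated index, leaving only injective tuples $(k_1, \dots, k_r)$. I would then group such tuples by their image $S \subseteq [n]$ of size $r$. Writing $S = \{s_1 < s_2 < \cdots < s_r\}$, each injective tuple corresponds uniquely to a permutation $\sigma \in S_r$ via $k_i = s_{\sigma(i)}$, contributing $\sgn(\sigma)\det(A_S)$ to the column determinant. Factoring $\det(A_S)$ out of the sum over $\sigma$ and recognizing the residual expression $\sum_{\sigma} \sgn(\sigma) \prod_i B_{s_{\sigma(i)}, i}$ as the Leibniz formula for $\det(B^S)$ (after the change of summation variable $\sigma \mapsto \sigma^{-1}$) yields the desired identity $\det(AB) = \sum_{|S|=r} \det(A_S)\det(B^S)$.

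The Cauchy--Binet formula is entirely classical, so I anticipate no serious obstacle---the argument reduces to careful index bookkeeping enabled by the alternating multilinear structure of the determinant. As an alternative route, one could compute the determinant of the block matrix $M = \left(\begin{smallmatrix} I_n & -B \\ A & 0 \end{smallmatrix}\right)$ in two ways: a block row reduction (adding $A$ times the top block-row to the bottom) shows that $\det(M) = \det(AB)$, while a Laplace expansion along the last $r$ rows forces the column-index set of the bottom minor to lie entirely in $[n]$ (since the bottom-right $r\times r$ block is zero) and reproduces $\sum_{|S|=r} \pm \det(A_S)\det(B^S)$, with the signs working out correctly after careful tracking. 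Either approach suffices, and the first is more direct.
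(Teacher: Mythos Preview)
Your argument is correct and is one of the standard proofs of the Cauchy--Binet formula. Note, however, that the paper does not supply its own proof of this statement: it is recorded as a \emph{Fact} with an external citation, so there is no in-paper proof to compare against. Your multilinearity/alternating expansion (and the block-matrix alternative you mention) are both valid ways to establish the result.
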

\begin{lemma}\label{generic_a}
Let $r\leq k\leq n$ be positive integers.
Let $W$ be the $k\times n$ matrix $(Z_{i,j})_{i\in [k], j\in [n]}$ over the function field $\F_q(Z_{1,1},\dots, Z_{k,n})$. 
Let $A$ be an $n\times r$ matrix of full rank over $\F_q$. Then $WA\in \F_q(Z_{1,1},\dots, Z_{k,n})^{k\times r}$ has the full rank of $r$. 
\end{lemma}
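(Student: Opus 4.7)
The plan is to reduce the claim to showing some $r \times r$ submatrix of $WA$ has nonzero determinant, and then to evaluate that determinant via the Cauchy--Binet formula. Since $r \leq k$, pick any subset $I \subseteq [k]$ with $|I| = r$, and let $W_I$ denote the $r \times n$ submatrix of $W$ formed by the rows of $W$ indexed by $I$. The matrix $W_I A$ is then an $r \times r$ minor of $WA$, and it suffices to show that $\det(W_I A) \neq 0$ as an element of the function field $\F_q(Z_{1,1},\dots,Z_{k,n})$.

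Applying the Cauchy--Binet formula gives
\[
\det(W_I A) \;=\; \sum_{\substack{S \subseteq [n]\\ |S| = r}} \det\bigl((W_I)_S\bigr)\,\det(A^S).
\]
The first step of the proof is to observe that, for fixed $I$, the polynomials $\det((W_I)_S)$ as $S$ ranges over $r$-subsets of $[n]$ are $\F_q$-linearly independent in the polynomial ring $\F_q[Z_{i,j}]$. Indeed, each monomial appearing in $\det((W_I)_S)$ has the form $\prod_{i \in I} Z_{i,\sigma(i)}$ for some bijection $\sigma : I \to S$, so the set of column indices used by any monomial in $\det((W_I)_S)$ is exactly $S$. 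Hence monomials arising from distinct $S$ are disjoint, which rules out any cancellation across terms in the Cauchy--Binet sum. In particular, each individual polynomial $\det((W_I)_S)$ is nonzero.

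The second step is to invoke the hypothesis that $A \in \F_q^{n \times r}$ has full rank $r$, which by the standard characterization of matrix rank guarantees at least one subset $S \subseteq [n]$ with $|S| = r$ and $\det(A^S) \neq 0$. Combining this with the linear independence above, the Cauchy--Binet sum is a nontrivial $\F_q$-linear combination of linearly independent nonzero polynomials, hence is itself a nonzero element of $\F_q[Z_{1,1},\dots,Z_{k,n}] \subseteq \F_q(Z_{1,1},\dots,Z_{k,n})$. Therefore $\det(W_I A) \neq 0$, so $WA$ has an invertible $r \times r$ submatrix and thus full rank $r$.

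There is no real obstacle here: the argument is a direct application of Cauchy--Binet together with the elementary monomial observation. The only subtlety worth flagging is the monomial-disjointness argument establishing the linear independence of $\{\det((W_I)_S)\}_S$, which is essential because otherwise the different terms in the Cauchy--Binet expansion could in principle cancel.
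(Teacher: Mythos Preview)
Your proof is correct and takes essentially the same approach as the paper: both reduce to an $r\times r$ minor of $WA$, apply Cauchy--Binet, and use the monomial-disjointness observation (that each monomial of $\det((W_I)_S)$ uses exactly the column indices in $S$) to rule out cancellation. The only cosmetic difference is that the paper restricts to the top $r$ rows while you pick an arbitrary $I$, and you phrase the disjointness as $\F_q$-linear independence of the $\det((W_I)_S)$ rather than directly arguing at the monomial level.
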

\begin{proof}
By replacing $W$ with its top $r\times n$ submatrix, we may assume $k=r$.
By the Cauchy--Binet formula, 
\[\det \left(W A\right)=\sum_{\substack{S\subseteq [n]\\|S|=r}}{\det(W_S)\det(A^S)}.\]
Since $A$ is a full rank matrix over $\F_q$, there exists $S\subseteq [n]$ of size $r$ such that $\det(A^S)\neq 0$. 
Note that $\det(W_S)\neq 0$, which follows by expanding $\det(W_S)$ into a linear combination of monomials and using the fact that the entries of $W$ are distinct variables $Z_{i,j}$.
So $\det(W_S)\det(A^S)\neq 0$.
For any $S'\subseteq [n]$ of size $r$, every monomial of $\det(W_{S'})\det(A^{S'})$ has the form $Z_{1,i_1}\cdots Z_{r, i_r}$ where $\{i_1,\dots,i_r\}=S'$.
So the monomials of $\det(W_S)\det(A^S)$ are different from those of $\det(W_{S'})\det(A^{S'})$ when $S\neq S'$.
It follows that $\det(WA)\neq 0$, i.e., $W A$ has full rank. 
\end{proof}

\section{Generic Kernel Patterns}
In this section, we develop a structural theory regarding the concept of a \emph{generic kernel pattern} of order $\ell$ over a finite field $\mathbb{F}_q$. The notion of generic kernel patterns that we introduce can be viewed as a natural generalization of \emph{generic zero patterns} studied in \cite{dau2014existence, lovett2018mds, yildiz2019optimum, brakensiek2023generic}.

Recall that a tuple $\mathcal{V}=(V_1,V_2,\ldots,V_k)$, where each $V_i$ is a linear subspace of $\F_q^n$, is called a generic kernel pattern over $\F_q$ if for every nonempty set $\Omega\subseteq [k]$, we have
\[
\dim\left(\bigcap_{i\in \Omega} V_i\right)\leq k-|\Omega|.
\]
Furthermore, $\mathcal{V}$ is said to be \emph{of order $\ell$} if there are exactly $\ell$ distinct nonzero subspaces among $V_1,V_2,\dots,V_k$. 

We will now establish a general framework to precisely characterize generic kernel patterns of order at most $\ell$, which play a crucial role in understanding the list decodability of Gabidulin codes. To achieve this, we will develop a linear-algebraic analog of the \emph{generalized Hall's theorem} as proved by Brakensiek, Gopi, and Makam in \cite{brakensiek2023generic}.

We remark that, although the statements and proofs in this section are presented over $\F_q$, sufficient for our applications, they remain valid over a general field $\F$.

\subsection{Generalized Hall's Theorem for Vector Spaces}

We start by proving the following theorem, which can be seen as a linear-algebraic analog of \cite[Theorem~2]{dau2014existence}.


\begin{thm}[Hall’s theorem for vector spaces]\label{hall1}
Let $n$ and $k$ be integers with $n\ge k\geq 0$. Let $V_1,\dots,V_k$ be subspaces of $\F_q^n$. Suppose 
\begin{equation}\label{1}
\dim \left(\sum_{i\in \Omega}V_i\right)\ge n-k+|\Omega|
\end{equation}
holds for all nonempty $\Omega\subseteq [k]$.  
Then there exist subspaces $V'_i\subseteq V_i$, $i=1,\dots,k$, such that
\begin{enumerate}
\item \label{item:hall-1} $\dim \left(\sum_{i\in \Omega}V'_i\right)\ge n-k+|\Omega|$ for all nonempty $\Omega\subseteq [k]$, and
\item \label{item:hall-2} $\dim V'_i=n-k+1$ for all $i\in[k]$.
\end{enumerate}
\end{thm}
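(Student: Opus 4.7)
My plan is to proceed by induction on $\sum_i \dim V_i$, with $n$ and $k$ held fixed. For the base case, if $\dim V_i = n-k+1$ for every $i$, then taking $V'_i := V_i$ satisfies both conclusions (the hypothesis applied to $\Omega = \{i\}$ already forces $\dim V_i \geq n-k+1$, so this is the minimal dimension allowed). For the inductive step, I assume $\dim V_{i_0} \geq n-k+2$ for some index $i_0$ and seek a hyperplane $V' \subsetneq V_{i_0}$ (of codimension one in $V_{i_0}$) such that replacing $V_{i_0}$ by $V'$ in the family still satisfies the Hall condition; the induction hypothesis then finishes.

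To characterize the admissible hyperplanes $V'$, I set $W_\Omega := \sum_{j \in \Omega \setminus \{i_0\}} V_j$ for each $\Omega \ni i_0$, and let $T := \{\Omega \ni i_0 : \dim(V_{i_0}+W_\Omega) = n-k+|\Omega|\}$ be the collection of tight sets. A short calculation shows that the substitution preserves the Hall condition iff $V_{i_0} \cap W_\Omega \not\subseteq V'$ for every $\Omega \in T$. Since $\dim V_{i_0} > n-k+1$, the singleton $\{i_0\}$ is not in $T$, so every $\Omega \in T$ has $|\Omega| \geq 2$. Applying the hypothesis to $\Omega \setminus \{i_0\}$ and using the identity $\dim A + \dim B = \dim(A+B) + \dim(A \cap B)$ together with tightness, I obtain $\dim(V_{i_0} \cap W_\Omega) \geq \dim V_{i_0} - 1$ for every $\Omega \in T$. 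Hence for each such $\Omega$, either $V_{i_0} \subseteq W_\Omega$ (and $V'$ is unconstrained) or $V_{i_0} \cap W_\Omega$ is itself a hyperplane of $V_{i_0}$ that $V'$ must avoid.

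The crux is to show that at most one hyperplane of $V_{i_0}$ arises in this second case. The map $\Omega \mapsto \dim(\sum_{j\in\Omega} V_j)$ is submodular; combined with the Hall inequalities, submodularity forces $T$ to be closed under intersection (equality propagates to both $\Omega_1 \cap \Omega_2$ and $\Omega_1 \cup \Omega_2$). Suppose for contradiction that $\Omega_1, \Omega_2 \in T$ produce two distinct hyperplanes $H_j := V_{i_0} \cap W_{\Omega_j}$, so $H_1 \cap H_2$ has codimension $2$ in $V_{i_0}$. Since $\{i_0\} \notin T$, we must have $|\Omega_1 \cap \Omega_2| \geq 2$; the dimension bound above then applies to $\Omega_1 \cap \Omega_2 \in T$ and yields $\dim(V_{i_0} \cap W_{\Omega_1 \cap \Omega_2}) \geq \dim V_{i_0} - 1$, but this subspace lies in $H_1 \cap H_2$, a contradiction. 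Because $\dim V_{i_0} \geq 2$, the space $V_{i_0}$ has $(q^{\dim V_{i_0}}-1)/(q-1) \geq q+1 \geq 3$ distinct hyperplanes, so I can pick $V'$ different from the (at most one) bad hyperplane, completing the inductive step. I expect the main obstacle to be precisely this uniqueness claim: a naive hyperplane count over a small field such as $\F_2$ would not suffice on its own, and the essential leverage is that submodularity combined with the strict inequality $\dim V_{i_0} > n-k+1$ (which excludes $\{i_0\}$ from $T$) forces all ``bad'' hyperplanes to collapse into one.
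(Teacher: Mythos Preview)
Your proof is correct and takes a genuinely different route from the paper's. Both arguments reduce some $V_{i_0}$ with $\dim V_{i_0}\ge n-k+2$ to a hyperplane while preserving the Hall condition, but the mechanisms differ. The paper (following Dau--Song--Yuen) argues by minimality: it fixes two \emph{specific} hyperplanes $V_r^a=V_{a,b}\oplus\spa\{a\}$ and $V_r^b=V_{a,b}\oplus\spa\{b\}$ of $V_r$, assumes both fail, obtains two witnessing sets $A\cup\{r\}$ and $B\cup\{r\}$ on which the bound becomes an equality, and then pushes a dimension identity on the union/intersection of these spans to a contradiction (with a separate case analysis when $A\cap B=\emptyset$). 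You instead characterise \emph{all} forbidden hyperplanes at once: the obstruction for each tight $\Omega\in T$ is the subspace $V_{i_0}\cap W_\Omega$, which you show is either all of $V_{i_0}$ or a hyperplane; submodularity of $\Omega\mapsto\dim\bigl(\sum_{j\in\Omega}V_j\bigr)$ together with the Hall inequalities forces $T$ to be closed under intersection, and since $\{i_0\}\notin T$ you conclude that any two ``bad'' hyperplanes would have to coincide via the intersection $\Omega_1\cap\Omega_2\in T$.

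What each approach buys: your argument is more structural---it exposes that the family of tight sets is a sublattice, which immediately collapses the obstruction to at most one hyperplane and makes the counting step ($\ge q+1\ge 3$ hyperplanes available, at most one forbidden) transparent. The paper's argument is more hands-on and avoids naming submodularity, at the cost of a slightly delicate case split (the $A\cap B=\emptyset$ case needs the auxiliary estimate $\dim(V_r^a\cap V_r^b)=\dim V_r-2\ge n-k$). Your route generalises more readily to other submodular rank functions; the paper's route stays closer to the classical set-system proof it is adapting.
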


\begin{proof}
Our proof closely follows that of \cite[Theorem~2]{dau2014existence}.
First, if there exist subspaces $\widetilde{V}_1\subseteq V_1,\dots, \widetilde{V}_k\subseteq V_k$ such that \eqref{1} holds for $\widetilde{V}_1,\dots, \widetilde{V}_k$ and at least one subspace $\widetilde{V}_i$ is a proper subspace of $V_i$, then we may replace $V_i$ by $\widetilde{V}_i$ and prove the claim for the new subspaces $\widetilde{V}_1,\dots, \widetilde{V}_k$, which would imply the claim for the original subspaces $V_1,\dots,V_k$. 
By repeatedly doing this, we may assume that $(V_1,\dots,V_k)$ is minimal (with respect to component-wise inclusion) subject to the condition \eqref{1}.

We will show that for the minimal $(V_1,\dots,V_k)$ satisfying \eqref{1}, the claim holds by choosing $V_i'=V_i$ for $i\in [k]$.
Note that \cref{item:hall-1} is just \eqref{1} as $V_i'=V_i$.
Assume to the contrary that \cref{item:hall-2} does not hold. Then there exists $r\in [k]$ such that
\begin{equation}\label{eq:Vr}
\dim V_r\ge n-k+2\geq 2.
\end{equation}
This is because we have $\dim V_i \geq n-k+1$ for all $i\in [k]$ by \cref{item:hall-1} and, as \cref{item:hall-2} does not hold, the equality is not attained for some $i\in [k]$.

Pick linearly independent $a,b\in V_r$ and let $V_{a,b}$ be a complement of $\mathrm{span}\{a,b\}$ in $V_r$, i.e., $V_r=V_{a,b}\oplus\mathrm{span}\{a,b\}$.
For $i\in [k]$, define
\[
V_i^a= \begin{cases} V_{a,b}\oplus \mathrm{span}\{a\}, & \text { if } i=r, \\ V_i, & \text { otherwise, }\end{cases}
\quad\text{and}\quad
V_i^b= \begin{cases}V_{a,b}\oplus \mathrm{span}\{b\}, & \text { if } i=r, \\ V_i, & \text { otherwise. }\end{cases}
\]
By the minimality of $(V_1,\dots,V_k)$, both $(V_1^a,\dots,V_k^a)$ and $(V_1^b,\dots,V_k^b)$ violate \eqref{1}. 
Moreover, \eqref{1} is violated only for nonempty sets $\Omega\subseteq [k]$ that contain $r$ since $V_i^a$ and $V_i^b$ agree with $V_i$ if $i\neq r$. 
Therefore, there exist $A, B\subseteq[k]$ not containing $r$ such that
\[
\dim\left(\sum_{i\in A\cup\{r\}} V_i^a \right)<n-k+|A|+1\quad\text{and}\quad\dim\left(\sum_{i\in B\cup\{r\}} V_i^b\right)<n-k+|B|+1.
\]
On the other hand, note that $\sum_{i\in A\cup\{r\}}V_i^a\supseteq \sum_{i\in A }V_i$ and $\dim(\sum_{i\in A }V_i)\geq n-k+|A|$.
So we must have 
\begin{equation}\label{eq:via}
\dim\left(\sum_{i\in A\cup\{r\}}V_i^a\right)=n-k+|A| \quad\text{and}\quad \sum_{i\in A\cup\{r\}}V_i^a=\sum_{i\in A}V_i.
\end{equation}
Similarly,
\begin{equation}\label{eq:vib}
\dim\left(\sum_{i\in B\cup\{r\}}V_i^b\right)=n-k+|B| \quad\text{and}\quad \sum_{i\in B\cup\{r\}}V_i^b=\sum_{i\in B}V_i.
\end{equation}
It follows that
\begin{equation}\label{ff1}
    \left(\sum_{i\in A\cup\{r\}}V_i^a\right)\cap\left(\sum_{i\in B\cup\{r\}}V_i^b\right)=\left(\sum_{i\in A }V_i\right) \cap \left(\sum_{i\in B }V_i\right).
\end{equation}
As $V_r=V_r^a+V_r^b$, we also have
\begin{equation}\label{ff2}\left(\sum_{i\in A\cup\{r\}}V_i^a\right) + \left(\sum_{i\in B\cup\{r\}}V_i^b\right)=\sum_{i\in A\cup B\cup \{r\} }V_i.
\end{equation}
Then
\begin{equation}\label{ff3}
\begin{aligned}
&2(n-k)+|A|+|B|\\
&\stackrel{\eqref{eq:via},\eqref{eq:vib}}{=}\dim\left(\sum_{i\in A\cup\{r\}}V_i^a\right)+\dim\left(\sum_{i\in B\cup\{r\}}V_i^b\right)\\
&=\dim\left(\left(\sum_{i\in A\cup\{r\}}V_i^a\right)+\left(\sum_{i\in B\cup\{r\}}V_i^b\right)\right)+\dim\left(\left(\sum_{i\in A\cup\{r\}}V_i^a\right)\cap\left(\sum_{i\in B\cup\{r\}}V_i^b\right)\right)\\
&\stackrel{\eqref{ff1},\eqref{ff2}}{=}\dim\left(\sum_{i\in A\cup B\cup \{r\} }V_i\right)+\dim\left(\left(\sum_{i\in A }V_i\right) \cap \left(\sum_{i\in B }V_i\right)\right).
\end{aligned}
\end{equation}
By \eqref{1}, we have 
\begin{equation}\label{eq:sum-of-vi}
\dim\left(\sum_{i\in A\cup B\cup \{r\} }V_i\right)\geq n-k+|A\cup B|+1.
\end{equation} 
We claim that
\begin{equation}\label{ff4}
\dim\left(\left(\sum_{i\in A}V_i\right) \cap \left(\sum_{i\in B}V_i\right)\right)\geq n-k+|A\cap B|
\end{equation}
which will be proved shortly.
Then
\begin{align*}
2(n-k)+|A|+|B|&\stackrel{\eqref{ff3}}{=}
\dim\left(\sum_{i\in A\cup B\cup \{r\} }V_i\right)+\dim\left(\left(\sum_{i\in A}V_i\right) \cap \left(\sum_{i\in B}V_i\right)\right)\\
&\stackrel{\eqref{eq:sum-of-vi},\eqref{ff4}}{\geq} 2(n-k)+|A|+|B|+1,
\end{align*}
which is impossible. So \cref{item:hall-2} holds.

It remains to prove $\eqref{ff4}$. In the case where $A\cap B\neq\emptyset$, this follows immediately from $\eqref{1}$ and the fact that $\dim\left(\left(\sum_{i\in A}V_i\right) \cap \left(\sum_{i\in B}V_i\right)\right)\geq\dim\left(\sum_{i\in A\cap B }V_i\right)$. Now assume $A \cap B=\emptyset$.
By \eqref{eq:via} and \eqref{eq:vib}, we have
$V_r^a\subseteq \sum_{i\in A}V_i$ and
$V_r^b\subseteq \sum_{i\in B}V_i$. This implies 
\[
\dim\left(\left(\sum_{i\in A}V_i\right) \cap \left(\sum_{i\in B}V_i\right)\right)\geq \dim\left(V_r^a\cap V_r^b\right)=\dim V_{a,b}=\dim V_r-2\stackrel{\eqref{eq:Vr}}{\geq} n-k=n-k+|A\cap B|,
\]
which proves \eqref{ff4}.
\end{proof}

Instead of directly using \cref{hall1}, we will in fact use its dual version, which is stated as follows.

\begin{cor}[Hall’s theorem for vector spaces, the dual version]\label{halldual1}
Let $n$ and $k$ be integers with $n\ge k\geq 0$. Let $V_1,\dots,V_k$ be subspaces of $\F_q^n$. 
Suppose
\begin{equation}\label{111}
\dim \left(\bigcap_{i\in \Omega}V_i\right)\leq k-|\Omega|
\end{equation}
holds for all nonempty $\Omega\subseteq [k]$. Then there exist subspaces $V'_i$ of $\F_q^n$ containing $V_i$, $i=1,\dots,k$, such that
\begin{enumerate}
\item \label{item:dual-1} $\dim \left(\bigcap_{i\in\Omega}V'_i\right)\leq k-|\Omega|$ for all nonempty $\Omega\subseteq [k]$, and
\item \label{item:dual-2} $\dim V'_i=k-1$ for all $i\in[k]$.
\end{enumerate}
\end{cor}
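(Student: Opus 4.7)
The plan is to derive \cref{halldual1} from \cref{hall1} by passing to orthogonal complements, using \cref{lm:dualspace} to swap intersections with sums.

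Concretely, I would set $U_i := V_i^\perp \subseteq \F_q^n$ for each $i \in [k]$. By \cref{lm:dualspace}, for every nonempty $\Omega \subseteq [k]$,
\[
\left(\bigcap_{i \in \Omega} V_i\right)^\perp \;=\; \sum_{i \in \Omega} V_i^\perp \;=\; \sum_{i \in \Omega} U_i,
\]
so the hypothesis \eqref{111} translates into
\[
\dim\left(\sum_{i \in \Omega} U_i\right) \;=\; n - \dim\left(\bigcap_{i \in \Omega} V_i\right) \;\geq\; n - (k - |\Omega|) \;=\; n - k + |\Omega|.
\]
That is exactly the hypothesis \eqref{1} of \cref{hall1} applied to the tuple $(U_1,\dots,U_k)$.

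Invoking \cref{hall1}, I obtain subspaces $U'_i \subseteq U_i$ with $\dim U'_i = n-k+1$ and $\dim(\sum_{i \in \Omega} U'_i) \geq n - k + |\Omega|$ for every nonempty $\Omega \subseteq [k]$. Define $V'_i := (U'_i)^\perp$. Since $U'_i \subseteq U_i = V_i^\perp$, taking perpendiculars reverses inclusions and gives $V'_i \supseteq (V_i^\perp)^\perp = V_i$, and $\dim V'_i = n - \dim U'_i = k-1$, confirming \cref{item:dual-2}. For \cref{item:dual-1}, applying \cref{lm:dualspace} once more yields, for every nonempty $\Omega \subseteq [k]$,
\[
\dim\left(\bigcap_{i \in \Omega} V'_i\right) \;=\; n - \dim\left(\sum_{i \in \Omega} (V'_i)^\perp\right) \;=\; n - \dim\left(\sum_{i \in \Omega} U'_i\right) \;\leq\; n - (n-k+|\Omega|) \;=\; k - |\Omega|,
\]
as desired.

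I expect no substantive obstacle: \cref{hall1} has already done the real combinatorial work, and all that remains is bookkeeping with orthogonal complements. The only point one has to be slightly careful about is that the involution $W \mapsto W^\perp$ on subspaces of $\F_q^n$ is inclusion-reversing and satisfies $(W^\perp)^\perp = W$, $\dim W^\perp = n - \dim W$, together with the duality formula of \cref{lm:dualspace} that turns intersections into sums. Once these are lined up, the primal-to-dual translation is mechanical.
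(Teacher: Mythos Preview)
Your proposal is correct and is essentially identical to the paper's own proof: the paper also sets $H_i=V_i^\perp$, uses \cref{lm:dualspace} to translate \eqref{111} into the hypothesis of \cref{hall1}, applies \cref{hall1}, and then dualizes back via $V_i'=(H_i')^\perp$ to verify \cref{item:dual-1} and \cref{item:dual-2}.
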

\begin{proof}
Let $H_i=V_i^{\perp}\subseteq\F_q^n$ for $i\in [k]$.
For any nonempty $\Omega\subseteq [k]$, we have
$\left(\bigcap_{i\in \Omega}V_i\right)^{\perp}=\sum_{i\in \Omega}H_i$ by \cref{lm:dualspace}.
So \eqref{111} is equivalent to $\dim\left(\sum_{i\in \Omega}H_i\right)\geq n-k+|\Omega|$. 
Applying \cref{hall1} to $H_1,\dots,H_k$, we obtain subspaces $H_i'\subseteq H_i$ such that 
$\dim\left(\sum_{i\in\Omega}H'_i\right)\geq n-k+|\Omega|$ for all nonempty $\Omega\subseteq [k]$ and $\dim H'_i=n-k+1$ for all $i\in[k]$.

Let $V_i'=(H_i')^{\perp}$ for $i\in [k]$, so that $V_i'\supseteq V_i$.
For every nonempty $\Omega\subseteq [k]$, we have 
$\left(\bigcap_{i\in \Omega}V_i'\right)^{\perp}=\sum_{i\in \Omega}H_i'$ by \cref{lm:dualspace}. So $\dim \left(\bigcap_{i\in\Omega}V'_i\right)=n-\dim \left(\sum_{i\in\Omega}H'_i\right) \leq k-|\Omega|$, proving \cref{item:dual-1}.
And for $i\in [k]$, we have $\dim V'_i=\dim (H'_i)^{\perp}=n-\dim H'_i=k-1$, proving \cref{item:dual-2}.
\end{proof}

We also need the following equivalence.

\begin{prop}\label{prop:equivalent}
Let $n$ and $k$ be integers with $n\geq k\geq 0$. Let $V_1,\ldots,V_\ell$ be subspaces of $\F_q^n$, each of dimension at most $k$. Then the following are equivalent. 
\begin{enumerate}
\item\label{it1} There exist integers $\delta_1,\ldots,\delta_\ell\geq 0$ such that for every nonempty $\Omega \subseteq [\ell]$, 
\begin{equation}\label{eqn:copypattern}
\dim\left(\bigcap_{i\in \Omega} V_i\right)\leq k-\sum_{i\in \Omega}\delta_i.
\end{equation}
\item\label{it2} The pattern $(T_1,\ldots,T_k)$ that consists of $\delta_i$ copies of $V_i$ for $i\in [\ell]$ and additional $k-\sum_{i=1}^\ell \delta_i$ copies of $\{0\}$ is a generic kernel pattern.\footnote{We assume that the set of copies of each $V_i$ is disjoint from both the set of copies of any other $V_{i'}$ (even if $V_i = V_{i'}$) and the set of the additional $k - \sum_{j=1}^{\ell} \delta_{j}$ copies of $\{0\}$ (even if $V_i = \{0\}$).} That is, for every nonempty $\Omega'\subseteq [k]$, 
\begin{equation}\label{eqn:pattern}
\dim\left(\bigcap_{i\in \Omega'} T_i\right)\leq k-|\Omega'|.
\end{equation}
\end{enumerate}
\end{prop}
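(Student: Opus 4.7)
The plan is to prove the two directions by a direct translation between two encodings of the same multiplicities: picking the $\delta_i$'s amounts to choosing how many times to duplicate each $V_i$, while a test set $\Omega'\subseteq[k]$ in the pattern $(T_1,\ldots,T_k)$ can be grouped by which $V_i$ its indices come from. No deep combinatorics is required; the only care is with boundary situations where the $\{0\}$-copies or empty multiplicities appear.

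For the direction $(1)\Rightarrow(2)$, I would fix the integers $\delta_1,\ldots,\delta_\ell$ given by \cref{it1} and verify \eqref{eqn:pattern} for every nonempty $\Omega'\subseteq [k]$. If $\Omega'$ contains the index of even a single $\{0\}$-copy among the $T_j$'s, then $\bigcap_{i\in\Omega'}T_i=\{0\}$, so the bound $0\leq k-|\Omega'|$ holds trivially (note $|\Omega'|\leq k$). Otherwise $\Omega'$ selects only copies of the $V_i$'s; let $\Omega\subseteq[\ell]$ be the set of indices $i$ for which $\Omega'$ picks at least one copy of $V_i$. Then $\bigcap_{i\in\Omega'}T_i=\bigcap_{i\in\Omega}V_i$, and $|\Omega'|\leq\sum_{i\in\Omega}\delta_i$ because $\Omega'$ can contain at most $\delta_i$ copies of each $V_i$. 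Combining with \eqref{eqn:copypattern} applied to $\Omega$ yields
\[
\dim\Bigl(\bigcap_{i\in\Omega'}T_i\Bigr)=\dim\Bigl(\bigcap_{i\in\Omega}V_i\Bigr)\leq k-\sum_{i\in\Omega}\delta_i\leq k-|\Omega'|,
\]
as required.

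For $(2)\Rightarrow(1)$, I would use the same $\delta_i$'s. Given any nonempty $\Omega\subseteq[\ell]$, if $\sum_{i\in\Omega}\delta_i=0$, then \eqref{eqn:copypattern} reduces to $\dim(\bigcap_{i\in\Omega}V_i)\leq k$, which holds by hypothesis since each $V_i$ has dimension at most $k$. Otherwise, let $\Omega'\subseteq [k]$ be the set of all indices in the pattern $(T_1,\ldots,T_k)$ corresponding to a copy of some $V_i$ with $i\in\Omega$; then $|\Omega'|=\sum_{i\in\Omega}\delta_i\geq 1$ and $\bigcap_{j\in\Omega'}T_j=\bigcap_{i\in\Omega}V_i$. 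Applying \eqref{eqn:pattern} gives
\[
\dim\Bigl(\bigcap_{i\in\Omega}V_i\Bigr)=\dim\Bigl(\bigcap_{j\in\Omega'}T_j\Bigr)\leq k-|\Omega'|=k-\sum_{i\in\Omega}\delta_i,
\]
establishing \eqref{eqn:copypattern}.

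The main thing to watch is the handling of the boundary cases: the $\{0\}$-copies that pad $(T_1,\ldots,T_k)$ out to length $k$ (relevant when $\Omega'$ hits them in the first direction) and the possibility $\sum_{i\in\Omega}\delta_i=0$ (so that the natural $\Omega'$ is empty in the second direction). Both are disposed of by the trivial bound $\dim(\bigcap_{i\in\Omega}V_i)\leq k$. Beyond these, the proof is purely bookkeeping, matching subsets $\Omega\subseteq[\ell]$ with subsets $\Omega'\subseteq[k]$ through the multiplicities $\delta_i$.
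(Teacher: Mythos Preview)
Your proof is correct and follows essentially the same approach as the paper's, matching subsets $\Omega\subseteq[\ell]$ with subsets $\Omega'\subseteq[k]$ via the multiplicities $\delta_i$; your edge-case handling (the $\{0\}$-copies in one direction, $\sum_{i\in\Omega}\delta_i=0$ in the other) is in fact slightly more explicit than the paper's. One minor imprecision: in the $(2)\Rightarrow(1)$ direction, the asserted equality $\bigcap_{j\in\Omega'}T_j=\bigcap_{i\in\Omega}V_i$ can fail when some $i\in\Omega$ has $\delta_i=0$ (such $V_i$ contribute no indices to $\Omega'$), but the containment $\bigcap_{i\in\Omega}V_i\subseteq\bigcap_{j\in\Omega'}T_j$ always holds, and that is all your chain of inequalities needs.
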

\begin{proof}
We first prove that \cref{it1} implies \cref{it2}. Let $d=k-\sum_{i=1}^\ell \delta_i$. Let $\Omega' \subseteq [k]$ be a nonempty set. We want to show that \eqref{eqn:pattern} holds. If $\Omega'$ contains at least one $i\in [k]$ with $T_i=\{0\}$, then \eqref{eqn:pattern} holds trivially. So assume $T_i\neq \{0\}$ for all $i\in \Omega'$.
Moreover, we may assume that for each $V_i$ that appears in $(T_i)_{i\in \Omega'}$, all the $\delta_i$ copies of $V_i$ also appear, since including these copies does not change LHS of \eqref{eqn:pattern} and can only decrease RHS. With this assumption, \eqref{eqn:pattern} just becomes \eqref{eqn:copypattern}. 

To prove the other direction, consider any nonempty $\Omega\subseteq [\ell]$. Let $\Omega'$ be the subset of $[k]$ that consists of the indices of all the $\delta_i$ copies of $V_i$ in $[k]$ for $i\in \Omega$ and no other indices. Applying \eqref{eqn:pattern} with the set $\Omega'$ then yields \eqref{eqn:copypattern}.
\end{proof}

The following theorem is a linear-algebraic analog of the generalized Hall's theorem as proved in \cite{brakensiek2023generic}. 

\begin{thm}[Generalized Hall's theorem for vector spaces]\label{thm:extend}
Let $n$ and $k$ be integers with $n\ge k\geq 0$. Let $V_1,\ldots,V_\ell$ be subspaces of $\F_q^n$, each of dimension at most $k$. Suppose there exist integers $\delta_1,\ldots,\delta_\ell\geq 0$ such that for every nonempty $\Omega\subseteq [\ell]$, \eqref{eqn:copypattern} holds. Then, there exist subspaces $V'_i\supseteq V_i$ of dimension $k-\delta_i$, $i=1,\dots,\ell$, such that for every nonempty $\Omega\subseteq [\ell]$, \eqref{eqn:copypattern} also holds for $V'_1,\ldots,V'_\ell$, i.e., $\dim\left(\bigcap_{i\in \Omega} V'_i\right)\leq k-\sum_{i\in \Omega}\delta_i$.
\end{thm}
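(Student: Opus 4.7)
The plan is to deduce \cref{thm:extend} from \cref{halldual1} by applying the copies reduction of \cref{prop:equivalent}. Partition $[k]$ into pairwise disjoint sets $S_0,S_1,\dots,S_\ell$ with $|S_i|=\delta_i$ for $i\in[\ell]$ and $|S_0|=k-\sum_i\delta_i$, and form the pattern $\mathcal{T}=(T_1,\dots,T_k)$ by setting $T_j=V_i$ whenever $j\in S_i$ and $T_j=\{0\}$ whenever $j\in S_0$. \cref{prop:equivalent} translates the hypothesis on $(V_i,\delta_i)$ into the statement that $\mathcal{T}$ is a generic kernel pattern, so \cref{halldual1} produces subspaces $T_j'\supseteq T_j$ each of dimension $k-1$ such that $\dim\bigcap_{j\in\Omega'}T_j'\le k-|\Omega'|$ for every nonempty $\Omega'\subseteq[k]$.

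For each $i\in[\ell]$ with $\delta_i\ge 1$ I will take $V_i':=\bigcap_{j\in S_i}T_j'$, and for each $i\in[\ell]$ with $\delta_i=0$ I will take $V_i'$ to be an arbitrary $k$-dimensional subspace of $\F_q^n$ containing $V_i$ (which exists since $\dim V_i\le k$). Since $V_i=T_j\subseteq T_j'$ for $j\in S_i$, the inclusion $V_i\subseteq V_i'$ is immediate, and the GKP bound for $(T_j')$ with $\Omega'=S_i$ gives $\dim V_i'\le k-\delta_i$. For any nonempty $\Omega\subseteq[\ell]$, the intersection $\bigcap_{i\in\Omega}V_i'$ is contained in $\bigcap_{j\in J_\Omega}T_j'$, where $J_\Omega:=\bigsqcup_{i\in\Omega,\ \delta_i\ge 1}S_i$; by the GKP bound on $(T_j')$, this has dimension at most $k-|J_\Omega|=k-\sum_{i\in\Omega}\delta_i$, verifying the intersection inequality for the $V_i'$.

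The delicate point, which I expect to be the heart of the proof, is to show that $\dim V_i'$ can be made equal to $k-\delta_i$ rather than merely bounded by it. The approach I will take is iterative inflation: as long as some $\dim V_{i^*}'<k-\delta_{i^*}$, adjoin to $V_{i^*}'$ any vector $v\in\F_q^n$ outside the union $\bigcup_{\Omega\ni i^*\text{ tight}}(W_\Omega+V_{i^*}')$, where $W_\Omega:=\bigcap_{i\in\Omega\setminus\{i^*\}}V_i'$ and ``tight'' refers to those $\Omega$ for which the intersection inequality for $(V_i')$ currently holds with equality. Each $W_\Omega+V_{i^*}'$ has dimension at most $\delta_{i^*}+\dim V_{i^*}'<k\le n$ and is therefore a proper subspace of $\F_q^n$, so over a sufficiently large $\F_q$ a union bound supplies the desired $v$. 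To cover arbitrary $q$, I plan to exploit the freedom available in \cref{halldual1}'s inflations to arrange that each $\bigcap_{j\in S_i}T_j'$ already attains its maximal dimension $k-\delta_i$ at the outset, or, failing that, to adapt the minimality argument underlying \cref{hall1} directly to the multiplicity-$\delta_i$ setting.
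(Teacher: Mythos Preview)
Your reduction to \cref{halldual1} via \cref{prop:equivalent} is natural, and your verification that the $V_i'=\bigcap_{j\in S_i}T_j'$ satisfy the required intersection inequalities is fine. The gap is exactly where you say it is: forcing $\dim V_i'=k-\delta_i$. When you apply \cref{halldual1} in the ambient space $\F_q^n$, the spaces $T_j'$ have dimension $k-1$ but codimension $n-k+1$; intersecting $\delta_i$ of them gives no useful lower bound on $\dim V_i'$ beyond $(k-1)-(\delta_i-1)(n-k+1)$, which can be far below $k-\delta_i$. Your iterative-inflation fix is correct when $q$ is large enough for the union bound, but for general $q$ the number of tight $\Omega$ can be exponential in $\ell$, and the union of that many proper subspaces may well cover $\F_q^n$. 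The fallback plans you sketch (exploiting freedom in \cref{halldual1}, or rerunning a minimality argument with multiplicities) are not worked out, and I do not see how to make either one go through without a further idea.

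The paper resolves this with a single trick that you are missing: it works one index at a time and, for the index $i$ being extended, first passes to a fixed $k$-dimensional subspace $V$ containing $V_i$. Concretely, replace each $T_j$ by $S_j:=T_j\cap V$ (still a GKP), apply \cref{halldual1} \emph{inside $V$}, and set $U_i=\bigcap_{j\in S_i}S_j'$. Now each $S_j'$ is a hyperplane of the $k$-dimensional space $V$, so intersecting $\delta_i$ of them automatically has dimension $\ge k-\delta_i$, and together with the GKP upper bound this forces $\dim U_i=k-\delta_i$ exactly. One then checks the intersection inequalities survive replacing $V_i$ by $U_i$ (using that $V_j\cap V\subseteq U_j$ for all $j$) and iterates. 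This sidesteps any union-bound over $\F_q$ and works for every $q$.
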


Note that \cref{thm:extend} can also be viewed as a generalization of \cref{halldual1}, taking into account the ``multiplicities'' $\delta_i$. Indeed, \cref{halldual1} can be derived from \cref{thm:extend} by setting $\ell=k$ and choosing $\delta_1 = \dots = \delta_\ell = 1$.

\begin{proof}[Proof of \cref{thm:extend}]
By \eqref{eqn:copypattern}, we have $\dim V_i\leq k-\delta_i$ for $i\in [\ell]$.
If $\dim V_i=k-\delta_i$ holds for all $i\in [\ell]$, then we are done by choosing $V_i'=V_i$.
So assume this is not the case. Without loss of generality, we may assume $\dim V_1<k-\delta_1$. 

We want to extend $V_1$ to a larger subspace $V'_1$ of dimension $k-\delta_1$ while still satisfying \eqref{eqn:copypattern} for all nonempty $\Omega\subseteq [\ell]$. If $\delta_1=0$, then we can choose any subspace $V'_1$ of dimension $k$ containing $V_1$. This is because \eqref{eqn:copypattern} holds for any nonempty $\Omega\subseteq [\ell]$ that excludes $1$. Adding $1$ to $\Omega$ can only decrease LHS of \eqref{eqn:copypattern} while RHS remains unchanged. 

Now assume $\delta_1>0$. By \cref{prop:equivalent}, there exists a generic kernel pattern $(T_1,\ldots,T_k)$ consisting of $\delta_i$ copies of $V_i$ for $i\in [\ell]$ and $k-\sum_{i=1}^\ell \delta_i$ copies of $\{0\}$. 
For $i\in [\ell]$, let $J_i$ be the set of indices $j\in [k]$ such that $T_j$ is among the $\delta_i$ copies of $V_i$. 
So $|J_i|=\delta_i$.
Fix an arbitrary subspace $V\subseteq \F_q^n$ of dimension $k$ such that $V_1\subseteq V$.
		
Let $S_i=T_i\cap V$ for $i\in [k]$. The fact that $(T_1,\ldots,T_k)$ is a generic kernel pattern implies that $(S_1,\ldots,S_k)$ is also a generic kernel pattern as replacing $T_i$ with $S_i$ can only decrease LHS of \eqref{eqn:pattern}. 
Applying \cref{halldual1} to $S_1,\dots,S_k\subseteq V$ and the ambient space $V$,  
we see that there exist subspaces $S'_i$ of $V$ containing $S_i$, $i=1,\dots,k$, such that $\dim S_i'=k-1$ for $i\in [k]$ and
\begin{equation}\label{eq:Si-intersection}
\dim \left(\bigcap_{i\in\Omega}S'_i\right)\leq k-|\Omega| \quad \text{for nonempty } \Omega\subseteq [k].
\end{equation}
For $i\in [\ell]$, we define 
\begin{equation}\label{eq:def-Ui}
U_i=\begin{cases}\bigcap_{j\in J_i}S'_j,  & \text{ if }J_i\neq\emptyset,\\
V, & \text{ otherwise}.
\end{cases}
\end{equation}
For $i\in [\ell]$, if $J_i\neq\emptyset$, then $U_i\supseteq \bigcap_{j\in J_i} S_j=V_i\cap V$. And if $J_i=\emptyset$, then $U_i=V\supseteq V_i\cap V$. So $V_i\cap V\subseteq U_i$ in either case.
In particular, as $V_1\subseteq V$, we have $V_1=V_1\cap V\subseteq U_1$.

We claim that $\dim U_1=k-\delta_1$. 
If $J_1=\emptyset$, then $\delta_1=|J_1|=0$ and $\dim U_1=\dim V=k=k-\delta_1$. So the claim holds in the case.
Now consider the case where $J_1\neq\emptyset$.
By \eqref{eq:Si-intersection} and \eqref{eq:def-Ui}, we have 
\[
\dim U_1=\dim \left(\bigcap_{i\in J_1} S'_i\right)\leq k-|J_1| = k-\delta_1.
\]
Moreover, since each $S'_i$ is a subspace of $V$ of codimension one, we have 
\[
\dim U_1 =\dim\left(\bigcap_{j\in J_1}S'_i\right)\geq k-|J_1|=k-\delta_1.
\]
This proves the claim that $\dim U_1=k-\delta_1$. 

Next, we show that \eqref{eqn:copypattern} still holds for all nonempty $\Omega\subseteq [\ell]$ after replacing $V_1$ by $U_1$. We only need to verify this for nonempty $\Omega\subseteq [\ell]$ that contains $1$. 
Fix such $\Omega$ and let $\Omega'=\bigcup_{i\in \Omega} J_i\subseteq [k]$.
Note $|\Omega'|=\sum_{i\in \Omega} |J_i|$ since the sets $J_i$ are disjoint.
Then
\begin{align*}
&k-\sum_{i\in \Omega}\delta_i
=k-\sum_{i\in \Omega}|J_i|
=k-|\Omega'|
\stackrel{\eqref{eq:Si-intersection}}{\geq} 
\dim\left(\bigcap_{i\in \Omega'}S'_i\right)\\
&\stackrel{\eqref{eq:def-Ui}}{=}\dim\left(\bigcap_{i\in \Omega} U_i\right)\geq 
\dim\left(U_1\cap \bigcap_{i\in \Omega\setminus \{1\}}(V_i\cap V)\right)
=\dim\left(U_1\cap \bigcap_{i\in \Omega\setminus \{1\}} V_i \right),
\end{align*}
where the last step uses the fact that $U_1\subseteq V$ and the second last step uses the fact that $V_i\cap V\subseteq U_i$ for $i\in [\ell]$.
So \eqref{eqn:copypattern} holds with $U_1$ in place of $V_1$.

By repeating the above argument, we obtain subspaces $V'_i\supseteq V_i$ of dimension $k-\delta_i$, where $i=1,\dots,\ell$, that satisfy \eqref{eqn:copypattern} for all nonempty $\Omega\subseteq [\ell]$. This completes the proof.
\end{proof}
 
\begin{cor}\label{l-gkp-max}
Let $n$ and $k$ be integers with $n\ge k\geq 0$. 
Let $\mathcal{T}=(T_1,\ldots,T_k)$ be a generic kernel pattern consisting of $\delta_i$ copies of $V_i\subseteq \F_q^n$ for $i\in [\ell]$ and $k-\sum_{i=1}^\ell \delta_i$ copies of $\{0\}$. Then, 
there exist subspaces $V'_i\supseteq V_i$ of dimension $k-\delta_i$, $i=1,\dots,\ell$,
such that $\mathcal{T'}=(T'_1,\ldots,T'_k)$ consisting of $\delta_i$ copies of $V'_i$ for $i\in[\ell]$ and $k-\sum_{i=1}^\ell \delta_i$ copies of $\{0\}$ is a generic kernel pattern.
\end{cor}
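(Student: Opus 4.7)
The corollary follows almost directly by chaining together \cref{prop:equivalent} and \cref{thm:extend}. Since the hypothesis is that $\mathcal{T} = (T_1,\ldots,T_k)$ is a generic kernel pattern, the condition \eqref{eqn:pattern} holds for every nonempty $\Omega' \subseteq [k]$. The plan is to translate this into the ``multiplicity'' form \eqref{eqn:copypattern}, extend via \cref{thm:extend}, and then translate back.

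More precisely, first I would invoke the implication \cref{it2}$\Rightarrow$\cref{it1} of \cref{prop:equivalent} with the given pattern to obtain that for every nonempty $\Omega \subseteq [\ell]$,
\[
\dim\left(\bigcap_{i\in\Omega} V_i\right) \leq k - \sum_{i\in\Omega}\delta_i.
\]
Note that $\dim V_i \leq k - \delta_i \leq k$ for every $i$, so the hypothesis of \cref{thm:extend} is met.

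Next, I would apply \cref{thm:extend} to the subspaces $V_1,\ldots,V_\ell$ with multiplicities $\delta_1,\ldots,\delta_\ell$. This produces subspaces $V'_i \supseteq V_i$ of dimension exactly $k - \delta_i$ satisfying the same inequality $\dim\left(\bigcap_{i\in\Omega} V'_i\right) \leq k - \sum_{i\in\Omega}\delta_i$ for every nonempty $\Omega \subseteq [\ell]$. Finally, I would apply the reverse direction \cref{it1}$\Rightarrow$\cref{it2} of \cref{prop:equivalent} to the tuple $(V'_1,\ldots,V'_\ell)$ with the same multiplicities $\delta_1,\ldots,\delta_\ell$, concluding that the pattern $\mathcal{T}'$ built from $\delta_i$ copies of each $V'_i$ together with $k - \sum_{i=1}^\ell \delta_i$ copies of $\{0\}$ is indeed a generic kernel pattern.

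There is no real obstacle here; the entire proof is a two-line chaining of previously established equivalences, and the only thing to be careful about is bookkeeping: making sure the multiplicities $\delta_i$ used in both applications of \cref{prop:equivalent} match those in \cref{thm:extend}, and observing that the inclusion $V_i \subseteq V'_i$ produced by \cref{thm:extend} is precisely what the conclusion of the corollary asserts.
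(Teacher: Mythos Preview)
Your proposal is correct and follows exactly the same approach as the paper, which simply states that combining \cref{prop:equivalent} and \cref{thm:extend} proves the corollary. Your version just spells out the two directions of \cref{prop:equivalent} explicitly, which is fine.
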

\begin{proof}
Combining \cref{prop:equivalent} and \cref{thm:extend} proves the corollary.
\end{proof}

\subsection{Generic Kernel Patterns of Order at Most $\ell$}

The main result of this subsection is the following statement, which gives a characterization of general kernel patterns of order at most $\ell$.

\begin{lemma}\label{order-l-cha}
Let $n$, $k$, and $d$ be integers with $n\ge k\geq d\geq 0$.
Let $V_1,\ldots,V_\ell$ be subspaces of $\F_q^n$, each of dimension at most $k$. Then the following are equivalent. 
\begin{enumerate}
\item\label{eqiv:1} There exists a generic kernel pattern $\mathcal{T}:=(T_1,T_2,\dots,T_k)$
consisting solely of copies of $V_1,\ldots,V_\ell$ and an additional $d$ copies of $\{0\}$.
\item\label{eqiv:2} There exist integers $\delta_1,\ldots,\delta_\ell\geq 0$ such that $\sum_{i=1}^{\ell}\delta_i=k-d$ and for every nonempty $\Omega\subseteq [\ell]$, 
\begin{equation}\label{eq:cond2}
\dim\left(\bigcap_{i\in \Omega}V_i\right)\leq k-\sum_{i\in \Omega}\delta_i.
\end{equation}
\item \label{eqiv:3} For all partitions $P_1\sqcup P_2\sqcup\cdots\sqcup P_s=[\ell]$, we have 
\begin{equation}\label{eq:cond3}
\sum_{i=1}^s\dim\left(\bigcap_{j\in P_i} V_j\right)\leq (s-1)k+d.
\end{equation}
\end{enumerate}
\end{lemma}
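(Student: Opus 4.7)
The plan is to prove the three-way equivalence by establishing $(1)\Leftrightarrow(2)$, $(2)\Rightarrow(3)$, and $(3)\Rightarrow(2)$ in turn. The equivalence $(1)\Leftrightarrow(2)$ is a direct consequence of \cref{prop:equivalent}: the integers $\delta_1,\ldots,\delta_\ell$ in \eqref{eq:cond2} correspond to the number of copies of each $V_i$ in the pattern $\mathcal{T}$ from (1), and the remaining $d=k-\sum_i\delta_i$ slots are filled by $\{0\}$. For $(2)\Rightarrow(3)$, given $\delta_1,\ldots,\delta_\ell$ satisfying \eqref{eq:cond2}, I would apply \eqref{eq:cond2} with $\Omega=P_r$ for each block of an arbitrary partition $P_1\sqcup\cdots\sqcup P_s=[\ell]$ and sum over $r\in[s]$:
\[
\sum_{r=1}^{s}\dim\!\left(\bigcap_{j\in P_r}V_j\right)\le sk-\sum_{j=1}^{\ell}\delta_j=(s-1)k+d,
\]
which is precisely \eqref{eq:cond3}.

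The main obstacle is $(3)\Rightarrow(2)$, where we must produce nonnegative integers $\delta_1,\ldots,\delta_\ell$ summing to $k-d$ and satisfying every intersection inequality in \eqref{eq:cond2}, using only the partition inequalities from \eqref{eq:cond3}. My plan is to proceed by induction on $\ell$. The base case $\ell\le 1$ follows by setting $\delta_1=k-d$ and invoking the $s=1$ instance of \eqref{eq:cond3}, which gives $\dim V_1\le d$. For the inductive step, I would fix the index $\ell$, pick the largest valid integer $\delta_\ell^{\ast}\ge 0$ for which the constraints $\sum_{\Omega\ni\ell}\delta_i\le k-g(\Omega)$ remain realizable, and reduce to an instance on $V_1,\ldots,V_{\ell-1}$ whose constraints for $\Omega\not\ni\ell$ involve the original $V_i$'s while those for $\Omega\ni\ell$ translate into constraints involving the restricted subspaces $V_i\cap V_\ell$. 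The hardest step will be verifying that condition \eqref{eq:cond3} descends to the residual instance so that the inductive hypothesis applies.

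The key structural fact I expect to use is the supermodularity of the intersection-dimension function $g(\Omega):=\dim\!\left(\bigcap_{i\in\Omega}V_i\right)$, namely $g(A)+g(B)\le g(A\cup B)+g(A\cap B)$, which follows from the inclusion $\bigcap_A V_i+\bigcap_B V_i\subseteq\bigcap_{A\cap B}V_i$ combined with the standard dimension formula. Supermodularity of $g$—equivalently, a form of submodularity of $h:=k-g$—is what allows the partition-type lower bounds of \eqref{eq:cond3} to be ``promoted'' to single-intersection upper bounds of \eqref{eq:cond2}. As an alternative route, I would consider recasting \eqref{eq:cond2} as an integer feasibility problem for a linear program, interpreting \eqref{eq:cond3} as a partition-type lower bound on the dual LP, and invoking Dilworth-truncation / polymatroid theory to lift this to an LP-dual lower bound, which then yields an integer primal solution with $\sum_i\delta_i\ge k-d$ that can be trimmed to equality.
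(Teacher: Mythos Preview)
Your handling of $(1)\Leftrightarrow(2)$ and $(2)\Rightarrow(3)$ is correct and matches the paper exactly.

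For $(3)\Rightarrow(2)$, your primary plan has a structural gap. After you fix $\delta_\ell$ and pass to indices $1,\dots,\ell-1$, the residual system of constraints is \emph{not} of the same shape as the original lemma: for each nonempty $\Omega\subseteq[\ell-1]$ you inherit both the constraint $\sum_{i\in\Omega}\delta_i\le k-g(\Omega)$ coming from $\Omega$ itself and the constraint $\sum_{i\in\Omega}\delta_i\le (k-\delta_\ell)-\dim\bigl(\bigcap_{i\in\Omega}(V_i\cap V_\ell)\bigr)$ coming from $\Omega\cup\{\ell\}$. As you yourself note, these two families live over different subspace systems (the $V_i$ versus the $V_i\cap V_\ell$) with different right-hand-side parameters ($k$ versus $k-\delta_\ell$), so there is no single tuple $(V_1',\dots,V_{\ell-1}')$ and parameters $(k',d')$ to which the inductive hypothesis for the lemma can be applied. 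Saying ``pick the largest $\delta_\ell^\ast$ for which the constraints remain realizable'' does not help, since testing realizability is the very problem you are trying to solve. So the induction, as written, does not close.

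Your alternative route via polymatroids is the one that actually works, and it is more than a throwaway remark. With $h(\Omega)=k-g(\Omega)$ for nonempty $\Omega$, the supermodularity you state (with the convention $g(\emptyset)=n$) makes $h$ submodular on nonempty sets, and the Dilworth truncation theorem gives
\[
\max\Bigl\{\textstyle\sum_i\delta_i:\ \delta\ge 0,\ \delta(\Omega)\le h(\Omega)\ \forall\,\emptyset\ne\Omega\subseteq[\ell]\Bigr\}
=\min_{P_1\sqcup\cdots\sqcup P_s=[\ell]}\sum_{r=1}^s h(P_r),
\]
and the right-hand side equals $\min_s\bigl(sk-\sum_r g(P_r)\bigr)\ge k-d$ precisely by \eqref{eq:cond3}. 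Integrality of an optimal $\delta$ follows from the integrality of Dilworth-truncated polymatroids, and trimming down to $\sum_i\delta_i=k-d$ is harmless. If you want to pursue this line you should state and cite the Dilworth truncation result explicitly; it is doing all the work.

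The paper takes a different, self-contained route for $(3)\Rightarrow(2)$ that avoids any black-box polymatroid theory. It also inducts on $\ell$, but instead of peeling off one index it first \emph{pads} the subspaces $V_i$ upward (one dimension at a time) until some partition $P_1\sqcup\cdots\sqcup P_s$ with $s\ge 2$ becomes tight in \eqref{eq:cond3}; a short argument using \eqref{eq:cond3} for the partitions $\{i\}\sqcup([\ell]\setminus\{i\})$ shows this can always be reached while keeping the coarsest partition non-violating. Once such a tight partition exists, the paper passes to quotients by $V_{P_i}=\bigcap_{j\in P_i}V_j$ on each block $P_i$, applies the inductive hypothesis to each block separately (each block has size $|P_i|<\ell$), and then reassembles the resulting $\delta_j$'s. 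The verification that the combined $\delta$'s satisfy \eqref{eq:cond2} for arbitrary $\Omega$ (which may cut across several blocks) is the delicate part and uses \cref{lm:dim} together with tightness of the chosen partition. Compared to your polymatroid approach, the paper's argument is longer and more hands-on but requires no external machinery; compared to your primary plan, it succeeds because the tight partition decomposes the instance into genuinely smaller instances of the \emph{same} lemma, rather than a hybrid system.
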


\begin{proof}
By \cref{prop:equivalent}, \cref{eqiv:1} is equivalent to \cref{eqiv:2}. 
And \cref{eqiv:2} implies \cref{eqiv:3} because for any partition $P_1\sqcup P_2\sqcup\cdots\sqcup P_s=[\ell]$, we have 
\[
\sum_{i=1}^s \dim\left(\bigcap_{j\in P_i} V_j\right)\stackrel{\eqref{eq:cond2}}{\leq} \sum_{i=1}^s \left(k-\sum_{j\in P_i}\delta_j\right)=sk-(k-d)=(s-1)k+d.
\]
Finally, we prove that \cref{eqiv:3} implies \cref{eqiv:2} via an induction on $\ell$. 
When $\ell=1$, \cref{eqiv:2} and \cref{eqiv:3} are the same.
Now consider $\ell\geq 2$ and assume that \cref{eqiv:3} implies \cref{eqiv:2} for $\ell'<\ell$.
We say a partition $P_1\sqcup P_2\sqcup\cdots\sqcup P_s$ of $[\ell]$ is \emph{tight} if \eqref{eq:cond3} holds with equality.

Suppose \cref{eqiv:3} holds. We claim that there exist subspaces $V'_i\supseteq V_i$ of dimension at most $k$,
$i=1,\dots,\ell$, such that for these new subspaces, which we call \emph{padded subspaces}, \cref{eqiv:3} still holds and there exists a partition $P_1\sqcup P_2\sqcup\cdots\sqcup P_s$ of $[\ell]$ with $s\geq 2$ that is tight.

The remaining proof consists of two parts: We will first prove the above claim. Then we will show that \cref{eqiv:2} holds for the padded subspaces $V_1',\dots,V_\ell'$.
This suffices since \cref{eqiv:2} then holds for the original subspaces $V_1,\dots,V_\ell$ as well by the fact that $\bigcap_{i\in \Omega}V_i \subseteq  \bigcap_{i\in \Omega}V_i'$ for all nonempty $\Omega\subseteq [\ell]$.

For convenience, define $V_\Omega=\bigcap_{i\in\Omega} V_i$ for nonempty $\Omega\subseteq [\ell]$.
Applying \cref{eqiv:3} to the coarsest partition of $[\ell]$ shows that $\dim V_{[\ell]}\leq d$.

To prove the claim, we repeatedly select pairs $(V_i, \bv)$, where $\dim V_i < k$ and $\bv \in \mathbb{F}_q^n \setminus V_i$, and then replace $V_i$ with $V_i \oplus \mathrm{span}(\bv)$, until one of the partitions of $[\ell]$ becomes tight.\footnote{Note that at least one partition becomes tight when (or before) $\dim V_i=k$ for all $i\in [k]$. This is because (1) if all $V_i$ have dimension $k$, then we would have $\sum_{i=1}^\ell \dim V_i=\ell k\geq (\ell-1)k+d$, and (2) replacing $V_i$ with $V_i \oplus \mathrm{span}(\bv)$ increases LHS of \eqref{eq:cond3} by at most one.}
If the tight partition $P_1\sqcup \cdots\sqcup P_s$ arising this way satisfies $s\geq 2$, then we are done. So assume $s=1$, i.e., $\dim V_{[\ell]}=d$. In this case, we show that we can continue padding the subspaces $V_i$ until another partition with $s\ge2$ becomes tight, while maintaining $\dim V_{[\ell]} = d$.

For $i\in [\ell]$, applying \cref{eqiv:3} to the partition $\{i\}\sqcup ([\ell]\setminus \{i\})$ of $[\ell]$ shows
\begin{equation}\label{eq:special-partition}
\dim V_i+\dim\left(\bigcap_{j\in[\ell]\setminus\{i\}}V_j\right)\leq k+d.
\end{equation}
By inclusion-exclusion and the fact that $\dim V_{[\ell]}=d$, we have that for $i\in [\ell]$,
\[\dim\left( V_i+\bigcap_{j\in[\ell]\setminus\{i\}}V_j\right)=\dim V_i+\dim\left(\bigcap_{j\in[\ell]\setminus\{i\}}V_j\right)-\dim V_{[\ell]}\stackrel{\eqref{eq:special-partition}}{\leq} k.
\]
If there exists $i\in[\ell]$ such that $\dim\left( V_i+\bigcap_{j\in[\ell]\setminus\{i\}}V_j\right)=k$, then
\eqref{eq:special-partition} must hold with equality. In this case, the partition $\{i\}\sqcup[\ell]\setminus\{i\}$ is tight and we are done.
So assume
\begin{equation}\label{eq:less-than-k}
\dim\left( V_i+\bigcap_{j\in[\ell]\setminus\{i\}}V_j\right)<k\leq n \quad \text{ for all } i\in[\ell].
\end{equation}
Fix arbitrary $i\in[\ell]$. By \eqref{eq:less-than-k}, there exists a vector $\bv\in\F_q^n\setminus (V_{i}+\bigcap_{j\in[\ell]\setminus\{{i}\}}V_j)$. Let $V_i'=V_i \oplus \mathrm{span}(\bv)$.
Then
\begin{align*} &\dim\left(V_i'\cap\left(\bigcap_{j\in[\ell]\setminus\{i\}}V_j\right)\right)\\
&=\dim V_i'+\dim\left(\bigcap_{j\in[\ell]\setminus\{i\}}V_j\right)-\dim\left(V_i'+\left(\bigcap_{j\in[\ell]\setminus\{i\}}V_j\right)\right)\\
&=\dim V_i+1+\dim\left(\bigcap_{j\in[\ell]\setminus\{i\}}V_j\right)-\dim\left(V_i+\left(\bigcap_{j\in[\ell]\setminus\{i\}}V_j\right)\right)-1 & \text{(since $\bv\notin  V_i+\bigcap_{j\in[\ell]\setminus\{i\}}V_j$)}\\
&=\dim V_{[\ell]}=d.
\end{align*}
Thus, replacing $V_i$ with $V_i'$ preserves the fact that $\dim V_{[\ell]} = d$.
We continue this process until a partition $P_1\sqcup\cdots\sqcup P_s$ with $s\geq 2$ becomes tight. This proves the claim. 

Next, we verify that \cref{eqiv:2} holds for the padded subspaces. For ease of notation, we still denote the padded subspaces by $V_1,\dots,V_\ell$. 
Fix a tight partition  $P_1\sqcup \cdots\sqcup P_s$ of $[\ell]$ with $s\geq 2$, which exists after padding.

Consider arbitrary $i\in [s]$. 
Recall that $V_{P_i}=\bigcap_{j\in P_i}V_j$.
Let $k_i=k-\dim V_{P_i}$, so that $\dim V_{P_i}=k-k_i$.
Let $W_i$ be a complement of $V_{P_i}$ in $\F_q^n$. 
For each $j\in P_i$, 
let $T_j:=V_j\cap W_i$.
For $j\in P_i$,
as $\F_q^n=V_{P_i}\oplus W_i$ and $V_j\supseteq V_{P_i}$, we see that $V_j=V_{P_i}\oplus T_j$ and $\dim T_j = \dim V_j - \dim V_{P_i}\leq k-(k-k_i)=k_i$.

Consider an arbitrary partition $Q_1\sqcup \cdots\sqcup Q_t$ of $P_i$. As $P_1\sqcup\cdots \sqcup P_{i-1}\sqcup Q_1\sqcup \cdots \sqcup Q_t\sqcup P_{i+1}\sqcup\cdots\sqcup P_s$ is a partition of $[\ell]$, by \cref{eqiv:3}, we have
\begin{equation}\label{eq:partition-PQ}
\sum_{j=1}^t \dim V_{Q_j}+\sum_{j\in [s]\setminus \{i\}}\dim V_{P_j} \leq (s+t-2)k+d.
\end{equation}
As the partition $P_1\sqcup\cdots\sqcup P_s$ of $[\ell]$ is tight, we also have
\begin{equation}\label{eq:partition-P}
\sum_{j=1}^s\dim V_{P_j} = (s-1)k+d.
\end{equation}
Define $T_\Omega=\bigcap_{j\in\Omega} T_j$ for nonempty $\Omega\subseteq P_i$.
Then
\begin{align*}
\sum_{j=1}^t \dim T_{Q_j} &=\left(\sum_{j=1}^t \dim V_{Q_j}\right)-t\cdot \dim V_{P_i}\\
&=\left(\sum_{j=1}^t \dim V_{Q_j}+\sum_{j\in [s]\setminus \{i\}}\dim V_{P_j}\right)- \sum_{j=1}^s\dim V_{P_j}-(t-1)\dim V_{P_i}\\
&\stackrel{\eqref{eq:partition-PQ},\eqref{eq:partition-P}}{\leq} (s+t-2)k+d-((s-1)k+d)-(t-1)(k-k_i)=(t-1)k_i.
\end{align*}
This shows that \cref{eqiv:3} holds for the subspaces $(T_j)_{j\in P_i}$ indexed by $P_i$ (instead of by $[\ell]$), where the parameter $d$ is set to zero.

Let $\ell_i=|P_i|$. Note that $\ell_i<\ell$ since $s\geq 2$.
Identifying $P_i$ with $[\ell_i]$ and applying the induction hypothesis, we see that there exist integers $\delta_j\geq 0$ for all $j\in P_i$ such that
\begin{equation}\label{eq:sum-of-delta}
\sum_{j\in P_i}\delta_j=k_i
\end{equation}
and for all nonempty $\Omega\subseteq P_i$,
\begin{equation}\label{eq:T-omega}
\dim T_{\Omega}\leq k_i-\sum_{j\in \Omega} \delta_j.
\end{equation}
As $i\in [s]$ is arbitrary, we can perform the above procedure for $i=1,\dots,s$, which yields $\delta_j$ for all $j\in [\ell]$.
We now verify that $\delta_1,\dots,\delta_{\ell}$ satisfy \cref{eqiv:2}. First, observe that
\[
\sum_{i=1}^\ell \delta_i
=\sum_{i=1}^s\sum_{j\in P_i}\delta_j
\stackrel{\eqref{eq:sum-of-delta}}{=}\sum_{i=1}^s k_i
=\sum_{i=1}^s \bigg(k-\dim V_{P_i}\bigg)\stackrel{\eqref{eq:partition-P}}{=}k-d.
\]
So it remains to verify \eqref{eq:cond2} for all nonempty $\Omega\subseteq [\ell]$. 
Fix nonempty $\Omega\subseteq [\ell]$. Let $I=\{i\in [s]: P_i\cap \Omega \neq \emptyset\}$.
Consider $i\in I$. Recall that for $j\in P_i$, the complement $T_j$ of $V_{P_i}$ in $V_j$ is chosen to be $V_j\cap W_i$, where $W_i$ is a complement of $V_{P_i}$ in $\F_q^n$. It follows that for any $J\subseteq P_i$,
\begin{equation}\label{eq:intersection-of-Vi}
\bigcap_{j\in J} V_j=V_{P_i}\oplus \left(\bigcap_{j\in J} T_j\right)=V_{P_i}\oplus T_{J}.
\end{equation}
Then, we have
\begin{align*}
&\dim\left(\bigcap_{i\in \Omega}V_i\right)
=\dim\left(\bigcap_{i\in I}\bigcap_{j\in P_i\cap \Omega} V_j\right)
\stackrel{\eqref{eq:intersection-of-Vi}}{=}\dim\left(\bigcap_{i\in I}\left(V_{P_i}\oplus T_{P_i\cap\Omega}\right)\right)\\
&\leq\dim\left(\bigcap_{i\in I}V_{P_i}\right)+\sum_{i\in I}\dim\left(T_{P_i\cap\Omega}\right) & \text{(by \cref{lm:dim})}\\
&\stackrel{\eqref{eq:T-omega}}{\leq} \dim\left(\bigcap_{i\in I}V_{P_i}\right)+\sum_{i\in I}\left(k_i-\sum_{j\in P_i\cap\Omega} \delta_j\right)\\
&=\dim\left(\bigcap_{i\in I}V_{P_i}\right)-\sum_{i\in I} \dim V_{P_i}+k|I|-\sum_{j\in \Omega}\delta_j & \text{(as  $\dim V_{P_i}=k-k_i$)}\\
&=\dim\left(\bigcap_{i\in I}V_{P_i}\right)+\sum_{i\in [s]\setminus I}\dim V_{P_i}-\sum_{i\in [s]}\dim V_{P_i}+k|I|-\sum_{j\in \Omega}\delta_j\\
&\stackrel{\eqref{eq:partition-P}}{=}\dim\left(\bigcap_{i\in I}V_{P_i}\right)+\sum_{i\in [s]\setminus I}\dim V_{P_i}-((s-1)k+d)+k|I|-\sum_{j\in \Omega}\delta_j\\
& \leq (s-|I|)k+d-((s-1)k+d)+k|I|-\sum_{j\in \Omega}\delta_j\\ &=k-\sum_{j\in \Omega}\delta_j,
\end{align*}
where the second last step follows by applying \eqref{eq:cond3} to the partition of $[\ell]$ consisting of the set $\bigcup_{i\in I}P_i$ and the sets $P_j$ for $j\in [s]\setminus I$. So \cref{eqiv:2} holds, which concludes the proof.
\end{proof}

\begin{rmka}
The proof of the above lemma closely follows that of \cite[Lemma~2.8]{brakensiek2023generic}, which gives a characterization of general zero patterns of order at most $\ell$. 
However, some details differ due to our focus on vector spaces instead of plain sets.
For example, the proof of \cite[Lemma~2.8]{brakensiek2023generic} uses the pigeonhole principle to argue that one can continue padding a collection of sets until some partition $P_1\sqcup P_2\sqcup\dots\sqcup P_s=[\ell]$ with $s\geq 2$ becomes tight.
For us, such a simple argument does not work, as extending a subspace $V_i$ to $V_i\oplus\mathrm{span}\{\bv\}$ for 
a vector $\bv\not\in V_i$ would introduce elements that are not in $V_i\cup\{\bv\}$. Instead, our proof for the feasibility of padding is based on the inequality \eqref{eq:special-partition}, derived in turn from \cref{eqiv:3} of \cref{order-l-cha}. We find our argument to be more general and believe it may be of independent interest.
\end{rmka}

\section{The GM-MRD Theorem}\label{gmmrdsec}

In this section, we prove the GM-MRD theorem, which states that symbolic Gabidulin code over $\F=\F_q(Z_1,\dots,Z_n)$ is $\mathrm{GKP}(\ell)$ for all $\ell$. For the convenience of applications, we have formulated the theorem as follows.

\begin{thm}[GM-MRD theorem]\label{thm:GM-MRD-v1-new}
Let $1\leq k\leq n$ be integers.
Let $\F=\F_q(Z_1,\dots,Z_n)$ and $G=\left(Z_j^{q^{i-1}}\right)_{i\in [k], j\in [n]}\in\F^{k\times n}$.
For every generic kernel pattern $\mathcal{V}=(V_1,\dots,V_k)$ and matrices $A_1,\dots,A_k$, where $A_i\in\F_q^{n\times \dim V_i}$ and $\langle A_i\rangle=V_i$, there exists a matrix $M_{\mathcal{V}}\in \F^{k\times k}$ such that
\begin{enumerate}
\item\label{item:gmmrd1} $M_{\mathcal{V}}$ is invertible,
\item\label{item:gmmrd2} $\vm_i G A_i=0$ for all $i\in [k]$, where $\vm_i$ denotes the $i$-th row of $M_{\mathcal{V}}$, and
\item\label{item:gmmrd3} The entries of $M_{\mathcal{V}}$ are polynomials in $Z_1,\dots,Z_n$ over $\F_q$ of degree at most $q^{k-1}$.
\end{enumerate}
In particular, the symbolic Gabidulin code $\mG_{n,k}(Z_1,\dots, Z_n)$ over $\F$ is $\mathrm{GKP}(\ell)$ for all $\ell\geq 1$.
\end{thm}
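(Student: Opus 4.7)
The plan is to construct each row $\vm_i$ of $M_{\mathcal{V}}$ as the coefficient vector of a carefully chosen $q$-linearized polynomial $f_i(X)$ of $q$-degree at most $k-1$, verify properties \cref{item:gmmrd2} and \cref{item:gmmrd3} by a direct computation using $q$-linearity, and then attack \cref{item:gmmrd1} by induction. Let $\phi\colon \F_q^n\to \spa_{\F_q}\{Z_1,\dots,Z_n\}$ be the $\F_q$-isomorphism $\be_j\mapsto Z_j$, and for each $i$ set $U_i=\phi(V_i)\subseteq \F$. By \cref{prop:q_linear},
\[
f_i(X):=\prod_{\alpha\in U_i}(X-\alpha)=\sum_{j=0}^{\dim V_i} c_{i,j} X^{q^j}\in \F_q[Z_1,\dots,Z_n][X]
\]
is a monic $q$-linearized polynomial of $q$-degree $\dim V_i$. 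The condition $\dim V_i\leq k-1$ from the GKP property (with $\Omega=\{i\}$) lets us pad with zeros and define $\vm_i=(c_{i,0},\dots,c_{i,k-1})$. Each $c_{i,j}$ is (up to sign) an elementary-symmetric-like function in the $q^{\dim V_i}$ elements of $U_i$, each of which is a linear form in $Z_1,\dots,Z_n$, so $\deg c_{i,j}\leq q^{\dim V_i}\leq q^{k-1}$, giving \cref{item:gmmrd3}.

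For \cref{item:gmmrd2}, note that for any column $\bu=(u_1,\dots,u_n)$ of $A_i$ (so $\bu\in V_i$),
\[
\vm_i G\bu=\sum_{j=0}^{k-1} c_{i,j}\sum_{t=1}^n u_t Z_t^{q^j}=\sum_{j=0}^{k-1} c_{i,j}\left(\sum_{t=1}^n u_t Z_t\right)^{q^j}=f_i(\phi(\bu))=0,
\]
since $\phi(\bu)\in U_i$ and $f_i$ vanishes on $U_i$. Running this for each column of $A_i$ gives $\vm_i GA_i=0$. The nontrivial content is therefore \cref{item:gmmrd1}, which is equivalent to the $\F$-linear independence of $f_1,\dots,f_k$ as elements of the $k$-dimensional $\F$-space of $q$-linearized polynomials of $q$-degree at most $k-1$.

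For invertibility we follow the skeleton of the GM-MDS proofs of Lovett and Yildiz--Hassibi, but adapted to subspaces and $q$-linearized compositions. We induct on a carefully enlarged class of tuples, the \emph{$s$-admissible tuples}, which is closed under the reductions below (this is why we cannot induct on GKPs alone). The key reductions are: (A) if $V:=\bigcap_{i=1}^k V_i\neq 0$, set $g(X)=\prod_{\alpha\in \phi(V)}(X-\alpha)$, write $V_i=V\oplus V_i'$, and exploit $q$-linearity of $g$ via
\[
f_i(X)=\prod_{\alpha\in U_i'}\prod_{\beta\in \phi(V)}(X-\alpha-\beta)=\prod_{\alpha\in U_i'}g(X-\alpha)=\prod_{\gamma\in g(U_i')}\bigl(g(X)-\gamma\bigr),
\]
which factors $f_i=\tilde f_i\circ g$ and reduces the problem to a smaller instance indexed by the subspaces $g(U_i')$; (B) if $V_1+\cdots+V_k$ fails to span $\F_q^n$, change coordinates to reduce $n$, using that a GKP in a proper subspace can be transported along an $\F_q$-isomorphism to $\spa_{\F_q}\{Z_1,\dots,Z_{n'}\}$ for $n'<n$. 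The base cases (e.g.\ all but one $V_i$ equal to $\{0\}$) reduce to a single Moore-type non-vanishing fact covered by \cref{lem:moore}.

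The principal obstacle is Reduction (A): because composition of $q$-linearized polynomials is \emph{non-commutative}, one cannot freely permute the ``common factor'' $g$ the way one multiplies a common polynomial factor in the classical GM-MDS proofs. This forces us to track not just the original subspaces $V_i$ but their images $g(V_i')$ under $q$-linearized maps, which in general live in subspaces of $\F$ that are no longer of the form $\phi(V)$ for $V\subseteq \F_q^n$. The role of $s$-admissible tuples (\cref{defn:admissible}) is precisely to carve out a class of tuples that (i) contains all $\phi$-images of GKPs, (ii) is stable under the two reductions above, and (iii) still permits an invertibility statement strong enough to drive the induction. Once the induction goes through on $s$-admissible tuples, the final GM-MRD claim follows as the special case where the tuple is $\phi(\mathcal{V})$, and the ``in particular'' statement that $\mG_{n,k}(Z_1,\dots,Z_n)$ is $\mathrm{GKP}(\ell)$ for every $\ell$ is immediate from the definition together with \cref{generic_a}, which shows that the symbolic Gabidulin code is MRD.
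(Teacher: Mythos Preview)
Your verification of \cref{item:gmmrd2} and \cref{item:gmmrd3} is correct and matches the paper's derivation. You also correctly pin down the non-commutativity obstacle and the role of $s$-admissible tuples.

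The gap is the inductive engine for \cref{item:gmmrd1}. Reduction~(A) fires only when $\bigcap_{i=1}^{k}V_i\neq\{0\}$, but for a GKP the full intersection is always zero (take $\Omega=[k]$), so (A) never applies at the top level; and (A)+(B) are not exhaustive even inside the enlarged class---a tuple with all $\dim V_i=k-1$, $\bigcap_i V_i=\{0\}$ and $\sum_i V_i=\F_q^n$ satisfies neither yet is far from a base case. The Yildiz--Hassibi skeleton, and the paper, first \emph{strengthen} the statement to tuples $((V_i,r_i))_{i\in[m]}$ with $m\le k$, integer slacknesses $r_i\ge1$ summing to $k$, the $i$-th row of $M$ replaced by the $r_i\times k$ block of coefficient vectors of $f_i,f_i^{\,q},\dots,f_i^{\,q^{r_i-1}}$, and the GKP inequality replaced by $\dim(\bigcap_{i\in\Omega}V_i)+\sum_{i\in\Omega}r_i\le\max_{i\in\Omega}\{\dim V_i+r_i\}$. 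The induction on $(k,m,s)$ then has three cases: (1)~some \emph{proper} $\Omega_1$ with $2\le|\Omega_1|\le m-1$ makes the inequality tight, and one splits into two smaller instances---your identity $f_i=\tilde f_i\circ g$ lives here, with $g$ built from $\bigcap_{i\in\Omega_1}V_i$, not the full intersection; (2)~a unique index has $\dim V_i+r_i=k$, allowing a cofactor expansion that reduces $k$; (3)~otherwise two indices tie for the maximum, one of the corresponding $V_i$ is proper in the ambient $f(V)$, and setting a single $Z_d=0$ reduces $s$---this is strictly more general than ``$\sum_i V_i$ fails to span.'' Without the $r_i$'s there is no way for the two pieces produced in Case~1 to land back in the inductive class, so the strengthening is essential, not cosmetic. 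A symptom that your bare construction cannot work as stated: if $V_i=V_j$ (e.g.\ both $\{0\}$, which is a valid GKP for any $k\ge2$, and in particular is your proposed ``base case'') then $f_i=f_j$ and $M_{\mathcal V}$ is singular; this disappears once repeated subspaces are absorbed into a single block with larger $r_i$, or equivalently once every $V_i$ is first padded to dimension $k-1$ via \cref{halldual1}.
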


\cref{thm:GM-MRD-v1-new} implies the following result, which states that a random Gabidulin code over a large enough finite field is, with high probability, $\mathrm{GKP}(\ell)$ for all $\ell$.

\begin{thm}[GM-MRD theorem, finite field version]\label{thm:GM-MRD-finite}
Let $1\leq k\leq n\leq m$ be integers. Let $\ell\geq 1$. Let $(\alpha_1,\dots,\alpha_n)$ be uniformly distributed over the set of all vectors in $\F_{q^m}^n$ whose coordinates are linearly independent over $\F_q$. Then with probability at least $1-3kq^{nk\cdot\min\{\ell,k\}+k-m}$, the Gabidulin code $\mG_{n,k}(\alpha_1,\dots, \alpha_n)$ over $\F_{q^m}$ is $\mathrm{GKP}(\ell)$.
\end{thm}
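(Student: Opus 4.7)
The plan is to deduce \cref{thm:GM-MRD-finite} from the symbolic version \cref{thm:GM-MRD-v1-new} via the Schwartz--Zippel lemma together with a union bound over generic kernel patterns (GKPs) of order at most $\ell':=\min\{\ell,k\}$. The cap $\ell'$ appears because a kernel pattern of length $k$ trivially has order at most $k$, so $\mathrm{GKP}(\ell)=\mathrm{GKP}(k)$ whenever $\ell\ge k$. Moreover, a Gabidulin code with $\F_q$-linearly independent evaluation points is automatically MRD, so it suffices to control the probability that some GKP of order at most $\ell'$ fails to be attained.

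For each such GKP $\mathcal{V}=(V_1,\dots,V_k)$, with full-rank matrices $A_i\in\F_q^{n\times\dim V_i}$ satisfying $\langle A_i\rangle=V_i$, I would apply \cref{thm:GM-MRD-v1-new} to the symbolic Gabidulin generator $G_{\mathrm{sym}}=(Z_j^{q^{i-1}})_{i,j}$ to produce a matrix $M_\mathcal{V}\in\F_q[Z_1,\dots,Z_n]^{k\times k}$ with polynomial entries of degree at most $q^{k-1}$, satisfying the polynomial identity $M_\mathcal{V}\,G_{\mathrm{sym}}\,A_i=0$ for every $i$ and having a nonzero determinant of degree at most $kq^{k-1}$. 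Specialization $Z_j\mapsto\alpha_j$ preserves the identity $M_\mathcal{V}\,G_{\mathrm{sym}}\,A_i=0$, so the evaluated code $\mG_{n,k}(\alpha_1,\dots,\alpha_n)$ attains $\mathcal{V}$ precisely when $(\det M_\mathcal{V})(\alpha)\neq 0$. By Schwartz--Zippel over $\F_{q^m}$, this fails for any fixed $\mathcal{V}$ with probability at most $kq^{k-1-m}$ when $\alpha$ is uniform on $\F_{q^m}^n$.

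Next, I would count GKPs of order at most $\ell'$: each is determined by an unordered collection of at most $\ell'$ distinct nonzero subspaces of $\F_q^n$ (each of dimension at most $k-1$, by applying the GKP condition with $|\Omega|=1$) together with an assignment to the $k$ slots. Using the Gaussian binomial estimate $\binom{n}{d}_q\le q^{d(n-d)}\le q^{nk}$ for $d\le k-1$ bounds the number of such patterns by $q^{nk\ell'}$ up to factors polynomial in $k$ and $\ell'$. A union bound then yields an unconditional failure probability of order $kq^{nk\ell'+k-1-m}$ under $\alpha\sim\mathrm{Unif}(\F_{q^m}^n)$. Finally, to pass to the target distribution, I would divide by $\Pr[\alpha\text{ is }\F_q\text{-linearly independent}]=\prod_{i=0}^{n-1}(1-q^{i-m})\ge\tfrac12$ (for $m$ moderately larger than $n$), incurring only a constant factor. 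Absorbing the residual polynomial factors into the leading constant produces the stated bound $1-3kq^{nk\ell'+k-m}$.

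The main obstacle will be the careful arithmetic in the counting of GKPs and in the polynomial degree bookkeeping of $\det M_\mathcal{V}$ so that all polynomial-in-$k$ losses are absorbed by the $q$-slack and fit under the constant $3k$ claimed in the theorem; the algebraic heavy lifting of producing a nonzero determinantal witness for every GKP is entirely handled by \cref{thm:GM-MRD-v1-new}, and the remaining steps are a routine combination of Schwartz--Zippel, union bound, and a standard conditional-probability correction.
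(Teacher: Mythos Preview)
Your proposal is correct and follows essentially the same approach as the paper: apply \cref{thm:GM-MRD-v1-new} to each GKP, use Schwartz--Zippel on $\det(M_{\mathcal{V}})$ (degree $\le kq^{k-1}$), take a union bound over the $O(q^{nk\min\{\ell,k\}})$ relevant patterns, and then correct for the conditioning on $\F_q$-linear independence. The only cosmetic differences are that the paper counts patterns via matrix representatives rather than Gaussian binomials and handles the conditioning by a statistical-distance argument (adding $q^{n-m}$) rather than dividing by $\Pr[\text{lin.\ indep.}]$; both variants fit comfortably under the stated $3k$ constant once the bound is nontrivial.
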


\begin{proof}
Note that we may permute the subspaces $V_i$ in a generic kernel pattern $\mathcal{V}=(V_1,\dots,V_k)$ without affecting whether $\mG_{n,k}(\alpha_1,\dots, \alpha_n)$ attains $\mathcal{V}$.
Up to permutation, each generic kernel pattern $\mathcal{V}$ of order at most $\ell$ can be represented by a list of subspaces $V_1,\dots,V_{\ell'}$ of dimension at most $k-1$, where $\ell'\leq \min\{\ell,k\}$, together with their multiplicities $\delta_1,\dots,\delta_{\ell'}\in [k]$.
Each $V_i$ can be represented by a matrix $A_i\in\F_q^{n\times (k-1)}$ satisfying $V_i=\langle A_i\rangle$, for which there are at most $q^{n(k-1)}$ choices.
So the number of generic kernel patterns we need to consider is bounded by 
\[
\mathcal{N}:=\sum_{\ell'=1}^{\min\{\ell,k\}}(kq^{n(k-1)})^{\ell'}\leq \sum_{\ell'=1}^{\min\{\ell,k\}} q^{nk\ell'}\leq 2q^{nk\cdot\min\{\ell,k\}}.
\]

First assume that $(\alpha_1,\dots,\alpha_n)$ is uniformly distributed over $\F_{q^m}^n$ at random.
Consider a fixed generic kernel pattern $\mathcal{V}$. Let $M_{\mathcal{V}}$ be as in \cref{thm:GM-MRD-v1-new}, whose entries are polynomials in $Z_1,\dots,Z_n$ over $\F_q$.
For $(\alpha_1,\dots,\alpha_n)\in \F_{q^m}^n$ and $i\in [k]$, we have
$\overline{\vm}_i \overline{G} A_i=0$ by \cref{item:gmmrd2} of \cref{thm:GM-MRD-v1-new}, where $\overline{\vm}_i$ is the $i$-th row of $M_{\mathcal{V}}|_{Z_1=\alpha_1,\dots,Z_n=\alpha_n}$ and $\overline{G}$ is the generator matrix $\left(\alpha_j^{q^{i-1}}\right)_{i\in [k], j\in [n]}$ of $\mG_{n,k}(\alpha_1,\dots, \alpha_n)$. By \cref{item:gmmrd3} of \cref{thm:GM-MRD-v1-new}, the degree of $\det\left(M_{\mathcal{V}}\right)$ is at most $kq^{k-1}$.
By the Schwartz--Zippel lemma, $\det\left(M_{\mathcal{V}}|_{Z_1=\alpha_1,\dots,Z_n=\alpha_n}\right)=0$ holds with probability at most $\delta:=kq^{k-1}/q^m=kq^{k-m-1}$.
And when $\det\left(M_{\mathcal{V}}|_{Z_1=\alpha_1,\dots,Z_n=\alpha_n}\right)\neq 0$, we know that $\mG_{n,k}(\alpha_1,\dots, \alpha_n)$ attains $\mathcal{V}$.

By the union bound, the probability that $\mG_{n,k}(\alpha_1,\dots, \alpha_n)$ attains all generic kernel patterns of order at most $\ell$, i.e., it is $\mathrm{GKP}(\ell)$, is at least $1-\mathcal{N}\delta$, assuming that $\alpha:=(\alpha_1,\dots,\alpha_n)$ is drawn from the uniform distribution $U$ over $\F_{q^m}^n$.
But $\alpha$ is actually drawn from the uniform distribution $U_S$ over $S$, where $S$ denotes the set of all vectors in $\F_{q^m}^n$ whose coordinates are linearly independent over $\F_q$. However, note that $\Pr_{\alpha\sim U}[\alpha\not\in S]\leq (1+q+\dots+q^{n-1})/q^m\leq q^{n-m}$. So the statistical distance between $U$ and $U_S$ is bounded by $q^{n-m}$. So for $\alpha\sim U_S$, the probability that $\mG_{n,k}(\alpha_1,\dots, \alpha_n)$ is $\mathrm{GKP}(\ell)$ is at least 
\[
1-\mathcal{N}\delta-q^{n-m}
\geq 1-2q^{nk\cdot\min\{\ell,k\}}\cdot kq^{k-m-1}-q^{n-m}\geq 1-3k q^{nk\cdot\min\{\ell,k\}+k-m},
\]
which complete the proof.
\end{proof}
\subsection{Proof of the GM-MRD Theorem}

In the following, fix $n\in\N^+$ and let $\F=\F_{q}(Z_1,\dots,Z_{n})$. 
For a subset $S\subseteq \F$, denote by $\F_{q}[S]$ the $\F_q$-subalgebra of $\F$ generated by the elements in $S$. In particular, $\F_q[S]\subseteq \F_q[Z_1,\dots,Z_n]$ if $S\subseteq \F_q[Z_1,\dots,Z_n]$.

\begin{defn}\label{defn:admissible}
We say a tuple $(V_i)_{i\in [m]}$ of $\F_q$-subspaces $V_i$ of $\F_{q}[Z_1,\dots,Z_{n}]$ is \emph{$s$-admissible} if there exist $\F_q$-subspaces $V,W$ of $\mathrm{span}_{\F_q}\{Z_1,\dots,Z_{n}\}$ and a nonzero $q$-linearized polynomial $f\in (\F_q[W])[X]$ in $X$ such that 
\begin{enumerate}
\item $\mathrm{span}_{\F_q}\{Z_1,\dots,Z_{n}\}=V\oplus W$,
\item $\dim V\leq s$, and
\item $V_1,\dots,V_m\subseteq f(V)$.
\end{enumerate}
\end{defn}

Note that in the above definition, as $f$ is $q$-linearized, the map $u\mapsto f(u)$ defines an isomorphism between the $\F_q$-linear spaces $V$ and $f(V)$. 

We can apply an invertible linear transformation to $\mathrm{span}_{\F_q}\{Z_1,\dots,Z_n\}$ over $\F_q$, mapping $V$ to $\mathrm{span}_{\F_q}\{Z_1,\dots,Z_{\dim V}\}$ and $W$ to $\mathrm{span}_{\F_q}\{Z_{\dim V+1},\dots,Z_n\}$.
Thus, \cref{defn:admissible} states that, after applying an invertible linear transformation over $\F_q$, the spaces $V_1,\dots,V_m$ are contained in $f(\mathrm{span}_{\F_q}\{Z_1,\dots,Z_d\})$ for some $d\leq s$, and $f$ is a $q$-linearized polynomial whose coefficients are in $\F_q[Z_{d+1},\dots,Z_n]$.
Importantly, $f$ is evaluated only on the $\F_q$-linear span of $Z_1,\dots,Z_d$ while its coefficients do not depend on $Z_1,\dots,Z_d$. This fact is crucially used in the proof of \cref{thm:generalization-new} (specifically in Case~3 of the proof). 

The following lemma follows straightforwardly from the definition.

\begin{lemma}\label{lem:closed}
Let $(V_i)_{i\in [m]}$ be $s$-admissible and let $V_i'$ be an $\F_q$-subspace of $V_i$ for $i\in [m]$. Then $(V_i')_{i\in [m]}$ is also $s$-admissible.
\end{lemma}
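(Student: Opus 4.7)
The plan is to observe that this lemma is an immediate consequence of the definition of $s$-admissibility: shrinking the subspaces $V_i$ preserves every condition in \cref{defn:admissible} since those conditions only involve an upper-bound containment of the $V_i$ inside a common set $f(V)$.

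Concretely, since $(V_i)_{i\in[m]}$ is $s$-admissible, I will fix witnesses $V, W \subseteq \mathrm{span}_{\F_q}\{Z_1,\dots,Z_n\}$ and a nonzero $q$-linearized polynomial $f \in (\F_q[W])[X]$ satisfying the three conditions of \cref{defn:admissible}; in particular $V_i \subseteq f(V)$ for every $i \in [m]$. I will then propose exactly the same triple $(V, W, f)$ as a witness for $(V_i')_{i \in [m]}$. Conditions (1) and (2) are unchanged (they depend only on $V$ and $W$, not on the tuple), and condition (3) follows from the chain $V_i' \subseteq V_i \subseteq f(V)$ for each $i\in[m]$. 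This verifies that $(V_i')_{i\in [m]}$ is $s$-admissible with the same witnesses.

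Since no step requires a nontrivial construction and no case analysis is needed, there is no real obstacle here; the lemma functions as a convenient closure property to be invoked later in the induction in \cref{gmmrdsec}, where replacing a subspace by a subspace of itself must not destroy admissibility.
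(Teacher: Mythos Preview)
Your proposal is correct and matches the paper's approach: the paper simply remarks that the lemma follows straightforwardly from the definition, which is exactly the argument you give using the same witnesses $(V,W,f)$.
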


%
%

For $k\geq m\geq 1$ and $s\geq 0$, define 
\[
V_{k,m,s}:=\left\{((V_i,r_i))_{i\in [m]}: (V_i)_{i\in [m]}\text{ is $s$-admissible},r_i\in \mathbb{N}^+,\dim(V_i)+r_i\leq k,\sum_{i=1}^m r_i=k\right\}.
\] 
For each $\cS=((V_i,r_i))_{i\in [m]}\in V_{k,m,s}$, we associate with it a matrix $M_{\cS}\in \F^{k\times k}$, defined as follows.

\begin{defn}[Matrix $M_{\cS}$]\label{defn:ms}
Let $\cS=((V_i,r_i))_{i\in [m]}\in V_{k,m,s}$.
For $i\in [m]$, let 
\begin{equation}\label{eq:fi}
f_i(X)=\prod_{\alpha\in V_i}(X-\alpha)^{q^{k-\dim(V_i)-r_i}}=\sum_{j=0}^{k-r_i}a_{i,j}X^{q^j}  
\end{equation}
which is a $q$-linearized polynomial in $X$ with coefficients in $\F_q[Z_1,\dots,Z_n]\subseteq\F$ by \cref{prop:q_linear}. 
Define the matrix 
\begin{equation}\label{eq:ms}
M_{\cS}=\begin{pmatrix}
	M_1\\
	M_2\\
	\vdots\\
	M_m
 \end{pmatrix}\in\mathbb{F}^{k\times k}
\end{equation}
where 
\begin{equation}\label{eq:mi}
M_i=\begin{pmatrix}
a_{i,0}&a_{i,1}&\cdots& a_{i,k-r_i-1} & a_{i,k-r_i}&  0 & \cdots &0   \\
0 &a_{i,0}^q& \cdots  & a_{i,k-r_i-2}^q & a_{i,k-r_i-1}^q & a_{i,k-r_i}^q & \cdots &0 \\
\vdots & \vdots &\ddots & \vdots &\vdots &\vdots& \ddots &0\\
0 &0  & \cdots &a_{i,0}^{q^{r_i-1}} & a_{i,1}^{q^{r_i-1}} & a_{i,2}^{q^{r_i-1}} &\cdots & a_{i,k-r_i}^{q^{r_i-1}} 
\end{pmatrix}\in\mathbb{F}^{r_i\times k}.
\end{equation}
In other words, for $(j,j')\in [r_i]\times [k]$, the $(j,j')$-th entry of $M_i$ is the coefficient of $X^{q^{j'-1}}$ in $f_i(X)^{q^{j-1}}$.
\end{defn}

We will derive \cref{thm:GM-MRD-v1-new} from the following theorem.

\begin{thm}\label{thm:GM-MRD-v2-new}
Let $\cS=((V_i,1))_{i\in [k]}\in V_{k,k,s}$ such that for all nonempty $\Omega\subseteq [k]$, 
\[
\dim\left(\bigcap_{i\in \Omega} V_i\right)\leq k-|\Omega|.
\]
Then $\det(M_{\cS})\neq 0$.
\end{thm}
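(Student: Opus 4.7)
The plan is to prove \cref{thm:GM-MRD-v2-new} by induction, following the structural strategy of Yildiz--Hassibi \cite{yildiz2019optimum} for the GM-MDS theorem but adapted to the $q$-linearized setting via the framework of $s$-admissible tuples. To make the induction go through cleanly, I will actually prove the stronger statement that $\det(M_\cS)\ne 0$ for every $\cS=((V_i,r_i))_{i\in[m]}\in V_{k,m,s}$ whose underlying subspaces obey the generalized GKP inequality $\dim(\bigcap_{i\in\Omega}V_i)\le k-\sum_{i\in\Omega}r_i$ for every nonempty $\Omega\subseteq[m]$; the theorem is the special case $m=k$, $r_i\equiv 1$. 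The induction will be on the lexicographic pair $(s,T)$, where $T:=\sum_i(k-r_i-\dim V_i)$ is the ``total slack,'' with smaller values corresponding to simpler instances.

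For the base case I handle configurations in which $\dim V_i+r_i=k$ for every $i$: here each $f_i$ has $q$-degree exactly $k-r_i$ and the leading block of $M_i$ is fully determined by $V_i$, so a direct Cauchy--Binet expansion of $\det(M_\cS)$ reduces the claim to nonsingularity of a Moore-type determinant built from bases of the $V_i$'s, which follows from \cref{lem:moore} combined with the generalized GKP condition. For the inductive step I split into two cases. \emph{Case A $($slack reduction$)$:} if some $V_i$ has $\dim V_i+r_i<k$, I enlarge $V_i$ via \cref{thm:extend} to a subspace $V_i'\supseteq V_i$ of dimension $k-r_i$ that still satisfies the generalized GKP inequality; a specialization argument then shows that the nonvanishing of $\det(M_{\cS'})$ for the enlarged instance implies the nonvanishing of $\det(M_\cS)$, and $T$ strictly decreases. \emph{Case B $($common left factor$)$:} once $T=0$, if there exists a nonzero element $\beta\in f(V)$ lying in every $V_i$, write $\beta=f(\gamma)$ with $\gamma\in V$, pick a complement $V'$ of $\mathrm{span}_{\F_q}\{\gamma\}$ in $V$, and form $g(X)=X^q-\beta^{q-1}X$, the $q$-linearized polynomial with kernel $\mathrm{span}_{\F_q}\{\beta\}$. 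Using $q$-linearity, each $f_i$ factors as $f_i(X)=\widetilde{f}_i(g(X))$ with $\widetilde{f}_i$ a $q$-linearized polynomial vanishing on $g(V_i)$; the reduced tuple $\widetilde{\cS}=((g(V_i),r_i))_i$ is $(s-1)$-admissible with respect to the refined decomposition $V'\oplus(W\oplus\mathrm{span}_{\F_q}\{\gamma\})$ and satisfies the generalized GKP inequality with $k$ replaced by $k-1$. Block-row operations then yield $\det(M_\cS)=N\cdot\det(M_{\widetilde{\cS}})$ for a nonzero $N\in\F$, invoking the inductive hypothesis at a strictly smaller $s$. If instead $\bigcap_i V_i=\{0\}$ while $T=0$, the generalized GKP condition produces a nontrivial tight partition of $[m]$, and applying Case B within each block of the partition (relative to a decomposition of $f(V)$ matching the partition) finishes the argument.

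\textbf{Main obstacle.} The crux is \emph{Case B}: because composition of $q$-linearized polynomials is non-commutative, factoring out a common \emph{left} factor $g$ requires its coefficients to lie in the correct subring after the admissible decomposition has been updated --- exactly what the $s$-admissibility framework is designed to enforce. Verifying this closure under substitution, tracking the effect of the substitution $X\mapsto g(X)$ on the block-row structure of $M_\cS$, and propagating the generalized GKP condition from $(V_i)_i$ to $(g(V_i))_i$ with $k$ decremented by $1$ together form the bulk of the technical work and constitute the principal deviation from \cite{yildiz2019optimum}.
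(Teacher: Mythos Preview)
Your generalized induction hypothesis is false. Take $m=k=2$, $r_1=r_2=1$, and $V_1=V_2=\{0\}$ (this tuple is $0$-admissible, so it lies in $V_{2,2,s}$ for every $s\ge 0$). Your condition $\dim\bigl(\bigcap_{i\in\Omega}V_i\bigr)\le k-\sum_{i\in\Omega}r_i$ holds for all nonempty $\Omega\subseteq\{1,2\}$, yet $f_1=f_2=X^q$ and $M_\cS=\left(\begin{smallmatrix}0&1\\0&1\end{smallmatrix}\right)$ is singular. The paper's \cref{thm:generalization-new} uses instead the sharp hypothesis
\[
\dim\Bigl(\bigcap_{i\in\Omega}V_i\Bigr)+\sum_{i\in\Omega}r_i\ \le\ \max_{i\in\Omega}\{\dim(V_i)+r_i\},
\]
which is strictly stronger than yours (the right-hand side is at most $k$) and in fact \emph{necessary}, as the ``only if'' direction proved there shows. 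With that correct hypothesis the paper inducts on $(k,m,s)$ lexicographically and runs three cases: a tight proper subset $\Omega_1$ with $2\le|\Omega_1|\le m-1$ (right-factor each $f_i$ for $i\in\Omega_1$ through the polynomial with kernel $\bigcap_{i\in\Omega_1}V_i$ and split into two instances of smaller $m$), a unique index with $\dim V_i+r_i=k$ (peel off the first column to drop $k$ by one), and otherwise a genuine specialization $Z_d\mapsto 0$ that drops $s$ by one.

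Even if you patched the hypothesis, Case~A runs in the wrong direction. Writing $V_i'=V_i\oplus\mathrm{span}_{\F_q}\{\beta\}$ with $\beta$ generic, one checks that $f_i'|_{\beta=0}=f_i$ and hence $\det(M_{\cS'})|_{\beta=0}=\det(M_\cS)$; so nonvanishing of $\det(M_{\cS'})$ as a polynomial in $\beta$ tells you nothing about its value at $\beta=0$. The implication you need is the reverse of what specialization delivers. (Separately, the padding supplied by \cref{thm:extend} takes place in $\F_q^n$, not in $f(V)$, so $s$-admissibility need not persist.) Finally, your ``base case'' $T=0$ is precisely the hard core of the theorem---every $V_i$ already has dimension $k-r_i$---and no direct Cauchy--Binet-to-Moore argument is known for it; the paper's entire three-case induction is what handles it. Your Case~B does match the right-factorization inside the paper's Case~1, but by itself it only fires when $\bigcap_i V_i\ne\{0\}$, and the ``tight partition'' fallback you sketch is essentially the paper's Case~1, which needs the correct (not your weakened) hypothesis to split the problem without loss.
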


\cref{thm:GM-MRD-v2-new} is, in turn, derived from the following more general statement.

\begin{thm}\label{thm:generalization-new}
Let $\cS=((V_i,r_i))_{i\in [m]}\in V_{k,m,s}$. Then $\det(M_{\cS})\neq 0$ iff for all nonempty $\Omega\subseteq [m]$, 
\begin{equation}\label{eqn:key-new}
\dim\left(\bigcap_{i\in \Omega} V_i\right)+\sum_{i\in \Omega}r_i\leq \max_{i\in \Omega}\{\dim(V_i)+r_i\}.
\end{equation}
\end{thm}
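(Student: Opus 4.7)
The plan is to prove both directions of the iff by structural induction on $\cS$, adapting the Lovett and Yildiz--Hassibi strategies for the GM-MDS theorem to the non-commutative setting of $q$-linearized polynomials under composition. The admissibility framework of \cref{defn:admissible} is crucial throughout: it tracks the split between the ``evaluation variables'' contained in $V$ and the ``coefficient variables'' contained in $W$, and this split is what allows the induction to continue after factoring out a common $q$-linearized polynomial whose coefficients may themselves involve ambient variables.

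For the necessity direction, suppose some nonempty $\Omega$ violates \eqref{eqn:key-new}. Set $V = \bigcap_{i \in \Omega} V_i$, $d = \dim V$, $M = \max_{i \in \Omega}\{\dim V_i + r_i\}$, and $g(X) = \prod_{\alpha \in V}(X - \alpha)$, which is $q$-linearized by \cref{prop:q_linear}. For every $(i,j)$ with $i \in \Omega$ and $j \in [r_i]$, the polynomial $p_{i,j} := f_i^{q^{j-1}}$ vanishes on $V$ with multiplicity at least $q^{k-M}$ (since $f_i$ vanishes on $V_i \supseteq V$ with multiplicity $q^{k - \dim V_i - r_i} \geq q^{k-M}$), so $g^{q^{k-M}}$ divides $p_{i,j}$ as ordinary polynomials; a standard factorization argument then shows that each $p_{i,j}$ can be written as $h_{i,j} \circ g^{q^{k-M}}$ for some $q$-linearized $h_{i,j}$ of $q$-degree at most $M - d - 1$. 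Hence the $\sum_{i \in \Omega} r_i$ rows of $M_{\cS}$ indexed by these pairs all lie in a common $\F$-subspace of coefficient space of dimension $M - d$. Since $\sum_{i \in \Omega} r_i > M - d$ by hypothesis, these rows are linearly dependent, forcing $\det(M_{\cS}) = 0$, a contradiction.

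For the sufficiency direction, I will induct on a lexicographic tuple such as $(s,\; \sum_i \dim V_i,\; m)$. The base case is $m = 1$, which forces $r_1 = k$ and hence $V_1 = \{0\}$, whence $f_1(X) = X$ and $M_{\cS} = I_k$. The inductive step splits into several cases. The most delicate is the common-kernel factoring: if some proper subset $\Omega \subsetneq [m]$ has a common kernel $V$ of dimension $d \geq 1$, pick complements $V_i = V \oplus V_i'$ for $i \in \Omega$, set $g(X) = \prod_{\alpha \in V}(X - \alpha)$, and apply the composition identity
\[
\prod_{\alpha \in V_i}(X - \alpha) = \prod_{\alpha \in V_i'}(g(X) - g(\alpha)) = \prod_{\gamma \in g(V_i')}(g(X) - \gamma)
\]
to write $f_i = f_i' \circ g$. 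This yields a matrix factorization relating $\det(M_{\cS})$ to a nonzero Moore-type determinant (by \cref{lem:moore}) times $\det(M_{\cS'})$ for a strictly smaller tuple $\cS'$, to which the induction applies after verifying that $\cS'$ remains $s$-admissible (using closure under \cref{lem:closed} together with updated witness data $(V, W, f)$) and that the inequalities \eqref{eqn:key-new} still hold for $\cS'$. A secondary case, when the $V_i$'s together do not span the full ambient $\mathrm{span}_{\F_q}\{Z_1, \ldots, Z_n\}$, applies an invertible $\F_q$-linear change of variables to decrement $s$ and fall under the induction.

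The main obstacle is the common-kernel factoring step. Where the commutative GM-MDS proofs factor out the common polynomial via multiplication, here we must use non-commutative composition $f_i = f_i' \circ g$, and the coefficients of $g$ themselves depend on $Z_1, \ldots, Z_n$. The admissibility framework of \cref{defn:admissible} is engineered precisely to preserve the inductive hypothesis across this reduction: after factoring, the reduced polynomials $f_i'$ live over $\F_q[W][X]$ for the complement $W$, and $s$-admissibility is maintained by updating the witness $(V, W, f)$. Verifying the exact matrix factorization---including how taking $q^{j-1}$-th Frobenius powers of the $f_i$ interacts with composition by $g$ at the level of the coefficient basis---is the delicate bookkeeping at the heart of the proof, and it is what distinguishes this argument from its commutative ancestors.
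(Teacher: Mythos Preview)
Your necessity argument is correct and essentially equivalent to the paper's (which shows the relevant block of rows is annihilated by a Moore matrix of rank $d$; your composition formulation $p_{i,j}=h_{i,j}\circ g^{q^{k-M}}$ with $\deg_q h_{i,j}\le M-d-1$ encodes the same linear dependence).

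The sufficiency direction, however, has genuine gaps. The paper inducts on $(k,m,s)$ lexicographically with \emph{three} cases: (1) some $\Omega_1$ with $2\le|\Omega_1|\le m-1$ is \emph{tight} in \eqref{eqn:key-new}; (2) a unique index attains $\dim V_i+r_i=k$; (3) neither. You omit Case~2 entirely (the cofactor expansion along the first column that reduces $k$ by one), and your induction parameter $(s,\sum_i\dim V_i,m)$ does not decrease under that reduction when $r_m>1$. Your common-kernel case is also mis-stated: a nontrivial $\bigcap_{i\in\Omega}V_i$ for a proper $\Omega$ does \emph{not} yield a matrix factorization of $M_\cS$, since the rows for $i\notin\Omega$ do not factor through $g$. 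In the paper, \emph{tightness} of $\Omega_1$ is what permits a split into \emph{two} subproblems $\cS_1^*\in V_{r_0,|\Omega_1|,s}$ (after composing with $f_0$) and $\cS_2\in V_{k,|\Omega_2|,s}$, with $r_0<k$ and $|\Omega_2|<m$; the conclusion then comes from \cref{lm:equivalent-new} rather than a determinant product (any relation $\sum g_i\circ f_i=0$ first forces $g_i=0$ on $\Omega_2$, then on $\Omega_1$). Finally, the $s$-reduction in Case~3 requires both strict non-tightness of every proper $\Omega$ \emph{and} two indices attaining $\max_i\{\dim V_i+r_i\}$, so that setting a suitable $Z_d=0$ preserves all inequalities \eqref{eqn:key-new}; your criterion ``the $V_i$'s do not span'' does not guarantee this.
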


Note that \cref{thm:GM-MRD-v2-new} follows from \cref{thm:generalization-new} by choosing $m=k$ and $r_i=1$ for $i\in [m]$. 
We now derive \cref{thm:GM-MRD-v1-new} from \cref{thm:GM-MRD-v2-new}.

\begin{proof}[Proof of \cref{thm:GM-MRD-v1-new} given \cref{thm:GM-MRD-v2-new}]
Consider generic kernel pattern $\mathcal{V}=(V_1,\dots,V_k)$ and matrices $A_1,\dots,A_k$, where $A_i\in\F_q^{n\times \dim V_i}$ and $\langle A_i\rangle=V_i$. Let $\{e_1,\dots,e_n\}$ be the standard basis of $\F_q^n$.
Let $\sigma: \F_q^n\to \mathrm{span}_{\F_q}\{Z_1,\dots,Z_n\}$ be the $\F_q$-linear isomorphism sending $e_j$ to $Z_j$ for $j\in [n]$.
Define $V_i'=\sigma(V_i)$ for $i\in [k]$.
Let $\cS=((V_i',1))_{i\in [k]}$.
As $\mathcal{V}=(V_1,\dots,V_k)$ is a generic kernel pattern and $\sigma$ is an $\F_q$-linear isomorphism,
we see that $\cS\in V_{k,k,n}$ and that $\dim\left(\bigcap_{i\in \Omega} V_i'\right)\leq k-|\Omega|$ for nonempty $\Omega\subseteq [k]$. So $\det(M_{\cS})\neq 0$ by \cref{thm:GM-MRD-v2-new}, where $M_{\cS}\in\F^{k\times k}$ is as defined in \cref{defn:ms}.
By definition, for $i\in [k]$, each entry of the $i$-th row of $M_{\cS}$ is a coefficient of $f_i$, where $f_i(X)=\prod_{\alpha\in V_i'}(X-\alpha)^{q^{k-\dim(V_i')-1}}$. As each $\alpha\in V_i'$ is a linear form in $Z_1,\dots,Z_n$ over $\F_q$, we conclude that the entries of $M_{\cS}$ are polynomials in $Z_1,\dots,Z_n$ over $\F_q$ of degree at most $q^{k-1}$.

Choose $M_{\mathcal{V}}$ to be $M_{\cS}$. Then \cref{item:gmmrd1} and \cref{item:gmmrd3} of \cref{thm:GM-MRD-v1-new} hold by the above discussions. It remains to verify \cref{item:gmmrd2}. 
Let $i\in [k]$.
Suppose the $i$-th row of $M_{\mathcal{V}}=M_{\cS}$ is $\vm_i=(c_1,\dots,c_k)$, or equivalently, $f_i(X)=\sum_{j=1}^k c_j X^{q^{j-1}}$. 
Consider arbitrary $\bv=(v_1,\dots,v_n)\in V_i\subseteq \F_q^n$.
As $G=\left(Z_j^{q^i-1}\right)_{i\in [k], j\in [n]}$, we have
\begin{equation}\label{eq:kernel-original}
\vm_i G \bv=(f_i(Z_1),\dots,f_i(Z_n))\bv=f_i\left(\sum_{j=1}^n v_i Z_i\right)=0,
\end{equation}
where the second equality holds since $f_i$ is $q$-linearized, and the last equality holds since $f_i$ vanishes on $V_i'$ and $\sum_{j=1}^n v_i Z_i=\sigma(\bv)\in V_i'$.
Choosing $\bv$ to be the columns of $A_i$ in \eqref{eq:kernel-original} shows that $\vm_i G A_i=0$. So \cref{item:gmmrd2} holds. 
\end{proof}

\subsection{Proof of \cref{thm:generalization-new}}

It remains to prove \cref{thm:generalization-new}.
To this end, we introduce the following lemma, which characterizes the nonsingularity of a matrix $M_{\cS}$ in terms of the compositions of $q$-linearized polynomials.

\begin{lemma}\label{lm:equivalent-new}
Let $\cS=((V_i,r_i))_{i\in [m]}\in V_{k,m,s}$.
For $i\in [m]$, let $f_i$ be given as in \eqref{eq:fi}.
Then, $\det(M_{\cS})=0$ if and only if there exist $q$-linearized polynomials $g_1,\ldots,g_m\in \F[X]$, not all zero, such that the $q$-degree of each $g_i$ is at most $r_i-1$ and $\sum_{i=1}^m g_i\circ f_i=0$. 
\end{lemma}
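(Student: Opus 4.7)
The plan is to show that the rows of $M_{\cS}$ can be interpreted as coefficient vectors of the $q$-linearized polynomials $X^{q^{j-1}}\circ f_i(X)$, so that a linear combination of rows corresponds exactly to $\sum_i g_i\circ f_i$. The singularity of $M_{\cS}$ is then equivalent to the existence of a nontrivial such combination that vanishes.

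First I would unpack \eqref{eq:mi} and observe that the $j$-th row of $M_i$ is precisely the coefficient vector of the $q$-linearized polynomial
\[
X^{q^{j-1}}\circ f_i(X)=f_i(X)^{q^{j-1}}=\sum_{\ell=0}^{k-r_i}a_{i,\ell}^{q^{j-1}}X^{q^{\ell+j-1}},
\]
expanded in the ``basis'' $X,X^q,\dots,X^{q^{k-1}}$. Note this fits inside a $k$-column matrix because $\deg_q(f_i)\leq k-r_i$, so $\deg_q(f_i^{q^{j-1}})\leq k-r_i+(r_i-1)=k-1$ for $j\leq r_i$.

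Next I would write a general $q$-linearized polynomial $g_i(X)=\sum_{j=0}^{r_i-1}c_{i,j}X^{q^j}$ of $q$-degree at most $r_i-1$ and compute
\[
g_i\circ f_i(X)=\sum_{j=0}^{r_i-1}c_{i,j}\cdot f_i(X)^{q^j}=\mathbf{c}_i\, M_i\, \mathbf{x},
\]
where $\mathbf{c}_i=(c_{i,0},\dots,c_{i,r_i-1})$ and $\mathbf{x}=(X,X^q,\dots,X^{q^{k-1}})^T$. Stacking these, set $\mathbf{c}=(\mathbf{c}_1,\dots,\mathbf{c}_m)\in\F^{1\times k}$ (using $\sum_i r_i=k$); then
\[
\sum_{i=1}^m g_i\circ f_i(X)=\mathbf{c}\, M_{\cS}\, \mathbf{x}.
\]
The key observation is that the monomials $X,X^q,\dots,X^{q^{k-1}}$ are $\F$-linearly independent in $\F[X]$, so $\sum_{i=1}^m g_i\circ f_i=0$ in $\F[X]$ if and only if the row vector $\mathbf{c}\,M_{\cS}\in\F^{1\times k}$ is zero.

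Finally I would conclude: the existence of $g_1,\dots,g_m$ (not all zero) with $\deg_q(g_i)\leq r_i-1$ and $\sum_i g_i\circ f_i=0$ is equivalent to the existence of a nonzero row vector $\mathbf{c}\in\F^{1\times k}$ with $\mathbf{c}\,M_{\cS}=0$, which is exactly the condition $\det(M_{\cS})=0$. Each direction is immediate from the correspondence $(c_{i,j})\leftrightarrow g_i$. There is no genuine obstacle here; the only care needed is bookkeeping: confirming the $q$-degree bound on $g_i\circ f_i$ so the output sits in the span of $X^{q^0},\dots,X^{q^{k-1}}$, and checking that the indexing of rows of $M_i$ via $f_i^{q^{j-1}}$ matches \eqref{eq:mi} entry by entry (which it does by Frobenius applied to the expansion of $f_i$).
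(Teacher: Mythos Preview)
Your proposal is correct and follows essentially the same approach as the paper's proof: both identify the $j$-th row of $M_i$ with the coefficient vector of $f_i(X)^{q^{j-1}}$ in the basis $X,X^q,\dots,X^{q^{k-1}}$, and then translate the left null space of $M_{\cS}$ into the vanishing of $\sum_i g_i\circ f_i$, using the $q$-degree bound $\deg_q(g_i\circ f_i)\leq k-1$ to ensure nothing is lost.
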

\begin{proof}
Suppose $\det(M_{\cS})=0$. Then there exists nonzero $\by\in \mathbb{F}^{k}$ such that $\by\cdot M_{\cS}=\mathbf{0}$.
Write $\by=(\by_1,\ldots,\by_m)\in\F^k$ with $\by_i=(y_{i,1},\ldots,y_{i,r_i})\in \mathbb{F}^{r_i}$. 
For $i\in [m]$, let $f_i(X)\in \F[X]$ and $M_i\in\F^{r_i\times k}$ be as in \eqref{eq:fi} and \eqref{eq:mi} respectively, and let $g_i(X)=\sum_{j=1}^{r_i}y_{i,j}X^{q^{j-1}}\in\F[X]$.
By definition, $\by_i \cdot M_i$ is precisely
the vector of the first $k$ coefficients of the $q$-linearized polynomial
\[
\sum_{j=1}^{r_i} y_{i,j} \cdot f_i(X)^{q^{j-1}}=(g_i\circ f_i)(X).
\]
So $\mathbf{0}=\by\cdot M_{\cS}=\sum_{i=1}^m \by_i\cdot M_i$ is the vector of the first $k$ coefficients of the $q$-linearized polynomial $\sum_{i=1}^m g_i\circ f_i$. By definition, for $i\in [m]$, we have $\deg_q(f_i)=k-r_i$ and $\deg_q(g_i)\leq r_i-1$. It follows that the $q$-degree of $\sum_{i=1}^m g_i\circ f_i$ is at most $k-1$. So $\sum_{i=1}^m g_i\circ f_i=0$. 

Conversely, suppose there exist $q$-linearized polynomials $g_1,\ldots,g_m\in \F[X]$, not all zero, such that the $q$-degree of each $g_i$ is at most $r_i-1$ and $\sum_{i=1}^m g_i\circ f_i=0$. Find the nonzero vector $\by=(\by_1,\ldots,\by_m)\in\F^k$ via $\by_i=(y_{i,1},\ldots,y_{i,r_i})\in \mathbb{F}^{r_i}$ and $g_i(X)=\sum_{j=1}^{r_i}y_{i,j}X^{q^{j-1}}$. Reversing the above proof shows $\by\cdot M_{\cS}=\mathbf{0}$. So $\det(M_{\cS})=0$.
\end{proof}

Now we are ready to prove \cref{thm:generalization-new}. Our proof is based on an induction on the parameter $(k,m,s)$ in the lexicographical order, following the approach in \cite{yildiz2019optimum}. The main difference is that the product of polynomials is replaced by the composition of $q$-linearized polynomials, which is not commutative. Consequently, we need to adapt the proof in \cite{yildiz2019optimum} to circumvent this obstacle. 

\begin{proof}[Proof of \cref{thm:generalization-new}]
We first prove the ``only if'' direction.
Assume that for some nonempty $\Omega\subseteq [m]$, the inequality \eqref{eqn:key-new} does not hold. We will show that $\det(M_{\cS})=0$.
Let $V_0=\bigcap_{i\in \Omega}V_i$, $r_0=\sum_{i\in \Omega}r_i$ and $k'=\max_{i\in \Omega}\{\dim(V_i)+r_i\}$. Then $\dim(V_0)+r_0>k'$ as \eqref{eqn:key-new} does not hold. 

Let $M^*$ be the  $r_0\times k$ submatrix of $M_{\cS}$ obtained by removing all blocks $M_i$ for $i\notin \Omega$. 
By the definitions \eqref{eq:fi} and \eqref{eq:mi}, the first $k-k'$ columns of $M^*$ are zero. We remove these $k-k'$ columns and denote by $M'$ the resulting $r_0\times k'$ matrix. The matrix $M'$ consists of the blocks $M'_i$ for $i\in \Omega$, placed vertically, each given by
\begin{equation}\label{eq:mi-prime}
M_i'=\begin{pmatrix}
b_{i,0}&b_{i,1}&\cdots& b_{i,k'-r_i-1} & b_{i,k'-r_i}&  0 & \cdots &0   \\
0 &b_{i,0}^q& \cdots  & b_{i,k'-r_i-2}^q & b_{i,k'-r_i-1}^q & b_{i,k'-r_i}^q & \cdots &0 \\
\vdots & \vdots &\ddots & \vdots &\vdots &\vdots& \ddots &0\\
0 &0  & \cdots &b_{i,0}^{q^{r_i-1}} & b_{i,1}^{q^{r_i-1}} & b_{i,2}^{q^{r_i-1}} &\cdots & b_{i,k'-r_i}^{q^{r_i-1}} 
\end{pmatrix}\in\mathbb{F}^{r_i\times k'}
\end{equation}
where each $b_{i,j}$ is determined via  $b_{i,j}=a_{i,j+k-k'}$ and $a_{i,j+k-k'}$ denotes the coefficient of $X^{q^{j+k-k'}}$ in $f_i(X)=\prod_{\alpha\in V_i}(X-\alpha)^{q^{k-\dim(V_i)-r_i}}$. Therefore,
\begin{equation}\label{eq:fi-prime}
\prod_{\alpha\in V_i}(X-\alpha)^{q^{k-\dim(V_i)-r_i}}=\sum_{j=0}^{k'-r_i}b_{i,j}\left(X^{q^{k-k'}}\right)^{q^j}.
\end{equation}
Pick a basis $\alpha_1,\dots,\alpha_h$ of $V_0$ over $\F_q$, where $h=\dim V_0$.
Let $\alpha_i'=\alpha_i^{q^{k-k'}}$ for $i\in [h]$.
Then $\alpha'_1,\dots,\alpha'_h$ are linearly independent over $\F_q$. 
Let $H=(\alpha_{\ell}'^{q^{j-1}})_{j\in [k'], \ell\in [h]}\in\F^{k'\times h}$. Then $H$ has full rank by \cref{lem:moore}.
For each $i\in \Omega$,
as $V_0\subseteq V_i$, the polynomial in 
\eqref{eq:fi-prime} vanishes at $\alpha_1,\dots,\alpha_h$, i.e., $\sum_{j=0}^{k'-r_i}b_{i,j}\alpha_{\ell}'^{q^j}=0$ for $\ell\in [h]$. Equivalently, $M_i' H=0$ for $i\in \Omega$. So $M' H=0$.
Therefore,
\[
\rank(M^*)=\rank(M')\leq k'-\min\{k', h\}=\max\{0, k'-\dim V_0\}<r_0.
\]
So $M^*$ does not have full row rank. It follows that $M_{\cS}$ does not have full row rank either. So $\det(M_{\cS})=0$.

Next, we prove the ``if'' direction via an induction on $(k,m,s)$ with respect to the lexicographical order:

If $m=1$, then $r_1=k$ and $V_1=\{0\}$. In this case, the matrix $M_{\cS}$ is the $k\times k$ identity matrix by the definitions \eqref{eq:fi}, \eqref{eq:ms}, and \eqref{eq:mi}. In particular, $\det(M_{\cS})\neq 0$.

If $s=0$, then all $V_i$ are zero since $(V_i)_{i\in [m]}$ is $s$-admissible.
In this case, \eqref{eqn:key-new} implies that $m=1$ and $r_1=k$. So $M_{\cS}$ is again the $k\times k$ identity matrix and $\det(M_{\cS})\neq 0$.

Now suppose $k\geq m\geq 2$ and $s\geq 1$.
Assume the ``if'' direction holds for all $(k',m',s')$ that are smaller than $(k,m,s)$ in lexicographical order. 
Let $\cS=((V_i,r_i))_{i\in [m]}\in V_{k,m,s}$ such that \eqref{eqn:key-new} holds for all nonempty $\Omega\subseteq [m]$.
We will show that $\det(M_{\cS})\neq 0$ by dividing the proof into the following three cases:

\paragraph{Case 1:} There exists $\Omega_1\subseteq [m]$ such that $2\leq |\Omega_1|\leq m-1$ and 
\begin{equation}\label{eq:gkp-tight}
\dim\left(\bigcap_{i\in \Omega_1}V_i\right)+\sum_{i\in \Omega_1}r_i=\max_{i\in \Omega_1}\{\dim(V_i)+r_i\}.
\end{equation}

In this case, let $\Omega_2=\{0\}\cup ([m]\setminus\Omega_1)$. Then $2\leq |\Omega_1|,|\Omega_2|\leq m-1$. 
Define $V_0=\bigcap_{i\in \Omega_1}V_i$ and $r_0=\sum_{i\in \Omega_1}r_i$. 
Then \eqref{eq:gkp-tight} becomes 
\begin{equation}\label{eq:gkp-tight-2}
\dim(V_0)+r_0=\max_{i\in \Omega_1}\{\dim(V_i)+r_i\}.
\end{equation}

As $(V_i)_{i\in [m]}$ is $s$-admissible, there exist $\F_q$-subspaces $V,W\subseteq \mathrm{span}_{\F_q}\{Z_1,\dots,Z_{n}\}$ and a nonzero $q$-linearized polynomial $f\in (\F_q[W])[X]$ in $X$ such that $\mathrm{span}_{\F_q}\{Z_1,\dots,Z_{n}\}=V\oplus W$, $\dim V\leq s$, and
$V_1,\dots,V_m\subseteq f(V)$.
Fix a complement $\overline{V}_0$ of $V_0$ in $f(V)$, so that $f(V)=V_0\oplus \overline{V}_0$.
For $i\in \Omega_1$, let $V_i'=V_i\cap \overline{V}_0$, which is a complement of $V_0$ in $V_i$. 
Let $\cS_1=((V_i', r_i))_{i\in \Omega_1}$ and $\cS_2=((V_i, r_i))_{i\in \Omega_2}$.
As $\dim(V'_i)+r_i=\dim(V_i)-\dim(V_0)+r_i\stackrel{\eqref{eq:gkp-tight-2}}{\leq} r_0$ for $i\in\Omega_1$ and $\sum_{i\in \Omega_1} r_i=r_0$, we have $\cS_1\in V_{r_0,|\Omega_1|,s}$.
As
\[
k=\sum_{i=1}^m r_i=\sum_{i\in \Omega_1}r_i+\sum_{i\in [m]\setminus\Omega_1}r_i=r_0+\sum_{i\in [m]\setminus\Omega_1}r_i=\sum_{i\in\Omega_2}r_i,
\]
$\dim(V_i)+r_i\leq k$ for $i\in [m]\setminus\Omega_1$, and $\dim(V_0)+r_0\stackrel{\eqref{eq:gkp-tight-2}}{=}\max_{i\in \Omega_1}\{\dim(V_i)+r_i\}\leq k$, we also have $\cS_2\in V_{k,|\Omega_2|,s}$.

Next, we check that $\cS_1=((V_i', r_i))_{i\in \Omega_1}$ and $\cS_2=((V_i, r_i))_{i\in \Omega_2}$ both satisfy the condition \eqref{eqn:key-new}. Indeed, for any nonempty $\Omega\subseteq \Omega_1$, 
\begin{align*}
\dim\left(\bigcap_{i\in \Omega}V'_i\right)+\sum_{i\in \Omega}r_i
&=\dim\left(\bigcap_{i\in \Omega}V_i\right)-\dim(V_0)+\sum_{i\in \Omega}r_i\\
&\stackrel{\eqref{eqn:key-new}}{\leq} \max_{i\in \Omega} \{\dim(V_i)-\dim(V_0)+r_i\}=\max_{i\in \Omega} \{\dim(V'_i)+r_i\}.
\end{align*}
And for any nonempty $\Omega\subseteq\Omega_2$, if $0\notin \Omega$, then \eqref{eqn:key-new} holds for $\cS_2$ and $\Omega$ since $\Omega\subseteq [m]$. Otherwise, let $\Omega'=\Omega\setminus \{0\}\subseteq [m]\setminus\Omega_1$, and we have
\begin{align*}
\dim\left(\bigcap_{i\in \Omega}V_i\right)+\sum_{i\in \Omega}r_i&=\dim\left(\bigcap_{i\in \Omega'\cup \Omega_1}V_i\right)+\sum_{i\in \Omega'\cup \Omega_1}r_i\\
&\stackrel{\eqref{eqn:key-new}}{\leq} \max_{i\in \Omega'\cup \Omega_1} \{\dim(V_i)+r_i\}\stackrel{\eqref{eq:gkp-tight-2}}{=}\max_{i\in \Omega} \{\dim(V_i)+r_i\},
\end{align*}
where the first equality holds since $V_0=\bigcap_{i\in \Omega_1}V_i$, $r_0=\sum_{i\in \Omega_1}r_i$, and the union $\Omega'\cup\Omega_1$ is a disjoint union. 

For $i\in \{0\}\cup [m]$, let $f_i$ be the $q$-linearized polynomial defined in \eqref{eq:fi}. That is,
\[
f_i(X)=\prod_{\alpha\in V_i}(X-\alpha)^{q^{k-\dim(V_i)-r_i}}\in (\F_q[V_i])[X].
\]
We now build from $\cS_1=((V_i', r_i))_{i\in \Omega_1}$ a new tuple $\cS_1^*$, defined as
\[
\cS_1^*=((f_0(V_i'), r_i))_{i\in \Omega_1}
\]
Denote by $f|_V$ the isomorphism $a\mapsto f(a)$ from $V$ to $f(V)$. Let $V'=f|_V^{-1}(\overline{V}_0)\subseteq V$ and $W'=W\oplus f|_V^{-1}(V_0)$. Then $\mathrm{span}_{\F_q}\{Z_1,\dots,Z_n\}=V'\oplus W'$, $f(V')=\overline{V}_0$, and $f(W')=f(W)\oplus V_0$.
Here $f$ is a nonzero $q$-linearized polynomial with coefficients in $\F_q[W]$.
And $f_0$ is a nonzero $q$-linearized polynomial with coefficients in $\F_q[V_0]$.
As $V_0\subseteq f(W')$ and the coefficients of $f$ are in $\F_q[W]\subseteq\F_q[W']$, we see that $\F_q[V_0]\subseteq\F_q[W']$.
It follows that $f_0\circ f$ is a nonzero $q$-linearized polynomial with coefficients in $\F_q[W']$.
Note that for $i\in \Omega_1$, we have 
\[
f_0(V_i')\subseteq f_0(\overline{V}_0)=(f_0\circ f)(V'),
\]
where $\dim V'\leq \dim V\leq s$.
By definition, the tuple $(f_0(V_i'))_{i\in\Omega_1}$ is $s$-admissible.

Also, note that $f_0$ defines an isomorphism from $\overline{V}_0$ to $f(\overline{V}_0)$, mapping each $V_i'$ to $f_0(V_i')$.
As $\cS_1=((V_i', r_i))_{i\in \Omega_1}$ is in $V_{r_0,|\Omega_1|,s}$ and satisfies \eqref{eqn:key-new}, the same holds for $\cS_1^*=((f_0(V_i'), r_i))_{i\in \Omega_1}$.
Applying the induction hypothesis to $\cS_1^*$ and $\cS_2$ shows that $\det(M_{\cS_1^*})$ and $\det(M_{\cS_2})$ are nonzero.

For $i\in \Omega_1$, 
let $t_i
=r_0-\dim(V_i')-r_i
=(\dim (V_0)+r_0)-(\dim (V_i)+r_i)
\stackrel{\eqref{eq:gkp-tight-2}}{\geq} 0$, and define the $q$-linearized polynomial
\[
f_i^*(X)=\prod_{\alpha\in f_0(V_i')}(X-\alpha)^{q^{t_i}}=\prod_{\alpha\in f_0(V_i')}(X-\alpha)^{q^{r_0-\dim(V_i')-r_i}}
\]
whose $q$-degree is
\begin{equation}\label{eq:q-degree-fi}
\deg_q(f_i^*)=\dim(V_i')+t_i=\dim(V_i)-\dim(V_0)+t_i=r_0-r_i.
\end{equation}
For $i\in\Omega_1$,
\begin{equation}\label{eq:compose}
\begin{aligned}
f_i(X)&=\prod_{\alpha\in V_i}(X-\alpha)^{q^{k-\dim(V_i)-r_i}}
=\prod_{\beta\in V_i'}\prod_{\alpha\in V_0}(X-\alpha-\beta)^{q^{k-\dim(V_0)-r_0+t_i}}\\
&=\prod_{\beta\in V_i'} f_0(X-\beta)^{q^{t_i}}
=\prod_{\beta\in V_i'} (f_0(X)-f_0(\beta))^{q^{t_i}}\\
&=\prod_{\alpha\in f_0(V_i')} (f_0(X)-\alpha)^{q^{t_i}}=(f_i^* \circ f_0)(X).
\end{aligned}
\end{equation}

We now prove $\det(M_{\cS})\neq 0$ by applying \cref{lm:equivalent-new}. Consider arbitrary $q$-linearized polynomials $g_1,\dots,g_m\in \mathbb{F}[X]$ such that the $q$-degree of $q_i(X)$ is at most $r_i-1$ for $i\in [m]$ and $\sum_{i=1}^m g_i\circ f_i=0$. 
Define $g_0=\sum_{i\in \Omega_1}g_i\circ f_i^*$. 
Note that 
\[
\deg_q (g_0)
\leq \max_{i\in \Omega_1} \left\{\deg_q (f_i^*)+\deg_q (g_i)\right\}
\stackrel{\eqref{eq:q-degree-fi}}{\leq} \max_{i\in \Omega_1} \{(r_0-r_i)+(r_i-1)\}=r_0-1.
\]
Also, it holds that 
\begin{equation}\label{eq:gf-zero}
0=\sum_{i=1}^m g_i\circ f_i\stackrel{\eqref{eq:compose}}{=}\sum_{i\in \Omega_1}g_i\circ f_i^*\circ f_0+\sum_{i\in [m]\setminus \Omega_1} g_i\circ f_i=\sum_{i\in \Omega_2}g_i\circ f_i,
\end{equation}
where the last equality holds since 
$\sum_{i\in \Omega_1}g_i\circ f_i^*=g_0$.
By \cref{lm:equivalent-new}, \eqref{eq:gf-zero}, and the fact that $\det(M_{\cS_2})\neq 0$, we have $g_i=0$ for all $i\in \Omega_2$. In particular, $g_0=0$.
Therefore,
\begin{equation}\label{eq:gf-zero-2}
\sum_{i\in \Omega_1}g_i\circ f_i^*=g_0=0.
\end{equation}
By \cref{lm:equivalent-new}, \eqref{eq:gf-zero-2}, and the fact that $\det(M_{\cS_1^*})\neq 0$, we have $g_i=0$ for all $i\in \Omega_1$.
So $g_i=0$ for all $i\in [m]$. 
By \cref{lm:equivalent-new}, we conclude that $\det(M_{\cS})\neq 0$.

\paragraph{Case 2:} 
There exists an unique integer $i\in [m]$ such that $\dim(V_i)+r_i=k$. 

Without loss of generality, suppose $i=m$. 
Let $M_1,\dots,M_m$ be as in \eqref{eq:mi}.
Since 
\begin{equation}\label{eq:not-tight}
\dim(V_m)+r_m=k>\dim(V_i)+r_i \quad\text{ for } i=1,\dots,m-1,
\end{equation}
by the definitions \eqref{eq:fi}, \eqref{eq:ms}, and \eqref{eq:mi}, the entries in first column of $M_{\cS}$ are all zero except that the $(1,1)$-th entry of the block $M_m$ is the coefficient of $X$ in $f_m=\prod_{\alpha\in V_m} (X-\alpha)$, which we denote by $c\in\F$. Note that $c=\prod_{\alpha\in V_m/\{0\}}(-\alpha)\neq 0$. Let $M'\in\F^{(k-1)\times (k-1)}$ be the submatrix of $M_{\cS}$ obtained by removing the first column of $M_{\cS}$ and the first row of the block $M_m$.
By the above observation about $M_{\cS}$,
\begin{equation}\label{eq:submatrix}
\det(M_{\cS})=\pm c\cdot \det(M').
\end{equation}

Assume $r_m>1$. Let $\cS'=((V_i, r_i'))_{i\in [m]}$, where $r_i'=r_i$ for $i\in [m-1]$ and $r_m'=r_m-1$.
By \eqref{eq:not-tight} and the fact $\sum_{i=1}^m r'_i=k-1$, we have $\cS'\in V_{k-1,m,s}$. Next, we verify that  $\cS'$ satisfies \eqref{eqn:key-new} for nonempty $\Omega\subseteq [m]$. If $m\not\in\Omega$, this holds since $\cS$ satisfies \eqref{eqn:key-new} and $r_i'=r_i$ for $i\in [m-1]$. On the other hand, if $m\in\Omega$, then
\[
\dim\left(\bigcap_{i\in \Omega} V_i\right)+\sum_{i\in \Omega}r'_i=\dim\left(\bigcap_{i\in \Omega} V_i\right)+\left(\sum_{i\in \Omega}r_i\right)-1\stackrel{\eqref{eqn:key-new}}{\leq} \max_{i\in \Omega}\{\dim(V_i)+r_i\}-1
\stackrel{\eqref{eq:not-tight}}{=}\max_{i\in \Omega}\{\dim(V_i)+r'_i\}.
\]
So again $\cS'$ satisfies \eqref{eqn:key-new}.
By the induction hypothesis, $\det(M_{\cS'})\neq 0$.

The matrix $M_{\cS'}$ consists of the blocks $M'_i$ for $i\in [m]$, placed vertically, each given by
\[
M_i'=\begin{pmatrix}
b_{i,0}&b_{i,1}&\cdots& b_{i,(k-1)-r_i'-1} & b_{i,(k-1)-r_i'}&  0 & \cdots &0   \\
0 &b_{i,0}^q& \cdots  & b_{i,(k-1)-r_i'-2}^q & b_{i,(k-1)-r_i'-1}^q & b_{i,(k-1)-r_i'}^q & \cdots &0 \\
\vdots & \vdots &\ddots & \vdots &\vdots &\vdots& \ddots &0\\
0 &0  & \cdots &b_{i,0}^{q^{r_i'-1}} & b_{i,1}^{q^{r_i'-1}} & b_{i,2}^{q^{r_i'-1}} &\cdots & b_{i,(k-1)-r_i'}^{q^{r_i'-1}} 
\end{pmatrix}\in\mathbb{F}^{r_i'\times (k-1)}
\]
where each $b_{i,j}$ is determined via 
\begin{equation}\label{eq:bij}
\prod_{\alpha\in V_i}(X-\alpha)^{q^{(k-1)-\dim(V_i)-r_i'}}=\sum_{j=0}^{(k-1)-r_i'}b_{i,j} X^{q^j},
\end{equation}
or equivalently
\begin{equation}\label{eq:bij2}
\prod_{\alpha\in V_i}(X-\alpha)^{q^{k-\dim(V_i)-r_i'}}=\sum_{j=1}^{k-r_i'}b_{i,j-1}^q X^{q^j}.
\end{equation}
By \eqref{eq:bij} and \eqref{eq:bij2}, we have $b_{i,j}=a_{i,j}$ if $i=m$ and $b_{i,j}=a_{i,j+1}^{1/q}$ if $i\in [m-1]$. By the definition of $M'$, we see that each entry of $M'$ is exactly the $q$-th power of the corresponding entry of $M_{\cS'}$. As the map $x\mapsto x^q$ is a ring endomorphism of $\F$, this implies $\det(M')=\det(M_{\cS'})^q$.
As $\det(M_{\cS'})$ is nonzero, so is $\det(M')$.
Then $\det(M_{\cS})\neq 0$ by \eqref{eq:submatrix}.

Now assume $r_m=1$. Let $\cS'=((V_i, r_i))_{i\in [m-1]}$. In this case, we have $S'\in V_{k-1,m-1,s}$ by \eqref{eq:not-tight} and the fact that $\sum_{i=1}^{m-1} r_i=k-1$. 
The tuple $\cS'$ satisfies \eqref{eqn:key-new} since $\cS$ does. By the induction hypothesis, we have $\det(M_{\cS'})\neq 0$.
Also, similar to the case where $r_m>1$, each entry of $M'$ is again the $q$-th power of the corresponding entry of $M_{\cS'}$, and this implies $\det(M')=\det(M_{\cS'})^q\neq 0$.\footnote{The only difference is that when $r_m>1$, we replace $(V_m, r_m)$ by $(V_m, r_m-1)$; whereas when $r_m=1$, we remove the pair $(V_m, r_m)=(V_m,1)$ and also decrease $m$ by one.}
Again, we conclude that $\det(M_{\cS})\neq 0$ by \eqref{eq:submatrix}.

\paragraph{Case 3:} Neither Case 1 nor Case 2 holds. In other words, for any nonempty $\Omega\subseteq [m]$ such that $2\leq |\Omega|\leq m-1$, 
\begin{equation}\label{eq:non-tightness-2}
\dim\left(\bigcap_{i\in \Omega}V_i\right)+\sum_{i\in \Omega}r_i \leq \max_{i\in \Omega}\{\dim(V_i)+r_i\}-1
\end{equation}
and there exist distinct $i_1,i_2\in [m]$ such that 
\begin{equation}\label{eq:attain-max}
\dim(V_{i_1})+r_{i_1}=\dim(V_{i_2})+r_{i_2}=\max_{i\in [m]}\{\dim (V_i)+r_i\}. 
\end{equation}
Without loss of generality, assume $i_1=m-1$ and $i_2=m$. First, note that if $V_{m-1}=V_m$, then 
\[
\dim(V_{m-1}\cap V_m)+r_{m-1}+r_m=\dim(V_m)+r_{m-1}+r_m>\max\{\dim(V_{m-1})+r_{m-1},\dim(V_m)+r_m\}
\]
contradicting \eqref{eqn:key-new} with $\Omega=\{m-1,m\}$.
So $V_{m-1}\neq V_m$.

As $(V_i)_{i\in [m]}$ is $s$-admissible, there exist $\F_q$-subspaces $V,W\subseteq \mathrm{span}_{\F_q}\{Z_1,\dots,Z_{n}\}$ and a nonzero $q$-linearized polynomial $f\in (\F_q[W])[X]$ in $X$ such that $\mathrm{span}_{\F_q}\{Z_1,\dots,Z_{n}\}=V\oplus W$, $\dim V\leq s$, and
$V_1,\dots,V_m\subseteq f(V)$.
As $V_{m-1}\neq V_m$, either $V_{m-1}\neq f(V)$ or $V_m\neq f(V)$.
Without loss of generality, assume $V_m\neq f(V)$.
Note that applying an invertible linear transformation to the variables $Z_1,\dots,Z_n$ over $\F_q$ induces a ring isomorphism of $\F_q[Z_1,\dots,Z_n]\ni \det(M_{\cS})$, which preserves the (non)zeroness of $\det(M_{\cS})$. By applying such a linear transformation, we may assume $V=\mathrm{span}_{\F_q}\{Z_1,\dots,Z_d\}$, $f|_V^{-1}(V_m)\subseteq \mathrm{span}_{\F_q}\{Z_1,\dots,Z_{d-1}\}$ and $W=\mathrm{span}_{\F_q}\{Z_{d+1},\dots,Z_n\}$, where $d:=\dim V\leq s$.

Let $\pi:V=\mathrm{span}_{\F_q}\{Z_1,\dots,Z_{d}\}\to \mathrm{span}_{\F_q}\{Z_1,\dots,Z_{d-1}\}$ be the projection sending $u=\sum_{i=1}^d c_i Z_i$ to $u|_{Z_d=0}=\sum_{i=1}^{d-1} c_i Z_i$.
As $f\in(\F_q[W])[X]$ does not depend on $Z_d$, we have 
\begin{equation}\label{eq:restriction}
f(\alpha)|_{Z_d=0}=f(\alpha|_{Z_d=0})=f(\pi(\alpha)) \quad\text{ for } \alpha\in V.   
\end{equation}
For $i\in [m]$, let $\widetilde{V}_i=f|_V^{-1}(V_i)$, $\widetilde{V}_i'=\pi(\widetilde{V}_i)\subseteq \mathrm{span}_{\F_q}\{Z_1,\dots,Z_{d-1}\}$ and $V_i'=f(\widetilde{V}_i')=f(\pi(\widetilde{V}_i))$. 
Note that
\begin{equation}\label{eq:unchanged}
V_m'=f(\pi(\widetilde{V}_m))=f(\widetilde{V}_m)=V_m
\end{equation}
where the second equality holds since $\widetilde{V}_m=f|_V^{-1}(V_m)\subseteq \mathrm{span}_{\F_q}\{Z_1,\dots,Z_{d-1}\}$.
More generally, for $i\in [m]$,
\begin{equation}\label{eq:decrease-by-one}
\dim V_i'=\dim f(\pi(\widetilde{V}_i))=\dim \pi(\widetilde{V}_i)\geq \dim(\widetilde{V}_i)-1=\dim (V_i)-1.
\end{equation}
For $i\in [m]$, we have
\begin{equation}\label{eq:assign}
\begin{aligned}
\prod_{\alpha\in V_i} (X-\alpha|_{Z_d=0})^{q^{k-\dim (V_i)-r_i}}&=\prod_{\alpha\in \widetilde{V}_i} (X-f(\alpha)|_{Z_d=0})^{q^{k-\dim (V_i)-r_i}}\\
&\stackrel{\eqref{eq:restriction}}{=}\prod_{\alpha\in \widetilde{V}_i} (X-f(\pi(\alpha)))^{q^{k-\dim (V_i)-r_i}}\\
&=\prod_{\alpha\in V_i'} (X-\alpha)^{q^{k-\dim (V_i)-r_i+\dim(\widetilde{V}_i)-\dim(V_i')}}\\
&=\prod_{\alpha\in V_i'} (X-\alpha)^{q^{k-\dim(V_i')-r_i}}.
\end{aligned}
\end{equation}

Let $V'=\mathrm{span}_{\F_q}\{Z_1,\dots,Z_{d-1}\}\subseteq V$ and $W'=\mathrm{span}_{\F_q}\{Z_d,\dots,Z_n\}\supseteq W$.
Then $V'_i=f(\pi(\widetilde{V}_i))\subseteq f(V')$ for $i\in [m]$. Note that $f$ is a $q$-linearized polynomial with coefficients in $\F_q[W]\subseteq \F_q[W']$.
Also note that $\dim V'=d-1\leq s-1$.
By definition, $(V_i')_{i\in [m]}$ is $(s-1)$-admissible.
Let $\cS'=(V_i', r_i)_{i\in [m]}$.
As $\dim V_i'\leq \dim \widetilde{V}_i=\dim V_i$ for $i\in [m]$, we have by definition that $\cS'\in V_{k,m,s-1}$.

Next, we verify that $\cS'$ satisfies \eqref{eqn:key-new} for all nonempty $\Omega\subseteq [m]$.
If $|\Omega|=1$, then \eqref{eqn:key-new} holds trivially for $\cS'$.
For $\Omega=[m]$, \eqref{eqn:key-new} also holds for $\cS'$ since
\begin{align*}
\dim\left(\bigcap_{i\in [m]} V'_i\right)+\sum_{i\in[m]} r_i
\leq \dim\left(\bigcap_{i\in [m]} V_i\right)+\sum_{i\in[m]} r_i&\leq \max_{i\in [m]}\{\dim (V_i)+r_i\}\\
&\stackrel{\eqref{eq:attain-max}}{=}\dim (V_m)+r_m\\
&\stackrel{\eqref{eq:unchanged}}{=}\dim(V_m')+r_m\\
&\leq \max_{i\in [m]}\{\dim (V_i')+r_i\},
\end{align*}
where the second inequality holds since $\cS$ satisfies \eqref{eqn:key-new} with $\Omega=[m]$.
Finally, for nonempty $\Omega\subseteq [m]$ with $2\leq |\Omega|\leq m-1$, \eqref{eqn:key-new} holds for $\cS'$ since
\begin{align*}
\dim\left(\bigcap_{i\in \Omega} V'_i\right)+\sum_{i\in\Omega} r_i
\leq \dim\left(\bigcap_{i\in \Omega} V_i\right)+\sum_{i\in\Omega} r_i
&\stackrel{\eqref{eq:non-tightness-2}}{\leq} \max_{i\in \Omega}\{\dim (V_i)+r_i\}-1\\
&\stackrel{\eqref{eq:decrease-by-one}}{\leq} \max_{i\in [m]}\{\dim (V_i')+r_i\}.
\end{align*}
By \eqref{eq:assign} and the definitions \eqref{eq:fi}, \eqref{eq:ms}, and \eqref{eq:mi}, it holds that $\det(M_{\cS'})=\det(M_{\cS})|_{Z_d=0}$.
And by the induction hypothesis, we have $\det(M_{\cS'})\neq 0$.
So $\det(M_{\cS})\neq 0$.
\end{proof}


\section{Equivalence Between $\operatorname{GKP}(\ell)$ and $\operatorname{MRD}(\ell)$}\label{ss5}

In this section, we will establish the equivalence between $\operatorname{GKP}(\ell)$ and $\operatorname{MRD}(\ell)$ over a general field $\F/\F_q$. As a key technical ingredient, we will prove our formula for generic intersection dimension (see \cref{gintersec}), 
generalizing a similar formula established in \cite{brakensiek2023generic}. Our proofs closely follow the arguments of Brakensiek--Gopi--Makam \cite{brakensiek2023generic}.

\subsection{Formula for Generic Intersection Dimension: Proof of \cref{gintersec}}
We first show that a generic linear code attains all generic kernel patterns. Our proof of this fact follows closely that of \cite[Proposition 2.1]{brakensiek2023generic}. 

\begin{prop}\label{prop:GZP}
Let $W$ be the $k\times n$ matrix $(Z_{i,j})_{i\in [k], j\in [n]}$ over $\F:=\F_q(Z_{1,1},\dots,Z_{k,n})$.
Let $C\subseteq \F^n$ be the $[n,k]_\F$ code with $W$ as a generator matrix.
Then $C$ attains any generic kernel pattern. In other words, $C$ is $\mathrm{GKP}(\ell)$ for all $\ell\geq 1$. 
\end{prop}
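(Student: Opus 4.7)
My approach is to combine the GM-MRD theorem with a specialization argument. First, $C$ is MRD by \cref{generic_a}, so it remains to show $C$ attains every GKP $\mathcal{V} = (V_1,\dots,V_k)$. I would begin with a reduction: applying \cref{thm:extend} with $\delta_1 = \dots = \delta_k = 1$ (where the GKP hypothesis on $\mathcal{V}$ is exactly the needed assumption), I obtain enlargements $\tilde V_i \supseteq V_i$ with $\dim \tilde V_i = k-1$ such that $\tilde{\mathcal{V}} := (\tilde V_1,\dots,\tilde V_k)$ is again a GKP. Since $V_i \subseteq \tilde V_i$, any invertible $M$ witnessing that $C$ attains $\tilde{\mathcal{V}}$ also witnesses that $C$ attains $\mathcal{V}$; thus it suffices to treat the case $\dim V_i = k-1$ for every $i$.

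For each $i$, choose $A_i \in \F_q^{n \times (k-1)}$ with $\langle A_i\rangle = V_i$. By \cref{generic_a}, $WA_i$ has rank $k-1$ over $\F$, so its left kernel $L_i := \{\bv \in \F^k : \bv^T W A_i = 0\}$ is one-dimensional. I would take the canonical polynomial generator $\ell_i$ of $L_i$ whose $j$-th entry equals $(-1)^{j-1}$ times the determinant of the $(k-1)\times(k-1)$ submatrix of $WA_i$ with the $j$-th row deleted; a standard Laplace expansion gives $\ell_i^T WA_i = 0$, and $\ell_i$ is nonzero precisely because $WA_i$ has rank $k-1$. Crucially, every entry of $\ell_i$ lies in $\F_q[Z_{1,1},\dots,Z_{k,n}]$. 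Let $M \in \F^{k\times k}$ be the matrix with rows $\ell_1^T,\dots,\ell_k^T$. Then $\vm_i W A_i = 0$ by construction, so $C$ attains $\mathcal{V}$ as soon as $\det M \neq 0$ in $\F$.

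Since $p := \det M$ is a polynomial in $Z_{1,1},\dots,Z_{k,n}$, it suffices to exhibit a single specialization under which $p$ does not vanish. I specialize $Z_{i,j} \mapsto Z_j^{q^{i-1}}$: under this, $W$ becomes the symbolic Gabidulin generator matrix $G$, each $\ell_i$ becomes the analogous adjugate vector $\ell_i^G$ for $GA_i$, and $M$ becomes $M^G$. Because the symbolic Gabidulin code is MRD, \cref{eqmrd} shows that $GA_i$ has rank $k-1$, so $\ell_i^G$ is nonzero and spans $\{\bv : \bv^T G A_i = 0\}$. The GM-MRD theorem (\cref{thm:GM-MRD-v1-new}) then provides an invertible $M' \in \F_q(Z_1,\dots,Z_n)^{k\times k}$ whose $i$-th row lies in this one-dimensional space, hence equals $c_i \ell_i^G$ for some nonzero $c_i \in \F_q(Z_1,\dots,Z_n)$. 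Therefore $\det M' = (c_1 \cdots c_k) \det M^G$, and invertibility of $M'$ forces $\det M^G \neq 0$. This shows $p$ is a nonzero polynomial, so $\det M \neq 0$ in $\F$, completing the proof.

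The main point to verify is that GM-MRD really forces $\det M^G \neq 0$; this rests on the observation that each left kernel $\{\bv : \bv^T G A_i = 0\}$ is one-dimensional, so any invertible witness matrix $M'$ must have its $i$-th row proportional to $\ell_i^G$. The reduction to $\dim V_i = k-1$ via \cref{thm:extend} is what makes this proportionality argument work; without it the left kernels would be larger and one would need a transversal-matroid (Rado-type) argument in place of the direct adjugate construction.
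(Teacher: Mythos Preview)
Your proof is correct and follows essentially the same approach as the paper's own proof. The paper uses \cref{halldual1} rather than \cref{thm:extend} for the enlargement to $\dim V_i = k-1$ (these are equivalent here since $\delta_1=\cdots=\delta_k=1$ reduces \cref{thm:extend} to \cref{halldual1}), and uses Cramer's rule rather than the adjugate construction to produce the polynomial kernel vector $\vm_i$, but the specialization $Z_{i,j}\mapsto Z_j^{q^{i-1}}$ combined with the GM-MRD theorem to certify $\det M\neq 0$ is exactly the paper's argument.
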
 

\begin{proof}
Let $(V_1,\dots,V_k)$ be a generic kernel pattern, i.e., $\dim\left(\bigcap_{i\in \Omega}V_i\right)\leq k-|\Omega|$ for any nonempty $\Omega\subseteq [k]$.  By \cref{halldual1}, we can find $V'_i\subseteq \F_q^n$ of dimension $k-1$ such that $V_i\subseteq V'_i$ and $\dim(\bigcap_{i\in \Omega}V'_i)\leq k-|\Omega|$ for any nonempty $\Omega\subseteq [k]$. By replacing $V_i$ with $V_i'$, we may assume that $\dim V_i=k-1$. 

For $i\in [k]$, choose $A_i\in \F_q^{n\times \dim V_i}$ such that $V_i=\langle A_i\rangle$. 
We want to show that there exists an invertible matrix $M\in \F^{k\times k}$, whose $i$-th row is denoted by $\vm_i$, such that $\vm_i W A_i=0$ for $i\in [k]$. 
Consider arbitrary $i\in [k]$.
Since $\dim V_i=k-1$, we have $\rank(WA_i)=k-1$ by \cref{generic_a}. This implies that the solution space $\vm_i$ is one-dimensional, i.e., $\vm_i$ is uniquely determined up to scaling. 
In fact, we may write each entry of $\vm_i$ as a polynomial in the entries of a nonsingular maximal minor $P$ of $WA_i$ by applying Cramer's rule and clearing the common denominator $\det(P)$. As this holds for all $i\in [k]$, $\det(M)$ may be expressed as a polynomial in $Z_{ij}$. By the GM-MRD theorem (\cref{thm:GM-MRD-v1-new}), $\det(M)$ is nonzero even after assigning $Z_i^{q^{j-1}}$ to $Z_{i,j}$ for $i\in [k]$ and $j\in [n]$. So $\det(M)\neq 0$, i.e., $M$ is invertible.\footnote{One can also prove $\det(M)\neq 0$ directly by adapting the proof of \cref{thm:GM-MRD-v1-new}, avoiding the use of symbolic Gabidulin codes.} 
	\end{proof}

Next, we establish an intersection dimension formula for GKP($\ell$) codes.
Recall that for a $k\times n$ matrix $G$ over a field $\F/\F_q$ and an $n\times \ell$ matrix $A$ over $\F_q$, we denote $G_A:=GA\in \F^{k\times \ell}$ and define $G_V\subseteq \F^k$ to be the column span of $G_A$. 
 
\begin{thm}\label{interction_thm_G}
Let $\F$ be an extension field of $\F_q$. Let $C$ be an $[n, k]_{\F}$ code that is $\operatorname{GKP}(\ell)$ with a generator matrix $G\in\F^{k\times n}$. Let $V_1,\dots,V_\ell$ be subspaces of $\F_q^n$, each of dimension at most $k$. Then
\begin{equation}\label{intersec_fG}
\dim_{\F}\left(\bigcap_{i\in[\ell]}G_{V_i}\right)=\max_{P_1\sqcup P_2\sqcup \cdots\sqcup P_s=[\ell]}\left(\sum_{i\in[s]}\dim_{\F_q}\left(\bigcap_{j\in P_i}V_j\right)-(s-1)k\right).
\end{equation} 
\end{thm}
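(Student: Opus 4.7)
The plan is to prove the two inequalities ($\geq$ and $\leq$) separately. The $\geq$ direction uses only the MRD property (which follows from $\operatorname{GKP}(\ell)$ via \cref{lem:strengthen}), while the $\leq$ direction genuinely exploits the full $\operatorname{GKP}(\ell)$ hypothesis through the combinatorial characterization in \cref{order-l-cha}.

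For the $\geq$ direction, I would fix any partition $P_1\sqcup\cdots\sqcup P_s=[\ell]$ and set $U_i=\bigcap_{j\in P_i}V_j$. Since $U_i\subseteq V_j$ for every $j\in P_i$, we have $G_{U_i}\subseteq G_{V_j}$, and hence $\bigcap_{i\in[s]}G_{U_i}\subseteq\bigcap_{j\in[\ell]}G_{V_j}$. Because $C$ is MRD and $\dim U_i\leq k$, \cref{eqmrd} gives $\dim_{\F}G_{U_i}=\dim_{\F_q}U_i$. Combining with the standard subspace inequality $\dim(W_1\cap\cdots\cap W_s)\geq\sum_i\dim W_i-(s-1)k$ for $W_i\subseteq\F^k$, we obtain $\dim_{\F}\bigcap_j G_{V_j}\geq\sum_i\dim_{\F_q}U_i-(s-1)k$. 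Taking the maximum over partitions yields the $\geq$ direction.

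For the $\leq$ direction, let $d_0$ denote the RHS. By the equivalence of conditions (3) and (1) in \cref{order-l-cha}, there exist $\delta_1,\ldots,\delta_\ell\geq 0$ with $\sum_i\delta_i=k-d_0$ such that the pattern $\mathcal{T}=(T_1,\ldots,T_k)$ consisting of $\delta_j$ copies of $V_j$ (for $j\in[\ell]$) together with $d_0$ copies of $\{0\}$ is a generic kernel pattern; its order is at most $\ell$. Since $C$ is $\operatorname{GKP}(\ell)$, there exists an invertible $M\in\F^{k\times k}$ with rows $\vm_1,\ldots,\vm_k$ satisfying $\vm_i G A_i=0$, where $A_i$ has column span $T_i$. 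Partition $[k]$ as $I_0\sqcup I_1\sqcup\cdots\sqcup I_\ell$, where $I_j$ for $j\geq 1$ collects the $\delta_j$ indices assigned to $V_j$ and $|I_0|=d_0$. For $i\in I_j$ with $j\geq 1$, $\F$-linearity propagates the equation $\vm_i G\bv=0$ for $\bv\in V_j$ to $\vm_i\by=0$ for every $\by\in G_{V_j}$, since $G_{V_j}$ is the $\F$-span of $\{G\bv:\bv\in V_j\}$.

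Consequently, for any $\by\in\bigcap_{j\in[\ell]}G_{V_j}$, we have $(M\by)_i=\vm_i\by=0$ for every $i\notin I_0$, so $M\by$ lies in the $d_0$-dimensional subspace of $\F^k$ of vectors supported on $I_0$; invertibility of $M$ then yields $\dim_{\F}\bigcap_j G_{V_j}\leq d_0$. The main obstacle is the $\leq$ direction: one must first invoke \cref{order-l-cha} to materialize a GKP of order at most $\ell$ whose ``free slots'' precisely match $d_0$, and then exploit $\operatorname{GKP}(\ell)$-attainability of that pattern to convert the combinatorial optimum on the RHS into a genuine dimension cap on the intersection.
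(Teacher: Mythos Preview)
Your proof is correct and follows essentially the same approach as the paper's. Both directions match: for $\geq$ you use the MRD property to get $\dim_\F G_{U_i}=\dim_{\F_q}U_i$ and then the standard intersection bound in $\F^k$ (the paper phrases this via $\left(\sum_i G_{V_{P_i}}^\perp\right)^\perp$, which is the same inequality); for $\leq$ you invoke \cref{order-l-cha} to produce a GKP of order at most $\ell$ with exactly $d_0$ copies of $\{0\}$, apply the $\operatorname{GKP}(\ell)$ hypothesis to get the invertible $M$, and then observe that $M\by$ vanishes outside the $d_0$ ``free'' coordinates for every $\by\in\bigcap_j G_{V_j}$---exactly the paper's argument (the paper writes it as passing to $MG$ and reading off coordinates of $z$, which is the same computation).
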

\begin{proof}
We first prove that LHS $\geq$ RHS. 
For any nonempty $S\subseteq [\ell]$, it holds that $G_{V_S}\subseteq \bigcap_{i\in S}G_{V_i}$, where $V_S:=\bigcap_{i\in S} V_i$. 
Therefore, for any partition $P_1\sqcup P_2\sqcup \cdots\sqcup P_s=[\ell]$, we have
\begin{equation}\label{eq:containment}
\bigcap_{i\in[s]}G_{V_{P_{i}}}\subseteq \bigcap_{i\in[\ell]}G_{V_{i}}.
\end{equation}
Also note that by \cref{lm:dualspace},
\begin{equation}\label{eq:intersection-sum}
\bigcap_{i\in [s]} G_{V_{P_i}}
=\left(\sum_{i\in[s]} G_{V_{P_i}}^{\perp}\right)^{\perp}.
\end{equation}
As $C$ is a $\operatorname{GKP}(\ell)$ code, $C$ is a MRD code. So we have $\dim G_{V_{P_i}}=\dim V_{P_i}$ for $i\in [s]$.
It follows that
\begin{align*}
\dim\left(\bigcap_{i\in[\ell]}G_{V_{i}}\right)
&\stackrel{\eqref{eq:containment}}{\geq}\dim\left(\bigcap_{i\in[s]}G_{V_{P_{i}}}\right)
\stackrel{\eqref{eq:intersection-sum}}{=} k-\dim\left(\sum_{i\in[s]} G_{V_{P_i}}^{\perp}\right)
\geq k-\sum_{i\in[s]}\dim\left(G_{V_{P_i}}^{\perp}\right)\\
&=k-\sum_{i\in[s]}\left(k-\dim V_{P_i}\right)
=\sum_{i\in[s]}\dim\left(\bigcap_{j\in P_i}V_j\right)-(s-1)k.
\end{align*}
So LHS $\geq$ RHS in \eqref{intersec_fG}.

Next, we prove that LHS $\leq$ RHS in \eqref{intersec_fG}. Denote RHS of \eqref{intersec_fG} by $d$, i.e.,
\[
d=\max_{P_1\sqcup P_2\sqcup \cdots\sqcup P_s=[\ell]}\left(\sum_{i\in[s]}\dim_{\F_q}\left(\bigcap_{j\in P_i}V_j\right)-(s-1)k\right).
\]
Then for all partitions $P_1\sqcup P_2\sqcup\cdots\sqcup P_s=[\ell]$,
\[
\sum_{i=1}^s\dim\left(\bigcap_{j\in P_i} V_j\right)\leq (s-1)k+d.
\]
Thus, by \cref{order-l-cha}, there exist integers $\delta_1,\ldots,\delta_\ell\geq 0$ such that $\sum_{i=1}^{\ell}\delta_i=k-d$ and for all nonempty $\Omega\subseteq [\ell]$, 
\[\dim\left(\bigcap_{i\in \Omega}V_i\right)\leq k-\sum_{i\in \Omega}\delta_i.\]
By \cref{prop:equivalent}, we know that the pattern $(T_1,\dots,T_k)$ with $\delta_i$ copies of $V_i$ for $i\in[\ell]$ and $d$ additional copies of $\{0\}$ is a generic kernel pattern. 
Without loss of generality, assume the $d$ copies of $\{0\}$ are $T_{k-d+1},\cdots,T_k$. For $i\in [k]$, choose $A_i\in\F_q^{n\times \dim T_i}$ such that $T_i=\langle A_i\rangle$.

As $C$ is $\operatorname{GKP}(\ell)$, there exists an invertible matrix $M\in \F^{k\times k}$ such that $\vm_i GA_i=0$ for all $i\in[k]$, where $\vm_i$ denotes the $i$-th row of $M$. As $M$ is invertible, we have $\dim_{\F}\left(\bigcap_{i\in[\ell]}(MG)_{V_i}\right)=\dim_{\F}\left(\bigcap_{i\in[\ell]}G_{V_i}\right)$.
Thus, to prove that LHS $\leq$ RHS in \eqref{intersec_fG}, i.e., $\dim_{\F}\left(\bigcap_{i\in[\ell]}G_{V_i}\right)\leq d$, it suffices to show that for any $z\in\bigcap_{i\in[\ell]}(MG)_{V_i}$, only the last $d$ coordinates of $z$ could be nonzero. 

Let $z=(z_1,\dots,z_k)\in\bigcap_{i\in[\ell]}(MG)_{V_i}$. Consider arbitrary $i\in [k-d]$. Then $T_i$ is a copy of $V_j$ for some $j\in [\ell]$. As $z\in (MG)_{V_j}=(MG)_{T_i}$, we have $z_i=\vm_i GA_i\bu$ for some $\bu\in \F^{\dim V_i}$.
But $\vm_i GA_i=0$. So $z_i=0$. This proves the claim that only the last $d$ coordinates of $z$ could be nonzero, thereby completing the proof. 
\end{proof}


Combining \cref{prop:GZP} and \cref{interction_thm_G} yields our formula for generic intersection dimension.

\begin{cor}[\cref{gintersec}, restated]
Let $W$ be the $k\times n$ matrix $(Z_{i,j})_{i\in [k], j\in [n]}$ over $\F:=\F_q(Z_{1,1},\dots,Z_{k,n})$.
Let $V_1,\dots,V_\ell$ be subspaces of $\F_q^n$, each of dimension at most $k$. Then
\begin{equation}\label{intersec1}
\dim_{\mathbb{F}}\left(\bigcap_{i\in[\ell]}W_{V_i}\right)=\max_{P_1\sqcup P_2\sqcup \cdots\sqcup P_s=[\ell]}\left(\sum_{i\in[s]}\dim_{\F_q}\left(\bigcap_{j\in P_i}V_j\right)-(s-1)k\right).
\end{equation} 
\end{cor}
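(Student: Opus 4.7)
The plan is to observe that this corollary is an immediate consequence of combining the two preceding results in the subsection, with essentially no extra work required.

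First, I would let $C \subseteq \F^n$ denote the $[n,k]_\F$ code having the symbolic matrix $W = (Z_{i,j})_{i\in[k], j\in[n]}$ as its generator matrix, where $\F = \F_q(Z_{1,1},\dots,Z_{k,n})$. By \cref{prop:GZP}, this code $C$ is $\operatorname{GKP}(\ell)$ for every $\ell\geq 1$; that proposition is the nontrivial input, relying on the GM-MRD theorem (\cref{thm:GM-MRD-v1-new}) via a Cramer's rule style argument to produce the required invertible transformation matrix $M$ witnessing the attainment of an arbitrary generic kernel pattern.

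Next, given the subspaces $V_1,\dots,V_\ell \subseteq \F_q^n$ each of dimension at most $k$, I would directly invoke \cref{interction_thm_G} applied to $C$ with generator matrix $G = W$. Since \cref{interction_thm_G} produces precisely the formula
\[
\dim_{\F}\left(\bigcap_{i\in[\ell]}G_{V_i}\right)=\max_{P_1\sqcup \cdots\sqcup P_s=[\ell]}\left(\sum_{i\in[s]}\dim_{\F_q}\left(\bigcap_{j\in P_i}V_j\right)-(s-1)k\right)
\]
for any $\operatorname{GKP}(\ell)$ code, substituting $G = W$ yields exactly \eqref{intersec1}.

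There is no real obstacle here: the heavy lifting has already been done in \cref{prop:GZP} (which rests on the GM-MRD theorem) and in \cref{interction_thm_G} (whose lower bound uses duality via \cref{lm:dualspace} and whose upper bound uses \cref{order-l-cha} together with the definition of $\operatorname{GKP}(\ell)$). The only thing to emphasize in the write-up is that $W$ qualifies as a generator matrix of a legitimate $[n,k]_\F$ linear code over the function field $\F$, so that both earlier results apply with $G = W$.
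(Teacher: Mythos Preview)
Your proposal is correct and matches the paper's approach exactly: the paper states that the corollary follows by ``Combining \cref{prop:GZP} and \cref{interction_thm_G},'' which is precisely your argument of applying \cref{interction_thm_G} to the generic code with generator matrix $W$, after noting via \cref{prop:GZP} that this code is $\operatorname{GKP}(\ell)$ for all $\ell$.
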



\subsection{Proof of \cref{1equiv}}

We are now ready to prove \cref{1equiv}, which establishes an equivalence between $\mathrm{GKP}(\ell)$ and $\mathrm{MRD}(\ell)$. The direction from $\mathrm{GKP}(\ell)$ to $\mathrm{MRD}(\ell)$ follows easily from our intersection dimension formula for  $\mathrm{GKP}(\ell)$ codes.

\begin{thm}\label{thm:gkp2mrd}
Let $C$ be an $[n,k]_{\F}$ code over a field $\F/\F_q$.
If $C$ is $\mathrm{GKP}(\ell)$, then it is $\mathrm{MRD}(\ell)$.
\end{thm}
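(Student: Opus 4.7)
The plan is to derive this implication as an immediate corollary of the two intersection dimension formulas established in this section, namely \cref{interction_thm_G} and \cref{gintersec}. Both formulas have exactly the same right-hand side, one expressed for a $\mathrm{GKP}(\ell)$ code with generator matrix $G$, and the other for the symbolic generic matrix $W$.

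Concretely, I would argue as follows. Fix $\F_q$-linear subspaces $V_1,\dots,V_\ell \subseteq \F_q^n$ of dimension at most $k$. Since $C$ is $\mathrm{GKP}(\ell)$, \cref{interction_thm_G} applies and gives
\[
\dim_{\F}\left(\bigcap_{i\in[\ell]}G_{V_i}\right)=\max_{P_1\sqcup \cdots\sqcup P_s=[\ell]}\left(\sum_{i\in[s]}\dim_{\F_q}\left(\bigcap_{j\in P_i}V_j\right)-(s-1)k\right).
\]
On the other hand, by \cref{gintersec} applied to the same tuple of subspaces,
\[
\dim_{\K}\left(\bigcap_{i\in[\ell]}W_{V_i}\right)=\max_{P_1\sqcup \cdots\sqcup P_s=[\ell]}\left(\sum_{i\in[s]}\dim_{\F_q}\left(\bigcap_{j\in P_i}V_j\right)-(s-1)k\right),
\]
where $\K=\F_q(Z_{1,1},\dots,Z_{k,n})$. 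Comparing the two right-hand sides yields equality of the two intersection dimensions, which is precisely the defining condition of $\mathrm{MRD}(\ell)$ from \eqref{eq:mrd-ell-condition}. Noting that $\mathrm{GKP}(\ell)$ codes are MRD (\cref{lem:strengthen}) ensures the base hypothesis needed in \cref{interction_thm_G} holds.

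There is essentially no obstacle to overcome at this point: the entire content of the theorem has been absorbed into the two intersection formulas. The only thing I would want to double-check is that both formulas apply to the \emph{same} universe of subspaces (arbitrary $\F_q$-subspaces of $\F_q^n$ of dimension at most $k$), which they do, so the comparison is direct. Thus the proof is a one-line combination of \cref{interction_thm_G} and \cref{gintersec}.
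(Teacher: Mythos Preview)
Your proposal is correct and essentially identical to the paper's proof: the paper applies \cref{interction_thm_G} to both $G$ and $W$ (using \cref{prop:GZP} to know the generic code is $\mathrm{GKP}(\ell)$), whereas you invoke \cref{interction_thm_G} for $G$ and the already-packaged \cref{gintersec} for $W$, which amounts to the same comparison of right-hand sides. The appeal to \cref{lem:strengthen} is harmless but unnecessary, since the hypothesis of \cref{interction_thm_G} is that $C$ is $\mathrm{GKP}(\ell)$, not merely MRD.
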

\begin{proof}
By \cref{prop:GZP}, the ``generic'' linear code over $\F_q(Z_{1,1},\dots,Z_{k,n})$ defined by the generator matrix $W=(Z_{i,j})_{i\in [k], j\in [n]}$ is also $\mathrm{GKP}(\ell)$.
Let $G\in \F^{k\times n}$ be a generator matrix of $C$.
By \cref{interction_thm_G}, for any subspaces $V_1,\dots,V_\ell\subseteq \F_q^n$ of dimension at most $k$, we have $\dim(\bigcap_{i\in [\ell]} G_{V_i})=\dim(\bigcap_{i\in [\ell]} W_{V_i})$, i.e., $C$ is $\mathrm{MRD}(\ell)$.
\end{proof}

Next, we prove the implication from $\mathrm{MRD}(\ell)$ to $\mathrm{GKP}(\ell)$.

\begin{thm}\label{mrd-gkp}
Let $\F$ be an extension field of $\F_q$. Let $C$ be an $[n, k]_{\F}$ code that is $\operatorname{MRD}(\ell)$.
Let $\mathcal{T}=(T_1,\dots,T_k)$ be a  generic kernel pattern of order at most $\ell$. Then $C$ attains $\mathcal{T}$. 
\end{thm}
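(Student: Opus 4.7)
The plan is to deduce attainability of $\mathcal{T}$ from the $\operatorname{MRD}(\ell)$ property using two ingredients established earlier: the extension lemma \cref{l-gkp-max} and the generic intersection formula \cref{gintersec}. Up to permutation, I would write $\mathcal{T}$ as $\delta_j$ copies of distinct nonzero subspaces $V_j$ for $j \in [\ell']$ (with $\ell' \leq \ell$), together with $d := k - \sum_{j=1}^{\ell'} \delta_j$ copies of $\{0\}$. By \cref{l-gkp-max}, each $V_j$ can be enlarged to $V_j' \supseteq V_j$ of dimension exactly $k - \delta_j$ so that the enlarged pattern $\mathcal{T}'$ is still a GKP. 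Since $V_j \subseteq V_j'$ implies $(G_{V_j'})^{\perp} \subseteq (G_{V_j})^{\perp}$, attaining $\mathcal{T}'$ implies attaining $\mathcal{T}$; it therefore suffices to attain $\mathcal{T}'$.

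For each $j \in [\ell']$, set $N_j := (G_{V_j'})^{\perp} \subseteq \F^k$. Since $C$ is $\operatorname{MRD}$ (as $\operatorname{MRD}(\ell)$ implies $\operatorname{MRD}$), we have $\dim G_{V_j'} = \dim V_j' = k - \delta_j$, and hence $\dim N_j = \delta_j$. The key step is to show that $\sum_{j} N_j$ is a direct sum, i.e., $\dim \sum_j N_j = \sum_j \delta_j$, which by \cref{lm:dualspace} is equivalent to $\dim \bigcap_{j \in [\ell']} G_{V_j'} = d$. I would establish this by invoking $\operatorname{MRD}(\ell)$ to replace $G$ by the symbolic matrix $W$ (valid since $\ell' \leq \ell$), and then applying \cref{gintersec} to obtain
\[
\dim \bigcap_{j} W_{V_j'} \;=\; \max_{P_1 \sqcup \cdots \sqcup P_s = [\ell']} \Bigl( \sum_{i=1}^{s} \dim \bigcap_{j \in P_i} V_j' - (s-1)k \Bigr).
\]
The partition into singletons contributes exactly $\sum_j (k - \delta_j) - (\ell'-1)k = d$. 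For any other partition, the GKP property of $\mathcal{T}'$ gives $\dim \bigcap_{j \in P_i} V_j' \leq k - \sum_{j \in P_i} \delta_j$, and summing shows the contribution is at most $sk - \sum_j \delta_j - (s-1)k = d$. Hence the maximum equals $d$. I expect this partition-bookkeeping step to be the main obstacle, since it is the one place where the GKP structure of $\mathcal{T}'$ interacts nontrivially with the generic intersection formula.

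Finally, having established the direct-sum property, I would construct $M$ explicitly: choose a basis of each $N_j$, assign its $\delta_j$ vectors to the rows of $M$ indexed by the $V_j$-slots of $\mathcal{T}$, and fill the $d$ rows indexed by the $\{0\}$-slots with any vectors extending $\bigoplus_j N_j$ to a basis of $\F^k$. Then $M$ is invertible by construction, and for each $i \in [k]$: if $T_i = V_j$ then $\vm_i \in N_j \subseteq (G_{V_j})^{\perp}$, giving $\vm_i G A_i = 0$; if $T_i = \{0\}$ the constraint is vacuous. Thus $C$ attains $\mathcal{T}$.
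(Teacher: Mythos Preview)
Your proposal is correct and follows essentially the same approach as the paper: pad the subspaces via \cref{l-gkp-max}, use $\operatorname{MRD}(\ell)$ together with \cref{gintersec} to compute $\dim\bigcap_j G_{V_j'}=d$ (the paper invokes \cref{order-l-cha} for the partition bounds, which is equivalent to your direct use of the GKP inequalities), deduce that $\sum_j (G_{V_j'})^\perp$ is a direct sum of dimension $k-d$, and then build $M$ from bases of these complements together with an extension to a full basis. The only cosmetic difference is that the paper absorbs the reduction ``attaining $\mathcal{T}'$ implies attaining $\mathcal{T}$'' into the phrase ``by padding the subspaces if necessary, we may assume\ldots'', whereas you state it explicitly.
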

\begin{proof}
By \cref{l-gkp-max}, by padding the subspaces if necessary, we may assume that there exist subspaces $V_1,\dots,V_\ell\subseteq\F_q^n$ and integers $\delta_1,\dots,\delta_\ell\geq 0$ 
such that $\dim V_i=k-\delta_i$ for all $i\in[\ell]$ and $\mathcal{T}$ consists of $\delta_i$ copies of $V_i$ and $d:=k-\sum_{i\in[\ell]}\delta_i$ additional copies of $\{0\}$.

Since $\mathcal{T}$ is a generic kernel pattern of order at most $\ell$, by \cref{order-l-cha}, for all partitions $P_1\sqcup P_2\sqcup\dots\sqcup P_s=[\ell]$,
\begin{equation}\label{eq:intersect-bound-1}
\sum_{i\in [s]}\dim\left(\bigcap_{j\in P_i}V_{j}\right)\leq (s-1)k+d.
\end{equation}
For the finest partition $\{1\}\sqcup\{2\}\sqcup\cdots\sqcup\{\ell\},$ the above inequality is indeed an equality because $\dim(V_i)=k-\delta_i$ and thus 
\begin{equation}\label{eq:intersect-bound-2}
\sum_{i=1}^{\ell} \dim(V_i)=sk-\sum_{i=1}^{\ell}\delta_i=(s-1)k+d.
\end{equation}

Let $G\in\F^{k\times n}$ be a generator matrix of $C$.
Then we have
\[
\dim\left(\bigcap_{i=1}^\ell G_{V_{i}}\right)=\dim\left(\bigcap_{i=1}^\ell W_{V_{i}}\right)=\sum_{i=1}^{\ell} \dim(V_i)-(s-1)k=d,
\]
where the first equality holds since $C$ is $\mathrm{MRD}(\ell)$ and the other equalities follow from \cref{gintersec}, \eqref{eq:intersect-bound-1}, and \eqref{eq:intersect-bound-2}.
So by \cref{lm:dualspace},
\begin{equation}\label{eq:dim-of-sum}
\dim \left(\sum_{i=1}^\ell G_{V_i}^{\perp}\right)=k-\dim\left(\bigcap_{i=1}^\ell G_{V_{i}}\right)=k-d.
\end{equation}
On the other hand, as $C$ is MRD, we have $\dim G_{V_i}^{\perp}=k-\dim G_{V_i}=k-\dim V_i=\delta_i$ for $i\in [\ell]$. Therefore
\[
k-d\stackrel{\eqref{eq:dim-of-sum}}{=}\dim \left(\sum_{i=1}^\ell G_{V_i}^{\perp}\right)\leq \sum_{i=1}^\ell \dim\left(G_{V_i}^\perp\right)= \sum_{i=1}^{\ell}\delta_i=k-d
\]
which implies that $\dim \left(\sum_{i=1}^\ell G_{V_i}^{\perp}\right)=\sum_{i=1}^\ell \dim\left(G_{V_i}^\perp\right)$. Therefore,, the sum $\sum_{i=1}^\ell G_{V_i}^{\perp}$ is a direct sum, i.e., $\sum_{i=1}^\ell G_{V_i}^{\perp}=\bigoplus_{i=1}^\ell G_{V_i}^{\perp}$.
For $i\in [\ell]$, let $B_i=\left\{\bv_1^{(i)},\dots,\bv_{\delta_i}^{(i)}\right\}$ be a basis of $G_{V_i}^{\perp}$.
Then $B:=B_1\sqcup \dots\sqcup B_\ell$ is a basis of $\sum_{i=1}^\ell G_{V_i}^{\perp}$. Extending $B$ to a basis $B'=B\sqcup\{\bu_1,\dots,\bu_d\}$ of $\F^k$.

To show that $C$ attains $\mathcal{T}$, by permuting the subspaces $T_i$, we may assume that
\[
\mathcal{T}=\left(V_1^{(1)},\dots,V_1^{(\delta_1)},\dots,V_{\ell}^{(1)},\dots,V_{\ell}^{(\delta_{\ell})}, \{0\},\dots\{0\}\right),
\]
where $V_i^{(j)}$ denotes the $j$-th copy of $V_i$.
Choose $M \in \F^{k\times k}$ such that the first $k-d$ columns of $M^T$ are $\bv^{(i)}_j$ for $i\in [\ell]$ and $j\in [\delta_i]$ (given in lexicographic order of $(i,j)$), and the last $d$ columns of $M^T$ are $\bu_1,\dots,\bu_d$.
Then $M$ is invertible since the columns of $M^T$ form a basis of $\F^k$.
By the construction of $M$, we have $\vm_i GA_i=0$ for all $i\in [k]$, where $\vm_i$ denotes the $i$-th row of $M$ and $A_i\in \F_q^{n\times \dim T_i}$ satisfies $T_i=\langle A_i\rangle$. 
Therefore, $C$ attains the kernel pattern $\mathcal{T}$.
\end{proof}


\begin{cor}\label{cor:mrd2gkp}
Let $C$ be an $[n,k]_{\F}$ code over a field $\F/\F_q$.
If $C$ is $\mathrm{MRD}(\ell)$, then it is $\mathrm{GKP}(\ell)$.
\end{cor}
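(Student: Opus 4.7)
The plan is to observe that $\mathrm{GKP}(\ell)$ requires two things of $C$: (i) that $C$ is an MRD code, and (ii) that $C$ attains every GKP of order at most $\ell$. So it suffices to verify each condition separately from the hypothesis that $C$ is $\mathrm{MRD}(\ell)$.

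For (ii), I would simply invoke \cref{mrd-gkp}: given any generic kernel pattern $\mathcal{T}=(T_1,\dots,T_k)$ of order at most $\ell$, the theorem we just proved produces an invertible $M\in\F^{k\times k}$ such that $\vm_i G A_i = 0$ for all $i\in[k]$, i.e., $C$ attains $\mathcal{T}$. No new work is needed here.

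For (i), I would use the monotonicity of the $\mathrm{MRD}(\ell)$ property observed right after the definition: $\mathrm{MRD}(\ell)$ codes are also $\mathrm{MRD}(\ell')$ for every $\ell'\le \ell$, and in particular $\mathrm{MRD}(1)$. But $\mathrm{MRD}(1)$ codes are just MRD codes by \cref{eqmrd} combined with \cref{generic_a} (the symbolic matrix $W$ already has the MRD property, so condition \eqref{eq:mrd-ell-condition} with $\ell=1$ reduces to the full-rank condition on $GA$ characterizing MRD codes). Hence $C$ is MRD.

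Combining the two verifications, $C$ is $\mathrm{GKP}(\ell)$, as desired. There is no real obstacle here; this corollary is an immediate bookkeeping consequence of \cref{mrd-gkp} together with the fact that $\mathrm{MRD}(\ell)$ strengthens $\mathrm{MRD}$. The substantive work was already carried out in \cref{mrd-gkp}, whose proof relied on the generic intersection dimension formula (\cref{gintersec}) and the characterization of GKPs of order at most $\ell$ (\cref{order-l-cha}).
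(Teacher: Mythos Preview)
Your proposal is correct and matches the paper's intended argument: the corollary is immediate from \cref{mrd-gkp} (which handles condition (ii)) together with the fact, recorded in \cref{lem:strengthen} (or equivalently via $\mathrm{MRD}(\ell)\Rightarrow\mathrm{MRD}(1)=\mathrm{MRD}$ using \cref{eqmrd} and \cref{generic_a}), that $\mathrm{MRD}(\ell)$ codes are MRD (condition (i)). The paper does not spell out a separate proof precisely because, as you note, it is pure bookkeeping once \cref{mrd-gkp} is available.
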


Combining \cref{thm:gkp2mrd} and \cref{cor:mrd2gkp} proves \cref{1equiv}.
\section{Equivalence Between $\operatorname{MRD}(\ell)$ and $\ldmrd{\ell-1}$ up to Duality}

In this section, we will show that a linear code $C$ is $\operatorname{MRD}(\ell)$ if and only if its dual code is $\ldmrd{\ell-1}$. Our proofs resemble the arguments of Brakensiek--Gopi--Makam \cite{brakensiek2023generic}. 

\subsection{An Alternative Characterization of $\operatorname{MRD}(\ell)$ Over $\F/\F_q$}

We start by providing an alternative characterization of $\operatorname{MRD}(\ell)$ codes. First, we need the following lemma.

\begin{lemma}\label{fgva}
Let $\F$ be an extension field of $\F_q$ and let $G\in\F^{k\times n}$. For $i=1,\dots,\ell$, let $V_i$ be a subspace of $\F_q^n$ and let $A_i\in\F_q^{n\times \dim V_i}$ such that $V_i=\langle A_i\rangle$.
Then
\begin{equation}\label{eq:null-intersection}
\dim\left(\bigcap_{i\in[\ell]}G_{V_i}\right)=\sum_{i\in[\ell]}\dim  G_{V_i}-\rank\left(G_{\{A_i\}_{i\in[\ell]}}\right),    
\end{equation}
where we define the matrix $G_{\{A_i\}_{i\in[\ell]}}:=
\begin{pmatrix}
GA_1 & GA_2 &  &  &  \\GA_1 &  & GA_3 &  & \\
\vdots &  &  & \ddots &  \\GA_1 &  &  &  & GA_\ell
\end{pmatrix}$.
\end{lemma}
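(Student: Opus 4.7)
The plan is to interpret the block matrix $G_{\{A_i\}_{i\in[\ell]}}$ as the matrix of a linear map $\phi$, and then to relate the kernel of $\phi$ to the intersection $\bigcap_{i\in[\ell]} G_{V_i}$ via rank-nullity. Concretely, I would regard $G_{\{A_i\}_{i\in[\ell]}}$ as the matrix of the $\F$-linear map
\[
\phi:\F^{\dim V_1}\oplus\cdots\oplus\F^{\dim V_\ell}\to \F^{(\ell-1)k}
\]
whose $j$-th block of coordinates $(j=1,\dots,\ell-1)$ sends $(u_1,\dots,u_\ell)$ to $GA_1 u_1+GA_{j+1} u_{j+1}$. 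Then $\rank(G_{\{A_i\}_{i\in[\ell]}})=\sum_{i\in[\ell]}\dim V_i-\dim\ker\phi$, so the lemma reduces to the identity
\[
\dim\ker\phi=\sum_{i\in[\ell]}\dim V_i-\sum_{i\in[\ell]}\dim G_{V_i}+\dim\Bigl(\bigcap_{i\in[\ell]}G_{V_i}\Bigr).
\]

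The key step is to analyze $\ker\phi$. A tuple $(u_1,\dots,u_\ell)$ lies in $\ker\phi$ precisely when $GA_1 u_1=-GA_2 u_2=\cdots=-GA_\ell u_\ell$. Denote this common vector by $v\in\F^k$. Since $v=GA_1u_1\in G_{V_1}$ and $v=-GA_i u_i\in G_{V_i}$ for $i\ge 2$, the assignment $(u_1,\dots,u_\ell)\mapsto v$ defines an $\F$-linear map $\psi:\ker\phi\to\bigcap_{i\in[\ell]}G_{V_i}$. I would then verify that $\psi$ is surjective: given $v\in\bigcap_{i\in[\ell]}G_{V_i}$, by definition of $G_{V_i}=\langle GA_i\rangle_{\F}$ one can choose $u_1\in\F^{\dim V_1}$ with $GA_1u_1=v$ and, for each $i\ge 2$, $u_i\in\F^{\dim V_i}$ with $GA_iu_i=-v$, yielding a preimage. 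The kernel of $\psi$ consists of those $(u_1,\dots,u_\ell)$ with $GA_i u_i=0$ for all $i$, so $\dim\ker\psi=\sum_{i=1}^\ell\dim\ker(GA_i)=\sum_{i=1}^\ell(\dim V_i-\dim G_{V_i})$.

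Combining the rank-nullity identities for $\psi$ and for $\phi$, and then substituting into $\rank(G_{\{A_i\}_{i\in[\ell]}})=\sum_{i\in[\ell]}\dim V_i-\dim\ker\phi$, yields the claimed formula. I do not expect a genuine obstacle here: the only subtle point is parsing the block structure of $G_{\{A_i\}_{i\in[\ell]}}$ correctly (so that the $GA_1$ blocks in the first block-column are consistently glued together and force the vector $v$ above to be the same in every block row), but once the map $\phi$ is written down the argument is a bookkeeping exercise with rank-nullity.
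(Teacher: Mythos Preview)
Your proof is correct and follows essentially the same approach as the paper: both identify the kernel of the block matrix $G_{\{A_i\}_{i\in[\ell]}}$ (viewed as a linear map) with tuples $(u_1,\dots,u_\ell)$ satisfying $GA_1u_1=-GA_2u_2=\cdots=-GA_\ell u_\ell$, and then relate this kernel to $\bigcap_i G_{V_i}$ via the map $(u_1,\dots,u_\ell)\mapsto GA_1u_1$. The only minor difference is that the paper first deletes redundant columns of the $A_i$ to reduce to the case $\dim G_{V_i}=\dim V_i$ (making the map an isomorphism), whereas you work directly in the general case and account for the discrepancy via $\dim\ker\psi=\sum_i(\dim V_i-\dim G_{V_i})$; your version is slightly more streamlined.
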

\begin{proof}
If $\dim G_{V_i}<\dim V_i$ for some $i\in [\ell]$, then some columns of $GA_i$ are linear combinations of other columns. In this case, we may delete the corresponding columns from $A_i$ and update $V_i$ correspondingly without affecting the two sides of \eqref{eq:null-intersection}.
By repeatedly performing the deletions, we may assume that $\dim G_{V_i}=\dim V_i$ for all $i\in [\ell]$. 
In particular, RHS of \eqref{eq:null-intersection} equals the dimension of  
\[
U:=\left\{\bu\in\F^{\sum_{i\in[\ell]}\dim V_i}\text{ such that } G_{\{A_i\}_{i\in[\ell]}}\cdot \bu=0\right\}
\]
since the number of linearly independent constraints in the linear system $G_{\{A_i\}_{i\in[\ell]}}\cdot \bu=0$ is exactly $\rank\left(G_{\{A_i\}_{i\in[\ell]}}\right)$.

It remains to find an isomorphism $\sigma: U\to \bigcap_{i\in[\ell]}G_{V_i}$ between the two $\F$-linear spaces.
Let $\sigma$ send $\bu=(\bu_1,\bu_2,\dots,\bu_\ell)\in U$ with $\bu_i\in \F^{\dim V_i}$ to $-GA_1\cdot \bu_1$. 
Note that the definition of $U$ and that of  $G_{\{A_i\}_{i\in[\ell]}}$ imply that for $\bu=(\bu_1,\bu_2,\dots,\bu_\ell)\in U$,
\begin{equation}\label{1z1}
-GA_1\cdot\bu_1=GA_2\cdot\bu_2=\cdots=GA_\ell\cdot\bu_\ell.
\end{equation}
It follows that the image of $U$ under $\sigma$ is indeed contained in $\bigcap_{i\in[\ell]}G_{V_i}$.
Moreover, for any $\by\in \bigcap_{i\in[\ell]}G_{V_i}$, by the fact that $\dim G_{V_i}=\dim V_i$ for $i\in [\ell]$, there exists unique $(\bu_1,\dots,\bu_\ell)$ with $\bu_i\in\F^{\dim V_i}$ such that $\by=-GA_1\cdot\bu_1=GA_2\cdot\bu_2=\cdots=GA_\ell\cdot\bu_\ell$.
The map $\by\mapsto (\bu_1,\dots,\bu_\ell)$ is then the inverse of $\sigma$.
\end{proof} 

\begin{cor}\label{cor:semicontinuity}
Let $\F$ be an extension field of $\F_q$.
Let $C$ be an $[n,k]_{\F}$ code that is $\operatorname{MRD}$, and let $G\in\F^{k\times n}$ be a generator matrix of $C$.
Let $V_1,V_2,\dots,V_\ell$ be subspaces of $\F_q^n$, each of dimension at most $k$.
Then $\dim\left(\bigcap_{i=1}^{\ell}G_{V_i}\right)\geq \dim \left(\bigcap_{i=1}^{\ell}W_{V_i}\right)$, where $W=(Z_{i,j})_{i\in [k], j\in [n]}\in\F(Z_{1,1},\dots,Z_{k,n})^{k\times n}$.
\end{cor}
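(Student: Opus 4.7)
The plan is to apply the identity from \cref{fgva} to both $G$ and the symbolic matrix $W$, reducing the claim to a rank comparison between $G_{\{A_i\}_{i\in[\ell]}}$ and $W_{\{A_i\}_{i\in[\ell]}}$, which will follow from the standard semicontinuity of rank under specialization.

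First, for each $i\in[\ell]$, I would fix a full-rank matrix $A_i\in\F_q^{n\times \dim V_i}$ with $\langle A_i\rangle=V_i$. Since $C$ is $\operatorname{MRD}$ and $\dim V_i\leq k$, \cref{eqmrd} implies that $GA_i$ has full column rank, so $\dim G_{V_i}=\dim V_i$. On the other hand, \cref{generic_a} (applied over the function field $\K=\F_q(Z_{1,1},\dots,Z_{k,n})$) gives $\dim W_{V_i}=\dim V_i$ as well. Applying \cref{fgva} to both $G$ (over $\F$) and $W$ (over $\K$) therefore yields
\[
\dim\left(\bigcap_{i\in[\ell]}G_{V_i}\right)=\sum_{i\in[\ell]}\dim V_i-\rank\left(G_{\{A_i\}_{i\in[\ell]}}\right)
\]
and
\[
\dim\left(\bigcap_{i\in[\ell]}W_{V_i}\right)=\sum_{i\in[\ell]}\dim V_i-\rank\left(W_{\{A_i\}_{i\in[\ell]}}\right).
\]
Subtracting, the desired inequality reduces to $\rank\left(G_{\{A_i\}_{i\in[\ell]}}\right)\leq \rank\left(W_{\{A_i\}_{i\in[\ell]}}\right)$.

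To establish this rank inequality, I would observe that each $A_i$ has entries in $\F_q$, so the entries of $W_{\{A_i\}_{i\in[\ell]}}$ are $\F_q$-linear forms in the indeterminates $Z_{i,j}$, and therefore polynomials in $\F_q[Z_{1,1},\dots,Z_{k,n}]$. Under the $\F_q$-algebra homomorphism $\F_q[Z_{1,1},\dots,Z_{k,n}]\to\F$ sending $Z_{i,j}$ to the $(i,j)$-entry of $G$, the matrix $W_{\{A_i\}_{i\in[\ell]}}$ specializes to $G_{\{A_i\}_{i\in[\ell]}}$. Each $r\times r$ minor of $W_{\{A_i\}_{i\in[\ell]}}$ is a polynomial in the $Z_{i,j}$'s that specializes to the corresponding minor of $G_{\{A_i\}_{i\in[\ell]}}$; hence if all $r\times r$ minors of $W_{\{A_i\}_{i\in[\ell]}}$ vanish identically, so do all $r\times r$ minors of $G_{\{A_i\}_{i\in[\ell]}}$. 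Contrapositively, $\rank\left(G_{\{A_i\}_{i\in[\ell]}}\right)\leq \rank\left(W_{\{A_i\}_{i\in[\ell]}}\right)$, completing the proof.

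There is no serious obstacle here: the result is a clean combination of \cref{fgva}, the MRD characterization \cref{eqmrd}, the genericity \cref{generic_a}, and rank semicontinuity under specialization. The only mild subtlety is making sure the dimension hypothesis $\dim V_i\leq k$ is used precisely to turn \cref{eqmrd} into the equality $\dim G_{V_i}=\dim V_i$, so that the two instances of \cref{fgva} have matching first terms and the inequality reduces purely to the rank comparison.
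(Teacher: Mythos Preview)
Your proposal is correct and follows essentially the same approach as the paper: both apply \cref{fgva} to $G$ and $W$, use the MRD property (via \cref{eqmrd}) and \cref{generic_a} to equate $\dim G_{V_i}=\dim W_{V_i}=\dim V_i$, and then reduce to the rank inequality $\rank\left(G_{\{A_i\}_{i\in[\ell]}}\right)\leq \rank\left(W_{\{A_i\}_{i\in[\ell]}}\right)$ via specialization of minors. Your write-up is in fact slightly more careful than the paper's, which contains a typo (it writes $<$ where $\leq$ is meant).
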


\begin{proof}
For $i\in [\ell]$, choose $A_i\in\F_q^{n\times \dim V_i}$ such that $V_i=\langle A_i\rangle$.
As $C$ is MRD, we have $\dim G_{V_i}=\dim V_i$ for $i\in [\ell]$. By \cref{generic_a}, we also have $\dim W_{V_i}=\dim V_i$ for $i\in [\ell]$. Then by \cref{fgva}, it suffices to prove that $\operatorname{rank}\left(G_{\{A_i\}_{i\in[\ell]}}\right) < \operatorname{rank}\left(W_{\{A_i\}_{i\in[\ell]}}\right)$. This holds since $G_{\{A_i\}_{i\in[\ell]}}$ can be obtained from $W_{\{A_i\}_{i\in[\ell]}}$ by assigning the $(i,j)$-th entry of $G$ to $Z_{i,j}$ for $(i,j)\in [k]\times [n]$. Such an assignment does not increase the rank since if the determinant of a submatrix is nonzero after the assignment, then it must have been nonzero before the assignment.
\end{proof}


The next lemma presents an alternative characterization of $\operatorname{MRD}(\ell)$ codes, which appears weaker but is, in fact, equivalent.

\begin{lemma}\label{null_inter_prop}
Let $\F$ be an extension field of $\F_q$.
Let $C$ be an $[n,k]_{\F}$ code with a generator matrix
$G\in \F^{k\times n}$. Let $\ell\geq 1$. Then the following are equivalent:
\begin{enumerate}
\item\label{item:mrd-generic} $C$ is $\operatorname{MRD}(\ell)$.
\item\label{item:mrd-ell} $C$ is $\operatorname{MRD}$ and for all subspaces $V_1,V_2,\dots,V_\ell\subseteq \F_q^n$, each of dimension at most $k$, we have $\bigcap_{i=1}^{\ell}G_{V_i}=0$ iff $\bigcap_{i=1}^{\ell}W_{V_i}=0$, where $W=(Z_{i,j})_{i\in [k], j\in [n]}\in\F(Z_{1,1},\dots,Z_{k,n})^{k\times n}$.
\end{enumerate}
\end{lemma}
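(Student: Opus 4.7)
The direction $(1)\Rightarrow(2)$ is immediate: if $\dim\bigcap G_{V_i}=\dim\bigcap W_{V_i}$ for every choice of subspaces, then one side vanishes iff the other does; and the $\operatorname{MRD}$ property follows since $\operatorname{MRD}(\ell)$ codes are $\operatorname{MRD}(1)$, which by \cref{eqmrd} and \cref{generic_a} is just $\operatorname{MRD}$. So the content lies in $(2)\Rightarrow(1)$.

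For $(2)\Rightarrow(1)$, fix subspaces $V_1,\dots,V_\ell\subseteq\F_q^n$ with $\dim V_i\leq k$ and pick full-rank $A_i\in\F_q^{n\times\dim V_i}$ with $\langle A_i\rangle=V_i$. Since $C$ is $\operatorname{MRD}$ we have $\dim G_{V_i}=\dim V_i$ (by \cref{eqmrd}), and by \cref{generic_a} the same identity holds for $W$ in place of $G$. Applying \cref{fgva} to both $G$ and $W$, the desired equality $\dim\bigcap G_{V_i}=\dim\bigcap W_{V_i}$ reduces to the rank equality
\[
\rank\bigl(G_{\{A_i\}_{i\in[\ell]}}\bigr)=\rank\bigl(W_{\{A_i\}_{i\in[\ell]}}\bigr).
\]
The inequality $\rank(G_{\{A_i\}})\leq \rank(W_{\{A_i\}})$ is automatic by specialization (equivalently, by \cref{cor:semicontinuity} together with \cref{fgva}), since any nonzero minor of $W_{\{A_i\}}$ that specializes to a nonzero minor of $G_{\{A_i\}}$ must have already been nonzero symbolically.

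To obtain the reverse inequality, set $r:=\rank(W_{\{A_i\}})$ and choose $r$ columns of $W_{\{A_i\}}$ that are linearly independent over $\F(Z_{1,1},\dots,Z_{k,n})$. Each such column is indexed by a column of some $A_i$, so this selection corresponds to picking a submatrix $A_i'$ of $A_i$ for every $i\in[\ell]$, and the resulting block matrix is exactly $W_{\{A_i'\}_{i\in[\ell]}}$. Setting $V_i':=\langle A_i'\rangle\subseteq V_i$, we have $\dim V_i'\leq \dim V_i\leq k$ and $\sum_{i\in[\ell]}\dim V_i'=r=\rank(W_{\{A_i'\}})$, so \cref{fgva} applied to $W$ gives $\dim\bigcap_{i\in[\ell]}W_{V_i'}=0$. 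Invoking hypothesis \cref{item:mrd-ell} on the tuple $(V_1',\dots,V_\ell')$ yields $\dim\bigcap_{i\in[\ell]}G_{V_i'}=0$, and \cref{fgva} applied to $G$ (together with $\dim G_{V_i'}=\dim V_i'$ since $C$ is $\operatorname{MRD}$) then forces $\rank(G_{\{A_i'\}})=\sum\dim V_i'=r$. Because $G_{\{A_i'\}}$ is a submatrix of $G_{\{A_i\}}$, this gives $\rank(G_{\{A_i\}})\geq r$, completing the argument.

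I do not expect a serious obstacle in this proof: the only subtlety is ensuring that a choice of $r$ linearly independent columns of $W_{\{A_i\}}$ really corresponds to a valid subtuple $(V_1',\dots,V_\ell')$ with $\dim V_i'\leq k$ and that the induced block matrix is again of the form $W_{\{A_i'\}}$, which both follow directly from the block structure in \cref{fgva}. The key conceptual step is recognizing, via \cref{fgva}, that the whole statement is a rank statement about the block matrices $G_{\{A_i\}}$ and $W_{\{A_i\}}$, after which \cref{item:mrd-ell} is exactly strong enough to propagate full column rank from the generic side to the specialized side.
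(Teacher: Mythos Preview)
Your proposal is correct and follows essentially the same approach as the paper's proof. The paper argues the $(2)\Rightarrow(1)$ direction by contrapositive (remove $d$ redundant columns from $W_{\{A_i\}}$ to produce $V_i'$ with $\bigcap W_{V_i'}=0$ but $\bigcap G_{V_i'}\neq 0$), whereas you argue directly (select $r$ independent columns to produce $V_i'$ with $\bigcap W_{V_i'}=0$, then use \cref{item:mrd-ell} to force $\rank G_{\{A_i'\}}=r$); the underlying column-selection idea and the reduction via \cref{fgva} are identical.
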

\begin{proof}
\cref{item:mrd-generic} implies \cref{item:mrd-ell} by the definition of $\operatorname{MRD}(\ell)$.
Conversely, we show that if \cref{item:mrd-generic} does not hold, neither does \cref{item:mrd-ell}. 

Suppose $C$ is not $\operatorname{MRD}(\ell)$. 
Further assume that $C$ is MRD since otherwise \cref{item:mrd-ell} certainly does not hold.
Then there exist subspaces $V_1,V_2,\dots,V_\ell\subseteq\F_q^n$, each of dimension at most $k$, such that $\dim\left(\bigcap_{i=1}^{\ell}G_{V_i}\right)\neq \dim\left(\bigcap_{i=1}^{\ell}W_{V_i}\right)$. 
By \cref{cor:semicontinuity}, we have
\begin{equation}\label{eq:intersection-lower-bound}
\dim\left(\bigcap_{i=1}^\ell G_{V_i}\right)>d:=\dim\left(\bigcap_{i=1}^\ell W_{V_i}\right).
\end{equation}
We know by \cref{fgva} and \cref{generic_a} that
\begin{equation}\label{ffo1}
    \operatorname{rank}\left(W_{\{A_i\}_{i\in[\ell]}}\right)=\left(\sum_{i\in[\ell]}\dim  W_{V_i}\right)-d=\left(\sum_{i\in[\ell]}\dim V_i\right)-d.
\end{equation}
This means we can find $d$ columns $\mathbf{c}_1,\dots,\mathbf{c}_d$ of $W_{\{A_i\}_{i\in[\ell]}}$ that can be written as linear combinations of the remaining columns. 
Remove the corresponding $d$ columns in total from $A_1,\dots,A_{\ell}$ to obtain $A_1',\dots,A_{\ell}'$, where $A_i'$ is a submatrix of $A_i$, so that $W_{\{A_i'\}_{i\in[\ell]}}$ is obtained from $W_{\{A_i\}_{i\in[\ell]}}$ by removing $\mathbf{c}_1,\dots,\mathbf{c}_d$.
Then $\rank\left(W_{\{A_i'\}_{i\in[\ell]}}\right)=\rank\left(W_{\{A_i\}_{i\in[\ell]}}\right)$. Let $V_i'=\langle A_i' \rangle$ for $i\in [\ell]$. Then $\sum_{i\in [\ell]}\dim V_i'=\left(\sum_{i\in [\ell]}\dim V_i\right)-d$.
By \cref{fgva} and \cref{generic_a}, we have
\begin{align*}
\dim \left(\bigcap_{i=1}^{\ell}W_{V_i'}\right)
&=\left(\sum_{i\in[\ell]}\dim  W_{V_i'}\right)-\rank\left(W_{\{A_i'\}_{i\in[\ell]}}\right)
=\left(\sum_{i\in[\ell]}\dim  V_i'\right)-\rank\left(W_{\{A_i'\}_{i\in[\ell]}}\right)\\
&=\left(\sum_{i\in [\ell]}\dim V_i\right)-d-\rank\left(W_{\{A_i\}_{i\in[\ell]}}\right)\stackrel{\eqref{ffo1}}{=} 0.
\end{align*}
On the other hand, removing a column from $A_i$ decreases $\dim V_i=\dim \langle A_i\rangle$ by one and decreases $\rank\left(G_{\{A_i\}_{i\in[\ell]}}\right)$ by zero or one.
It follows from \eqref{eq:null-intersection} that
\[
\dim\left(\bigcap_{i\in[\ell]}G_{V_i'}\right)
\geq \dim\left(\bigcap_{i\in[\ell]}G_{V_i}\right)-d
\stackrel{\eqref{eq:intersection-lower-bound}}{>} d-d=0.
\]
So \cref{item:mrd-ell} does not hold.
\end{proof}

\subsection{Proof of Theorem \ref{2equiv}}

We now prove \cref{2equiv}, which establishes an equivalence between $\ldmrd{\ell}$ and $\mathrm{MRD}(\ell+1)$ up to duality. 
First, we show the implication from $\ldmrd{\ell}$ to $\mathrm{MRD}(\ell+1)$.

\begin{thm}\label{thm:ldmrd2mrd}
Let $C$ be an $[n,k]_{\F}$ code over a field $\F/\F_q$.
If $C$ is $\ldmrd{\ell}$, then $C^{\perp}$ is $\operatorname{MRD}(\ell+1)$.
\end{thm}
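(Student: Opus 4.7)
The plan is to apply the alternative characterization of $\operatorname{MRD}(\ell+1)$ provided by \cref{null_inter_prop}. Since $C$ is $\ldmrd{\ell}$ and therefore $\operatorname{LD-MRD}(1)$, \cref{lem:strengthen} shows $C$ is MRD and \cref{prop:dualmrd} yields that $C^\perp$ is also MRD. Letting $H \in \F^{(n-k)\times n}$ be a generator matrix of $C^\perp$ and $W$ the corresponding symbolic $(n-k)\times n$ matrix over the function field in its entries, \cref{null_inter_prop} reduces the claim to the following: for any subspaces $V_1,\ldots,V_{\ell+1} \subseteq \F_q^n$ of dimension at most $n-k$, the vanishing $\bigcap_i W_{V_i} = 0$ implies $\bigcap_i H_{V_i} = 0$.

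Suppose toward a contradiction that $\bigcap_i H_{V_i}$ contains a nonzero vector $\bu$. By \cref{gintersec} applied with code dimension $n-k$, the hypothesis $\bigcap_i W_{V_i} = 0$ is equivalent to
\[
\sum_{a=1}^{s}\dim\left(\bigcap_{j\in P_a} V_j\right)\leq (s-1)(n-k)\quad\text{for every partition }P_1\sqcup\cdots\sqcup P_s = [\ell+1].
\]
For each $i$ pick $A_i \in \F_q^{n\times \dim V_i}$ with $\langle A_i\rangle = V_i$. Because $C^\perp$ is MRD and $\dim V_i \leq n-k$, \cref{eqmrd} ensures $HA_i$ has full column rank, so there is a unique $\bx_i \in \F^{\dim V_i}$ with $HA_i \bx_i = \bu$; set $\bw_i := A_i \bx_i \in \F^n$, so that $H\bw_i = \bu$.

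The next step is to cluster the $\bw_i$ and extract a violation of list decodability. Partition $[\ell+1]$ into $P_1,\ldots,P_s$ via the equivalence $i \sim j \iff \bw_i = \bw_j$, and write $\bw'_a$ for the common value on $P_a$. The key rank bound needed is $\rank_{\F_q}(\bw'_a) \leq \dim(\bigcap_{j\in P_a} V_j)$; this will follow from a short basis-extension argument using the fact that any $\F_q$-linearly independent set in $\F_q^n$ remains $\F$-linearly independent in $\F^n$: whenever $\bw \in \F^n$ admits factorizations $\bw = A_{i_1}\bz_1 = \cdots = A_{i_t}\bz_t$ over $\F$, one extends a basis of $\bigcap_j V_{i_j}$ coherently to bases of the $V_{i_j}$ and concludes $\bw = A^* \bz$ for some matrix $A^*$ whose columns form an $\F_q$-basis of $\bigcap_j V_{i_j}$; the rank bound is then immediate since each coordinate of $\bw$ is an $\F_q$-combination of the entries of $\bz$.

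If $s = 1$ the trivial partition forces $\dim(\bigcap_i V_i) \leq 0$, so $\bw_1 = 0$, contradicting $H\bw_1 = \bu \neq 0$. Otherwise $s \geq 2$; set $\by := -\bw'_s$ and $\bc_a := \bw'_a - \bw'_s$ for $a \in [s]$. These $\bc_a$ are $s$ distinct codewords of $C$ (lying in $C$ because $H(\bw'_a - \bw'_s) = 0$, and distinct since the $\bw'_a$ are) satisfying
\[
\sum_{a=1}^{s} d_R(\bc_a, \by) = \sum_{a=1}^{s} \rank_{\F_q}(\bw'_a) \leq \sum_{a=1}^{s}\dim\left(\bigcap_{j\in P_a} V_j\right) \leq (s-1)(n-k),
\]
violating $\operatorname{LD-MRD}(s-1)$. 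Since $2 \leq s \leq \ell+1$, this property is part of $\ldmrd{\ell}$, giving the desired contradiction. The main technical obstacle will be establishing the rank bound on $\bw'_a$ cleanly; once that is in place, the proof is essentially an organized application of the partition identity from \cref{gintersec}.
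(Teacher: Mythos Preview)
Your proposal is correct and follows essentially the same route as the paper: contrapose via \cref{null_inter_prop}, pick a nonzero element of $\bigcap_i H_{V_i}$, lift it to vectors $\bw_i=A_i\bx_i$, partition by equality of the $\bw_i$, bound $\sum_a \rank_{\F_q}(\bw'_a)$ by $\sum_a \dim V_{P_a}\le (s-1)(n-k)$ via \cref{gintersec}, and read off a violation of $\operatorname{LD-MRD}(s-1)$.

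The only substantive difference is in the proof of the rank bound $\rank_{\F_q}(\bw'_a)\le \dim\big(\bigcap_{j\in P_a}V_j\big)$. The paper obtains it in one line by duality: for each $j\in P_a$ and each $\bx\in V_j^\perp$ one has $\bx^T\bw'_a=\bx^T A_j\bx_j=0$, hence $V_{P_a}^\perp=\sum_{j\in P_a}V_j^\perp\subseteq\ker_{\F_q}(\bw'_a)$ and the bound follows from \cref{lem:rank-kernel}. Your basis-extension sketch reaches the same conclusion (it is the fact $\bigcap_j(V_j\otimes_{\F_q}\F)=(\bigcap_j V_j)\otimes_{\F_q}\F$, which the paper itself invokes in the proof of \cref{thm:mrd2ldmrd}), but note that a \emph{single} simultaneous extension of a basis of $\bigcap_j V_j$ to bases of all the $V_j$ need not be jointly $\F_q$-independent when $|P_a|\ge 3$; the clean way to execute your idea is to handle two subspaces at a time and induct on $|P_a|$, or simply switch to the paper's orthogonality argument. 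One more minor point: \cref{null_inter_prop} asks for an ``iff'', so you should also remark that the implication $\bigcap_i H_{V_i}=0\Rightarrow \bigcap_i W_{V_i}=0$ is automatic from \cref{cor:semicontinuity} once $C^\perp$ is MRD.
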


\begin{proof}
We will prove the contrapositive: If  $C^{\perp}$ is not MRD$(\ell+1)$, then $C$ is not $\ldmrd{\ell}$. 
If $C^{\perp}$ is not MRD, then $C$ is not either by \cref{prop:dualmrd}. This, in turn, implies that $C$ is not $\ldmrd{\ell}$, since all $\ldmrd{\ell}$ codes are MRD. 
So we may assume that $C^{\perp}$ is MRD.

Let $H\in \F^{(n-k)\times n}$ be a parity check matrix of $C$. Then $H$ is also a generator matrix of $C^{\perp}$. 
By \cref{cor:semicontinuity} and \cref{null_inter_prop}, the fact that $C^{\perp}$ is MRD but not MRD$(\ell+1)$ implies that there exist subspaces $V_0,\ldots,V_\ell\subseteq \F_q^n$, each of dimension at most $n-k$, such that 
\begin{equation}
\bigcap_{0\leq i\leq \ell} H_{V_i}\neq \{0\}\quad \text{and} \quad \bigcap_{0\leq i\leq \ell} W_{V_i}=\{0\}.
\end{equation}
For all partitions $P_1\sqcup P_2\sqcup\cdots\sqcup P_s=\{0,1,\ldots,\ell\}$, by \cref{gintersec} and the fact that $\bigcap_{0\leq i\leq \ell}W_{V_{i}}=\{0\}$, we have 
\begin{equation}\label{eq:sum-vpi}
\sum_{i=1}^s \dim V_{P_i}\leq (s-1)(n-k),
\end{equation}
where $V_{P_i}=\bigcap_{j\in P_i} V_j$. 
Fix nonzero $\by\in \bigcap_{0\leq i\leq \ell}H_{V_{i}}$, which is possible as $\bigcap_{0\leq i\leq \ell}H_{V_{i}}\neq \{0\}$.

For $0\leq i\leq \ell$, choose $A_i\in \F_q^{n\times \dim V_i}$ such that $V_i=\langle A_i\rangle$.
As $\by\in \bigcap_{0\leq i\leq \ell}H_{V_{i}}$,
there exist vectors $\bu_0,\dots,\bu_{\ell}$ with $\bu_i\in  \F^{\dim V_i}$ such that $H A_{0}\bu_0=H A_{1}\bu_1=\cdots=H A_{\ell}\bu_\ell=\by$.
Let $\bv_i=A_{i} \bu_i$ for $0\leq i\leq \ell$, so that $H\bv_i=\by$. 
Define the partition $P_1\sqcup P_2\sqcup\cdots\sqcup P_s=\{0,1,\ldots,\ell\}$ such that $j,j'\in\{0,1,\dots,\ell\}$ are in the same set of the partition iff $\bv_j=\bv_{j'}$. For $i\in [s]$, let $\bv_{P_i}:=\bv_j$ for any $j\in P_i$. Then we have $s$ distinct vectors $\bv_{P_1},\dots,\bv_{P_s}\in \F^n$.

Consider arbitrary $i\in [s]$. For all $j\in P_i$ and $\bx\in V_j^{\perp}\subseteq \F_q^n$, we have $\bx^T \bv_{P_i}=\bx^T \bv_j=\bx^T A_j \bu_j=0$ since $\bx^T A_j=0$.
As this holds for all $j\in P_i$ and $\bx\in V_j^{\perp}$, we have that $\bx^T \bv_{P_i}=0$ for all $\bx\in \sum_{j\in P_i} V_j^{\perp}=\left(\bigcap_{j\in P_i} V_j\right)^{\perp}=V_{P_i}^{\perp}$. (Here $\sum_{j\in P_i} V_j^{\perp}=\left(\bigcap_{j\in P_i} V_j\right)^{\perp }$ holds by \cref{lm:dualspace}.)
So $V_{P_i}^{\perp}\subseteq \ker_{\F_q}(\bv_{P_i})$.
It follows by \cref{lem:rank-kernel} that 
\begin{equation}\label{eq:rank-bound-0}
\rank_{\F_q}(\bv_{P_i})=n-\dim(\ker_{\F_q}(\bv_{P_i}))\leq n-\dim\left(V_{P_i}^{\perp}\right)=\dim V_{P_i}.
\end{equation}
So we have 
\begin{equation}\label{eq:rank-bound}
\sum_{i=1}^s \rank_{\F_q}(\bv_{P_i})\stackrel{\eqref{eq:rank-bound-0}}{\leq} \sum_{i=1}^s \dim V_{P_i}\stackrel{\eqref{eq:sum-vpi}}{\leq} (s-1)(n-k).
\end{equation}
For $i\in [s]$, let $\bz_i=\bv_{P_i}-\bv_{P_1}$. Note that $H \bz_i=H \bv_{P_i}-H \bv_{P_1}=\by-\by=0$. So $\bz_1,\dots,\bz_s\in C$. And
\begin{equation}\label{eq:total-distance}
\sum_{i=1}^s d_R(-\bv_{P_1}, \bz_i)
=\sum_{i=1}^s \rank_{\F_q}(\bz_i-(-\bv_{P_1}))
=\sum_{i=1}^s \rank_{\F_q}(\bv_{P_i})\stackrel{\eqref{eq:rank-bound}}{\leq} (s-1)(n-k).
\end{equation}
If $s=1,$ we get from \eqref{eq:rank-bound} that $\rank_{\F_q}(\bv_{P_1})=0$, implying that $\bv_{P_1}=0$. But this is impossible since $\by=H \bv_{P_1}$ is nonzero. So $s\geq 2$.
By \eqref{eq:total-distance}, $C$ is not $\operatorname{LD-MRD}(s-1)$. As $s\leq \ell+1$, we see that $C$ is not $\ldmrd{\ell}$.
\end{proof}

Next, we show the other direction.

\begin{thm}\label{thm:mrd2ldmrd}
Let $C$ be an $[n,k]_{\F}$ code over a field $\F/\F_q$.
If $C^\perp$ is $\operatorname{MRD}(\ell+1)$, then $C$ is $\ldmrd{\ell}$.
\end{thm}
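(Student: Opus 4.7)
I will prove the contrapositive: assume $C$ is not $\ldmrd{\ell}$, and show that $C^\perp$ is not $\operatorname{MRD}(\ell+1)$. If $C^\perp$ is not $\operatorname{MRD}$, this is immediate, so assume $C^\perp$ is $\operatorname{MRD}$. By \cref{prop:dualmrd} $C$ is $\operatorname{MRD}$, hence by \cref{lem:strengthen} (and the easy observation in its proof) $C$ is $\operatorname{LD-MRD}(1)$. Let $s$ be the smallest integer in $[\ell]$ such that $C$ is not $\operatorname{LD-MRD}(s)$; then $s \ge 2$. Among all witnesses of failure, choose one with smallest $\sum_i r_i$: that is, $\by \in \F^n$ and distinct codewords $\bc_0,\dots,\bc_s \in C$ with $r_i := d_R(\by,\bc_i)$ satisfying $\sum_i r_i \le s(n-k)$. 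Set $\bv_i := \by - \bc_i$ and $\by_H := H\by$, where $H \in \F^{(n-k)\times n}$ is a generator of $C^\perp$ (and parity check of $C$).

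The plan is to associate subspaces $V_i \subseteq \F_q^n$ with $\dim V_i \le n-k$ such that $\by_H \in \bigcap_i H_{V_i}$ is nonzero but $\bigcap_i W_{V_i} = 0$, which, combined with $C^\perp \in \operatorname{MRD}(\ell+1) \subseteq \operatorname{MRD}(s+1)$ and \cref{null_inter_prop}, would give the desired contradiction. First, minimality of $s$ forces $r_i \le n-k$ for every $i$: otherwise dropping the offending codeword yields $s$ distinct codewords whose distances to $\by$ sum to at most $s(n-k) - r_{i^*} < (s-1)(n-k)$, contradicting that $C$ is $\operatorname{LD-MRD}(s-1)$. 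Next, $\by_H \ne 0$: if $\by_H = 0$ then every $\bv_i \in C$, and since the $\bv_i$ are distinct, at most one equals zero, so at least $s$ of them are nonzero codewords of rank $\le n-k$, violating the MRD property. Now write $\bv_i = A_i \bz_i$ with $A_i \in \F_q^{n \times r_i}$ of full rank and $\bz_i \in \F^{r_i}$ having $\F_q$-linearly independent coordinates, and set $V_i := \langle A_i \rangle$. Then $\by_H = H A_i \bz_i \in H_{V_i}$ for every $i$, so $\dim \bigcap_i H_{V_i} \ge 1$.

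It remains to verify $\dim \bigcap_i W_{V_i} = 0$, which by \cref{gintersec} amounts to showing $\sum_j \dim V_{P_j} \le (t-1)(n-k)$ for every partition $P_1 \sqcup \cdots \sqcup P_t = \{0,\dots,s\}$. The key structural fact driving this is coset uniqueness: since $\dim V_i \le n-k$ and $C$ is MRD, the subspace $V_i \cdot \F \subseteq \F^n$ consists of vectors of $\F_q$-rank at most $n-k$, so $V_i \cdot \F \cap C = \{0\}$; therefore $V_i \cdot \F \cap (\by + C)$ is either empty or a single vector, and in fact equals $\{\bv_i\}$. Consequently, for any $\Omega \subseteq \{0,\dots,s\}$ with $|\Omega| \ge 2$, the coset intersection $V_\Omega \cdot \F \cap (\by + C)$ is empty (else the common element would equal each $\bv_i$, $i\in\Omega$, violating distinctness), which translates to $\by_H \notin H_{V_\Omega}$ even though $\by_H \in \bigcap_{i\in\Omega} H_{V_i}$. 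Suppose now for contradiction that some partition $P_1 \sqcup \cdots \sqcup P_t$ violates the desired inequality, i.e.\ $\sum_j \dim V_{P_j} \ge (t-1)(n-k)+1$. The plan is to leverage the coset-uniqueness obstruction above, together with the minimality of the witness $(\by,\bc_0,\dots,\bc_s)$, to construct a strictly smaller witness of LD-MRD failure: for each part $P_j$ we manufacture a codeword $\bc'_j \in C$ whose distance to $\by$ is bounded using $\dim V_{P_j}$ (via a minimum-dimension subspace $V'_j \supseteq V_{P_j}$ with $\by_H \in H_{V'_j}$, provided by \cref{fgva} and the MRD$(|P_j|)$ equality $\dim \bigcap_{i\in P_j} H_{V_i} = \dim \bigcap_{i\in P_j} W_{V_i}$), and then either replace some of the original codewords $\bc_i$ by the $\bc'_j$ to strictly decrease $\sum r_i$ (contradicting witness minimality), or obtain $t \le s$ distinct codewords whose total distance to $\by$ is at most $(t-1)(n-k)$ (contradicting the minimality of $s$).

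The main obstacle is exactly the last step: converting the structural inequality $\sum_j \dim V_{P_j} \ge (t-1)(n-k)+1$, combined with the coset-uniqueness obstruction $\by_H \notin H_{V_{P_j}}$, into a strict quantitative improvement of the witness. This is analogous to the hardest step in the Brakensiek--Gopi--Makam proof for the MDS case, and the rank-metric twist is that each $\bv_i$ is uniquely pinned down by $V_i$ (unlike the Hamming case where choices of supports lead to cleaner combinatorics); this added rigidity is exactly what lets us trade partition dimensions for smaller distance sums and close the argument.
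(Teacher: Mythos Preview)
Your setup matches the paper exactly through the point where you produce $\F_q$-subspaces $V_0,\dots,V_s$ of dimension $\le n-k$ with $\by_H\in\bigcap_i H_{V_i}\setminus\{0\}$. Your ``coset uniqueness'' observation is correct and is essentially what the paper uses as well: since $C$ is $\operatorname{MRD}$, each $\F$-span $V_i\cdot\F$ meets the coset $\by+C$ in the single point $\bv_i$, so for $|\Omega|\ge 2$ one has $\by_H\notin H_{V_\Omega}$.

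The genuine gap is in your final step. You aim to prove directly that $\dim\bigcap_i W_{V_i}=0$, i.e.\ that $\sum_j\dim V_{P_j}\le (t-1)(n-k)$ for \emph{every} partition. Your sketch invokes ``the $\operatorname{MRD}(|P_j|)$ equality'' to manufacture new codewords $\bc'_j$, but this assumes $C^\perp$ is $\operatorname{MRD}(|P_j|)$, which you have not established (you have only assumed $C^\perp$ is $\operatorname{MRD}$). Even if one grants this via an implicit induction---``if $C^\perp$ is not $\operatorname{MRD}(|P_j|)$ for some $|P_j|\le s$ we are already done''---the actual mechanism for converting $\sum_j\dim V_{P_j}>(t-1)(n-k)$ together with $\by_H\notin H_{V_{P_j}}$ into either a smaller $\sum r_i$ or a smaller $s$ is not given; you describe it only as an intention. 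This is precisely the heart of the matter, and it does not follow from coset uniqueness alone (for instance, the case $t=1$ requires $\bigcap_i V_i=0$, which your observation does not yield).

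The paper sidesteps this problem entirely: it does \emph{not} attempt to force $\dim\bigcap_i W_{V_i}=0$. Instead it treats the case $\dim\bigcap_i W_{V_i}>0$ head-on. Using \cref{gintersec}, a maximizing partition $P_1\sqcup\cdots\sqcup P_s$ with $s<L+1$ is chosen; then two $\F$-linear spaces $V\supseteq V'$ are built from the data $(H,A_i)$, with $V\cong\bigcap_i H_{V_i}$ and $V'\cong\bigcap_j H_{V_{P_j}}$. The distinctness of the $\bv_i$ forces $(\bu_0,\dots,\bu_L)\in V\setminus V'$, hence $\dim\bigcap_i H_{V_i}>\dim\bigcap_j H_{V_{P_j}}\ge\dim\bigcap_j W_{V_{P_j}}\ge\dim\bigcap_i W_{V_i}$, the last two steps by \cref{cor:semicontinuity} and \cref{gintersec}. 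This direct comparison avoids any need for minimality bookkeeping or inductive manufacturing of witnesses, and is the missing idea in your argument.
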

\begin{proof} 
We will instead prove the following equivalent statement: Suppose $C$ is MRD but not $\ldmrd{\ell}$. Then $C^\perp$ is not $\operatorname{MRD}(\ell+1)$. 

Let $H\in \F^{(n-k)\times n}$ be a parity check matrix of $C$. Then $H$ is also a generator matrix of $C^{\perp}$. Since $C$ is not $\ldmrd{\ell}$, there exist $L\leq \ell$, $\bz\in \F^n$, and distinct $\bz_0,\dots,\bz_L\in\F^n$ such that
\[
\sum_{i=0}^{L}\rank_{\F_q}(\bz_i-\bz)\leq L(n-k) \quad\text{and}\quad H\bz_0=\cdots=H\bz_{L}=0.
\]
Let $\bv_i=\bz_i-\bz\in\F^n$ for $0\leq i\leq L$.
Then $\bv_0,\ldots,\bv_{L}$ are distinct vectors satisfying 
\begin{equation}\label{eq:rank-bound-2}
\sum_{i=0}^{L}\rank_{\F_q}(\bv_i)\leq L(n-k) \quad\text{and}\quad H\bv_0=\cdots=H\bv_{L}.
\end{equation}
Let $\by:=H\bv_0$, which equals $H\bv_{i}$ for all $0\leq i\leq L$. Assume that $\by=0$. Then $H \bv_i=0$ for all $0\leq i\leq L$. So $\bv_0,\dots,\bv_L$ are codewords of $C$. As $C$ is MRD, all of these codewords $\bv_i$ except the zero codeword satisfy $\rank_{\F_q}(\bv_i)\geq n-k+1$. But this contradicts \eqref{eq:rank-bound-2}.
So $\by\neq 0$.
Moreover, we may assume that $\rank_{\F_q}(\bv_i)\leq n-k$ for all $0\leq i\leq L$, because if this were not the case, we could remove some $\bv_i$ and \eqref{eq:rank-bound-2} would still hold (for smaller $L$).


Consider arbitrary $i\in \{0,\dots,L\}$.
Write $\bv_i=(v_{i,1},\dots,v_{i,n})$.
Let $r_i=\rank_{\F_q}(\bv_i)$.
We know $r_i=\spa_{\F_q}\{v_{i,1},\dots,v_{i,n}\}$ by definition.
Pick $u_{i,1},\dots,u_{i,r_i}\in \F$ that form a basis of $\spa_{\F_q}\{v_{i,1},\dots,v_{i,n}\}$ over $\F_q$, and let $\bu_i=(u_{i,1},\dots,u_{i,r_i})\in\F^{r_i}$.
Then there exists a unique matrix $A_i\in\F_q^{n\times r_i}$ such that $\bv_i=A_i \bu_i$.
Let $V_i=\langle A_i\rangle$, i.e., $V_i$ is the column span of $A_i$ over $\F_q$.
Let $V_i'$ be the column span of $A_i$ over $\F$.
By definition, we have
\begin{enumerate}
\item $\dim_{\F_q} V_i=\dim_{\F} V_i'=r_i=\rank_{\F_q}(\bv_i)\leq n-k$.
\item $\bv_i=A_i\bu_i\in V_i'$.
\item $H_{V_i}=\{HA_i \bu: \bu\in \F^{r_i}\}=\{H\bx: \bx\in V_i'\}$. In particular, $H \bv_i\in H_{V_i}$.
\end{enumerate}


As $\by=H\bv_0=\cdots=H\bv_{L}$, we have $\by\in \bigcap_{0\leq i\leq L}H_{V_{i}}$.
Let $W=(Z_{i,j})_{i\in [n-k],j\in [n]}$.
First consider the case where $\bigcap_{0\leq i\leq L}W_{V_{i}}=\{0\}$.
Note that,
\[
\dim \left(\bigcap_{0\leq i\leq L}H_{V_{i}}\right)>0=\dim\left(\bigcap_{0\leq i\leq L}W_{V_{i}}\right)
\]
since we already know $\by\neq 0$ and $\by\in \bigcap_{0\leq i\leq L}H_{V_{i}}$.
So $C^{\perp}$ is not $\operatorname{MRD}(\ell+1)$.

Now consider the case where $\bigcap_{0\leq i\leq L}W_{V_{i}}\neq\{0\}$.
By \cref{gintersec}, we have
\begin{equation}\label{eq:intersect-h-v}
\dim\left(\bigcap_{0\leq i\leq L}W_{V_{i}}\right)=\sum_{i=1}^s \dim(V_{P_i})-(s-1)(n-k)>0
\end{equation}
for some partition  $P_1\sqcup P_2\sqcup\cdots\sqcup P_s=\{0,1,\ldots,L\}$, where $V_{P_i}=\bigcap_{j\in P_i} V_j$. Note that this partition is not the finest partition of $\{0,1\dots,L\}$ since 
\[
\sum_{i=0}^L \dim V_i-L(n-k)=\sum_{i=0}^L \rank_{\F_q}(\bv_i)-L(n-k)\stackrel{\eqref{eq:rank-bound-2}}{\leq} 0.
\]
So $s<L+1$.

Define the $\F$-subspace $V\subseteq \prod_{i=0}^L \F^{r_i}$ to be
\[
V=\left\{(\bc_0,\ldots,\bc_L)\in \prod_{i=0}^L \F^{r_i}: H A_0\bc_0=\cdots=H A_L\bc_{L}\right\}.
\]
Then $\sigma: (\bc_0,\dots,\bc_L)\mapsto HA_0 \bc_0$ maps $V$ to $\bigcap_{0\leq i\leq L}H_{V_{i}}$.
As $C^\perp$ is MRD (since $C$ is) and $\dim V_i\leq n-k$ for $0\leq i\leq L$, for every $\bw\in \bigcap_{0\leq i\leq L}H_{V_{i}}$, there exists unique $\bx=(\bc_0,\dots,\bc_L)\in V$ such that $\sigma(\bx)=\bw$, i.e., $\sigma: V\to \bigcap_{0\leq i\leq L}H_{V_{i}}$ is an isomorphism of $\F$-linear spaces. In particular,
\begin{equation}\label{eqref:intersect-h-v-2}
\dim\left(\bigcap_{0\leq i\leq L}H_{V_{i}}\right) = \dim V.
\end{equation}
Define 
\[
V'=\left\{(\bc_0,\ldots,\bc_L)\in V: A_j\bc_j=A_{j'}\bc_{j'} \text{ for } i\in [s] \text{ and } j,j'\in P_i\right\}\subseteq V.
\]
Note that $(\bu_0,\dots,\bu_L)\in V\setminus V'$ since $\bv_0=A_0\bu_0,\dots,\bv_L=A_L\bu_L$ are distinct and $s<L+1$. So 
\begin{equation}\label{eqref:v-vprime}
\dim V>\dim V'.
\end{equation}
Consider $(\bc_0,\dots,\bc_L)\in V'$ and $i\in [s]$. For $j,j'\in P_i$, we have
$A_j\bc_j=A_{j'}\bc_{j'}\in V'_j\cap V'_{j'}$. 
So for $j\in P_i$, it holds that 
\[
\bc_j\in \bigcap_{j'\in P_i} V_{j'}'=:V'_{P_i}.
\]
As each $V_j'$ is the $\F$-span of the $\F_q$-linear space $V_j$, here $V'_{P_i}$ is the $\F$-span of the $\F_q$-linear space $V_{P_i}=\bigcap_{j\in P_i} V_{j}$.\footnote{Each $V_j'$ may be identified with $V_j\otimes_{\F_q} \F$. We are using the fact that $\bigcap_{j\in P_i} (V_j\otimes_{\F_q} \F)=
\left(\bigcap_{j\in P_i} V_j\right) \otimes_{\F_q} \F$.
See \cite{Ell19}.
}
So $\sigma$ maps $V'$ to $\bigcap_{i\in [s]} H_{V_{P_i}}$.
And for every $\bw\in \bigcap_{i\in [s]} H_{V_{P_i}}$, the inverse $\bc=(\bc_0,\dots,\bc_L)=\sigma^{-1}(\bw)\in V$ satisfies that $A_j\bc_j=A_{j'}\bc_{j'}$ for $i\in [s]$ and $j,j'\in P_i$ by the uniqueness of $\bc$. So $\sigma$ restricts to an isomorphism from $V'$ to $\bigcap_{i\in [s]} H_{V_{P_i}}$.
Therefore,
\begin{equation}\label{eq:intersect-w-v}
\begin{aligned}
\dim V'=\dim\left(\bigcap_{i\in [s]} H_{V_{P_i}}\right)\geq \dim\left(\bigcap_{i\in [s]} W_{V_{P_i}}\right)
&\stackrel{\eqref{intersec1}}{\geq} \sum_{i=1}^s \dim(V_{P_i})-(s-1)(n-k)\\
&\stackrel{\eqref{eq:intersect-h-v}}{=}\dim\left(\bigcap_{0\leq i\leq L}W_{V_{i}}\right)
\end{aligned}
\end{equation}
where the second step uses \cref{cor:semicontinuity}.
Then
$\dim \left(\bigcap_{0\leq i\leq L}H_{V_{i}}\right)>\dim\left(\bigcap_{0\leq i\leq L}W_{V_{i}}\right)$ by \eqref{eqref:intersect-h-v-2}, \eqref{eqref:v-vprime}, and \eqref{eq:intersect-w-v}.
So $C^{\perp}$ is not $\operatorname{MRD}(\ell+1)$. 
\end{proof}

Combining \cref{thm:ldmrd2mrd} and \cref{thm:mrd2ldmrd} proves \cref{2equiv}.

\section{Putting It Together}
The last tool we need is the duality of Gabidulin codes, as stated by the following theorem. For the proof, see \cite[Lemma~2.7.2]{bartz2022rank}.\footnote{\cite[Lemma~2.7.2]{bartz2022rank} assumes $\F$ to be a finite extension over $\F_q$. However, the same proof applies to any extension field $\F$ of $\F_q$ without modification. }

\begin{thm}[Duality of Gabidulin codes]\label{duali}
Let $\F$ be an extension field of $\F_q$, and let $\alpha_1,\dots,\alpha_n\in \F$ be linearly independent over $\F_q$.
Then there exists $(\beta_1,\dots,\beta_n)\in \F^n\setminus\{0\}$ such that
\begin{equation}\label{eq:pairing}
\sum_{i=1}^n \alpha_i^{q^{j-1}}\beta_i^{q^{h-1}}=0 \qquad \text{for $(j,h)\in [k]\times [n-k]$}.
\end{equation}
The choice of $(\beta_1,\dots,\beta_n)$ satisfying \eqref{eq:pairing} is unique up to a scalar in $\F\setminus\{0\}$.
Moreover, $\beta_1,\dots,\beta_n$ are linearly independent over $\F_q$, and $\left(\beta_j^{q^{i-1}}\right)_{i\in [n-k], j\in [n]}$ is a parity check matrix of $\mG_{n,k}(\alpha_1,\dotsc, \alpha_n)$, i.e., 
\[
\mG_{n,k}(\alpha_1,\dotsc, \alpha_n)^\perp=\mG_{n,n-k}(\beta_1,\dotsc, \beta_n).
\]
\end{thm}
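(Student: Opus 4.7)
The plan is to reformulate the pairing condition \eqref{eq:pairing} as a Frobenius-invariant linear-algebraic condition, construct $\bm\beta$ using the invertibility of the $n \times n$ Moore matrix, and verify the remaining claims in turn.

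First, setting $T_{j, h} := \sum_{i=1}^n \alpha_i^{q^{j-1}} \beta_i^{q^{h-1}}$, the Frobenius relation $T_{j, h}^q = T_{j+1, h+1}$ organizes these entries into diagonals indexed by $d = j - h$, each being identically zero or entirely nonzero. Since the $n-1$ diagonals intersecting $[k] \times [n-k]$ are exactly $d \in [k-n+1, k-1]$, the $k(n-k)$ equations of \eqref{eq:pairing} are equivalent to the $n-1$ reduced conditions $T_{j, 1} = 0$ for $j \in [k]$ (which express $(\beta_1, \dots, \beta_n) \in \mG_{n, k}(\alpha_1,\dots,\alpha_n)^\perp$) together with the non-$\F$-linear tower $T_{1, h} = 0$ for $h = 2, \dots, n-k$.

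For existence and uniqueness I would exploit the nonsingular $n \times n$ Moore matrix $M = (\alpha_j^{q^{i-1}})_{i, j \in [n]}$ (see \cref{lem:moore}). The first batch of conditions cuts out the $(n-k)$-dimensional $\F$-subspace $W := \mG_{n, k}(\alpha_1,\dots,\alpha_n)^\perp$, and within $W$ the tower conditions can be resolved via a cofactor expansion of $M$. A clean implementation uses the annihilator $q$-linearized polynomials $f_i(X) := \prod_{\alpha \in \spa_{\F_q}\{\alpha_1, \dots, \widehat{\alpha_i}, \dots, \alpha_n\}}(X - \alpha)$, which are $q$-linearized by \cref{prop:q_linear}; their coefficients furnish the rows of $M^{-1}$ up to a diagonal normalization, and an explicit $\F$-linear combination of the last $n-k$ columns of $M^{-1}$ produces a $\bm\beta$ simultaneously satisfying all tower conditions. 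A careful count shows this pins $\bm\beta$ down to a one-dimensional subspace over $\F$, yielding existence and uniqueness up to an $\F^\times$-scalar.

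Next, $\F_q$-linear independence of $\beta_1, \dots, \beta_n$ can be read off the explicit construction: any nontrivial relation $\sum_i a_i \beta_i = 0$ with $\mathbf{a} \in \F_q^n \setminus \{0\}$ would, via Frobenius and the identity $a_i^{q^h} = a_i$, yield $\sum_i a_i \beta_i^{q^h} = 0$ for every $h \ge 0$, and substituting back into the cofactor formula collapses to a relation that contradicts $\det(M) \ne 0$. With independence established, \cref{lem:moore} applied to the $\beta_i$'s guarantees that $B = (\beta_j^{q^{h-1}})_{h \in [n-k], j \in [n]}$ has full row rank $n-k$; since its rows lie in $\mG_{n, k}(\alpha_1,\dots,\alpha_n)^\perp$, which has the same dimension, the row span equals $\mG_{n, k}(\alpha_1,\dots,\alpha_n)^\perp$, so $B$ is a parity check matrix of $\mG_{n, k}(\alpha_1,\dots,\alpha_n)$. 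This gives the desired identification $\mG_{n, k}(\alpha_1,\dots,\alpha_n)^\perp = \mG_{n, n-k}(\beta_1, \dots, \beta_n)$.

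The principal obstacle is the non-$\F$-linearity of the tower conditions: treating \eqref{eq:pairing} as a purely $\F$-linear system is insufficient, and one must carefully interleave the linear structure over $\F$ with the Frobenius action. This subtle interplay is precisely the technical core of the argument in \cite{bartz2022rank}, and transferring it verbatim to the case of a possibly infinite extension $\F/\F_q$ is what allows the proof to go through without modification.
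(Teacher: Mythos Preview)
The paper does not give its own proof of this theorem: it simply cites \cite[Lemma~2.7.2]{bartz2022rank} and remarks in a footnote that the argument there carries over unchanged to an arbitrary extension $\F/\F_q$. So there is no in-paper proof to compare against; both you and the paper ultimately defer to the same external reference.

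That said, your outline is structurally sound and captures the right ingredients: the Frobenius relation $T_{j,h}^q=T_{j+1,h+1}$ collapses the $k(n-k)$ equations to the $n-1$ diagonal conditions, the first $k$ of which are $\F$-linear and cut out $W=\mG_{n,k}(\alpha)^\perp$, and your identification of the rows of $M^{-1}$ with the normalized annihilator polynomials $f_j/f_j(\alpha_j)$ is correct. The deduction of the parity-check conclusion from existence plus $\F_q$-independence of the $\beta_i$ is also fine.

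Where the sketch is genuinely incomplete is exactly the step you flag as the ``technical core''. Saying that ``an explicit $\F$-linear combination of the last $n-k$ columns of $M^{-1}$ produces a $\bm\beta$ simultaneously satisfying all tower conditions'' does not constitute an argument: the last $n-k$ columns of $M^{-1}$ span all of $W$, and the tower conditions $T_{1,h}=0$ for $h\ge 2$ are not $\F$-linear in $\bm\beta$, so one cannot simply intersect with further hyperplanes. Likewise, the appeal to ``a careful count'' for uniqueness is not valid for a non-linear system --- $n-1$ conditions on $\F^n$ need not cut out a line when $n-k-1$ of them are Frobenius-twisted. Your argument for $\F_q$-independence is also circular as stated: deducing $\sum_i a_i\beta_i^{q^h}=0$ for all $h$ is correct, but concluding a contradiction with $\det(M)\neq 0$ requires knowing that the vectors $\bm\beta^{(q^h)}$ already span something large enough, which is close to what you are trying to prove. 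None of this is fatal --- these are precisely the points handled in \cite{bartz2022rank} --- but the outline as written does not stand on its own and, like the paper, relies on that reference for the substance.
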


We are now ready to prove our main theorems (\cref{mm1} and \cref{mm2}) on the list decodability of random Gabidulin codes:

\begin{thm}\label{thm:list-decodability}
Let $\delta>0$ and $\ell\in [k]$.
Let $1\leq k\leq n\leq m$ be integers such that $q^m\geq 3(n-k)q^{n(n-k)\cdot\min\{\ell+1,n-k\}+n-k}/\delta$.
 Let $(\alpha_1,\dots,\alpha_n)$ be uniformly distributed over the set of all vectors in $\F_{q^m}^n$ whose coordinates are linearly independent over $\F_q$. Then it holds with probability at least $1-\delta$ that the Gabidulin code $\mG_{n,k}(\alpha_1,\dots, \alpha_n)$ over $\F_{q^m}$ is $\ldmrd{\ell}$, i.e., it is $\left(\frac{L}{L+1}\left(1-{k}/{n}\right), L\right)$-average radius list decodable (and hence also $\left(\frac{L}{L+1}\left(1-{k}/{n}\right), L\right)$-list decodable) for all $L\in [\ell]$.
\end{thm}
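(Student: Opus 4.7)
My plan is to reduce the $\ldmrd{\ell}$ property of the primal code $C := \mG_{n,k}(\alpha)$ to a $\operatorname{GKP}(\ell+1)$ property of its dual, and then to invoke the finite-field GM-MRD theorem. First, I would apply \cref{cor:three_equiv} with $C^\perp$ playing the role of $C$ (using $(C^\perp)^\perp = C$) to obtain the chain of equivalences: $C$ is $\ldmrd{\ell}$ iff $C^\perp$ is $\operatorname{MRD}(\ell+1)$ iff $C^\perp$ is $\operatorname{GKP}(\ell+1)$. Thus it suffices to show that $C^\perp$ is $\operatorname{GKP}(\ell+1)$ with probability at least $1-\delta$. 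The ``iff'' statement that $\ldmrd{\ell}$ coincides with $\operatorname{LD-MRD}(L)$ for all $L \in [\ell]$ then delivers the average-radius list-decodability claim for each individual $L$.

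Next, by the duality of Gabidulin codes (\cref{duali}), $C^\perp = \mG_{n,n-k}(\beta)$ for some $\F_q$-linearly independent tuple $\beta = (\beta_1,\dots,\beta_n)\in \F_{q^m}^n$, uniquely determined by $\alpha$ up to a common nonzero scalar, which does not affect the $\operatorname{GKP}$ property. I would then bound the probability that $\mG_{n,n-k}(\beta)$ fails to be $\operatorname{GKP}(\ell+1)$ by invoking the symbolic GM-MRD theorem (\cref{thm:GM-MRD-v1-new}) and the Schwartz--Zippel lemma: for each generic kernel pattern $\mathcal{V}$ of order at most $\ell+1$ relevant to the dimension-$(n-k)$ dual, \cref{thm:GM-MRD-v1-new} supplies an invertible matrix $M_\mathcal{V}$ whose entries are controlled-degree polynomials; after substituting the rational expressions $\beta_j(Z_1,\dots,Z_n)$ obtained from Cramer's rule on the duality pairing of \cref{duali} and clearing denominators, one arrives at a nonzero polynomial in $Z_1,\dots,Z_n$. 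Combining Schwartz--Zippel on that polynomial with a union bound over all generic kernel patterns of order $\leq \ell+1$ (the count used in the proof of \cref{thm:GM-MRD-finite}) and the standard correction for the uniform-vs-uniform-on-valid distributional mismatch on $\alpha$ should yield the claimed failure bound of $\delta$.

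The main obstacle will be the careful degree bookkeeping required to match the stated field-size exponent $nk\min\{\ell+1,k\}+k$. A direct application of \cref{thm:GM-MRD-finite} to $\mG_{n,n-k}(\beta)$, treating $\beta$ as if it were uniform over valid tuples, would naturally produce the dual-side exponent $n(n-k)\min\{\ell+1,n-k\}+(n-k)$. Obtaining the stated bound will instead require either (a) an explicit analysis of the degrees of $\beta_j$ as rational functions of the $\alpha_i$'s via the duality relations, so that the resulting polynomial in $\alpha$ has total degree at most $q^{nk\min\{\ell+1,k\}+k}$, or (b) applying \cref{thm:GM-MRD-finite} to the primal $\mG_{n,k}(\alpha)$ directly to obtain its $\operatorname{GKP}(\ell+1)$ property with the primal-side exponent $nk\min\{\ell+1,k\}+k$, and then establishing a duality-based bijection between ``random Gabidulin of dimension $n-k$'' and ``dual of random Gabidulin of dimension $k$'' that preserves the relevant probability bound while accounting for the prefactor $3(n-k)$.
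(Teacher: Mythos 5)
Your proposal is correct and follows essentially the same route as the paper: reduce via \cref{cor:three_equiv} to showing $C^\perp$ is $\operatorname{GKP}(\ell+1)$, identify $C^\perp=\mG_{n,n-k}(\beta)$ via \cref{duali}, and conclude by the finite-field GM-MRD theorem. The paper resolves your flagged distributional obstacle exactly by your option (b): since the duality map $[\alpha]\mapsto[\beta]$ is a permutation of the scaling classes of $\F_q$-independent tuples, $[\beta]$ is uniform and \cref{thm:GM-MRD-finite} applies directly to the dual code, with no need to analyze the $\beta_j$ as rational functions of the $\alpha_i$. Your worry about the exponent is well taken but points at the paper rather than at your argument: the paper simply quotes the failure bound $3(n-k)q^{nk\min\{\ell+1,k\}+k-m}$, whereas a literal application of \cref{thm:GM-MRD-finite} to the $[n,n-k]$ dual yields $3(n-k)q^{n(n-k)\min\{\ell+1,n-k\}+(n-k)-m}$, so the stated field-size condition matches the dual-side bound only up to this bookkeeping (they agree when $n=2k$, and either form gives $m=O_\ell(n^2)$ as in \cref{mm1}).
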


\begin{proof}
Let $S$ be the set of all $(\alpha_1,\dots,\alpha_n)\in \F^n$ whose coordinates are linearly independent over $\F_q$. Define an equivalence relation $\sim$ on $S$ by letting $\alpha\sim \alpha'$ iff $\alpha=c\alpha'$ for some $c\in \F\setminus\{0\}$. Denote the equivalence class of $\alpha$ by $[\alpha]$ and denote the set of the equivalence classes by $S/\sim$.
By \cref{duali}, for $\alpha=(\alpha_1,\dots,\alpha_n)\in S$, we can find its dual basis $\beta=(\beta_1,\dots,\beta_n)\in S$ by solving \eqref{eq:pairing}, and $[\beta]$ is uniquely determined by $\alpha$. Also note that scaling $\alpha$ does not affect $[\beta]$.
So we obtain a map from $S/\sim$ to itself that sends $[\alpha]$ to $[\beta]$, where $\alpha$ and $\beta$ satisfy \cref{duali}.
Moreover, by applying \cref{duali} again, but with $k$ replaced by $n-k$, we can solve $\alpha$ from $\beta$ using \eqref{eq:pairing} and hence get the map $[\beta]\mapsto [\alpha]$. The two maps are inverse to each other by definition. So the map $[\alpha]\mapsto [\beta]$ is a permutation of $S/\sim$. 

Now let $\alpha$ be uniformly distributed over $S$. From the discussion above, we see that $[\beta]$, which is uniquely determined by $\alpha$ via \eqref{eq:pairing}, is uniformly distributed over $S/\sim$.
Moreover, the Gabidulin code $\mG_{n,n-k}(\beta_1,\dots, \beta_n)$ depends only on the equivalence class $[\beta]$ of $\beta=(\beta_1,\dots, \beta_n)$, since scaling $\beta$ corresponds to scaling the rows of the generator matrix $\left(\beta_{j}^{q^{i-1}}\right)_{i\in [n-k], j\in [n]}$, which does not change the code.

By \cref{cor:three_equiv}, the Gabidulin code $\mG_{n,k}(\alpha_1,\dots, \alpha_n)$ is $\ldmrd{\ell}$ iff its dual code $\mG_{n,n-k}(\beta_1,\dots, \beta_n)$ is $\operatorname{GKP}(\ell+1)$.
As $[\beta]$ is uniformly distributed over $S/\sim$, by \cref{thm:GM-MRD-finite}, $\mG_{n,n-k}(\beta_1,\dots, \beta_n)$ is $\operatorname{GKP}(\ell+1)$ with probability at least $1-3(n-k)q^{n(n-k)\cdot\min\{\ell+1,n-k\}+n-k-m}\geq 1-\delta$. Therefore, $\mG_{n,k}(\alpha_1,\dots, \alpha_n)$ is $\ldmrd{\ell}$ with the same probability.
\end{proof}
Setting $\ell=\lceil\frac{1-R-\epsilon}{\epsilon}\rceil\leq \frac{1-R}{\epsilon}$, by \cref{thm:list-decodability}, we obtain the following corollary, which shows that random Gabidulin codes achieve list decoding capacity in the rank metric with high probability.
\begin{cor}\label{cor:mm1}
Let $\delta>0$. Let $1\leq k\leq n\leq m$ with $q^m\geq 3(n-k)q^{n(n-k)\cdot\min\{(1-k/n)/\epsilon+1,n-k\}+n-k}/\delta$.
Let $\epsilon>0$ and let $(\alpha_1,\dots,\alpha_n)$ be uniformly distributed over the set of all vectors in $\F_{q^m}^n$ whose coordinates are linearly independent over $\F_q$. Then it holds with probability at least $1-\delta$ that the Gabidulin code $\mG_{n,k}(\alpha_1,\dots, \alpha_n)$ over $\F_{q^m}$ is $\left(1-R-\epsilon,\frac{1-R}{\epsilon}\right)$-average radius list decodable (and hence also $\left(1-R-\epsilon,\frac{1-R}{\epsilon}\right)$-list decodable) in the rank metric, where $R=k/n$ is the rate of the code.  
\end{cor}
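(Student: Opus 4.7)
The corollary is meant to be a direct specialization of \cref{thm:list-decodability}: the plan is to choose the integer list-size parameter $\ell$ so that the generalized Singleton radius $\tfrac{\ell}{\ell+1}(1-R)$ already beats $1-R-\epsilon$, then verify that the field-size hypothesis of the corollary matches the one of \cref{thm:list-decodability} at this particular $\ell$, and finally read off the conclusion.

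First I would set $\ell := \lceil (1-R-\epsilon)/\epsilon \rceil$, which automatically satisfies $\ell \leq (1-R)/\epsilon$. A one-line algebraic check shows that $\tfrac{\ell}{\ell+1}(1-R) \geq 1-R-\epsilon$ is equivalent to $(\ell+1)\epsilon \geq 1-R$, i.e.\ to $\ell \geq (1-R-\epsilon)/\epsilon$, which holds by construction.

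Next I would verify that the hypothesis $q^m \geq 3(n-k)q^{nk\cdot\min\{(1-k/n)/\epsilon+1,k\}+k}/\delta$ of the corollary implies the hypothesis $q^m \geq 3(n-k)q^{nk\cdot\min\{\ell+1,k\}+k}/\delta$ of \cref{thm:list-decodability}. This reduces to the monotone bound $\ell+1 \leq (1-k/n)/\epsilon+1$, which is just $\ell \leq (1-R)/\epsilon$ again.

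With both preconditions in place, \cref{thm:list-decodability} asserts that $\mG_{n,k}(\alpha_1,\dots,\alpha_n)$ is $\ldmrd{\ell}$ with probability at least $1-\delta$. Taking $L=\ell$ in the definition gives $\bigl(\tfrac{\ell}{\ell+1}(1-R),\ell\bigr)$-average-radius list decodability, and I would finish by weakening both parameters: decreasing the radius from $\tfrac{\ell}{\ell+1}(1-R)$ to $1-R-\epsilon$ only strengthens the strict average-distance inequality in \cref{defn:average}, and enlarging the list bound from $\ell$ to $\lfloor (1-R)/\epsilon\rfloor$ is free. No step poses a substantive obstacle; the only minor subtlety is that the stated list size $(1-R)/\epsilon$ is shorthand for its floor, and that the monotonicity in both the radius and the list size is immediate from the definition of average-radius list decodability.
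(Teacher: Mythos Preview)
Your proposal is correct and mirrors the paper's proof exactly: the paper's entire argument is the single sentence ``Setting $\ell=\lceil\frac{1-R-\epsilon}{\epsilon}\rceil\leq \frac{1-R}{\epsilon}$, by \cref{thm:list-decodability}, we obtain the following corollary,'' and you have simply unpacked each step (the radius inequality $\tfrac{\ell}{\ell+1}(1-R)\ge 1-R-\epsilon$, the field-size comparison, and the monotonicity in radius and list size). The one place where you say a bit more than the paper is the parameter-weakening at the end; your remark that enlarging the list size is ``free'' for average-radius list decodability is correct (it follows by averaging the $(\rho,\ell)$ inequality over all $(\ell{+}1)$-subsets of a larger tuple), though it is not quite as ``immediate from the definition'' as you suggest.
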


We conclude this section with a discussion about the symbolic Gabidulin code $\mG_{n,k}(Z_1,\dots, Z_n)$.
By \cref{thm:GM-MRD-v1-new} and \cref{1equiv}, $\mG_{n,k}(Z_1,\dots, Z_n)$ is $\operatorname{GKP}(\ell)$ and $\operatorname{MRD}(\ell)$ for all $\ell$.
It is also $\ldmrd{\ell}$ for all $\ell$, which can be shown via a proof strategy similar to that of \cref{thm:list-decodability}.
The key difference is that we need to argue that certain elements $Y_1,\dots,Y_n\in\F_q(Z_1,\dots,Z_n)$, which form a dual basis of $Z_1,\dots,Z_n$, are algebraically independent over $\F_q$, so that no nonzero polynomial over $\F_q$ can vanish at $(Y_1,\dots,Y_n)$. For this, we need to pass to the projective space $\mathbb{P}^n$ over $\F_q$ and view the maps $[\alpha]\mapsto [\beta]$ and $[\beta]\mapsto [\alpha]$ as well-defined morphisms that are inverse to each other (over dense open subsets of $\mathbb{P}^n$). 
These morphisms must preserve dimension and, consequently, algebraic independence. The details are omitted.

\section{Conclusions and Future Directions}
In this paper, we have proved that, with high probability, a random Gabidulin code $C\subseteq \F_{q^m}^n$ is list decodable, and even average-radius list decodable, up to the optimal generalized Singleton bound when $m$ is large enough.
Our result requires that $m=\Omega_{\ell}(n^2)$, which is optimal up to a factor depending only on $\ell$, as demonstrated by our lower bound. In achieving this, we have formulated various notions of higher-order MRD codes and established their equivalence, analogous to the work of Brakensiek, Gopi, and Makam \cite{brakensiek2023generic}. We have also proved the GM-MRD theorem, which is an essential ingredient for our main results.

We conclude with the following potential future directions: (1) Are there explicit constructions of Gabidulin codes whose parameters match or come close to those achieved in \cref{mm1}? (2) Is it possible to reduce the parameter $m$ by slightly compromising the rate of the code, in a manner analogous to the work of Guo and Zhang \cite{guo2023randomly}, and Alrabiah, Guruswami, and Li \cite{alrabiah2023randomly}? This does not contradict \cref{thm:informal-lower-bound}
since the list decoding radius in \cite{guo2023randomly, alrabiah2023randomly} is slightly worse than the optimal bound $\frac{\ell}{\ell+1}(1-R)$.
(3) Inspired by the work of \cite{brakensiek2023generalized}, we propose a similar conjecture below, which we call the ``ultimate GM-MRD conjecture.''
\begin{conj}[Ultimate GM-MRD conjecture]
    Let $C\in\F_{q^m}^N$ be any $[N,k]$ MRD code with a generator matrix $G\in\F_{q^m}^{k\times N}$. Let $C^{\prime}$ be the $[n,k]$ code defined by the generator matrix $G^{\prime}:=GA$, where $A\in\F_q^{N\times n}$ is a randomly sampled full rank matrix. Then, with probability $1-o_N(1)$, the code $C^{\prime}$ is $\operatorname{GKP}(\ell)$ for all $\ell\ge 1$.
\end{conj}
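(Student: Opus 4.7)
The plan is to leverage the three-way equivalence for higher-order MRD codes (\cref{cor:three_equiv}) and the null-intersection reformulation of \cref{null_inter_prop} to recast the conjecture as a statement that a random $\F_q$-linear substitution preserves generic intersections, then attempt a union bound over the combinatorial types of generic kernel patterns. Concretely, $C'$ is $\operatorname{GKP}(\ell)$ for all $\ell$ iff $C'$ is $\operatorname{MRD}(\ell)$ for all $\ell$ iff for every tuple $V_1,\dots,V_\ell\subseteq\F_q^n$ of dimension at most $k$ one has
\[
\bigcap_{i=1}^\ell G'_{V_i}=\{0\}\iff\bigcap_{i=1}^\ell W_{V_i}=\{0\},
\]
where $W=(Z_{i,j})_{i\in[k],j\in[n]}$ is the symbolic generator. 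Since $A$ is full rank with $n\le N$, the induced map $\F_q^n\hookrightarrow\F_q^N$ is injective, so $\bigcap_i AV_i=A(\bigcap_i V_i)$ and, by \cref{gintersec}, the symbolic intersection dimensions in the two widths $n$ and $N$ coincide. The conjecture therefore reduces to showing that, with probability $1-o_N(1)$ over $A$, the pushed-forward subspaces $AV_1,\dots,AV_\ell\subseteq\F_q^N$ are in a sufficiently generic position that the given MRD matrix $G$ realizes the correct intersection dimension on them.

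To realize this reduction by a union bound, fix a generic kernel pattern $\mathcal{V}=(V_1,\dots,V_k)$ over $\F_q^n$. After padding via \cref{halldual1,thm:extend}, one may assume $\dim V_i=k-\delta_i$ and choose $B_i\in\F_q^{n\times(k-\delta_i)}$ spanning $V_i$. Then $C'$ attains $\mathcal{V}$ iff there exists an invertible $M\in\F_{q^m}^{k\times k}$ whose $i$-th row lies in the left null space of $GAB_i$. Since $G$ is MRD and $A$ is injective, $GAB_i$ has column rank $k-\delta_i$, so Cramer's rule expresses each such row (up to a $\delta_i$-parameter choice) in terms of maximal minors of $GAB_i$, producing a function $\Phi_{\mathcal{V}}(A)$ -- polynomial in the entries of $A$, with coefficients drawn from minors of $G$ -- whose non-vanishing certifies attainability. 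Proving the conjecture thus amounts to: (i) verifying that $\Phi_{\mathcal{V}}$ is not identically zero on the variety of full-rank $A$ for every generic kernel pattern $\mathcal{V}$; and (ii) bounding the density of its zero set by $o_N(1)$, so that a union bound over the at most $q^{O(nk\ell)}$ combinatorial types of $\mathcal{V}$ and over $\ell\in[k]$ still yields a $1-o_N(1)$ success probability.

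The main obstacle is step (ii). Because the randomness in $A$ lives in the \emph{small} field $\F_q$, the Schwartz--Zippel lemma as used in \cref{thm:GM-MRD-finite} gives only a $q^{-1}$ saving per coordinate of $A$, far too weak to beat a union bound of size $q^{\Omega(nk\ell)}$ without additional structure. Closing this gap seems to require a quantitative version of the MRD hypothesis that forces $\Phi_{\mathcal{V}}$ to be \emph{robustly} nonzero -- for instance by lifting via an $\F_q$-basis of $\F_{q^m}$ and exhibiting $\Phi_{\mathcal{V}}(A)$ as an $\F_q$-multilinear form whose non-degeneracy follows directly from the rank-distance property of $G$, so that the union-bounded failure probability scales with the extension degree $m$ rather than with the small field size $q$. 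An alternative route is to first establish the conjecture for the special case where $G$ is a Gabidulin generator matrix, using \cref{thm:GM-MRD-v1-new} as a starting point, and then attempt to deform to arbitrary MRD codes; however, the space of MRD codes over $\F_{q^m}$ does not carry any obvious topology supporting such a deformation, so bridging the gap between MRD-ness of $G$ alone and GKP attainability of $GA$ appears to demand a genuinely new structural insight.
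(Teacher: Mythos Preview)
This statement is a \emph{conjecture} in the paper, posed as an open problem in the concluding section; the paper gives no proof and no partial argument toward it. Your proposal is, correspondingly, not a proof but an outline of a natural approach together with an honest diagnosis of where it breaks down. That diagnosis is essentially accurate: the reduction via \cref{cor:three_equiv} and \cref{null_inter_prop} is sound, the observation that injectivity of $A$ preserves the combinatorics entering \cref{gintersec} is correct, and the obstacle you name in step~(ii)---that the randomness in $A$ lives in $\F_q$ rather than in $\F_{q^m}$, so Schwartz--Zippel buys only a $q^{-1}$ saving per entry, hopelessly insufficient against a union bound of size $q^{O(nk^2)}$---is exactly what makes the problem open.

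One refinement: you present step~(i), the non-identical-vanishing of $\Phi_{\mathcal{V}}$ as a function of $A$, as if it were the routine half. In fact it is already nontrivial. It asks whether, for every generic kernel pattern $\mathcal{V}$ over $\F_q^n$, there exists \emph{some} full-rank $A\in\F_q^{N\times n}$ such that the MRD code $C$ attains the pushed-forward pattern $(AV_1,\dots,AV_k)$ over $\F_q^N$. The hypothesis that $C$ is MRD is only $\operatorname{GKP}(1)$, and there is no a~priori mechanism forcing $C$ to attain higher-order patterns even after exploiting the freedom in $A$. In the paper's proof for Gabidulin codes, the analogous non-vanishing statement is the entire content of \cref{thm:generalization-new}, which leans heavily on the algebra of $q$-linearized polynomials; for an arbitrary MRD generator $G$ no comparable structure is available. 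So both~(i) and~(ii) appear to require genuinely new ideas, which is consistent with the paper leaving the statement as a conjecture.
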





\subsection*{Acknowledgments} 
The authors would like to thank Joshua Brakensiek, Manik Dhar, and Sivakanth Gopi for many useful discussions and suggestions that helped this paper. They also appreciate the helpful comments from the anonymous reviewers.
Part of this work was carried out while Zeyu Guo and Zihan Zhang were visiting the Simons Institute for the Theory of Computing at UC Berkeley. They would like to thank the institute for its support and hospitality.

The work of Chaoping Xing was supported in part by the National Key Research and Development Program of China under the Grant 2022YFA1004900, in part by the National Natural Science Foundation of China under the Grants 12031011, 12361141818 and 12271084. The work of Chen Yuan was supported in part by the National Key Research and Development Program of China under the Grant 2023YFE0123900, in part by the National Natural Science Foundation of China under the Grants 12101403.

\bibliographystyle{alpha}
\bibliography{ref}

\appendix
\section{Field Size Lower Bound for $\operatorname{LD-MRD}(\ell)$}

We prove a lower bound on the field size of $\operatorname{LD-MRD}(\ell)$ codes by adapting
the argument in \cite{alrabiah2023ag}. 

\begin{thm}\label{A.1}
Let $\ell\geq 2$.
For any $r\in [0,1]$, any $\operatorname{MRD}$ code $C\subseteq \F_{q^m}^n$ of rate $R$ that is $\left(\frac{\ell\left(1-R\right)}{\ell+1}, \ell\right)$-avearge-radius list-decodable must have $m=\Omega_{\ell}((nR-1)(n-\ell-nR+1))$, which is $\Omega_{\ell}(n^2)$ if the rate $R$ of $C$ is in $[c, 1-c-\ell/n]$ for some constant $c>0$. 
\end{thm}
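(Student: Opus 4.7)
The plan is to follow the technique of Alrabiah--Guruswami--Li \cite{alrabiah2023ag}, transplanting it from the Hamming-metric (AG code) setting into the rank metric. The high-level strategy is to exhibit a family of potential ``bad'' configurations $(\by, \bc_0, \dots, \bc_\ell)$ -- consisting of a received word together with $\ell+1$ distinct codewords whose summed rank distance from $\by$ is exactly $\ell(n-k)$ -- whose cardinality grows like $q^{m\cdot 0}\cdot q^{(nR-1)(n-\ell-nR+1)}$, and then use the $\left(\frac{\ell(1-R)}{\ell+1},\ell\right)$-average-radius list decodability to show that the effective ``space'' these configurations must inject into has size at most $q^{O_\ell(m)}$, forcing the claimed inequality.

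Concretely, I would first invoke \cref{cor:three_equiv} to pass to the dual code $C^\perp\subseteq\F_{q^m}^n$, which (under the hypothesis) is $\operatorname{GKP}(\ell+1)$ and thus must attain \emph{every} generic kernel pattern of order at most $\ell+1$. Next, I would build a large parametrized family of generic kernel patterns of order exactly $\ell+1$ tailored to the dual dimension $k^\perp=n-k$. A natural choice is: fix a generic $(k^\perp-\ell-1)$-dimensional ``base'' subspace $V_0\subseteq \F_q^n$ and let each pattern be specified by a choice of $\ell+1$ hyperplanes $V_0\subseteq V_1,\dots,V_{\ell+1}\subseteq \F_q^n$ of codimension $\ell$ in $\F_q^n$ in \emph{generic position} relative to $V_0$. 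One then verifies via \cref{def:GKP} that these tuples are GKPs. The number of patterns obtainable in this way is easily shown to be at least $q^{\Omega_\ell((k-1)(n-k-\ell+1))}$ by a dimension count over $\F_q$.

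Given the $\operatorname{GKP}(\ell+1)$ property, each such pattern $\mathcal{V}$ produces, via \cref{interction_thm_G} (applied to $C^\perp$), a nontrivial intersection $\bigcap_i G^\perp_{V_i}\subseteq \F_{q^m}^{k^\perp}$ of the expected (nonzero) generic dimension, together with a concrete representative that depends on $\mathcal{V}$. Equivalently, translating back through the equivalence, each pattern produces a tuple $(\by,\bc_0,\ldots,\bc_\ell)$ in $C$ saturating the generalized Singleton inequality. I would then show that the map from patterns $\mathcal{V}$ to these saturating tuples is essentially injective (up to obvious normalization/scaling ambiguities coming from the $\F_{q^m}^\times$-action on representative vectors), using the fact that different GKPs encode genuinely different kernel constraints on $G^\perp$, and that the MRD property of $C^\perp$ rules out accidental coincidences. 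Finally, the image sits inside an ambient $\F_{q^m}$-space of size at most $q^{O_\ell(m)}$, so matching the two bounds yields
\[
q^{\Omega_\ell((nR-1)(n-\ell-nR+1))}\;\leq\; q^{O_\ell(m)},
\]
which gives the theorem. In the regime $R\in[c,1-c-\ell/n]$, both factors $nR-1$ and $n-\ell-nR+1$ are $\Theta(n)$, so the final bound becomes $m=\Omega_\ell(n^2)$.

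\textbf{Main obstacle.} The delicate step is the near-injectivity claim in the middle of the argument: I need a clean ``rigidity'' lemma stating that distinct choices of the parameters $(V_0,V_1,\dots,V_{\ell+1})$ in my family force distinct saturating tuples $(\by,\bc_0,\ldots,\bc_\ell)$ in $C$, even after quotienting by the group of trivial symmetries (simultaneous scaling, shifts by codewords, and permutation of the $\bc_i$). Getting the counting to yield the \emph{sharp} product $(nR-1)(n-\ell-nR+1)$ -- rather than a weaker $\Omega(n)$ or $\Omega_\ell(n)$ bound -- requires designing the parameter family so that the $(k-1)$-factor arises from the ``column'' dimensions inside each $V_i$ while the $(n-k-\ell+1)$-factor comes from the freedom in orienting the $V_i$ in $\F_q^n/V_0$, with no overcounting. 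I expect this balancing of the two parameter axes, together with ruling out the symmetry ambiguities, to be the most technical part of the argument.
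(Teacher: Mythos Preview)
Your approach has a fundamental logical inversion. The equivalence in \cref{cor:three_equiv} says that $C^\perp$ is $\operatorname{GKP}(\ell+1)$ \emph{iff} $C$ is $\ldmrd{\ell}$, and the latter means precisely that \emph{no} tuple $(\by,\bc_0,\dots,\bc_\ell)$ saturates the generalized Singleton inequality. So once you have passed to ``$C^\perp$ is $\operatorname{GKP}(\ell+1)$'', you have assumed exactly that the saturating tuples you are trying to manufacture do not exist. Concretely, when you apply \cref{interction_thm_G} to a genuine GKP $\mathcal{V}$, the intersection $\bigcap_i G^\perp_{V_i}$ has the \emph{generic} dimension, which for the patterns you describe is zero (this is what makes the corresponding $M_{\mathcal{V}}$ invertible). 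There is no ``nontrivial representative'' to extract, and hence no map from your family of patterns into bad configurations in $C$. A secondary issue: the hypothesis is only $\operatorname{LD-MRD}(\ell)$, not $\ldmrd{\ell}$, so you cannot even invoke \cref{cor:three_equiv} as stated.

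The paper's proof does not touch the equivalence machinery at all; it is a direct pigeonhole construction of a violating list. One fixes an $\ell$-dimensional subspace $V_0\subseteq\F_q^n$ with complement $\overline{V}_0$, and takes $\mathcal{F}$ to be the set of $(k-1)$-dimensional subspaces $V\subseteq\overline{V}_0$, so $|\mathcal{F}|\ge q^{(k-1)(n-\ell-k+1)}$. A first-moment count (over a uniformly random codeword $M\in C$, using $|C|=q^{mk}$) shows that for some fixed $M$, at least $|\mathcal{F}|/2$ of these $V$ admit a second codeword $M_V\neq M$ with $V\subseteq\ker(M-M_V)$. If $\ell q^{\ell m}<|\mathcal{F}|/2$, pigeonholing on the values $M_V A_0\in\F_q^{m\times\ell}$ produces distinct $V_1,\dots,V_\ell\in\mathcal{F}$ with $M_{V_1}A_0=\cdots=M_{V_\ell}A_0$; the MRD property forces the $M_{V_i}$ to be distinct. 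One then writes down an explicit $Y$ (agreeing with $M$ on $\overline{V}_0$ and with the common value on $V_0$) for which $\rank(M-Y)\le\ell$ and $\rank(M_{V_i}-Y)\le n-k-1$, contradicting $\operatorname{LD-MRD}(\ell)$. The inequality $\ell q^{\ell m}\ge|\mathcal{F}|/2$ is exactly the claimed bound. The two counting ``axes'' you were looking for are the dimension count $|\mathcal{F}|\ge q^{(k-1)(n-\ell-k+1)}$ and the pigeonhole space $q^{\ell m}$; no injectivity lemma or GKP structure is needed.
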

\begin{proof}
Fix a subspace $V_0\subseteq \F_q^n$ of dimension $\ell$. Choose a subspace $\overline{V}_0$ such that $V_0\oplus \overline{V}_0=\F_q^n$. Assume the minimum distance of $C$ is $d$. As $C$ is MRD, the size of $C$ is $q^{m(n-d+1)}$ \cite[Theorem 5.4]{Del}. 
Let $k=n-d+1=nR$. 
For any two distinct codewords $M_1,M_2\in C$ (viewed as matrices in $\F_q^{m\times n}$), we have $\rank(M_1-M_2)\geq d=n-k+1$.
Let $\mF$ be the collection of subspaces $V\subseteq \overline{V}_0$ of dimension $k-1$. The size of $\mF$ is the number of subspaces of dimension $k-1$ contained in a subspace of dimension $n-\ell$, which is at least $q^{(n-\ell-k+1)(k-1)}$. 
It suffices to prove that $\ell q^{\ell m}\geq |\mF|/2$, as this would imply $m=\Omega_{\ell}((k-1)(n-\ell-k+1))$. 

Assume to the contrary that $\ell q^{\ell m}<|\mF|/2$. Let $M$ be uniformly distributed from $C$. For a fixed subspace $V\in \mF$, let $A\in \F_q^{n\times (k-1)}$ such that $\langle A\rangle=V$. Let $E_V$ be the event that there exists a codeword $M_1\in C$ different from $M$ such that $MA=M_1A$, i.e., $(M-M_1)A=0$. 
If $E_V$ does not hold, then $M$ is uniquely determined by $MA\in \F_{q}^{m\times{(k-1)}}$. As the number of possible values of $M A$ is at most $q^{(k-1)m}$ and $|C|=q^{mk}$, we have
\[
\Pr[\lnot E_V]\leq \frac{q^{m(k-1)}}{q^{mk}}=q^{-m}. 
\]
Therefore, over random $M\in C$, the expected number of $V\in \mF$ such that $E_V$ happens is $\sum_{V\in \mF}(1-\Pr[\lnot E_V])\geq |\mF|/2$. 
Then, we can fix a codeword $M\in C$ such that the size of the set 
\[
\mF_M:=\{V\in \mF: E_V \text{ happens}\}
\]
is at least $|\mF|/2$.

Let $A_0\in \F_q^{n\times \ell}$ such that $\langle A_0\rangle =V_0$. 
By the definition of $\mathcal{F}_M$, for each $V\in\mathcal{F}_M$, there exists a codeword $M_V\neq M$ such that the kernel subspace of $M-M_V$ contains $V$.
Since $M_V A_0\in\F_q^{m\times \ell}$ for any codeword $M_V$ and $\ell q^{\ell m}<|\mF|/2\leq |\mF_M|$, by the pigeonhole principle, there exists distinct $V_1,\ldots,V_\ell\in \mathcal{F}_M$ 
such that $M_{V_1}A_0=\dots=M_{V_\ell}A_0$. Moreover, by the definition of $\mathcal{F}_M$, for $i=1,\dots,\ell$, there exists $A_i\in\F_q^{n\times (k-1)}$ with $\langle A_i\rangle=V_i$ such that $(M-M_{V_i})A_i=0$.


Assume $M_{V_i}=M_{V_j}$ for some $i\neq j$. Then $(M-M_{V_i})A_i=0$ and $(M-M_{V_i})A_j=0$.
Let $A\in\F_q^{n\times \dim(V_i+V_j)}$ such that $\langle A\rangle =V_i+V_j$.
As the columns of $A$ are in $V_i+V_j=\langle A_i\rangle+\langle A_j\rangle$, we have $(M-M_{V_i})A=0$, i.e., $V_i+V_j$ is contained in the kernel subspace of $M-M_{V_i}$. Since $M$ and $M_{V_i}$ are in the MRD code $C$, we have $\rank(M-M_{V_i})\geq n-k+1$. This implies that the kernel subspace of $M-M_{V_i}$ is at most $k-1$. So $\dim (V_i+V_j)\leq k-1$.
However, as $V_i\neq V_j$ and $\dim V_i=\dim V_j=k-1$, we have $\dim (V_i+V_j)\geq k$, which yields a contradiction. 
Thus, we conclude that $M_{V_1},\ldots,M_{V_{\ell}}$ are all distinct. 

Since $\overline{V}_0\cap V_0=\{0\}$, there exists $B_0\in \F_q^{n\times (n-\ell)}$ such that $\langle B_0\rangle =\overline{V}_0$ and $\begin{pmatrix} A_0 & B_0\end{pmatrix}\in \F_q^{n\times n}$ has full rank. Let $Y\in \F_q^{m\times n}$ such that $(M_{V_1}-Y)A_0=\cdots=(M_{V_{\ell}}-Y)A_0=0$ 
and $(M-Y)B_0=0$. This can be achieved by choosing $Y=\begin{pmatrix} M_{V_1}A_0 & MB_0\end{pmatrix}\begin{pmatrix} A_0 & B_0\end{pmatrix}^{-1}$.

For $i\in [\ell]$, we have $(M-Y) A_i=0$ since $\langle A_i\rangle=V_i$, $V_i\subseteq \overline{V}_0$, $\overline{V}_0=\langle B_0\rangle$, and $(M-Y)B_0=0$.
And for $i\in [\ell]$, we know $(M-M_{V_i})A_i=0$, which implies
\[
(M_{V_i}-Y)A_i=(M_{V_i}-M)A_i+(M-Y)A_i=0
\quad\text{and}\quad
(M_{V_i}-Y)A_0=0.\]
Since $V_0\cap \langle V_i\rangle\subseteq V_0\cap \overline{V}_0=\{0\}$ for $i\in [\ell]$, we have $\dim (V_0+V_i)=\dim V_0+\dim V_i=\ell+k-1$ and hence
\[
\rank(M_{V_i}-Y)\leq n-(\ell+k-1)\leq n-k-1,
\]
where we use the fact that $\ell\ge 2.$ As $(M-Y)B_0=0$, we have $\rank(M-Y)\leq n-\dim(\overline{V}_0)=\ell$. 
It follows that 
\[
\rank(M-Y)+\sum_{i=1}^{\ell}\rank(M_{V_i}-Y)\leq \ell+\ell(n-k-1)=\ell(n-k)
\]
which contradicts the claim that $C$ is $\left(\frac{\ell\left(1-k/n\right)}{\ell+1}, \ell\right)$-avearge-radius list-decodable.
\end{proof}

\clearpage 
\includepdf[pages=-]{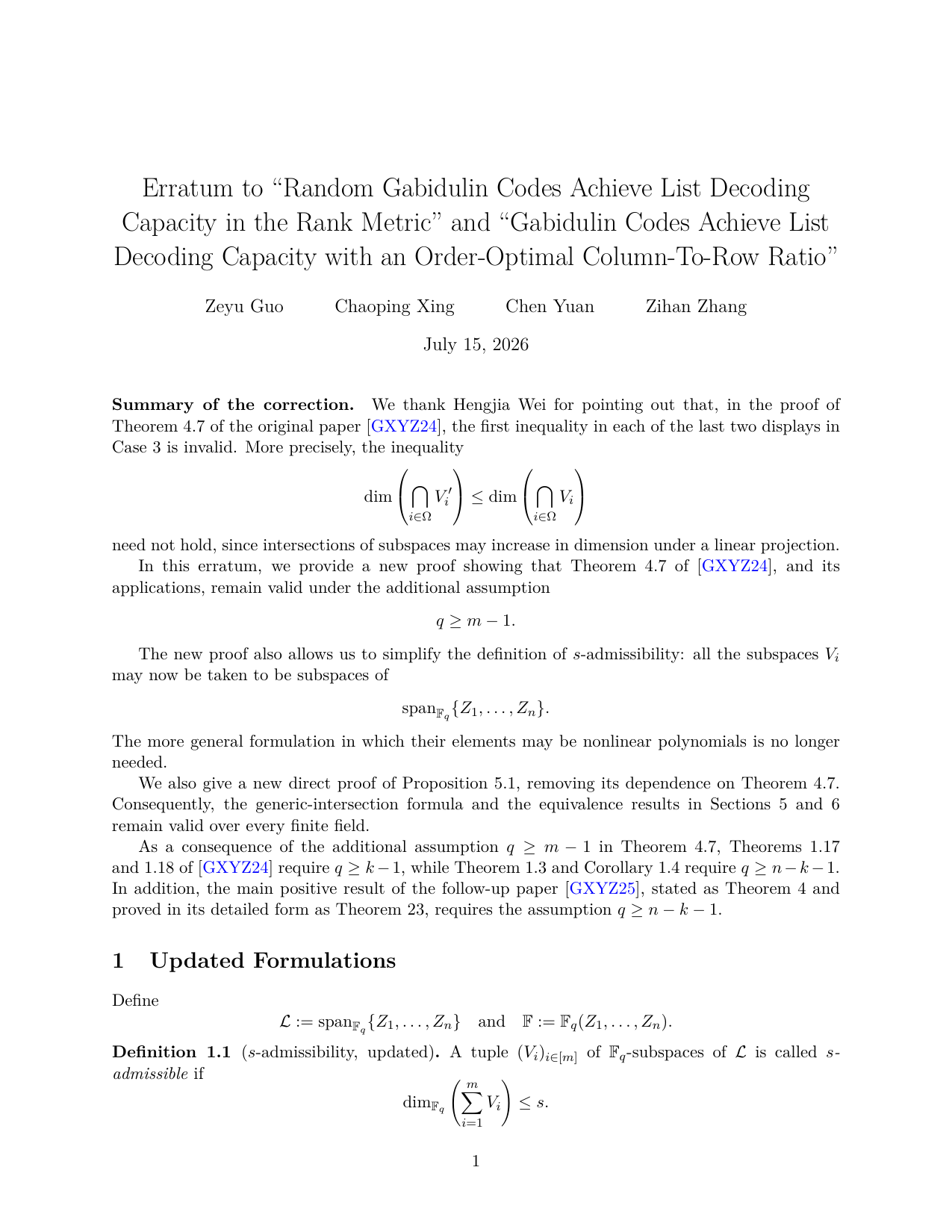}

\end{document}